\numberwithin{equation}{section}
\numberwithin{figure}{section}
\theoremstyle{plain}
	\newtheorem{theorem}{Theorem}[section]
	\newtheorem{thm}[theorem]{Theorem}
	\newtheorem{lem}[theorem]{Lemma}
	\newtheorem{lemma}[theorem]{Lemma}
 	\newtheorem{cor}[theorem]{Corollary}
	\newtheorem{fact}[theorem]{Fact}
\theoremstyle{definition}
	\newtheorem{definition}[theorem]{Definition}
	\newtheorem{defn}[theorem]{Definition}
	\newtheorem*{remark*}{Remark}
\newif\ifusebb
  \DeclareSymbolFont{bbold}{U}{bbold}{m}{n}
  \DeclareSymbolFontAlphabet{\mathbbold}{bbold}
  \newcommand{\allones}{\ensuremath{\mathbbm{1}}}
  \newcommand{\allzeros}{\ensuremath{\mathbbold{0}}}
  \newcommand{\allones}{\vec{1}}
  \newcommand{\allzeros}{\vec{0}}
\newcommand{\poly}{\mathrm{poly}}
\setlist[description,1]{align=left,leftmargin=0.1in} 
\setlist[itemize,1]{leftmargin=*}
\newenvironment{fminipage}%
  {\begin{Sbox}\begin{minipage}}%
  {\end{minipage}\end{Sbox}\fbox{\TheSbox}}
\newenvironment{algbox}[0]{\vskip 0.2in
\noindent 
\begin{fminipage}{6.3in}
}{
\end{fminipage}
\vskip 0.2in
}
\def\defeq{\stackrel{\mathrm{def}}{=}}
\def\ceil#1{\left\lceil #1 \right\rceil}
\def\abs#1{\left|#1  \right|}
\def\norm#1{\left\| #1 \right\|}
\newcommand{\tsolvelap}{\ensuremath{O\left(m+n2^{O(\sqrt{\log n \log \log n})}\right)\log^{O(1)}\left(n \kappa \epsilon^{-1}\right)}}
\newcommand{\tsolveprev}{\ensuremath{O\left(nm^{3/4}+n^{2/3}m\right)\log^{O(1)} \left(n \kappa \epsilon^{-1}\right)}}
\newcommand{\tsolvelapsimple}{\ensuremath{O\left(m+n^{1+o(1)}\right)\log^{O(1)}\left(n \kappa \epsilon^{-1}\right)}}
\renewcommand\AA{\mvar{A}}
\newcommand\DD{\mvar{D}}
\newcommand\II{\mvar{I}}
\newcommand\MM{\mvar{M}}
\newcommand\LL{\mvar{L}}
\newcommand\UU{\mvar{U}}
\newcommand\XX{\mvar{X}}
\newcommand\ZZ{\mvar{Z}}
\newcommand\MMtil{\mapp{\mvar{M}}}
\newcommand\UUtil{\mapp{\mvar{U}}}
\newcommand\AAcal{\boldsymbol{\mathcal{A}}}
\newcommand\WWhat{\boldsymbol{\mathcal{A}}}
\newcommand\ZZbar{\overline{\mvar{Z}}}
\newcommand\ZZtil{\widetilde{\mvar{Z}}}
\newcommand{\mapp}[1]{\widetilde{#1}}
\newcommand\Otil{\widetilde{O}}
\newcommand{\degrat}{R}
\newcommand{\rp}[1]{{\bf \color{green} Richard: #1}}
\newcommand{\av}[1]{{\bf \color{green} Adrian: #1}}
\newcommand{\sparsifysubgraph}{\textsc{SparsifySubgraph}}
\newcommand{\expanderpartition}{\textsc{FindDecomposition}}
\newcommand{\sparsifysquare}{\textsc{SparsifySquare}}
\newcommand{\sparsifyeulerian}{\textsc{SparsifyEulerian}}
\newcommand{\sparsifyproduct}{\textsc{SparsifyProduct}}
\global\long\def\R{\mathbb{{R}}}
\global\long\def\E{\mathbb{E}}
\global\long\def\boldVar#1{\mathbf{#1}}
\global\long\def\mvar#1{\boldVar{#1}}
\global\long\def\vvar#1{\vec{#1}}
\global\long\def\defeq{\stackrel{\mathrm{{\scriptscriptstyle def}}}{=}}
\newcommand{\otilde}{\widetilde{O}}
\global\long\def\norm#1{\|#1\|}
\global\long\def\normFull#1{\left\Vert #1\right\Vert }
\global\long\def\im#1{im(#1)}
\global\long\def\imFull#1{im\left(#1\right) }
\newcommand{\vb}{\vvar{b}}
\newcommand{\vc}{\vvar{c}}
\newcommand{\vd}{\vvar{d}}
\newcommand{\ve}{\vvar{e}}
\newcommand{\vr}{\vvar{r}}
\newcommand{\vv}{\vvar{v}}
\newcommand{\vx}{\vvar{x}}
\newcommand{\vy}{\vvar{y}}
\newcommand{\vzero}{\allzeros}
\newcommand{\vones}{\allones}
\newcommand{\vindic}[1]{\ve_{i}}
\newcommand{\ma}{\mvar A}
\newcommand{\mb}{\mvar B}
\newcommand{\mc}{\mvar C}
\newcommand{\md}{\mvar D}
\newcommand{\mE}{\mvar E}
\newcommand{\mh}{\mvar H}
\newcommand{\mI}{\mvar I}
\newcommand{\mLL}{\mvar L}
\newcommand{\mL}{\mvar L}
\newcommand{\mm}{\mvar M}
\newcommand{\mn}{\mvar N}
\newcommand{\mr}{\mvar R}
\newcommand{\ms}{\mvar S}
\newcommand{\mU}{\mvar U}
\newcommand{\mv}{\mvar V}
\newcommand{\mx}{\mvar X}
\newcommand{\my}{\mvar Y}
\newcommand{\mz}{\mvar Z}
\newcommand{\mzero}{\mvar 0}
\newcommand{\mlap}{\mathcal{L}}
\newcommand{\mlaphat}{\widehat{\mathcal{L}}}
\newcommand{\mSigma}{\mvar \Sigma}
\newcommand{\mdiag}{\mvar{diag}}
\newcommand{\diag}{\mathrm{diag}}
\newcommand{\sspan}{\mathrm{span}}
\global\long\def\tsolve{\mvar{\mathcal{T}_{\textnormal{solve}}}}
\global\long\def\abs#1{\left|#1\right|}
\global\long\def\tr{\mathrm{tr}}
\global\long\def\ceil#1{\left\lceil #1 \right\rceil }
\global\long\def\nnz{\mathrm{nnz}}
\global\long\def\supp{\mathrm{supp}}
\global\long\def\dist{\mathcal{D}}
\renewcommand{\dagger}{+}
\renewcommand{\intercal}{\top}
\global\long\def\empircalA{\widetilde{\ma}}
\newcommand{\indic}{\ensuremath{\vec{{1}}}}
\newcommand{\lambdanonzero}{\lambda_{*}}
\begin{document}

\title{Almost-Linear-Time Algorithms for Markov Chains \\
and New Spectral Primitives for Directed Graphs
}

\author{Michael B. Cohen\thanks{This material is based upon work supported by the National Science
Foundation under Grant No. 1111109.}\\
MIT\\
micohen@mit.edu\and Jonathan Kelner\footnotemark[1] \\
MIT\\
kelner@mit.edu\and  John Peebles\thanks{This material is based upon work supported by the National Science
Foundation Graduate Research Fellowship under Grant No. 1122374 and
by the National Science Foundation under Grant No. 1065125.}\\
MIT\\
jpeebles@mit.edu\and  Richard Peng\thanks{This material is based upon work supported by the National Science
	Foundation under Grant No. 1637566.}\\
Georgia Tech \\
rpeng@cc.gatech.edu\and  Anup B. Rao\\
Georgia Tech\\
anup.rao@gatech.edu\and  Aaron Sidford\\
Stanford University\\
sidford@stanford.edu\and  Adrian Vladu\footnotemark[1] \\
MIT\\
avladu@mit.edu}

\date{}
\maketitle
\begin{abstract}
In this paper we introduce a notion of spectral approximation for directed graphs. While there are many potential ways one might define approximation for directed graphs, most of them are too strong to allow sparse approximations in general. In contrast, we prove that for our notion of approximation, such sparsifiers do exist, and we show how to compute them in almost linear time. 

Using this notion of approximation, we provide a general framework for solving asymmetric linear systems that is broadly inspired by the work of [Peng-Spielman, STOC`14].
Applying this framework in conjunction with our sparsification algorithm, we obtain an almost-linear-time algorithm for solving directed Laplacian systems associated with Eulerian Graphs. Using this solver in the  recent framework of
[Cohen-Kelner-Peebles-Peng-Sidford-Vladu, FOCS`16],
we obtain almost linear time algorithms for
solving a directed Laplacian linear system,
computing the stationary distribution of a Markov chain,
computing expected commute times in a directed graph, and more. 

For each of these problems, our algorithms improves
the previous best running times of $O((nm^{3/4} + n^{2/3} m) \log^{O(1)} (n \kappa \epsilon^{-1}))$
to $O((m + n2^{O(\sqrt{\log{n}\log\log{n}})})
\log^{O(1)} (n \kappa \epsilon^{-1}))$
where $n$ is the number of vertices in the graph, $m$ is the number of edges, $\kappa$ is a natural condition number associated with the problem, and $\epsilon$ is the desired accuracy. We hope these results open the door for further studies into directed spectral graph theory,
and that they will serve as a stepping stone for designing a new generation of fast algorithms for directed graphs.

\end{abstract}

\vfill

\pagebreak{}

\section{Introduction}
\label{sec:intro}

 In the analysis of Markov chains, there has been a longstanding algorithmic gap between the general case, corresponding to random walks on directed graphs, and the special case of reversible chains, for which the corresponding graph can be taken to be undirected.
 This gap begins with the most basic computational task---computing the stationary distribution---and persists for many of the fundamental problems associated with random walks, such as computing hitting and commute times, escape probabilities, and personalized PageRank vectors.
 In the undirected case, there are algorithms for all of these problems that run in linear or nearly-linear time.
 In the directed case, however, the best algorithms have historically been much slower. Specifically, the best running times were given by a recent precursor to the present paper~\cite{cohen2016faster}, which showed that one could solve these problems on a graph with $n$ vertices and $m$ edges in time $\otilde(nm^{3/4}+n^{2/3}m)$.\footnote{
We use $\otilde$ notation to suppress terms that are polylogarithmic in $n$, the natural condition number of the problem $\kappa$, and the desired accuracy $\epsilon$.
We use the term ``nearly linear'' to refer to algorithms whose running time is $\otilde(m)=m\log^{O(1)}(n\kappa\epsilon^{-1}) $ and ``almost linear'' to refer to algorithms that are linear up to sub-polynomial (but possibly super-logarithmic) factors, i.e.,  whose running time is $O(m(n\kappa\epsilon^{-1})^{o(1)})$.}
 Prior to that work, it was unknown whether one could solve any of them faster than the time needed to solve an arbitrary linear system with the given size and sparsity, i.e. $\Theta(\min(mn,n^\omega))$ time, where $\omega<2.3729$ is the exponent for matrix multiplication.
 
 This gap has its origins in a broader discrepancy between the state of algorithmic spectral graph theory in undirected and directed settings.
 While the undirected case has a richly developed theory and a powerful collection of algorithmic tools, similar results have remained somewhat elusive for directed graphs.
 In particular, the problems mentioned above can be expressed in terms of the   linear algebraic properties of the Laplacian matrix of a graph, and it was shown in~\cite{cohen2016faster} how to reduce all these problems to the solution of a small number of Laplacian linear systems.
 In the undirected case, there has been a tremendously successful line of research on how to use the combinatorial properties of graphs to accelerate the solution of such systems, culminating in algorithms that can solve them in nearly-linear time~\cite{SpielmanTengSolver:journal,KoutisMP10,KoutisMP11,KelnerOSZ13,lee2013efficient,CohenKMPPRX14,PengS14,KyngLPSS16,KyngS16}.
 Unfortunately, these solvers relied heavily on several features that seemed intrinsic to the undirected case and did not appear to be available for directed graphs, thereby precluding an analogous solver for directed Laplacians. 
 In particular, the undirected solvers relied on:
 \begin{description}
 	\item[Knowledge of the kernel/stationary distribution:] Up to a simple rescaling by the vertex degrees, vectors in the kernel of a Laplacian  correspond to stationary distributions of the corresponding random walk.
 	For undirected graphs, the kernel is spanned by the all-ones vector on each of the connected components, so it and the space of stationary distributions can be easily computed in linear time.
 	For directed graphs, however, this is no longer the case, and finding the stationary distribution does not seem to be any easier than the original problem of solving Laplacian linear systems.
 	In fact, while  stationary distributions of random walks on directed graphs have been studied for over 100 years~\cite{BasharinLN04}, and computing them has been extensively investigated in both theory and practice (see e.g.~\cite{Stewart94:book,Saad03:book}),
 	the $\otilde{(nm^{3/4} +n^{2/3}m)}$ result in~\cite{cohen2016faster} was the first to find them in less time than is required to find the kernel of a general matrix.

 	\item[Symmetry and positive semidefiniteness:] Undirected Laplacians are symmetric and positive semidefinite.
 	Essentially every aspect of algorithmic spectral graph theory uses this symmetry to treat the Laplacian as a quadratic form and relies on its expression as a sum of positive semidefinite contributions from each of the edges to analyze its properties.
 	This includes the Laplacians' connection to the graph's cut structure, their relationship to electrical circuits and effective resistances, the notion of graph inequalities and spectral approximation, the combinatorial construction of preconditioners, and the iterative methods used to solve Laplacian systems.
 	On the other hand, directed Laplacians are asymmetric matrices, and their naive symmetrizations are not typically positive semidefinite.

 	\item[Sparsification]  One of the most powerful algorithmic tools in the undirected setting is the ability to construct \emph{sparsifiers}~\cite{BenczurK96,SpielmanT11,BatsonSST13}.
 	These allow one to approximate an arbitrarily dense graph by a sparse graph that has only a slightly super-linear number of edges.
 	The classical notion of cut sparsification requires that the value of every cut in the original graph be approximately preserved in the sparsifier; the more recent notion of spectral sparsification is stronger, and also implies the former property.
 	For directed graphs, it can be shown that, even for the weaker notion of cut sparsification,
 	such sparsifiers do not generally exist. One simple example is the complete bipartite graph.  (See Section~\ref{sub:directed_sparse_hard}.)
 	In fact, it was not known how to define any other useful notion of sparsification
 	for which this would not be the case.
 \end{description}

 In this paper, we show how to cope with these fundamental differences, and begin to address the algorithmic gap between general and reversible Markov chains.
 Our core technical result is the first almost-linear-time solver for directed Laplacian systems.
Using the work from~\cite{cohen2016faster}, this yields the first almost-linear-time algorithms for computing a host of fundamental objects and quantities associated with a random walk on a directed graph, including the stationary distribution, hitting and commute times, escape probabilities, and personalized PageRank vectors.

 More broadly, constructing our solver required the development of directed versions of several foundational tools and techniques from undirected algorithmic spectral graph theory.
 Most notably, and perhaps surprisingly, we show that it is possible to develop a useful notion of spectral approximation and sparsification of directed graphs, and that sparsifiers under this  definition exist and can be constructed efficiently.

 In addition to their direct application to the analysis of Markov chains, we hope that both the solver itself and the sparsification machinery will prove to be useful tools in the further development of fast graph algorithms.
 In the undirected case, sparsifiers have been a core algorithmic tool since
 the early 1990s~\cite{BenczurK96,Karger00:journal,KargerL02,FungHHP11,AbrahamDKKP16:arxiv},
  and fast solvers for undirected Laplacian solvers have recently led to an explosion of algorithms operating in the so-called ``Laplacian Paradigm''~\cite{Teng10}, in both cases leading to asymptotic improvements for many of the core algorithmic problems for undirected graphs.
 Given the success these methods have enjoyed in the case of undirected graphs, we hope that their directed analogues will spark similar progress in the directed setting.

\subsection{Previous Work}
\label{sec:previous_work}
	In this section, we briefly review some of the previous work related to our results and techniques.  Given the extensive prior research on Markov chains, spectral graph theory, sparsification, solving general and Laplacian linear systems, and computing PageRank, we do not attempt to give a comprehensive overview of the literature; instead we simply describe the work that most directly relates to or motivates this paper. 
	
	\subsubsection{Directed Laplacian Systems, Stationary Distributions, and PageRanks}
		The most direct precursor to this work is a recent paper by a subset of the authors~\cite{cohen2016faster}.  As mentioned above, it showed that, by exploiting linear algebraic properties of directed Laplacians, one could obtain faster algorithms for a wide range of problems involving directed random walks.  
		Prior to this paper, it seemed quite possible that the similarities between directed and undirected Laplacians were largely syntactic, and that there was no way to use the structure of directed Laplacians or random walks to obtain asymptotically faster algorithms.
		In particular, despite extensive theoretical and applied work in computer science, mathematics, statistics, and numerical scientific computing, 
		all algorithms that we are aware of prior to~\cite{cohen2016faster} for obtaining high-quality%
        \footnote{By high-quality, we mean that the algorithm should be able to find a solution with error $\epsilon$ in time that is sub-polynomial in $1/\epsilon$, i.e. $(1/\epsilon)^{o(1)}$. For PageRank
         there were some known techniques for achieving better dependence on $n$ and $m$ at the expense of a polynomial dependence on~$1/\epsilon$~\cite{AndersenCL07,ChungZ10,ChungS13,ChungS14}.}
		solutions for directed Laplacian systems, stationary distributions, or personalized PageRank vectors either have a polynomial dependence
		on a condition number or related parameter (such as a random walk's mixing time or PageRank's restart probability),
		or they apply a general-purpose linear algebra routine and thus run in at least the $\Omega(\min(mn,n^\omega))$ time currently required to solve arbitrary linear systems.	

		By showing that this was not the case,~\cite{cohen2016faster} provided the first indication that one could actually use the structure of directed Laplacian systems to accelerate their solution, which provided a strong motivation to see how much of an improvement was possible.  It also
		created hope that the recently successful research program in building and applying fast algorithms for solving (symmetric) Laplacian systems~\cite{SpielmanTengSolver:journal,KoutisMP10,KoutisMP11,KelnerOSZ13,lee2013efficient,PengS14,KyngLPSS16} could be applied to give more direct improvements to running times for solving combinatorial optimization problems on directed graphs.

		In addition to motivating the search for faster Laplacian solvers,~\cite{cohen2016faster} provided a set of reductions that we will directly apply in this paper.  
		In order to prove its results, \cite{cohen2016faster} showed how to reduce a range of algorithmic questions about directed walks, such as computing the stationary distribution,  hitting and commute times, escape probabilities, and personalized PageRank vectors, to solving a small number of linear systems in directed Laplacians.  
		
		It turns out that it is easier to work with such systems in the special case where the graph is Eulerian. 
		One of the main technical tools in~\cite{cohen2016faster} is a reduction to this special case.  They did this by giving an iterative method that solved a general Laplacian system by solving a small number of systems in which the graph is Eulerian.  
	Together, this showed that to solve the aforementioned problems, it suffices to give a solver for Eulerian graphs, and that this only incurs a factor of $\otilde(1)$ overhead.
	It then obtained all of its results by constructing an Eulerian solver that runs in time $\otilde(m^{3/4}n+mn^{2/3})$. 
		In this paper we construct an Eulerian solver that runs in time 
		  $m^{1+o(1)}$ and then just directly apply these reductions to obtain our other results.

		However, while \cite{cohen2016faster} opened the door for further algorithmic improvements in analyzing Markov chains,
		the arguments in it provided little evidence that the running time could be improved to anything approaching what was known in the undirected case.  
		Indeed,  while the techniques in it suggested that it might be possible to obtain further improvements,  %
		even the most optimistic interpretations of the structural results in \cite{cohen2016faster} only gave hope for achieving running times of roughly $\otilde (m \sqrt{n})$. %
	 This would make it no faster than some of the existing algorithms that use undirected Laplacian solvers to solve problems on directed graphs, such as the 	 $\otilde(m^{10/7})$ algorithms for  unit cost maximum flow~\cite{DBLP:conf/focs/Madry13,DBLP:journals/corr/Madry16} and shortest path with negative edge lengths~\cite{DBLP:journals/corr/CohenMSV16}, or the $\otilde (m \sqrt{n})$ type bounds for minimum cost flow~\cite{LeeS14}.
			As such, while this would provide better results for the applications to Markov chains, it would rule out the hope of obtaining improved results for 
these directed problems by replacing the undirected solver with a directed one.

Intuitively, the solver in~\cite{cohen2016faster} worked by showing how one could use the existence of a fast undirected solver to solve directed Laplacians.  For a directed Eulerian Laplacian $\mlap$, it showed that the symmetrized matrix $\mU=(\mlap+\mlap^\top)/2$ is the Laplacian of an undirected graph, and that the symmetric matrix $\mlap^\top \mU^+ \mlap$ was, in a certain sense, reasonably well approximated by $\mU$.
Given a linear system $\mlap\vec{x}=\vec{b}$, one could then form the equivalent system $\mlap^\top \mU^+ \mlap\vec{x}=\mlap^\top \mU^+ \vec{b}$ and use a fast undirected Laplacian solver to apply $\mU^+$. 
One could then hope that the fact that the matrix on the left is somewhat well-approximated by $\mU$ would imply that $\mU^+$ is a sufficiently good preconditioner for it to yield an improved running time.  
It turned out that, while this would actually be the case in exact arithmetic, numerical issues provided a legitimate obstruction. 
This necessitated a more involved scheme, which gave a slightly slower running time of $\otilde(m^{3/4}n+mn^{2/3})$, rather than the roughly $\otilde(n^{2/3} m)$ running time that what would have been achieved by  exact arithmetic. 

The way this algorithm works provides a good intuitive explanation for why one would not expect it to give a solver yielding substantial improvements for combinatorial ``Laplacian Paradigm'' algorithms that rely on undirected solvers.
At its root, the solver from~\cite{cohen2016faster} works by trying to find the right way to ignore the directed structure and solve the system with an undirected solver; thus it is on essentially the same footing as the algorithms it would hope to improve.
The obstructions it faces are rooted in the fact that directed Laplacians are fundamentally not very well-approximated by undirected ones.
In essence, the difference between the solver in this paper and the one presented in~\cite{cohen2016faster} is that, instead of figuring out how to properly neglect the directed structure, the solver we present here intrinsically works with asymmetric (directed) objects, and redevelops the theory from the ground up to properly capture them.

	\subsubsection{Directed Graph Sparsification and Approximation}
	\label{sub:directed_sparse_hard}
		 While sparsification of undirected graphs has been extensively
		studied~\cite{BenczurK96,SpielmanT11,FungHHP11,SpielmanS08:journal,
		BatsonSS12,BatsonSST13,ZhuLO15,LeeS15},
		 there has been very little success extending the notion to directed graphs. 
		In fact, it was not even clear that there should exist a useful definition  under which directed graphs should have sparsifiers with a subquadratic number of edges,
and for many of the natural definitions one would propose, sparsification is provably impossible.  

For instance, a natural first attempt would be to try to generalize the classical notion of cut sparsification  for undirected graphs~\cite{karger1994random,BenczurK96}.  Given any weighted undirected graph $G$, Benczur and Karger showed that one could construct a new graph $H$ with at most $O(n\log n/\epsilon^2)$ edges such that the value of every cut in $G$ is within a multiplicative factor of $1\pm \epsilon$ of its value in $H$.  While this definition makes sense for directed graphs as well, there is no analogous existence result.  Indeed it is not hard to construct directed graphs for which any such approximation must have $\Omega(n^2)$ edges.

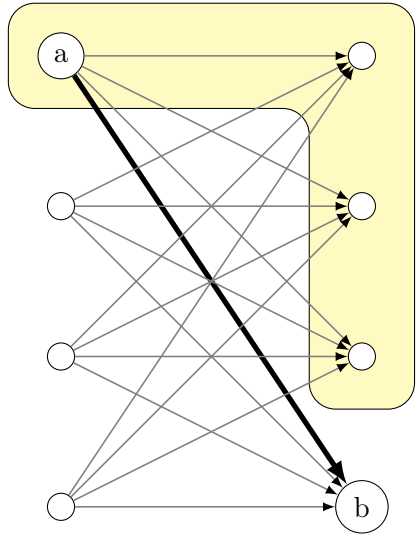
\begin{wrapfigure}{R}{2.8in}

    \begin{center}

\begin{tikzpicture}
	\tikzstyle{vert}=[circle,draw=black,fill=white]
	\tikzstyle{edge}=[-latex,semithick,draw=gray]
	
	\filldraw [rounded corners=10pt, fill=yellow!30] (-3.7, 3.7) -- (1.7, 3.7) -- (1.7, -1.7) -- (0.3, -1.7) -- (0.3, 2.3) -- (-3.7, 2.3) --cycle;
	\node [style=vert] (0) at (-3, 3) {a};
	\node [style=vert] (1) at (-3, 1) {};
	\node [style=vert] (2) at (-3, -1) {};
	\node [style=vert] (3) at (-3, -3) {};
	\node [style=vert] (4) at (1, 3) {};
	\node [style=vert] (5) at (1, 1) {};
	\node [style=vert] (6) at (1, -1) {};
	\node [style=vert] (7) at (1, -3) {b};
	
	\draw [style=edge] (0) to (4);
	\draw [style=edge] (0) to (5);
	\draw [style=edge] (0) to (6);
	\draw [style=edge, line width=2pt,draw=black] (0) to (7);
	\draw [style=edge] (1) to (4);
	\draw [style=edge] (1) to (5);
	\draw [style=edge] (1) to (6);
	\draw [style=edge] (1) to (7);
	\draw [style=edge] (2) to (4);
	\draw [style=edge] (2) to (5);
	\draw [style=edge] (2) to (6);
	\draw [style=edge] (2) to (7);
	\draw [style=edge] (3) to (4);
	\draw [style=edge] (3) to (5);
	\draw [style=edge] (3) to (6);
	\draw [style=edge] (3) to (7);
\end{tikzpicture} 
\end{center}
\caption{\label{fig:bipartite_cut}
An example of the family of cuts described in Equation~\eqref{eq:cut}.  The only edge leaving the highlighted set is $a\rightarrow b$, so any sparsifier that omits it will fail to approximate the corresponding cut.}
\end{wrapfigure}

For example, consider the directed complete bipartite graph $K$ on the vertex set $A \cup B$ with all edges directed from $A$ to $B$.
For each pair of $a \in A$ and $b \in B$, the directed cut
\begin{equation}\label{eq:cut}
	E\left(
	\{a\} \cup B \setminus \{b\},
	\{b\} \cup A \setminus \{a\}\right)
\end{equation}
		 contains only the edge $a \rightarrow b$.  (See Figure~\ref{fig:bipartite_cut}.)
	Removing this edge from the graph would  change the value of this cut from $1$ to $0$, resulting in an infinite multiplicative error.
	
	Any graph that multiplicatively approximates the cuts in $K$ must have $|E(B,A)|=0$, so it must be supported on a subset of the edges of $K$, and the above then shows that such a graph must contain the edge $a\rightarrow b$ for every $a\in A$ and $b\in B$.
It thus follows that any graph that approximately preserves every cut in $K$ must contain all $|A||B|$ potential edges, so $K$  has no nontrivial sparsifier under this definition.

		It would therefore seem that any attempt at reducing the number of edges in a directed graph while preserving
		the combinatorial structure
		is doomed to fail.
		However, Eulerian graphs present a natural setting that 
		circumvents this
		because cuts in Eulerian graphs have the same amount of edge weight going in each direction, the bipartite graph counterexamples above are precluded.
		This balancedness allows one to incorporate sparsification based
		tools for flows and routings in this setting to solve
		combinatorial flow and cut problems quickly on Eulerian graphs~\cite{EneMPS16}.
				
		Most closely related to our notion of sparsification of directed
		graphs is the work by Chung on Cheeger's inequality for
		directed graphs~\cite{Chung05}.
		This result transforms the graph into an Eulerian graph $G$
		in a way identical to how we obtain Eulerian
		graphs~\cite{cohen2016faster}: by rescaling each
		edge weight by the probability of its source vertex in a stationary distribution.
 		It then relates the convergence rate of random walks on $G$ to the eigenvalues of the undirected graph obtained by removing directions on all edges.
 		Specifically if the Eulerian directed Laplaican is $\mlap$,
 		this symmetrization is
 		$\left(\mlap + \mlap^{\top}\right)/2.$

		Since the eigenvalues of the symmetrization of an Eulerian graph give information about random walks on the original graph, it might be tempting to define approximation for Eulerian graphs in terms of whether their symmetrizations approximate each other in the conventional positive semidefinite sense. For our purposes, we require (and obtain) a substantially stronger notion of approximation that preserves much of the directed structure that would be erased by symmetrizing. The reason why we need a stronger notion of approximation is that we want graphs that approximate each other under this notion to be good preconditioners of one another. In contrast, if one defines approximation according to whether the symmetrizations approximate one another, one would have to say that the length $n$ undirected cycle and the length $n$ directed cycle approximate each other, since they are both Eulerian and have the same undirected symmetrization. However, they are not good preconditioners of one another, and using one as a substitute for the other would incur very large losses in our applications. Under the notion of approximation we introduce in this paper, these graphs differ by a factor of $\Omega(n^2)$.

	\subsubsection{Laplacian System Solvers}
 		\label{sec:prev_solvers}
		Our algorithms build heavily on the literature for solving undirected Laplacians systems.    
    Since undirected Laplacians are special cases of directed Laplacians, any directed solver will yield an undirected solver when given a symmetric input.  
    It is thus helpful to consider what undirected solver we would like our method to resemble in this case.

    There are now a fairly large number of reasonably distinct algorithms for solving such systems, and we believe that several of them provide a template that could be turned into a working directed solver.  Of these, the one that our solver most closely resembles is the parallel solver by Peng and Spielman~\cite{PengS14}, which we will briefly summarize here.
    
  To simplify the notation and avoid having to keep track of degree normalizations, we only consider regular graphs when giving the intuition behind the algorithm. Suppose that we are given a $d$-regular undirected graph $G$ with Laplacian $\mlap =d\mI-\ma=d(I-\WWhat)$, where $\WWhat=\ma/d$ has $\|\WWhat\|<1$ on $\ker(\mlap)^\perp$.
		For simplicity, in the equations that follow, we restrict our attention to the space perpendicular to the kernel of $\mlap$. With this convention, the algorithm of \cite{PengS14} is then motivated by the series expansion 
\begin{equation}\label{eq:series_expansion}
	(\mI-\WWhat)^{-1}=\sum_{i\geq 0} \WWhat^i = \prod_{k\geq 0}\left( \mI + \WWhat^{2^k}\right),
\end{equation}  
  which is a matrix version of the standard scalar identity $1/(1-x)=1+x+x^2+x^3+\dots=1(1+x)(1+x^2)(1+x^4)\cdots$. If $\lambda$ is the smallest nonzero eigenvalue of $\mI-\WWhat$, then truncating this product at $k=\Theta(\log 1/\lambda)$ yields a constant relative error, which can be made arbitrarily small by further increasing $k$. 
Hence if $\lambda > 1/\mathrm{poly}(n)$, we obtain a small error by multiplying the first $O(\log n)$ terms of the product.
  This seems to suggest a good algorithm for solving a system $\mlap \vec{x}=\vec{b}$: simply compute $\mI+\WWhat^{2^k}$ for $k=0,\dots,t=O(\log n)$ and then return $\frac{1}{d}(\mI+\WWhat^{2^0})\dotsm (\mI+\WWhat^{2^t})\vec{b}$. %
  
  Unfortunately, this algorithm (implemented naively) would be too slow. As $k$ grows, $\WWhat^k$  quickly becomes dense, so computing it requires repeatedly squaring dense matrices, which takes time $O(n^\omega)$.  To deal with this, their algorithm  instead replaces these matrices with sparse approximations of them.  Peng and Spielman showed that given a graph with $n$ vertices and $m$ edges, one can compute a sparse approximation of the requisite squared matrix in nearly-linear-time.

Making this idea work requires care, since in general it is not true that the product of two matrices will be well approximated by the product of their approximations.   
For positive semidefinite matrices, however, there is a variant of this statement that  holds if one takes the products symmetrically:
if $\ma$ and $\mb$ are PSD and $\ma$ is a good approximation of $\mb$, then for any matrix $\mv$, $\mv^\top \ma \mv$
is a good approximation of $\mv^\top \mb \mv$ .
This led the authors of~\cite{PengS14} to work with a more stable symmetric version of the series described above, which allowed them to obtain their result.
 
This turns out to be a reasonably convenient template for our directed solver.  In particular, it has fewer moving parts than many of the other methods, and it does not require constructing combinatorial objects, like low-stretch spanning trees.  
Instead it directly relies on sparsification, which is our main new technical tool for directed graphs.
 
Unfortunately, we cannot directly apply the methods described above, since the symmetric product constructions that are used to control the error are no longer available for the (asymmetric) Laplacians of directed graphs.  Moreover, the strong notions of graph approximation and positive semidefinite inequalities that facilitate the analysis for the undirected solver are unavailable in the directed setting.  As such, we end up having to work with weaker error guarantees, and correct the extra error they introduce using a more involved iterative method.

\newcommand{\timebound}{\mathcal{T}}
\subsection{Our Results}
\label{sec:intro:results}

In this paper, we show that, in spite of these seemingly fundamental differences between the directed and undirected settings, we can develop directed analogues of several of the core spectral primitives that have been deployed to great effect on undirected graphs, and we use them to obtain the first almost-linear-time algorithms for many of the central problems in the analysis of non-reversible Markov chains.
The main new theoretical tools and algorithmic primitives we introduce are:
\begin{itemize}
	\item \textbf{Directed graph approximation:} 	We develop a well-behaved notion of spectral approximation for directed graphs, despite the fact that the corresponding Laplacians lack  the symmetry and positive semidefiniteness properties that the undirected version crucially relies on. Our definition specializes to the standard version based on PSD matrix inequalities when applied to undirected graphs, and it retains many of the useful features of the undirected definition.  For example, our notion of graph approximations roughly preserve the behavior of random walks, behave well under composition and change of basis, retain certain key aspects of the combinatorial structure, and provide good preconditioners for iterative methods.
\item \textbf{Directed sparsification:} We show that, under our notion of approximation, any strongly-connected directed graph can be approximated by a \emph{sparsifier} with only $\otilde(n/\epsilon^2)$ edges, and we give an algorithm to compute such a sparsifier in almost-linear time.
 To our knowledge, this is the first time that directed sparsifiers with $o(n^2)$ edges have been proven to exist, even non-algorithmically, for any computationally useful definition that retains the directed structure of a graph.

\item \textbf{Almost-linear-time solvers for directed Laplacian systems:}
Given the Laplacian $\mlap=\md - ma^\top$ of a directed graph with $n$ vertices and $m$ edges, we provide an algorithm that leverages our sparsifier construction to solve the linear system $\mlap \vec{x}=\vec{b}$ in time
\begin{equation}\label{eq:timebound}
\timebound=\tsolvelap=\tsolvelapsimple,  
 \end{equation}
where $\kappa=\max(\kappa(\mlap),\kappa(\md))$ is the maximum of the condition numbers of $\mlap$ and $\md$, improving on the best previous running time of
\tsolveprev. (See Theorem~\ref{thm:laplacian_general}.) 
To do so, we introduce a novel iterative scheme and analysis that allows us to mitigate the accumulation of errors from multiplying sparse approximations without having access to the more stable constructions and bounds available for symmetric matrices. 

\end{itemize}

In~\cite{cohen2016faster}, we provided a suite of reductions that used a solver for directed Laplacians to solve a variety other problems.
Plugging our new solver's running time into these reductions immediately gives the following  almost-linear-time algorithms:%
\footnote{We use $\timebound$ to denote anything of the form given in equation~\ref{eq:timebound}, not the time required for one call to the solver.  Some of the reductions call the solver a logarithmic number of times, so precise value of the $\log^{O(1)}(n\kappa \epsilon^{-1})$ term varies among the applications. Also, note that in this paper we give solving running times in terms of the condition number of symmetric Laplacian whereas in \cite{cohen2016faster} they are often given in terms of the condition number of the corresponding diagonal matrix.  However it is well-known that these differ only by a $O(\poly(n))$ factor and as they are in the logarithmic terms, this does not affect the running times.}
\begin{itemize}
	\item \textbf{Computing stationary distributions:}	 We can compute a vector within $\ell_2$ distance $\epsilon$ of the stationary distribution of a random walk on a strongly connected directed graph in time $\timebound$.
		\item \textbf{Computing Personalized PageRank vectors:}  We can compute a vector within $\ell_2$ distance $\epsilon$ of the Personalized PageRank vector with restart probability $\beta$ for a directed graph in time~$\timebound \log^2 (1/\beta)$. 
\item \textbf{Simulating random walks:} We can compute the escape probabilities, commute times, and hitting times for a random walk on a directed graph and estimate the mixing time of a lazy random walk up to a polynomial factor in time $\timebound$.
	\item \textbf{Estimating all-pairs commute times:} We can build a data structure of size $\otilde(n\epsilon^{-2}\log n)$ in time $\timebound/\epsilon^2$ that, when queried with any two vertices $a$ and $b$, outputs a $1\pm\epsilon$ multiplicative approximation to the expected commute time between $a$ and $b$.
\item\textbf{Solving row- and column-diagonally dominant linear systems:} We can solve linear systems that are row- or column-diagonally dominant in time $\timebound\log K$, where $K$ denotes the ratio of the largest and smallest diagonal entries.  
\end{itemize}
This gives the first almost-linear-time algorithm for 
each of these problems.  For all of them, the best previous running time for obtaining high-quality solutions was what is obtained by replacing $\timebound$ with $\tsolveprev$ and was proven in~\cite{cohen2016faster}.

\subsection{Paper Overview}
 	The rest of this paper is organized as follows.
	\begin{itemize}
	\item Section~\ref{sec:prelim} -- we cover preliminaries such as notation, facts about directed Laplacians that we use throughout the paper, and an overview of our approach.
 	\item Section~\ref{sec:sparsification} -- we introduce a notion of asymmetric approximation, and prove that we can, in nearly linear time, produce sparsifiers which are good approximations under this notion.
 	\item Section~\ref{sec:solver} -- we show how to employ the sparsification routines from the previous section in order to obtain our fast Eulerian Laplacian system solver.

 	\item Appendix~\ref{sec:entry_sparsification} -- we prove a matrix concentration result concerning entrywise sampling, which is the basic building block for the results from Section~\ref{sec:sparsification}.
	\item Appendix~\ref{sec:linear_algebra} -- we provide various general linear algebra facts used throughout the paper.
	\item Appendix~\ref{sec:decomposition} -- we show how to obtain the graph decompositions required for sparsifying arbitrary Eulerian Laplacians using a decomposition of undirected graphs into expanders.
	\item Appendix~\ref{sec:complete} -- we provide some details on the full algorithm for computing stationary distributions and solving directed Laplacians using a Eulerian Laplacian system solver.
	\item Appendix~\ref{sec:harmonic_approx} -- we relate our sparsification results to certain systems considered in \cite{cohen2016faster}. 
	\item Appendix~\ref{sec:reduction} -- we provide a general reduction that improves the dependence of our Eulerian solver  on the condition number from $\exp(\sqrt{\log \kappa})$ to $\log \kappa$.
  
	\end{itemize}

\section{Preliminaries \label{sec:prelim}}

First we give notation in Section~\ref{sec:prelim:notation} and then we give basic information about directed Laplacians in Section~\ref{sec:prelim:laplacian}. Much of this is inherited from~\cite{cohen2016faster}. With this notation in place we give an overview of our approach in Section~\ref{sec:intro:approach}.

\subsection{Notation} 
\label{sec:prelim:notation}

\textbf{Matrices:} We use bold to denote matrices and let $\mI,\mzero \in \R^{n\times n}$
denote the identity matrix and zero matrix respectively. For a matrix $\ma$ we use $\nnz(\ma)$ to denote the number of non-zero entries in $\ma$. When $\ma \in \R^{n \times n}$ we use $\supp(\ma)$ to denote the subset of $[n]$ corresponding to the indices for which at least one of the corresponding row or column in $\ma$ is non-zero.\\
\\
\textbf{Vectors:} We use the vector notation when we wish to highlight that we are representing a vector. We let $\allzeros,\allones\in\R^{n}$ denote
the all zeros and ones vectors, respectively. We use $\indic_{i}\in\R^{n}$ to denote the $i$-th basis vector,  i.e. $(\indic_{i})_{j}=0$
for $j\neq i$ and $(\indic_{i})_{i}=1$. Occasionally, when it is obvious from the context, we apply scalar
operations to vectors with the interpretation that they be
applied coordinate-wise. As with matrices, we use $\supp(\vec{x})$ to denote the indices of $\vec{x}$ with non-zero entries.\\
\\
\textbf{Positive Semidefinite Ordering:} For symmetric
matrices $\ma,\mb\in\R^{n\times n}$ we use $\ma\preceq\mb$ to denote
the condition that $x^{\top}\ma x\leq x^{\top}\mb x$, for all $x$. We define
$\succeq$, $\prec$, and $\succ$ analogously. We call a symmetric
matrix $\ma\in\R^{n\times n}$ positive semidefinite (PSD) if $\ma\succeq\mzero$. 
For vectors $x$, we let $\norm x_{\ma}\defeq\sqrt{x^{\top}\ma x}$. For asymmetric $\ma \in \R^{n \times n}$ we let $\mU_{\ma} \defeq \frac{1}{2} (\ma + \ma^\top)$ and note that $x^\top \ma x = x^\top \ma^\intercal x = x^\top \mU_\ma x$ for all $x \in \R^{n}$.
 \\
\\
\textbf{Operator Norms:} For any norm
$\norm{\cdot}$ defined on vectors in $\R^{n}$ we define the \emph{seminorm} it induces on $\R^{n\times n}$ by $\norm{\ma}=\max_{x\neq0}\frac{\norm{\ma x}}{\norm x}$
for all $\ma\in\R^{n\times n}$. When we wish to make clear that we are considering such a ratio we use the $\rightarrow$ symbol; e.g., $\norm{\ma}_{\mh \to \mh}=\max_{x\neq0}\frac{\norm{\ma x}_\mh}{\norm{x}_\mh}$, but we may also simply write $\norm{\ma}_{\mh} \defeq \norm{\ma}_{\mh \rightarrow \mh}$ in this case. For symmetric positive definite $\mh$ we have that $\norm{\ma}_{\mh \to \mh}$ can be equivalently expressed in terms of $\norm{\cdot}_2$ as $\norm{\ma}_{\mh \rightarrow \mh}=\norm{\mh^{1/2} \ma \mh^{-1/2}}_2$. Also note that $\norm{\ma}_1$ is the is the maximum $\ell_{1}$ norm of a column of $\ma$, and $\norm{\ma}_{\infty}$ is the maximum $\ell_{1}$ norm of a row of $\ma$. \\
\\
\textbf{Diagonals} 
For $x\in\R^{n}$ we let $\mdiag(x)\in\R^{n\times n}$
denote the diagonal matrix with $\mdiag(x)_{ii}=x_{i}$ and typically use $\mx\defeq\mdiag(x)$.
For $\ma\in\R^{n\times n}$ we let $\diag(\ma) \in \R^{n}$ denote the vector corresponding to the diagonal of $\ma$, i.e. $\diag(\ma)_i = \ma_{ii}$ and we let $\mdiag(\ma)$ denote the diagonal matrix having the same diagonal as $\ma$.
\\
\\
\textbf{Linear Algebra} For a matrix $\ma$, we let $\ma^\dagger$ denote the (Moore-Penrose) pseudoinverse of $\ma$. For a symmetric positive semidefinite matrix $\mb$, we let $\mb^{1/2}$ denote the square root of $\mb$, that is the unique symmetric positive semidefinite matrix such that $\mb^{1/2} \mb^{1/2} = \mb$. Furthermore, we let $\mb^{\dagger/2}$ denote the pseudoinverse of the square root of $\mb$. We use $\ker(\ma)$ to denote nullspace (kernel) of $\ma$. We use $\mathrm{span}(x_{1},x_{2},...,x_{k})$
to denote the subspace spanned by $x_{1},...,x_{k}$. For a symmetric PSD matrix $\ma$ we let $\lambdanonzero(\ma)$ denote the smallest non-zero eigenvalue of $\ma$.\\
\\
\noindent\textbf{Misc}: We let $[n]\defeq\{1,...,n\}$. For $\ma\in\R^{n\times n}$,
let $\kappa(\ma)\defeq\norm{\ma}_{2}\cdot\norm{\ma^{\dagger}}_{2}$
denote the condition number of $\ma$.
For symmetric PSD matrices $\ma$ and $\mb$ with the same kernel, let $\kappa(\ma,\mb) \defeq \kappa(\ma^{\dagger/2} \mb \ma^{\dagger/2})$ denote their relative condition number (e.g. if $\alpha \mb \preceq \ma \preceq \beta \mb$ then $\kappa(\ma,\mb) \leq \beta / \alpha$).  Note that our use of pseudoinverse rather than inverse in these definitions is non-standard but convenient. 

\subsection{Directed Laplacians}
\label{sec:prelim:laplacian}

Here we provide notation regarding directed Laplacians and review basic facts regarding these matrices that were proved in~\cite{cohen2016faster}. We begin with  some basic definitions and notation regarding Laplacians:\\
\\
\textbf{Directed Laplacian:} A matrix $\mlap\in\R^{n\times n}$ is
called a \emph{directed Laplacian} if (1) its off diagonal entries are non-positive, i.e. $\mlap_{i,j} \leq 0$ for all $i \neq j$, and (2) it satisfies $\vones^\top \mlap = \vzero$, i.e.  $\mlap_{ii}=-\sum_{j\neq i}\mlap_{ji}$ for all $i$.\\
\\
\textbf{Associated Graph:} To every directed Laplacian $\mlap \in \R^{n \times n}$ we associate a
graph $G_{\mlap}=(V,E,w)$ with vertices $V = [n]$, and edges $(i,j)$
of weight $w_{ij} =-\mlap_{ji}$, for all $i\neq j\in[n]$ with $\mlap_{ji}\neq0$.
Occasionally we write $\mlap=\md-\ma^{\top}$ to denote that we decompose
$\mlap$ into the diagonal matrix $\md$ (where $\md_{ii}=\mlap_{ii}$
is the out degree of vertex $i$ in $G_{\mlap}$) and non-negative matrix $\ma$ (which is weighted
adjacency matrix of $G_{\mlap}$, with $\ma_{ij}=w_{ij}$ if $(i,j)\in E$,
and $\ma_{ij}=0$ otherwise).\\
\\
\textbf{Eulerian Laplacian:} A matrix $\mlap$ is called an \emph{Eulerian Laplacian} if it is a directed Laplacian with $\mlap\allones = \allzeros$. Note that $\mlap$ is an \emph{Eulerian Laplacian} if and only if its associated graph is Eulerian.\\ 
\\
\textbf{(Symmetric) Laplacian}: A matrix $\mU\in\R^{n\times n}$ is
called a \emph{symmetric} or \emph{undirected Laplacian} or just a \emph{Laplacian} if it is symmetric and a directed Laplacian.  Note that the graph associated with an undirected Laplacian is undirected, i.e. for every forward edge there is a backward edge of the same weight. Given a symmetric Laplacian $\mU = \md-\ma$, we let its \textit{spectral gap} be defined as the smallest nonzero eigenvalue of $\md^{-1/2} \mU \md^{-1/2}$, i.e. $\lambda_2(\md^{-1/2} \mU \md^{-1/2}) = \min_{x \perp \ker(\md^{-1/2}\mU\md^{-1/2}), \norm{x}=1} x^{\top} \md^{-1/2} \mU \md^{-1/2} x$.\\
\\
\textbf{Running Times: } Our central object is almost always a directed Laplacian $\mlap = \md - \ma \in \R^{n \times n}$, where $m = \nnz(\ma)$, $U \defeq \max_{i,j} |\ma_{ij}|/\min_{i,j : \ma_{ij} \neq 0} |\ma_{ij}|$. We use $\tilde{O}(\cdot)$ notation to suppress factors polylogarithmic in $n$, $m$, $U$, and $\kappa$, the natural condition number of the particular problem.

 \subsection{Overview of Approach}
	\label{sec:intro:approach}

	Here we provide an overview of our approach for solving linear systems in directed Laplacians. We split it into three parts. In the first part, Section~\ref{sec:approach_reductions}, we describe how to reduce the problem to the special case of solving Eulerian Laplacians with polynomial condition number.  
	In the second part, Section~\ref{sec:approach_sparsification} we cover the efficient construction of sparsifiers.
	Finally,
	and in the third part, Section~\ref{sec:solverOverview}, we discuss how to use the sparsifier construction to build an almost-linear-time solver for polynomially well-conditioned Eulerian Laplacian systems.

\subsubsection{Reductions}
\label{sec:approach_reductions}

We begin by applying two reductions.  The first is a result from~\cite{cohen2016faster}, which states that one can solve row- and column-diagonally dominant linear systems, which include general directed Laplacian systems, by solving a small number of Laplacian systems in which the graphs are Eulerian:

\begin{thm}[Theorem 42 from~\cite{cohen2016faster}]\label{thm:dd-solver}
Let $\mm$ be an
arbitrary $n\times n$ column-diagonally-dominant or row-diagonally-dominant
matrix with diagonal $\md$. Let $b\in\im{\mm}$. Then for any $0<\epsilon\leq1$,
one can compute, with high probability and in time 
\[
O\left(\tsolve\log^{2}\left(\frac{n\cdot\kappa(\md)\cdot\kappa(\mm)}{\epsilon}\right)\right)
\]
a vector $x'$ satisfying $\|\mm x'-b\|_{2}\leq\epsilon\norm b_{2}$.

Furthermore, all the intermediate Eulerian Laplacian solves required
to produce the approximate solution involve only matrices $\mr$ for
which $\kappa(\mr+\mr^{\top}),\:\kappa(\mdiag(\mr))\leq(n\kappa(\md)\kappa(\mm)/\epsilon)^{O(1)}$.
\end{thm}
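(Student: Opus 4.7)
The plan is a two-level reduction: first from the row/column-DD case to a directed Laplacian solve, and then from the directed Laplacian solve to a bounded number of Eulerian Laplacian solves. For the first step, given column-DD $\mm$ with diagonal $\md$, I would write $\mm$ as a directed Laplacian plus a nonnegative diagonal correction (obtained by separating the off-diagonal entries into their signed parts and absorbing any strict DD excess into the diagonal). Solving in $\mm$ then reduces to an outer preconditioned Richardson iteration whose preconditioner is this directed Laplacian; standard analysis in a $\md$-weighted $\ell_2$ norm yields contractivity and convergence in $O(\log(\kappa(\mm)/\epsilon))$ steps. The row-DD case is handled dually by transposing.

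For the second step, reducing directed Laplacian systems to Eulerian Laplacian systems, the key identity is that if $\vec{s}$ is the stationary distribution of the random walk associated with a directed Laplacian $\mlap$, then $\mlap\,\mdiag(\vec{s})$ is an Eulerian Laplacian: its column sums vanish because $\mlap$ is a directed Laplacian, and its row sums vanish by stationarity of $\vec{s}$. If $\vec{s}$ were available, a single Eulerian solve would suffice, but $\vec{s}$ is itself the type of quantity we are trying to compute. I would therefore bootstrap: maintain a running estimate $\tilde{\vec{s}}$, form the nearly-Eulerian matrix $\mlap\,\mdiag(\tilde{\vec{s}})$, solve systems in it by preconditioned Richardson using as preconditioner a genuine Eulerian Laplacian obtained by symmetrically rebalancing row/column sums, and use the result to refine $\tilde{\vec{s}}$. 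One shows that each stage contracts the error in $\tilde{\vec{s}}$ by a constant factor, so $O(\log(\kappa(\mm)/\epsilon))$ stages suffice.

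The hard part is controlling two things simultaneously. First, each outer Richardson iteration must be proven contractive despite the asymmetry of the preconditioned operator; this requires working in an appropriate $\tilde{\vec{s}}$-weighted norm in which the preconditioned operator has small spectral radius, and it is here that the genuinely directed (rather than symmetric) structure enters the analysis. Second, and more delicately, one must bound the condition numbers of every intermediate Eulerian matrix $\mr$ produced along the way: the claim that $\kappa(\mr+\mr^\top)$ and $\kappa(\mdiag(\mr))$ remain bounded by $(n\kappa(\md)\kappa(\mm)/\epsilon)^{O(1)}$ has to be verified by tracking how each refinement of $\tilde{\vec{s}}$ and each Richardson step transform the spectrum and the diagonal, showing that neither drifts by more than a polynomial factor from the corresponding quantities of $\mm$ and $\md$. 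Given these bounds, the target running time $O(\tsolve\cdot\log^2(n\kappa(\md)\kappa(\mm)/\epsilon))$ falls out by multiplying the $O(\log)$ iteration counts at the two reduction levels, each iteration invoking the Eulerian solver once.
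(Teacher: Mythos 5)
A point of comparison before anything else: the paper does not prove this statement. It is imported verbatim as Theorem~42 of~\cite{cohen2016faster}, and the present paper only recaps the resulting algorithms in Appendix~\ref{sec:complete} (Figures~\ref{fig:findStationary} and~\ref{fig:fullSolve}); the proof lives entirely in the earlier paper. So your proposal is measured against that argument, not against anything established here.

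Your second-level plan --- rescale by the stationary so that $\mlap\,\mdiag(\vec{s})$ is Eulerian, bootstrap an estimate $\tilde{\vec{s}}$ over $O(\log)$ rounds, and clean up errors with preconditioned Richardson --- has the right overall shape. But there are two genuine gaps. First, your first-level reduction (outer Richardson on $\mm$ preconditioned by its directed-Laplacian part) would fail: writing a column-DD matrix as $\mm = \mlap + \mE$ with $\mE$ a nonnegative diagonal gives no spectral relation between $\mm$ and $\mlap$ --- in the extreme where $\mm$ is diagonal one has $\mlap = \mzero$ --- so there is no weighted norm in which $\mI - \mlap^{\dagger}\mm$ is a contraction, let alone one giving $O(\log(\kappa(\mm)/\epsilon))$ iterations. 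The device actually used is different: one adds a diagonal slack and rescales by the current estimate of the scaling so that the matrix becomes \emph{strictly} row- and column-diagonally dominant, and such an RCDD matrix with nonpositive off-diagonals embeds \emph{exactly} into an $(n{+}1)$-dimensional Eulerian Laplacian by appending a single vertex whose row and column absorb the surpluses (this is step 2(b) of the pseudocode in Figure~\ref{fig:findStationary}); each intermediate system is then a single Eulerian solve, with no need for your unspecified ``symmetric rebalancing'' preconditioner. (Positive off-diagonal entries of a general row/column-DD matrix also need an explicit reduction, which you wave at but do not give.) Second, the assertion that each refinement stage contracts the error in $\tilde{\vec{s}}$ by a constant factor is precisely the technical heart of~\cite{cohen2016faster}; asserting it does not discharge it, and the same is true of the claimed polynomial bounds on $\kappa(\mr+\mr^{\top})$ and $\kappa(\mdiag(\mr))$ for every intermediate Eulerian system, which in the original proof come from explicitly tracking the added diagonal slack and the quality of the scaling estimate rather than from a generic ``the spectrum does not drift'' argument.
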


If we were to combine this directly with the algorithm from Section~\ref{sec:solver}, it would give a running time of
$\Otil \left( \left( m + n  \exp{ O\left(\sqrt{\log \kappa \cdot \log \log \kappa} \right)} \right)
	\log \left(1/\epsilon \right)\right) $
to solve linear systems in a directed Laplacian $\mlap= \md-\ma^T$,
where $\kappa$ is the condition number of the normalized Laplacian $\md^{-1/2}\mlap \md^{-1/2}$.  
While $\kappa$ is typically polynomial in $n$, it is possible for it to be exponential, so we would like our running time to depend on it logarithmically, instead of just sub-polynomially.
We show how to do this in Appendix~\ref{sec:reduction}, where we give an algorithm to solve an arbitrarily ill-conditioned Eulerian Laplacian systems by solving $O(\log (n\kappa))$ Eulerian Laplacians whose condition numbers are polynomial in $n$.  
This allows us to restrict our attention for the rest of the paper to the case where  $\kappa$ is polynomial in $n$ and, when applied to the algorithm from Section~\ref{sec:solver}, gives our final running time of
$\log^{O(1)} (n \kappa \epsilon^{-1}))$.

\subsubsection{Sparsification}
\label{sec:approach_sparsification}
Our primary new graph theoretic tool is a directed notion of spectral sparsifiers, along with efficient techniques for constructing them for an Eulerian graph and its square.
As discussed in the introduction,
there are seemingly intrinsic problems with many of the notions of 
 directed sparsification that one would propose based on analogies to the undirected case.
In particular, both the cut-based and spectral notions have seemingly fatal issues that preclude their use in directed graphs.
For the cut-based notion, as shown in Section~\ref{sub:directed_sparse_hard}, good sparsifiers provably don't exist for some graphs.  

If one instead seeks to generalize the undirected definition of spectral sparsifiers, which requires a sparsifier $H$ of a graph $G$ to obey $(1-\epsilon)\vec{x}^T \mlap_H \vec{x} \leq \vec{x}^T \mlap_G \vec{x}\leq(1+\epsilon)\vec{x}^T \mlap_H \vec{x}$, 
the problems are perhaps even more severe. 
For instance, when $G$ is directed $\mlap_G$ is no longer symmetric, so it's not clear that it makes sense to use it as a  quadratic form $\vec{x}^\top \mlap_G \vec{x}$, and doing so essentially symmetrizes it and discards the directed structure, since 
$\vec{x}^\top \mlap_G \vec{x}=\vec{x}^\top \mlap_G^\top \vec{x}=\vec{x}^\top \left(\frac{\mlap_G+\mlap_G^\top}{2} \right)\vec{x}$.  In addition, the resulting quadratic form is not typically PSD, i.e. there often exist $\vec{x}$ for which $\vec{x}^\top\mlap_G\vec{x}<0$, in which case $G$ would not approximate itself under the definition given for $\epsilon > 0$. 

One also has to deal with the fact that, unlike in the undirected case, the kernels of directed graph Laplacians are rather subtle objects: for a strongly-connected graph $G$, the kernel of $\mlap_G=\md-\ma^\top$ is given by $\md^{-1}\phi$, where $\phi$ is the stationary distribution of the random walk on $G$.  
This carries various problematic consequences, including the fact that $\mlap$ and $\mlap^\top$ typically have different kernels, and even small changes in the graph can change whether $\mlap \vec{x}=0$ for a given vector $\vec{x}$.

Our approach to this is based on the fact that many of these problems do not occur for Eulerian graphs.  In particular, if $\mlap$ is the Laplacian of an  Eulerian directed graph $G$, $\mU_\mlap=(\mlap+\mlap^\top)/2$ is the Laplacian of an undirected graph and thus positive semidefinite, and cuts in the corresponding undirected graph are the same as those in $G$.  In addition,
 the kernel of $\mlap$ is spanned by the all-ones vector and is the same as the kernel of $\mlap^\top$.  
In addition, the following was shown in~\cite{cohen2016faster}, which says that the Laplacian of any strongly connected graph can be turned into an Eulerian Laplacian by applying a diagonal scaling:
\begin{lem}[Lemma 1 from~\cite{cohen2016faster}, abridged]
	\label{lem:stationary-equivalence} Given a directed Laplacian $\mlap=\md-\ma^{\top}\in\R^{n\times n}$
	whose associated graph is strongly connected, there exists a positive
	vector $\vec{x} \in\R_{>0}^{n}$ (unique up to scaling) such that 
	$\mlap \cdot \mdiag(\vec{x})$ is an Eulerian Laplacian. Furthermore, $\ker(\mlap) = \sspan(\vec{x})$, and $\ker(\mlap^\top) = \sspan(\vones)$.
\end{lem}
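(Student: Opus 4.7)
The plan is to reduce the statement to the Perron--Frobenius theorem applied to the random walk on the strongly connected graph $G_{\mlap}$. Observe first that $\mlap \cdot \mdiag(\vec{x})$ is a directed Laplacian whenever $\vec{x} > 0$: its off-diagonal entries $\mlap_{ij} x_j$ are non-positive, and the column-sum condition $\vones^{\top}(\mlap \cdot \mdiag(\vec{x})) = \vzero$ follows from $\vones^{\top}\mlap = \vzero$, which holds by definition of a directed Laplacian. Thus $\mlap \cdot \mdiag(\vec{x})$ is Eulerian if and only if $\mlap \cdot \mdiag(\vec{x}) \cdot \vones = \vzero$, which unpacks exactly to $\mlap \vec{x} = \vzero$. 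So the existence claim is equivalent to producing a strictly positive vector in $\ker(\mlap)$.

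To produce such a vector, I would introduce the random-walk transition matrix $\mvar{P} \defeq \md^{-1}\ma$ associated to $G_{\mlap}$; note that $\mvar{P}\vones = \vones$ because row $i$ of $\ma$ sums to the out-degree $\md_{ii}$. Because $G_{\mlap}$ is strongly connected, the non-negative matrix $\ma$ is irreducible, and hence so is $\mvar{P}$ (it has the same zero pattern). The Perron--Frobenius theorem then guarantees that the Perron eigenvalue---which equals $1$ since $\mvar{P}$ is stochastic---has a one-dimensional eigenspace, and that a left Perron eigenvector $\vec{\pi} > 0$ satisfying $\vec{\pi}^{\top}\mvar{P} = \vec{\pi}^{\top}$ exists and is unique up to positive scaling. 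Rewriting this as $\ma^{\top}\md^{-1}\vec{\pi} = \vec{\pi}$, i.e.\ $(\md - \ma^{\top})\md^{-1}\vec{\pi} = \vzero$, gives $\mlap(\md^{-1}\vec{\pi}) = \vzero$, so setting $\vec{x} \defeq \md^{-1}\vec{\pi} > 0$ yields the desired positive null vector. Uniqueness of $\vec{x}$ up to scaling follows from one-dimensionality of the Perron eigenspace of $\mvar{P}$, which under the diagonal rescaling $\vec{y} \mapsto \md^{-1}\vec{y}$ translates into one-dimensionality of $\ker(\mlap)$.

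For $\ker(\mlap^{\top}) = \sspan(\vones)$: the defining identity $\vones^{\top}\mlap = \vzero$ is equivalent to $\mlap^{\top}\vones = \vzero$, so $\vones \in \ker(\mlap^{\top})$; since $\mlap$ and $\mlap^{\top}$ have the same rank and we have just established $\dim \ker(\mlap) = 1$, the kernel of $\mlap^{\top}$ is also one-dimensional and therefore equals $\sspan(\vones)$. The only real subtlety in this proof is the strict positivity and uniqueness assertions, both of which rest precisely on irreducibility of $\mvar{P}$, i.e.\ on strong connectivity of $G_{\mlap}$; the rest is bookkeeping with the diagonal rescaling that converts between left eigenvectors of the stochastic matrix $\mvar{P}$ and right null vectors of the Laplacian $\mlap$.
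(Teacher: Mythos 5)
Your proof is correct, and it is essentially the canonical argument: the paper does not prove this lemma itself but imports it from~\cite{cohen2016faster}, where the proof likewise rests on Perron--Frobenius applied to the irreducible row-stochastic matrix $\md^{-1}\ma$, with the diagonal rescaling by $\md^{-1}$ converting the left Perron vector into the positive kernel vector of $\mlap$ and the rank argument handling $\ker(\mlap^{\top})=\sspan(\vones)$. The only (trivial) caveat is that invertibility of $\md$ uses that every vertex of a strongly connected graph on $n\ge 2$ vertices has positive out-degree, which you implicitly assume.
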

Moreover, it was shown in~\cite{cohen2016faster} that one could find a high-precision approximation to this scaling efficiently given access to an Eulerian solver.

Intuitively, we define our notion of sparsification and approximation for Eulerian graphs, and we show that this notion induces a well-behaved definition for other strongly-connected graphs through the Eulerian scaling.
As we do not want to neglect the directed structure, we will think of Laplacians as linear operators, not quadratic forms, and we study their sizes through various operator norms.

For Laplacians of Eulerian graphs, we use the fact that their symmetrizations are PSD, and our definition of approximation will demand that the difference between the two operators be small relative to the corresponding quadratic form.  More precisely, we say that an Eulerian Laplacian $\mlap_H$ $\epsilon$-approximates another Eulerian Laplacian $\mlap_G$ if
$\big\|\mU_{\mlap_G}^{+/2}(\mlap_H-\mlap_G)\mU_{\mlap_G}^{+/2}\big\|_2\leq \epsilon$.
We note that this use of $\mU_{\mlap_G}$ is closely related to the $\mlap_G^\top \mU_{\mlap_G}^+ \mlap_G$ matrix that appeared in~\cite{cohen2016faster}.  The difference, however, is that we are not trying to directly use this matrix as a symmetric stand-in for our Laplacian; we are working directly with the original (asymmetric) Laplacians and are just using it to help define a matrix norm.

To construct sparsifiers of Eulerian graphs with respect to this notion, we follow a similar approach to the one originally used by Spielman and Teng for spectral sparsification, but carefully tailored to the directed setting.  The idea is to first partition our graph into well-connected components.  Because the cuts in an Eulerian graph match those in its symmetrization,  it makes sense to do this partitioning by simply partitioning the corresponding undirected graph into clusters with good expansion.  We use existing decomposition techniques to argue that one can find such a partition 
with a significant fraction of the edges contained in the clusters.  We then show a concentration result for asymmetric matrices that says that appropriately sub-sampling within these clusters preserves the relevant structure reasonably well while only keeping a small number of edges relative to the cluster size. 

In the undirected case, one would just repeat this procedure until the graph is sparse.  Where our procedure differs, however, is that we keep track of the directed structure along the way, and ``patch'' the subsampled object to keep it from diverging from what it should be.  In particular, the sampling procedure, when applied to an Eulerian graph will produce a non-Eulerian graph.  However, we add additional edges to fix this after every sampling step and use our concentration bounds to show that the patches we add are sufficiently small to not decrease the quality of our approximation.

Carefully, analyzing this procedure allows us to produce a sparsifier in nearly linear time. However, in order to use our sparsification routine to produce a solver, we also need to sparsify the Laplacian of the square of a graph. To do this, we could just explicitly form the square and then sparsify it.  However, we would like to perform this  procedure in time that is nearly-linear in the number of edges of the original graph, whereas explicitly forming the square would cause the running time to grow with the number of edges of the square, which could be substantially larger.  To prevent this, we instead show how to work with an implicit representation of the square that we can manipulate more efficiently, similar to \cite{PengS14}.

\subsubsection{Linear System Solving}
\label{sec:solverOverview}
  In Section~\ref{sec:solver}, we describe our algorithm for solving Eulerian Laplacian systems of equations.
It begins with a similar template to the Peng-Spielman solver~\cite{PengS14} described in Section~\ref{sec:prev_solvers}, but with modifications to accommodate our non-symmetric setting.  
Given a linear system in an Eulerian Laplacian $\mlap = \md - \ma^{\top}$, 
we write $\mlap= \md^{1/2}\left(\II-\WWhat\right)\md^{1/2}$, where $\WWhat=\md^{-1/2}\ma^\top \md^{-1/2}$. This reduces the problem to solving linear systems in $\LL=\II-\WWhat$ where we can show that $\|\WWhat\|_2=1$.
We then apply the expansion in Equation~\eqref{eq:series_expansion}, but with some slight modifications:
\begin{itemize}
  \item We find it convenient to build up the product expansion in Equation~\eqref{eq:series_expansion}  recursively.  We do so using the identity   
  \begin{equation}\label{eq:recursive}
  (\II - \WWhat)^+ =(\II -\WWhat^2)^+ (\II+\WWhat),
  \end{equation}
  which can be thought of as a matrix analogue of the rational function identity 
  \[
  \frac{1}{1-z}=\frac{1+z}{1-z^2}.  
  \]
  Applying this identity repeatedly gives 
  \[
     (\II - \WWhat)^+ =(\II -\WWhat^2)^+ (\II+\WWhat)
     =
     (\II -\WWhat^4)^+ (\II+\WWhat^2) (\II+\WWhat) 
     =(\II -\WWhat^8)^+ (\II+\WWhat^4) (\II+\WWhat^2) (\II+\WWhat) =\dots. 
  \] 
After $k$ applications of the identity, this yields the first $k$ terms of the product expansion in~\eqref{eq:series_expansion} times $(\II-\WWhat^{2^k})^+$, which converges to the identity as $k$ gets large if $\|\WWhat\|_2<1$.
Some advantages of this compared to the infinite product expansion are that it gives an exact expression rather than an asymptotic result, which will be more convenient to work with when analyzing the growth of errors, and that the pseudoinverses in the expression gives a correct answer when $\|\WWhat\|=1$, which decreases the extent to which we need to explicitly handle the kernel of $\mlap$ as a special case.

\item If $z\neq 1$ is a complex number with $|z|=1$, $1/(1-z)$ exists but the series $1/(1-z)=1+z+z^2+\dots$ does not converge, and our matrix expansion will exhibit similar behavior.  Graph theoretically, this case corresponds periodic behavior in the random walk, and we deal with it, as usual, by adding self-loops and working with a lazy random walk.  Algebraically, we work with a convex combination with the identity,
\[
\WWhat^{(\alpha)} = \alpha \II + (1-\alpha)\WWhat,
\]
and we note that
$\II -\WWhat^{(\alpha)}=(1-\alpha)(\II -\WWhat)$.
We then replace the identity in Equation~\eqref{eq:recursive} with the modified identity 
\begin{align}\label{eq:recursive_alpha}
  (\II-\WWhat)^+&=(1-\alpha)\left(\II-\WWhat^{(\alpha)}\right)^+
  =(1-\alpha)\left(\II-\WWhat^{(\alpha)2}\right)^+ \left(\II+\WWhat^{(\alpha)}\right),
\end{align}
which leads to better convergence behavior. This step insures that each application of the identity causes a change that is more gradual than
squaring. Moreover, our analysis takes advantage of the fact that taking a linear combination with the identity makes it easier to 
relate $\II - \AAcal_{j + 1}^{(\alpha)}$ to
$\II - \AAcal_{j}^{(\alpha)}$. While it may not be necessary to do at every step, it is used to simplify the current analysis. 
Note, that this algebraic simplification through `lazy' random walks is
also present in other works involving
squaring~\cite{ChengCLPT15,JindalK15:arxiv}.
\end{itemize}

Similarly to the approach in~\cite{PengS14}, our strategy is to repeatedly apply~\eqref{eq:recursive_alpha}, but to replace $(\WWhat^{(\alpha)})^2$ with a sparsifier in each step to allow us to decrease the computational costs. 
More precisely, we show how to efficiently construct a sequence of matrices $\WWhat_0,\WWhat_1,\ldots,\WWhat_d$  and associated matrices $\WWhat_i^{(\alpha)}=\alpha \II + (1-\alpha)\WWhat_i$ such that  
each matrix in the sequence has $\otilde(n/\epsilon^2)$ nonzero entries,
$\II-\WWhat_0$ is an $\epsilon$-approximation of $\II-\WWhat$, and
$\II-\WWhat_i$ is an $\epsilon$-approximation of $\II-(\WWhat_{i-1}^{(\alpha)})^2$  for each $i\geq 1$ (note that we set $\WWhat_0$ by sparsifying the original Laplacian).  We call this a \emph{square-sparsification chain}.  
  In Section~\ref{sec:construction}, we show how to compute all of the matrices in such a chain in time $\otilde(\nnz(\mlap)+ n\epsilon^{-2}d)$, which we note is within logarithmic factors of the total number of nonzero entries.

  The length of the chain is then dictated by the condition number  $\kappa=\kappa(\UU_{\mI - \WWhat})$, the condition number of the symmetric Laplacian associated with the input Eulerian Laplacian. 
  Note that $\kappa = O(\poly(n U))$ where $U \defeq \max_{i,j} |\ma_{ij}|/\min_{i,j : \ma_{ij} \neq 0} |\ma_{ij}|$ and may be smaller. 
  If we set $d = \Omega(\log{\kappa})$, we show that $\mI-\WWhat_d^{(\alpha)}$ well-conditioned.
  We can thus stop our recursion at this point and 
  (approximately) apply $(\mI-\WWhat_d^{(\alpha)})^+$  
    using a small number of iterations of a standard iterative method, Richardson iteration.

Expanding the recurrence in~\eqref{eq:recursive_alpha} gives
\begin{equation}
\left(\mI - \WWhat_{i}\right)^\dagger
\approx \left( 1 - \alpha \right)^{j - i} 
\left(\mI-\WWhat_{j}^{\left(\alpha\right)}\right)^\dagger
\left(\mI + \WWhat_{j-1}^{\left(\alpha\right)}\right)
\left(\mI + \WWhat_{j-2}^{\left(\alpha\right)}\right) \dotsm \left(\mI + \WWhat_{i}^{\left(\alpha\right)}\right).
\label{eqn:key}
\end{equation}
If we have already computed the matrices in the chain, we can  apply the right-hand side to a vector $\vec{b}$ by performing $(j-i)$ matrix-vector multiplications and solving a linear system in $\II-\WWhat_j^{(\alpha)}$.  
It is useful to think of this as an approximate reduction from applying $(\II-\WWhat_i)^+$ to applying $(\II-\WWhat_j^{(\alpha)})^+
$.
The matrices in~\eqref{eqn:key} have at most $\otilde(n/\epsilon^2)$ nonzero entries, so the total time for the matrix vector multiplications is then at most $\otilde\left((j-i)n\epsilon^{-2}\right)$.

Because of the errors introduced by the sparsification steps, the right-hand side of~\eqref{eqn:key} is only an approximation of $(\II-\WWhat_i)^+$, so applying it directly to $\vec{b}$ only yields a (typically somewhat crude) approximation to solution to $(\II-\WWhat_i)\vec{x}=\vec{b}$.
To obtain a better solution, we instead use it as a preconditioner inside an iterative method for the linear system.  
This allows us to obtain an arbitrarily good solution to the system, and the quality of the approximation in~\eqref{eqn:key} then determines the number of iterations required.
 
This suggests that we quantify the error in our approximations using a notion that directly bounds the convergence rate of such a preconditioned iterative method. 
We do so with the notion of an $\epsilon$-approximate pseudoinverse (defined with respect to some PSD matrix $\mU$),
  which we introduce in Section~\ref{sec:richardson}.
  Roughly speaking, solving a linear system  with an appropriate iterative method using such a matrix as a preconditioner will 
  guarantee the $\mU$-norm of the error to decrease by a factor of $\epsilon$ in each iteration.
We note that this is only useful for $\epsilon<1$. For technical reasons, we measure the quality of approximate pseudoinverses with respect to different $\mU$ matrices
at different stages of the algorithm and translate between them. For simplicity, we just refer to an ``$\epsilon$-approximate pseudoinverse'' in this overview, but in our algorithm we set the value of $\epsilon$ in our sparsification routines and apply iterative methods, again Richardson iterations, to appropriately pay for the costs of translating between norms.

To analyze the errors introduced by sparsification, we therefore need to:
\begin{enumerate}
  \item Relate our notion of graph approximation to  approximate pseudoinverses, and
  \item Bound the rate at which the quality of the approximate pseudoinverse we produce decreases as we increase the number of terms in~\eqref{eqn:key}.  We use~\eqref{eqn:key} recursively, so it is also be useful to bound how this is affected if we use an approximate pseudoinverse instead of the exact operator $\big(\II-\WWhat_j^{(\alpha)}\big)^+$.
\end{enumerate}

For the former, we show in Theorem~\ref{thm:spectral-to-approxInv} that our notion of an $\epsilon$-sparsifier leads to an $O(\epsilon)$-approximate pseudoinverse.  
For the latter, we show in Lemma~\ref{lem:chainproperty} that using a square-sparsifier chain of length $d$ with some given $\epsilon$, and using an $\epsilon'$-approximate pseudoinverse of $\big(\II-\WWhat_j^{(\alpha)}\big)$ in place of $\big(\II-\WWhat_j^{(\alpha)}\big)^+$, produces an $(\epsilon+\epsilon')\cdot 2^{O(d)}$-approximate pseudoinverse for $\II-\WWhat_j^{(\alpha)}$.

The exponential dependence of the error on length of the chain is a key difference between our analysis and the undirected case, and it is what prevents us from having a simpler and more efficient algorithm.
If the dependence on the chain length were polynomial,  applying~\eqref{eqn:key} with $i=0$ and $j=d$ would provide an $\epsilon\cdot \mathrm{polylog}(n)$-approximate pseudoinverse.  We could thus set $\epsilon=1/\mathrm{polylog}(\kappa)$ in our sparsifier chain and get an $O(1)$-approximate pseudoinverse in $\otilde(n)$ time.  
An iterative method 
could then call this $\log(1/\delta)$ times to obtain a solution with error $\delta$.
However, because of the exponential dependence on the chain length, we would only get an $\epsilon\cdot \poly(\kappa)$-approximate pseudoinverse.  We would thus need to set  $\epsilon =1/\poly(\kappa)$ to get a value less than $1$, which would lead to ``sparsifiers'' with $\widetilde{\Omega}(n\cdot\poly(\kappa))$ edges.
In the typical case where $\kappa=\poly(n)$, simply writing these down would exceed the desired almost-linear time bound.

\newcommand{\rectime}[2]{\mathcal{T}_{#1,#2}}
\newcommand{\esparse}{\epsilon_{\mathrm{spar}}}
\newcommand{\elow}{\epsilon_{\mathrm{lo}}}
\newcommand{\ehigh}{\epsilon_{\mathrm{hi}}}
\newcommand{\ehighval}{(\esparse+\elow) 2^{O(\Delta)}}
\newcommand{\richiters}{O(\log \elow/\log \ehigh) }
\newcommand{\richitersdisp}{O\left(\frac{\log \elow}{\log \ehigh} \right)}
To prevent this, we do not wait until the end to apply an iterative method to reduce the error.  Instead, we break our sparsification and squaring steps into $\ceil{d/\Delta}$ blocks of size $\Delta \ll d$, each of which we 
will wrap in
several steps  of 
Richardson iteration (which we review in Section~\ref{sec:richardson}), in order to keep the error under control.  

  Our algorithm first computes (once, not recursively) a square-sparsifier chain of length $d=O(\log \kappa)$ in which the sparsifiers are $\esparse$-approximations.  
It then recursively combines two  types of steps that are suggested by the discussion above:
\begin{itemize}  
  \item \textbf{High error $\big(\mI-\WWhat_i^{(\alpha)}\big)^+$ from low error $\big(\mI-\WWhat_{i+\Delta}^{(\alpha)}\big)^+$}: Given  a routine to apply an $\elow$-approximate pseudoinverse of $\mI-\WWhat_{i+\Delta}^{(\alpha)}$ in time $\rectime{i+\Delta}{\elow}$, we can use the expansion in~\eqref{eqn:key} to apply an $\ehigh$-approximate pseudoinverse of  $\mI-\WWhat_i^{(\alpha)}$ in time $\rectime{i}{\ehigh}=\rectime{i+\Delta}{\elow}+\otilde(\Delta n\esparse^{-2})$, where $\ehigh=\ehighval$.
  \item \textbf{Low error $\big(\II-\WWhat_i^{(\alpha)}\big)^+$ from high error $\big(\II- \WWhat_{i}^{(\alpha)}\big)^+$}: By running Richardson iteration for $\richiters$ steps, we can turn an $\ehigh$-approximate pseudoinverse of $\mI- \WWhat_i^{(\alpha)}$ into a $\elow$-approximate pseudoinverse.  This applies the former once in each iteration, so it takes time 
\begin{equation}\label{eq:recurrence}
    \rectime{i}{\elow}=\richitersdisp \rectime{i}{\ehigh}=\richitersdisp\left(\rectime{i+\Delta}{\elow}+\otilde\left(\Delta n\esparse^{-2}\right)\right).
\end{equation}
\end{itemize}

If we set $\ehigh$ to be a constant (say, $1/10$), we  
get $\esparse + \elow=2^{-\Omega(\Delta)}$, so we set $\esparse =\elow=2^{-\Theta(\Delta)}$, and~\eqref{eq:recurrence} simplifies to
 \[\rectime{i}{\elow}=
O(\Delta) \left(\rectime{i+\Delta}{\elow}+\otilde\big(\Delta n2^{\Theta(\Delta)}\big)\right)
=O\left(\Delta\right) \rectime{i+\Delta}{\elow}+\otilde\big(n 2^{\Theta(\Delta)}\big).
\]
For the base case of our recurrence, $\II-\WWhat_d^{(\alpha)}$ is well-conditioned, so we can approximately apply its pseudoinverse using a standard iterative method in time $\rectime{d}{\elow}=\otilde(\nnz(\WWhat_d^{(\alpha)})\log \elow^{-1})=\otilde(n\esparse^{-2}\log \elow^{-1})
=\otilde\big(n 2^{\Theta(\Delta)}\big)$.
This can be folded into the additive $\otilde\big(n 2^{\Theta(\Delta)}\big)$ term in the recurrence, so it does not significantly affect the time bound.

To estimate the solution to the recurrence, we note that depth of the recursion is $\ceil{ d/\Delta}$, and at each stage we multiply by $O(\Delta)$.  We can think of this as producing a recursion tree with $O(\Delta)^{\ceil{ d/\Delta}}$ nodes, and we add
 $\otilde\big(n 2^{\Theta(\Delta)}\big)$ 
 at each, so we get that
\[\rectime{0}{\elow}=
O(\Delta)^{\ceil{ d/\Delta}} \otilde\big(n 2^{O(\Delta)}\big)
= 
 nO(\Delta)^{O(d/\Delta)} 2^{O(\Delta)}
 =
 n 2^{O\left(\Delta+ \frac{d\log\Delta}{\Delta}\right)}.
\] 
Setting $\Delta=\sqrt{d \log d}=\sqrt{\log\kappa \log \log \kappa}$ approximately balances the two terms in the exponent.
Plugging this in and adding the $\otilde(m)$ for the overhead from the non-recursive parts of the algorithm gives our running time
bound of   
$\otilde(m)+n 2^{O(\sqrt{\log\kappa \log \log \kappa})}$.

\section{Sparsification of Directed Laplacians \label{sec:sparsification}}

 In this section, we define what it means for one strongly connected directed graph to approximate another, and we use this to define directed sparsifiers.  
 We show that such sparsifiers exist for any directed graph and give efficient algorithms to construct them.
For our almost linear time directed Laplacian system solver (Section~\ref{sec:solver}), we only need to be able to sparsify Eulerian graphs. However, we have included the more general case of non-Eulerian graphs because we believe it is of independent interest and may be useful in other settings.

As discussed in Section~\ref{sec:approach_sparsification}, we define our notion of graph approximation by first giving a notion of approximation for matrices and then applying it to (possibly rescaled versions of) directed graph Laplacians.  This notion will be qualitatively better-behaved for Laplacians of Eulerian graphs than for general directed Laplacians.  As such, our definition will make use of the existence of a scaling that makes any strongly connected graph Eulerian.

Our notion of approximation for asymmetric matrices is defined as follows:

\begin{definition}[Asymmetric Matrix Approximation]\label{def:epsaprox}
A (possibly asymmetric) matrix $\widetilde{\ma}$ is said to be an  \emph{$\epsilon$-approximation of $\ma$} if:
\begin{enumerate}
\item $\mU_{\ma}$ is a symmetric PSD matrix, with $\ker(\mU_{\ma})\subseteq\ker(\widetilde{\ma}-\ma)\cap\ker((\widetilde{\ma}-\ma)^{\top})$,
and
\item
$
\norm{\mU^{\dagger/2}_{\ma}(\widetilde{\ma}-\ma)\mU^{\dagger/2}_{\ma}}_{2}\leq\epsilon.
$
\end{enumerate}
When these properties hold for some constant $\epsilon \in (0, 1)$ 
we simply say $\widetilde{\ma}$ approximates $\ma$.
\end{definition}

In Section~\ref{sub:sparse_facts} we provide an equivalent definition and several facts which justify this choice of matrix approximation. In particular, we prove the following facts regarding Definition~\ref{def:epsaprox} 
\begin{itemize}
\item it generalizes spectral approximation (ie., small relative
condition number) of symmetric matrices,
and behaves predictably under perturbations;
\item it implies the symmetrizations $\mU_\ma$ and $\mU_{\widetilde{\ma}}$ of $\ma$ and $\widetilde{\ma}$, spectrally approximate each other;
\item its behavior under composition is natural.
\end{itemize}
Furthermore, in Appendix~\ref{sec:harmonic_approx} we show that our notion of approximation also yields approximations of the symmetric systems solved in \cite{cohen2016faster}, known as harmonic symmetrizations.

We extend this notion of approximation from asymmetric matrices to directed graphs as follows:

\begin{definition}[Directed Graph Approximation] Let $\mlap,\widetilde{\mlap}\in \mathbb{R}^{n\times n}$ be the Laplacians of strongly-connected directed graphs $G$ and $\widetilde{G}$ respectively, and let
 $\mx=\mathrm{diag}({\vec{x}})$ and $\widetilde{\mx}=\mathrm{diag}(\vec{\tilde{x}})$ be the diagonal matrices  for which $\mlap \mx$ and $\widetilde{\mlap} \widetilde{\mx}$ are Eulerian Laplacians that are guaranteed to exist by
	 Lemma~\ref{lem:stationary-equivalence}, normalized to have
	  $\mathrm{Tr} (\mx)=\mathrm{Tr}(\widetilde{\mx})=n$.
		
		We say that \emph{$\widetilde{G}$ is an $\epsilon$-approximation of $G$} if:
	\begin{enumerate}
		\item $(1-\epsilon)\mx \leq \widetilde{\mx} \leq (1+\epsilon) \mx$,
     and
		\item  $\widetilde{\mlap}\widetilde{\mx}$ is an $\epsilon$-approximation of $\mlap\mx$.
		 \end{enumerate}
		 If $\widetilde{\mx}=\mx$, we say that \emph{$\widetilde{G}$ is a strict $\epsilon$-approximation of $G$}.
\end{definition}

In words, we say that a graph approximates another graph if their Eulerian scalings are within small multiplicative factors of one another and the resulting Eulerian graphs obey our definition of asymmetric matrix approximation.   We call the approximation ``strict'' if their Eulerian scalings are not just within small multiplicative factors of one another but are actually identical.

Our main use of this notion is to define \emph{sparsifiers}, which are approximations that have a small number of nonzero entries.
\begin{definition}(Graph Sparsifier)
	 Let $\mlap,\widetilde{\mlap}\in \mathbb{R}^{n\times n}$ be the Laplacians of strongly-connected directed graphs $G$ and $\widetilde{G},$ respectively.
	We say that \emph{$\widetilde{G}$ is a (strict) $\epsilon$-sparsifier of $G$} if:
	\begin{enumerate}
		\item $\widetilde{G}$ is a (strict) $\epsilon$-approximation of $G$, and
		\item $\nnz(\widetilde{\mlap}) \leq \tilde{O}(n\epsilon^{-2})$, where $n$ is the number of vertices in $G$.
	\end{enumerate}
\end{definition}

We note that, if we show that strict sparsifiers exist for Eulerian graphs, this will imply that they exist for general strongly connected graphs as well.  One can simply apply the graph's Eulerian scaling, find a sparsifier for the resulting Eulerian graph, and then ``unscale'' the graph by applying the inverse of the Eulerian scaling.  
The results of this paper allow us to compute an Eulerian scaling for any graph in almost-linear time. Thus, the almost linear time sparsification procedure we will give for Eulerian graphs will imply an almost linear time sparsification procedure for all graphs.%
\footnote{One has to exercise some care with the numerics here, since we will only be able to compute a finite precision estimate of the Eulerian scaling of a graph. So, if we apply this scaling and then want to sparsify, the graph we want to sparsify will not be perfectly Eulerian.  However, it is straightforward to show that---as long as the approximate Eulerian scaling being used is fairly precise---one can ``patch'' the rescaled graph to become Eulerian while only incurring a very small loss in the approximation quality.}
As such, we will focus on the Eulerian case.  In this case, graph approximation will be the same as matrix sparsification, and it will often be convenient to refer directly to the Laplacian, instead of to the graph.  Moreover, we will seek to exactly preserve the fact that the graph is Eulerian, so we will exclusively consider strict approximations.  We thus define:\footnote{We could ask for sparsifiers under other notions of approximation, such as the weaker conditions required to obtain a preconditioner (see Section~\ref{sec:richardson}); as our algorithms always give this notion we use this terminology.} 
\begin{definition}[Eulerian Sparsifier]
\label{def:strongApprox}
$\widetilde{\mlap} \in \R^{n \times n}$ is an \emph{$\epsilon$-sparsifier of an Eulerian Laplacian $\mlap$} if 
\begin{enumerate}
\item $\widetilde{\mlap}$ is a strict $\epsilon$-approximation of $\mlap$
\item $\nnz(\widetilde{\mlap}) \leq \tilde{O}(n\epsilon^{-2})$.
\end{enumerate}
\end{definition}

In the remainder of the section, we show how to produce such sparsifiers of Eulerian Laplacians. I.E., given an Eulerian Laplacian, we will obtain an Eulerian Laplacian that approximates the original and has a small number of nonzero entries.

Moreover, we show how to construct these sparsifiers in nearly linear time. Specifically, we give an algorithm that produces an $\epsilon$-sparsifier
of an Eulerian Laplacian $\mlap$ with high probability in $\otilde(\nnz(\mlap)/\epsilon^2)$ time. Even without any additional work, this result alone immediately implies some improvement in the runtime for solving arbitrary directed Laplacian systems. Specifically, one can write down the harmonic symmetrization of the original matrix, the harmonic symmetrization of the sparsify, and solve the original harmonic symmeterization preconditioned by the harmonic symmetrization of the sparsifier. We prove in Appendix~\ref{sec:harmonic_approx} that the these Harmonic symmetrizations have small relative condition number, so the runtime of this solver will be dominated by the time it takes to solve systems in the sparsified matrix plust the time to apply the unsparsified matrix to a vector. Using the solver in \cite{cohen2016faster}, this comes out to an $\otilde(m+n^{7/4})$ time algorithm for solving directed Laplacian systems. 

In order to construct a better, almost linear time solver, we'll also need to be able to sparsify a normalized version of the Laplacian of the square of graph. Specifically, we also show how to sparsify any matrix of the form $\md-\ma^{\top}\md^{-1}\ma^{\top}$, where we are given the adjacency matrix $\ma$ of some Eulerian graph $G$ and the degrees $\md$ of $G$. Note that if $G$ is regular and has all (weighted) degrees equal to one, this formula simplifies to $I-(A^\intercal)^2$. Thus, it corresponds to the Laplacian of the square of the graph in this special case, and to a normalized version of it in general. Our algorithm for sparsifying matrices of this form takes outputs a strict $\epsilon$-sparsifier in $\otilde(\nnz(\mlap)\epsilon^{-2})$ time.
Combining these results, and using our properties regarding asymmetric
approximations, we show that we can use this $\epsilon$-approximation to efficiently obtain an
$\epsilon$-sparsifier of $\md-\ma^{\top}\md^{-1}\ma^{\top}$. Applying a closely related routine recursively we obtain a faster, almost linear time algorithm for solving Eulerian Laplacian systems in Section~\ref{sec:solver}.

The remainder of this section is structured as follows.
First, in Section~\ref{sub:sparse_facts}, we provide various facts
regarding our notion of asymmetric approximation.
Then, in Section~\ref{sub:sample_directed}, we provide one of
our main technical tools: an algorithm for crudely sparsifying
an arbitrary (not necessarily Eulerian) directed Laplacian by randomly  sampling its adjacency matrix. On its own, this algorithm achieves relatively weak guarantees.
In Section~\ref{sub:sparse_eulerian} we then combine this tool with
known decomposition results for undirected graphs to
sparsify any Eulerian Laplacian. In Section~\ref{sub:sparse_square}, we build on this to sparsify the normalized square of any Eulerian Laplacian.

\subsection{Approximation Facts\label{sub:sparse_facts}}

Here we provide various basic facts regarding the notion
of approximation for asymmetric matrices given in
Definition~\ref{def:strongApprox}.
These facts both motivate and justify our choice of definition
and are used extensively throughout this section.

First we provide the following lemma, giving  alternative
definitions of approximation in terms of a quantity
reminiscent of Rayleigh quotients.

\begin{lem}[Equivalent Approximation Definitions]
\label{lem:asym_strong_equiv}
Let $\ma \in \R^{n \times n}$ be such that $\mU_{\ma}$ PSD. A matrix $\widetilde{\ma} \in \R^{n \times n}$ is an $\epsilon$-approximation
of $\ma$ if and only if
\[
\max_{\vec{x},\vec{y}\neq0}
	\frac{\vec{x}^{\top}(\widetilde{\ma}-\ma)\vec{y}}
		{\sqrt{\vec{x}^{\top}\mU_{\ma} \vec{x}\cdot \vec{y}^{\top}\mU_{\ma} 
\vec{y}}} \leq \epsilon
\enspace\text{ or equivalently }\enspace
\max_{\vec{x}, \vec{y} \neq 0}
	\frac{\vec{x}^{\top}(\widetilde{\ma}-\ma)\vec{y}}
		{\vec{x}^{\top}\mU_{\ma} \vec{x}+\vec{y}^{\top}\mU_{\ma} \vec{y}}\leq\frac{\epsilon}
{2}\,{,}
\]
under the convention that $0/0=0$.
\end{lem}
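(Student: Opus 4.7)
\medskip

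\noindent\textbf{Proof plan.} Let $\mE \defeq \widetilde{\ma} - \ma$ and write $\mvar{P} \defeq \mU_\ma^{1/2} \mU_\ma^{\dagger/2}$ for the orthogonal projection onto $\im(\mU_\ma) = \ker(\mU_\ma)^\perp$. My strategy is to show that each of the three conditions (the operator-norm condition from Definition~\ref{def:epsaprox} together with the kernel containment, the geometric-mean supremum, and the arithmetic-mean supremum) implies the next, cycling back. The substitution $\vec{u} = \mU_\ma^{1/2}\vec{x}$, $\vec{v} = \mU_\ma^{1/2}\vec{y}$ is the main tool for connecting the operator norm to the Rayleigh-quotient formulation, and the AM--GM inequality together with a homogeneity rescaling handles the last equivalence. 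The only delicate point is correctly threading the kernel condition through the substitution.

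\medskip

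\noindent\textbf{Operator norm $\Leftrightarrow$ geometric-mean form.} By the standard variational formula for the spectral norm,
\[
\norm{\mU_\ma^{\dagger/2} \mE \mU_\ma^{\dagger/2}}_2
= \sup_{\vec{u},\vec{v}\neq 0}
    \frac{\vec{u}^\top \mU_\ma^{\dagger/2} \mE \mU_\ma^{\dagger/2}\vec{v}}{\norm{\vec{u}}_2\,\norm{\vec{v}}_2}.
\]
If the kernel condition of Definition~\ref{def:epsaprox} holds, then $\mE = \mE \mvar{P} = \mvar{P}\mE$, so setting $\vec{u}=\mU_\ma^{1/2}\vec{x}$, $\vec{v}=\mU_\ma^{1/2}\vec{y}$ we have $\mU_\ma^{\dagger/2}\vec{u} = \mvar{P}\vec{x}$ and $\mE\mvar{P}\vec{y} = \mE\vec{y}$, giving
\[
\vec{u}^\top \mU_\ma^{\dagger/2} \mE \mU_\ma^{\dagger/2}\vec{v}
= \vec{x}^\top \mvar{P}\mE\mvar{P}\vec{y}
= \vec{x}^\top \mE\vec{y},
\qquad
\norm{\vec{u}}_2^2 = \vec{x}^\top \mU_\ma \vec{x},\quad \norm{\vec{v}}_2^2 = \vec{y}^\top \mU_\ma \vec{y}.
\]
As $\vec{u},\vec{v}$ range over $\im(\mU_\ma)\setminus\{0\}$, the pairs $(\vec{x},\vec{y})$ range over all vectors with $\mU_\ma\vec{x},\mU_\ma\vec{y}\neq 0$, and by the kernel condition the remaining pairs contribute $0$ to both the numerator and denominator (the $0/0=0$ convention). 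This gives the first equivalence. For the converse, if the geometric-mean supremum is at most $\epsilon<\infty$, then for any $\vec{y}\in\ker(\mU_\ma)$ and arbitrary $\vec{x}$ the denominator vanishes while the numerator is $\vec{x}^\top \mE\vec{y}$; finiteness forces $\mE\vec{y}=0$, and symmetrically $\mE^\top\vec{x}=0$ for $\vec{x}\in\ker(\mU_\ma)$, recovering the kernel containment, after which the calculation above runs in reverse.

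\medskip

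\noindent\textbf{Geometric-mean $\Leftrightarrow$ arithmetic-mean form.} For any scalars $a,b\geq 0$, AM--GM yields $2\sqrt{ab}\leq a+b$. Applied pointwise with $a=\vec{x}^\top \mU_\ma\vec{x}$ and $b=\vec{y}^\top \mU_\ma\vec{y}$, this shows
\[
\frac{\vec{x}^\top \mE \vec{y}}{\vec{x}^\top \mU_\ma \vec{x} + \vec{y}^\top \mU_\ma \vec{y}}
\leq \frac{1}{2}\cdot \frac{\vec{x}^\top \mE \vec{y}}{\sqrt{\vec{x}^\top \mU_\ma \vec{x}\,\cdot\,\vec{y}^\top \mU_\ma \vec{y}}},
\]
so the geometric-mean bound $\leq\epsilon$ implies the arithmetic-mean bound $\leq\epsilon/2$. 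For the reverse direction, both quantities in the geometric form are invariant under the rescaling $(\vec{x},\vec{y})\mapsto(t\vec{x},t^{-1}\vec{y})$ for $t>0$. When $\vec{x}^\top\mU_\ma\vec{x}$ and $\vec{y}^\top\mU_\ma\vec{y}$ are both positive we can choose $t^4 = (\vec{y}^\top\mU_\ma\vec{y})/(\vec{x}^\top\mU_\ma\vec{x})$ to equalize the two denominator terms, at which point AM--GM holds with equality and the two ratios coincide. Hence the geometric-mean supremum equals twice the arithmetic-mean supremum, and a bound of $\epsilon/2$ on the latter yields a bound of $\epsilon$ on the former. The degenerate cases where $\vec{x}^\top\mU_\ma\vec{x}=0$ or $\vec{y}^\top\mU_\ma\vec{y}=0$ are handled exactly as above: the kernel condition (which we have already shown to be equivalent to the bound being finite) forces the numerator to vanish, and both ratios are $0$ under the stated convention.

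\medskip

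\noindent\textbf{Main obstacle.} The only subtlety is bookkeeping of the kernel condition: one must check that the ``if and only if'' really does recover the kernel containment from the finiteness of either supremum, and that the substitution $\vec{u}=\mU_\ma^{1/2}\vec{x}$ correctly accounts for the projection $\mvar{P}$ in translating between the two parameterizations. Everything else is the variational definition of the operator norm plus one application of AM--GM with a scaling argument.
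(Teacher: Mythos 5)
Your proof is correct and follows essentially the same route as the paper: the paper derives this lemma from the more general Lemma~\ref{lem:spectral_equivalence} in the appendix, whose proof is exactly your change of variables $\vec{u}=\mU_{\ma}^{1/2}\vec{x}$, $\vec{v}=\mU_{\ma}^{1/2}\vec{y}$ reducing the weighted quotient to the Euclidean operator norm (with kernel pairs contributing $0/0=0$), followed by AM--GM with a normalization/rescaling to equate the two quotient forms, and the kernel containment is likewise recovered in the paper by observing the maxima become infinite without it. Your only deviations are cosmetic (proving the special case $\mm=\mn=\mU_{\ma}$ inline rather than citing the general lemma, and equalizing the denominators by scaling $t$ instead of normalizing to unit $\mU_{\ma}$-norm).
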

\begin{proof}
This lemma follows from a more general result,
Lemma~\ref{lem:spectral_equivalence}, which we prove in Appendix~
\ref{sec:decomposition},
and by noting that if $\ker(\mU_{\ma})$ is not a subset of both $\widetilde{\ma}-\ma$ and its transpose then the maximization problems are
infinite in value.
\end{proof}
This lemma will allow us to show that our notion of $\epsilon$-approximation does coincide with the standard notion in the case of
symmetric matrices, and is therefore a stronger notion.
More generally, we prove that $\epsilon$-approximation of
asymmetric matrices implies that their symmetrizations are
$\epsilon$-approximations in the traditional spectral (small relative condition number) sense. 
\begin{lem}
\label{lem:asym_strong_implies_undir}
Suppose $\widetilde{\ma}$ is an $\epsilon$-approximation of $\ma$.
Then
\[
(1-\epsilon)\mU_{\ma}\preceq
\mU_{\widetilde{\ma}}\preceq(1+\epsilon)\mU_{\ma} ~. 
\] 
\end{lem}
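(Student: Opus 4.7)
The plan is to use Lemma~\ref{lem:asym_strong_equiv} to pass from the operator-norm bound in Definition~\ref{def:epsaprox} to a bound on a symmetric quadratic form, and then observe that this quadratic form is exactly $\vec{x}^\top(\mU_{\widetilde{\ma}} - \mU_\ma)\vec{x}$.

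First I would note the identity $\mU_{\widetilde{\ma}} - \mU_\ma = \mU_{\widetilde{\ma}-\ma}$, since symmetrization is linear. Thus for any $\vec{x}$,
\[
\vec{x}^\top(\mU_{\widetilde{\ma}} - \mU_\ma)\vec{x} = \vec{x}^\top \mU_{\widetilde{\ma}-\ma} \vec{x} = \vec{x}^\top(\widetilde{\ma}-\ma)\vec{x},
\]
where the last equality uses that $\vec{x}^\top \mb \vec{x} = \vec{x}^\top \mU_\mb \vec{x}$ for any (possibly asymmetric) $\mb$.

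Next, I would apply Lemma~\ref{lem:asym_strong_equiv}, which states that under the hypothesis of $\epsilon$-approximation,
\[
\vec{x}^\top(\widetilde{\ma}-\ma)\vec{y} \;\leq\; \tfrac{\epsilon}{2}\bigl(\vec{x}^\top\mU_\ma\vec{x} + \vec{y}^\top\mU_\ma\vec{y}\bigr)
\]
for all $\vec{x},\vec{y}$. Since this bound is supremal over all pairs, replacing $\vec{x}$ by $-\vec{x}$ gives the same bound on $-\vec{x}^\top(\widetilde{\ma}-\ma)\vec{y}$, hence on $|\vec{x}^\top(\widetilde{\ma}-\ma)\vec{y}|$. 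Setting $\vec{y} = \vec{x}$ then yields
\[
\bigl|\vec{x}^\top(\widetilde{\ma}-\ma)\vec{x}\bigr| \;\leq\; \epsilon\,\vec{x}^\top\mU_\ma\vec{x}.
\]
Combining with the identity from the previous step gives $-\epsilon\,\vec{x}^\top\mU_\ma\vec{x} \leq \vec{x}^\top(\mU_{\widetilde{\ma}} - \mU_\ma)\vec{x} \leq \epsilon\,\vec{x}^\top\mU_\ma\vec{x}$ for every $\vec{x}$, which is exactly the claimed PSD inequality.

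There is no real obstacle here; the only subtlety is the kernel condition required for the PSD inequalities to be well-posed. Clause~(1) of Definition~\ref{def:epsaprox} guarantees $\ker(\mU_\ma) \subseteq \ker(\widetilde{\ma}-\ma) \cap \ker((\widetilde{\ma}-\ma)^\top)$, which implies $\ker(\mU_\ma) \subseteq \ker(\mU_{\widetilde{\ma}-\ma}) = \ker(\mU_{\widetilde{\ma}} - \mU_\ma)$. Consequently both $\mU_{\widetilde{\ma}}$ and $\mU_\ma$ vanish on $\ker(\mU_\ma)$, so the inequalities hold on this subspace trivially and hold on its orthogonal complement by the calculation above, yielding $(1-\epsilon)\mU_\ma \preceq \mU_{\widetilde{\ma}} \preceq (1+\epsilon)\mU_\ma$ as required.
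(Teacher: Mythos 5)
Your proof is correct and follows essentially the same route as the paper: both apply Lemma~\ref{lem:asym_strong_equiv} with $\vec{y}=\pm\vec{x}$ to get $\abs{\vec{x}^\top(\widetilde{\ma}-\ma)\vec{x}}\leq\epsilon\,\vec{x}^\top\mU_\ma\vec{x}$, and then use $\vec{z}^\top\mb\vec{z}=\vec{z}^\top\mU_{\mb}\vec{z}$ to convert this into the claimed PSD inequality. The only cosmetic difference is that you invoke the $\epsilon/2$ (sum) form of the equivalence rather than the ratio form, and you spell out the kernel case explicitly, which the paper leaves implicit.
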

\begin{proof}
Suppose $\widetilde{\ma}$ is an $\epsilon$-approximation of $\ma$, and let $\vec{x}\in\R^{n}$ with $\vec{x} \neq \allzeros$ be arbitrary.
Applying Lemma~\ref{lem:asym_strong_equiv} twice with
$\vec{y} =\pm \vec{x}$ we have 
\[
\frac{\abs{\vec{x}^{\top}(\widetilde{\ma}-\ma)\vec{x}}}
	{\vec{x}^{\top}\mU_{\ma}\vec{x}}
\leq\norm{\mU_{\ma}^{\dagger/2}
(\widetilde{\ma}-\ma)\mU_{\ma}^{\dagger/2}}_{2}\leq\epsilon~. 
\]

The desired result follows from the fact that $\vec{z}^{\top}\ma \vec{z} = \vec{z}^{\top}\mU_{\ma}\vec{z}$
and $\vec{z}^{\top}\widetilde{\ma}\vec{z} = \vec{z}^{\top} \mU_{\widetilde{\ma}} \vec{z}$ for all $z$.
\end{proof}

Next we use Lemma~\ref{lem:asym_strong_equiv} to show that just as in the symmetric case, asymmetric
approximation is preserved when taking symmetric products.
\begin{lem}
\label{cor:strong_basis_change}
If $\widetilde{\ma} \in \R^{n \times n}$ is an $\epsilon$-approximation of $\ma \in \R^{n \times n}$ and $\mm \in \R^{n \times n}$ satisfies $\ker(\mm^\intercal) \subseteq\ker(\mU_{\AA})$ then 
$\mm^{\top}\widetilde{\ma}\mm$
is an $\epsilon$-approximation of 
$\mm^{\top}\ma\mm$.
\end{lem}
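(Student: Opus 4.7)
The plan is to invoke the equivalent Rayleigh-quotient-style characterization from Lemma~\ref{lem:asym_strong_equiv} and then reduce the statement for $\mm^\top \widetilde{\ma} \mm$ versus $\mm^\top \ma \mm$ back to the original approximation guarantee for $\widetilde{\ma}$ versus $\ma$ by the substitution $\vec{u} = \mm \vec{x}$, $\vec{v} = \mm \vec{y}$. The first observation is the key algebraic identity that symmetrization commutes with two-sided conjugation: $\mU_{\mm^\top \ma \mm} = \tfrac{1}{2}(\mm^\top \ma \mm + \mm^\top \ma^\top \mm) = \mm^\top \mU_{\ma} \mm$, which is PSD because $\mU_{\ma}$ is.

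Next I would verify the kernel condition of Definition~\ref{def:epsaprox}. Suppose $\vec{x} \in \ker(\mU_{\mm^\top \ma \mm}) = \ker(\mm^\top \mU_\ma \mm)$. Then $\vec{x}^\top \mm^\top \mU_\ma \mm \vec{x} = \| \mU_\ma^{1/2} \mm \vec{x}\|_2^2 = 0$, so $\mm \vec{x} \in \ker(\mU_\ma)$. Using the assumption that $\ker(\mU_\ma) \subseteq \ker(\widetilde{\ma} - \ma) \cap \ker((\widetilde{\ma}-\ma)^\top)$, we get $(\widetilde{\ma} - \ma)\mm \vec{x} = 0$ and $(\widetilde{\ma}-\ma)^\top \mm \vec{x} = 0$, and premultiplying by $\mm^\top$ yields $\vec{x} \in \ker(\mm^\top (\widetilde{\ma}-\ma)\mm) \cap \ker(\mm^\top (\widetilde{\ma}-\ma)^\top \mm)$, as required.

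For the quantitative bound, I would apply the first equivalent form of Lemma~\ref{lem:asym_strong_equiv}: it suffices to show, under the $0/0=0$ convention,
\[
\max_{\vec{x},\vec{y}\ne 0}
\frac{\vec{x}^\top \mm^\top (\widetilde{\ma}-\ma)\mm \,\vec{y}}
     {\sqrt{\vec{x}^\top \mm^\top \mU_\ma \mm\, \vec{x}\cdot \vec{y}^\top \mm^\top \mU_\ma \mm\, \vec{y}}} \le \epsilon.
\]
Setting $\vec{u} = \mm \vec{x}$ and $\vec{v} = \mm \vec{y}$, the expression becomes $\vec{u}^\top (\widetilde{\ma}-\ma) \vec{v} / \sqrt{\vec{u}^\top \mU_\ma \vec{u}\cdot \vec{v}^\top \mU_\ma \vec{v}}$, which is bounded by $\epsilon$ for every such $\vec{u},\vec{v}$ since $\widetilde{\ma}$ is an $\epsilon$-approximation of $\ma$ (again by Lemma~\ref{lem:asym_strong_equiv}). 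The only place I need to be a little careful is when the denominator vanishes: by the kernel argument above, $\vec{x}^\top \mm^\top \mU_\ma \mm \vec{x} = 0$ forces $\mm\vec{x} \in \ker(\mU_\ma)$ and hence kills the numerator as well, so the $0/0 = 0$ convention handles this case cleanly.

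I do not expect any real obstacle here; the proof is essentially a change of variables. The hypothesis $\ker(\mm^\top) \subseteq \ker(\mU_\ma)$ does not seem to be strictly necessary for this reduction—it ensures compatibility with related statements elsewhere—and I would simply flag it as used for bookkeeping on kernels when stitching this lemma into later composition arguments.
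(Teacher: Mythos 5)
Your proof is correct and follows essentially the same route as the paper: verify the kernel condition for $\mU_{\mm^\top\ma\mm}=\mm^\top\mU_\ma\mm$, then apply the equivalent characterization of Lemma~\ref{lem:asym_strong_equiv} and observe that the conjugated Rayleigh-type maximization is a restriction (via $\vec{u}=\mm\vec{x}$, $\vec{v}=\mm\vec{y}$) of the one bounding $\widetilde{\ma}-\ma$. The only (harmless) difference is in the kernel step: the paper deduces $\mm\vec{x}\in\ker(\mU_\ma)$ from the hypothesis $\ker(\mm^\top)\subseteq\ker(\mU_\ma)$ via $\ker(\mm^\top\mU_\ma)=\ker(\mU_\ma)$, whereas you get it directly from the PSD quadratic form, and your observation that this makes the hypothesis inessential for the statement itself is accurate.
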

\begin{proof}
Define $\mb \defeq \mm^{\top}\ma\mm$ and $\widetilde{\mb} \defeq \mm^{\top}\widetilde{\ma}\mm$. We first wish to show $\ker(\mU_\mb)
\subseteq\ker(\widetilde{\mb} - \mb) \cap \ker((\widetilde{\mb} - \mb)^{\top})$. Suppose we have any $x \in \ker(\mU_\mb)$. Then,
\[
\mm x \in \ker(\mm^\intercal \mU_\ma) = \ker(\mU_\ma) \subseteq \ker(\widetilde{\ma} - \ma) \cap \ker((\widetilde{\ma} - \ma)^{\top}) \subseteq \ker(\mm^\intercal(\widetilde{\ma} - \ma)) \cap \ker(\mm^\intercal(\widetilde{\ma} - \ma)^{\top})
\]
which implies the portion of the definition of approximation concerning kernels.

Then we have, using the convention that $0/0=0$ and by applying Lemma~\ref{lem:spectral_equivalence} and Lemma~\ref{lem:asym_strong_equiv}, that
\begin{align*}
\norm{\mU_\mb^{\dagger/2}(\widetilde{\mb} - \mb) \mU_\mb^{\dagger/2}}_{2}
& = \max_{\vec{x},\vec{y} \neq 0}
	\frac{\vec{x}^{\top}(\widetilde{\mb} - \mb)\vec{y}}
		{\sqrt{\vec{x}^{\top}\mU_\mb\vec{x} \cdot \vec{y}^{\top} \mU_\mb \vec{y}}}  v
= \max_{\vec{x}, \vec{y} \neq 0}
	\frac{\vec{x}^{\top} \mm^{\top}(\widetilde{\ma}-\ma)\mm \vec{y}}
	{\sqrt{\vec{x}^{\top}\mm^{\top}\mU_\ma \mm \vec{x}
		\cdot \vec{y}^{\top}\mm^{\top}\mU_\ma \mm \vec{y}}}\\
&\leq \max_{\vec{x}, \vec{y} \neq 0}
	\frac{\vec{x}^{\top} (\widetilde{\ma}-\ma)\vec{y}}
	{\sqrt{\vec{x}^{\top} \mU_\ma \vec{x}
		\cdot \vec{y}^{\top} \mU_\ma \vec{y}}}
=
\norm{\mU_{\ma}^{+/2} (\widetilde{\ma} - \ma) \mU_{\ma}^{+/2}}_2 
\leq
\epsilon.
\end{align*}
\end{proof}

Finally, we provide a transitivity result for strong-approximation. 

\begin{lem}[Approximation Transitivity]
\label{lem:strong_transitivity}
If $\mc$ is an $\epsilon$-approximation of $\mb$ and $\mb$ is an $\epsilon$-approximation of $\ma$ then $\mc$ is an $\epsilon(2+\epsilon)$-approximation of $\ma$.
\end{lem}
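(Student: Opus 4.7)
The plan is to apply the equivalent bilinear characterization of $\epsilon$-approximation given in Lemma~\ref{lem:asym_strong_equiv}, together with the PSD comparison $\mU_\mb \preceq (1+\epsilon)\mU_\ma$ guaranteed by Lemma~\ref{lem:asym_strong_implies_undir}. The idea is to split $\mc - \ma = (\mc - \mb) + (\mb - \ma)$ and bound each piece using one of the hypotheses, then change the reference matrix for the first piece from $\mU_\mb$ to $\mU_\ma$ using the spectral comparison.

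First I would verify the kernel containment required by Definition~\ref{def:epsaprox}, namely $\ker(\mU_\ma) \subseteq \ker(\mc-\ma) \cap \ker((\mc-\ma)^\top)$. Since $\mU_\mb \preceq (1+\epsilon)\mU_\ma$ and $\mU_\mb$ is PSD, any $\vec{x} \in \ker(\mU_\ma)$ satisfies $\vec{x}^\top \mU_\mb \vec{x} = 0$ and hence $\vec{x} \in \ker(\mU_\mb)$. Combined with the kernel conditions from the two hypotheses, this gives $\ker(\mU_\ma) \subseteq \ker(\mb-\ma) \cap \ker(\mc-\mb)$ (and similarly for their transposes), so the kernel is killed by the sum $\mc - \ma$ as well.

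Next I would handle the norm bound. By Lemma~\ref{lem:asym_strong_equiv}, for any $\vec{x},\vec{y}$,
\[
|\vec{x}^\top(\mb-\ma)\vec{y}| \;\leq\; \epsilon\sqrt{\vec{x}^\top\mU_\ma\vec{x}\cdot \vec{y}^\top\mU_\ma\vec{y}},
\qquad
|\vec{x}^\top(\mc-\mb)\vec{y}| \;\leq\; \epsilon\sqrt{\vec{x}^\top\mU_\mb\vec{x}\cdot \vec{y}^\top\mU_\mb\vec{y}}.
\]
Using $\mU_\mb \preceq (1+\epsilon)\mU_\ma$ from Lemma~\ref{lem:asym_strong_implies_undir}, the second bound is at most $\epsilon(1+\epsilon)\sqrt{\vec{x}^\top\mU_\ma\vec{x}\cdot \vec{y}^\top\mU_\ma\vec{y}}$. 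Adding the two bounds via the triangle inequality yields
\[
|\vec{x}^\top(\mc-\ma)\vec{y}| \;\leq\; \epsilon(2+\epsilon)\sqrt{\vec{x}^\top\mU_\ma\vec{x}\cdot \vec{y}^\top\mU_\ma\vec{y}},
\]
which is exactly the equivalent form of the desired $\epsilon(2+\epsilon)$-approximation by the other direction of Lemma~\ref{lem:asym_strong_equiv}.

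There is no real obstacle here; this is a routine triangle-inequality argument. The only mild subtlety is changing the reference symmetrization from $\mU_\mb$ (natural for bounding $\mc-\mb$) to $\mU_\ma$ (required by the conclusion), which is handled cleanly by Lemma~\ref{lem:asym_strong_implies_undir}. The same lemma also yields the kernel inclusion $\ker(\mU_\ma) \subseteq \ker(\mU_\mb)$ needed to transfer the second kernel condition back to $\mU_\ma$.
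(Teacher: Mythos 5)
Your proof is correct and follows essentially the same route as the paper: split $\mc-\ma=(\mc-\mb)+(\mb-\ma)$, apply the triangle inequality, and pay a factor $(1+\epsilon)$ from Lemma~\ref{lem:asym_strong_implies_undir} to move the reference symmetrization from $\mU_\mb$ to $\mU_\ma$, yielding $\epsilon+(1+\epsilon)\epsilon=\epsilon(2+\epsilon)$. The only cosmetic difference is that you work through the bilinear characterization of Lemma~\ref{lem:asym_strong_equiv} rather than the operator-norm form with Lemma~\ref{lem:simple_spec_inequalities}, and you verify the kernel condition explicitly, which the paper leaves implicit.
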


\begin{proof}
Note that by triangle inequality 
\[
\norm{\mU_{\ma}^{\dagger/2}(\mc-\ma)\mU_{\ma}^{\dagger/2}}_{2}\leq\norm{\mU_{\ma}^{\dagger/2}(\mc-\mb)\mU_{\ma}^{\dagger/2}}_{2}+\norm{\mU_{\ma}^{\dagger/2}(\mb-\ma)\mU_{\ma}^{\dagger/2}}_{2}\,.
\]
Now, $\mU_{\mb} \preceq (1+\epsilon)\mU_{\ma}$ by Lemma~\ref{lem:asym_strong_implies_undir} and therefore $\mU_{\ma}^{\dagger} \preceq (1 + \epsilon) \mU_{\mb}^{\dagger}$. Applying Lemma~\ref{lem:simple_spec_inequalities} yields
\[
\norm{\mU_{\ma}^{\dagger/2}(\mc-\mb)\mU_{\ma}^{\dagger/2}}_{2}
\leq(1+\epsilon)\cdot\norm{\mU_{\mb}^{\dagger/2}(\mc-\mb)\mU_{\mb}^{\dagger/2}}_{2}\,.%
\]
The result follows as $\norm{\mU_{\ma}^{\dagger/2}(\mb-\ma)\mU_{\ma}^{\dagger/2}}_{2}\leq\epsilon$
and $\norm{\mU_{\mb}^{\dagger/2}(\mc-\mb)\mU_{\mb}^{\dagger/2}}_{2}\leq\epsilon$ by assumption.
\end{proof}

\subsection{Sampling a Directed Laplacian\label{sub:sample_directed}}

Here we show how to compute a crude, sparse approximation
to an arbitrary directed Laplacian by randomly sampling its entries. We provide both a general bound on the effect of such sampling for a directed Laplacian as well as a more specific result in the case where the directed Laplacian can be related to the symmetric Laplacian of an expander. The latter result (Lemma~\ref{lem:subgraph_sparse}) and the terminology relevant to it, is all we use from this subsection in order to obtain and analyze our sparsification algorithms.

The main tool for our analysis is Theorem~\ref{thm:concentration_entry}, a general bound on concentration when sampling the entries of an asymmetric matrix. Its proof follows directly from standard matrix concentration inequalities, so we defer its proof to Appendix~\ref{sec:entry_sparsification}.
\begin{thm}
\label{thm:concentration_entry} Let $\ma\in\R^{d_{1}\times d_{2}}_{\geq 0}$ be a matrix where no row or column is all zeros. Let $\epsilon,p\in(0,1)$, $s=d_1+d_2$, $\vec{r} = \ma \allones$, $\vec{c} = \ma^\top \allones$, and $\dist$ be a distribution over $\R^{d_1\times d_2}$
such that $\mx \sim \dist$ takes value
\[
\mx = \left(\frac{\ma_{ij}}{p_{ij}} \right)\indic_{i}\indic_{j}^{\top}
\text{ with probability }
p_{ij} = \frac{\ma_{ij}}{s} \left[\frac{1}{\vr_i} +\frac{1}{\vc_j} \right]
\text{ for all }
\ma_{ij} \neq 0~.
\]
If  $\ma_{1},..,\ma_{k}$  are sampled independently from $\dist$ for $k \geq 128 \cdot \frac{s}{\epsilon^2} \log \frac{s}{p}$, $\mr=\mdiag(\vr)$, and $\mc=\mdiag(\vc)$ then the average  $\empircalA\defeq\frac{1}{k}\sum_{i\in[k]}\ma_{i}$, satisfies
\begin{align*}
\Pr&\left[\norm{\mr^{-1/2}\left(\empircalA-\ma\right)\mc^{-1/2}}_{2}\geq\epsilon\right]\leq p\,{,} \\
\Pr&\left[\norm{\mr^{-1}(\empircalA-\ma)\allones}_{\infty}\geq\epsilon\right]\leq p\,{,\text{ and}} \\
\Pr&\left[\norm{\mc^{-1}(\empircalA-\ma)^{\top}\allones}_{\infty}\geq\epsilon\right]\leq p\,{.}
\end{align*}
\end{thm}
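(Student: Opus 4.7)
My plan is to verify all three bounds via the standard (scalar and matrix) Bernstein concentration inequalities, preceded by a short calculation showing that the sampling is unbiased with well-controlled per-sample magnitudes and second moments. The construction is specifically engineered so that the cancellations needed to apply Bernstein are clean.

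First, I will check that $\dist$ is a probability distribution on the nonzero entries of $\ma$ and that $\E[\mx] = \ma$. Summing the $p_{ij}$ gives $\frac{1}{s}\bigl(\sum_i \frac{r_i}{r_i} + \sum_j \frac{c_j}{c_j}\bigr) = \frac{d_1+d_2}{s} = 1$, and then $\E[\mx] = \sum_{i,j} p_{ij} \cdot (\ma_{ij}/p_{ij})\,\indic_i \indic_j^\top = \ma$. Also, using the definition of $p_{ij}$, I record the identity $\ma_{ij}/p_{ij} = s r_i c_j/(r_i + c_j)$, which drives every bound below.

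For the spectral bound I will work with the centered, rescaled samples $\mz_l \defeq \mr^{-1/2}(\mx_l - \ma)\mc^{-1/2}$. The identity above shows that the (unique) nonzero entry of $\mr^{-1/2}\mx_l \mc^{-1/2}$ has magnitude $s\sqrt{r_ic_j}/(r_i+c_j) \le s/2$ by AM--GM, giving $\|\mz_l\|_2 \le s$ after centering. For the variance I will compute $\E[\mr^{-1/2}\mx\mc^{-1}\mx^\top\mr^{-1/2}] = \sum_{i,j} \frac{s\ma_{ij}}{r_i+c_j}\,\indic_i\indic_i^\top$, whose operator norm is at most $\max_i \sum_j \frac{s\ma_{ij}}{r_i} = s$, and then use $\E[\mz\mz^\top] \preceq \E[\mr^{-1/2}\mx\mc^{-1}\mx^\top\mr^{-1/2}]$ (and the symmetric calculation for $\mz^\top\mz$) to get $\max(\|\E[\mz\mz^\top]\|_2, \|\E[\mz^\top\mz]\|_2) \le s$. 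Dilating $\mz_l$ to a symmetric $s \times s$ block, matrix Bernstein then bounds $\Pr[\|\tfrac{1}{k}\sum_l \mz_l\|_2 \ge \epsilon]$ by $s \exp\!\bigl(-\tfrac{k\epsilon^2/2}{s + s\epsilon/3}\bigr)$, which is at most $p$ for $k$ as in the statement (the constant 128 leaves plenty of slack).

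For the two $\ell_\infty$ bounds I will apply scalar Bernstein coordinatewise. For a fixed row $i$, the $i$-th entry of $\mr^{-1}\mx_l\allones$ is $sc_J/(r_i + c_J) \in [0,s]$ where $J$ is the random column conditional on the sample lying in row $i$, so centered samples have magnitude at most $s$. The per-sample second moment is $\sum_j p_{ij}\bigl(\tfrac{sc_j}{r_i+c_j}\bigr)^2 = \sum_j \frac{s\ma_{ij}c_j}{r_i(r_i+c_j)} \le s$ since $c_j/(r_i+c_j) \le 1$. Scalar Bernstein then controls each coordinate, and a union bound over the $d_1 \le s$ rows gives the second inequality; the third follows identically after swapping the roles of rows and columns.

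The only real content of the argument is that the specific choice $p_{ij} \propto \ma_{ij}(1/r_i + 1/c_j)$ makes the per-sample rescaled magnitudes and second moments all $O(s)$ despite arbitrary heterogeneity in the entries of $\ma$; once these cancellations are in place, all three bounds fall out of off-the-shelf Bernstein inequalities plus a union bound. I therefore do not expect any serious obstacle beyond organizing the algebra carefully and selecting the right form of matrix Bernstein (the rectangular version obtained by dilation) for the spectral bound.
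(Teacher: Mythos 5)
Your proposal is correct and follows essentially the same route as the paper: verify that the sampling is unbiased, use the identity $\ma_{ij}/p_{ij}=s\vr_i\vc_j/(\vr_i+\vc_j)$ to bound the per-sample norm by $O(s)$ and the second moments by $s$, and invoke matrix Bernstein with the $k$ in the statement. The only cosmetic difference is in the $\ell_\infty$ bounds, where you apply scalar Bernstein coordinatewise with a union bound over rows, while the paper applies the same matrix (vector) Bernstein bound to the samples $\indic_i\,\ma_{ij}/\vr_i$ and uses that the $2$-norm dominates the $\infty$-norm; both use identical variance and range computations and comfortably fit the constant $128$.
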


Theorem~\ref{thm:concentration_entry} shows that by sampling the entries of a rectangular matrix we can compute a new matrix such that spectral norm of the differences is bounded and the row sums and column sums are approximately preserved. In the next lemma we should how we can use this procedure to obtain a matrix with the same bound on the difference in spectral norm, but the row and column norms preserved \emph{exactly}. In short, we show how to add a matrix to the result of Theorem~\ref{thm:concentration_entry} preserving its properties while fixing the row norms, we call this procedure \emph{patching}.

\begin{lem}[Sparsifying Non-negative Matrices]
\label{lem:sparsifying_adjacency_matrix}
Let $\ma \in \R^{n \times n}_{\geq 0}$ be a matrix with non-negative entries and having no all zeros row or column. Let $\epsilon,p \in (0,1)$. In $O(\nnz(\ma)+n\epsilon^{-2}\log(n/p))$ time we can compute
a matrix $\widetilde{\ma} \in \R^{n \times n}_{\geq 0}$ with non-negative entries such that for $\mr \defeq \mdiag(\ma \vones)$ and  $\mc \defeq \mdiag(\ma^\top \vones)$ we have
\begin{enumerate}
\item $\nnz(\widetilde{\ma}) = O(n\epsilon^{-2}\log(n/p))$,\footnote{Note that it is possible to remove the dependence in $p$ from the sparsifier simply by increasing the running time by a constant factor and making it an expected running time. This can be achieved by using the power method to approximately compute the value of $\norm{\mr^{-1/2}(\ma-\widetilde{\ma})\mc^{-1/2}}_{2}$ and resampling when this is large.}
\item the row and column sums of $\ma$ and $\widetilde{\ma}$ are the same,
i.e. $\ma \vones = \widetilde{\ma} \vones$ and $\ma^\top \vones = \widetilde{\ma}^\top \vones$.
\item for $i \in [n]$, if $\ma_{ii} = 0$ then $\widetilde{\ma}_{ii} = 0$, and
\item with probability at least $1-p$,
$\norm{\mr^{-1/2}(\ma-\widetilde{\ma})\mc^{-1/2}}_{2}\leq\epsilon$.
\end{enumerate}
\end{lem}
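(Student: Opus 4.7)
The strategy has three stages: sample entries of $\ma$ via Theorem~\ref{thm:concentration_entry} to obtain a coarse sparse matrix $\empircalA$; rescale $\empircalA$ by a factor slightly below $1$ to create slack; then add a sparse non-negative ``patch'' matrix $\mP$ that exactly fixes the row and column sums.

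First I would invoke Theorem~\ref{thm:concentration_entry} with accuracy parameter $\epsilon' \defeq c\epsilon$ for a small absolute constant $c$, taking $k = \Theta(n\epsilon^{-2}\log(n/p))$ samples and averaging to produce $\empircalA$. With probability at least $1-p$ this gives the spectral bound $\|\mr^{-1/2}(\empircalA-\ma)\mc^{-1/2}\|_2 \leq \epsilon'$ together with the two $\ell_\infty$ row/column bounds. Since every sample is supported on a nonzero entry of $\ma$, we automatically have $\supp(\empircalA)\subseteq\supp(\ma)$, so property~3 holds for $\empircalA$ for free. I then set $\empircalA' \defeq (1-2\epsilon')\empircalA$; by the $\ell_\infty$ bounds the row-deficit $\vec{\delta}^r \defeq \vr - \empircalA'\vones$ and column-deficit $\vec{\delta}^c \defeq \vc - (\empircalA')^\top\vones$ are \emph{strictly positive} in every coordinate, satisfy $(\vec{\delta}^r)_i \leq O(\epsilon)\vr_i$ and $(\vec{\delta}^c)_j \leq O(\epsilon)\vc_j$, and have equal total mass.

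The central step is to build a non-negative matrix $\mP$ with $\mP\vones=\vec{\delta}^r$, $\mP^\top\vones=\vec{\delta}^c$, $O(n)$ non-zero entries, and $\mP_{ii}=0$ whenever $\ma_{ii}=0$. I would use a greedy Northwest-corner-style transport construction: repeatedly pick the current smallest row-deficit $(\vec{\delta}^r)_i$ and column-deficit $(\vec{\delta}^c)_j$, place $\min$ of the two at entry $(i,j)$, and update; a standard argument shows this introduces at most $2n-1$ non-zero entries. Whenever the rule would place mass on a forbidden diagonal slot $(i,i)$, I would redirect by a local swap that routes through some other row $k$ or column $\ell$ with remaining deficit; because every row and column has \emph{strictly positive} deficit after the $(1-2\epsilon')$ rescaling, such an alternative always exists and each redirection adds only $O(1)$ extra non-zeros. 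Setting $\widetilde{\ma}\defeq\empircalA'+\mP$ immediately verifies properties~1,~2, and~3.

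For property~4, by triangle inequality,
\[
\|\mr^{-1/2}(\ma-\widetilde{\ma})\mc^{-1/2}\|_2 \leq \|\mr^{-1/2}(\ma-\empircalA')\mc^{-1/2}\|_2 + \|\mr^{-1/2}\mP\mc^{-1/2}\|_2,
\]
and the first term is $O(\epsilon)$ by combining the rescaling factor with the spectral bound from Theorem~\ref{thm:concentration_entry}. For the second term, since $\mP$ is non-negative with $\mP\vones\leq O(\epsilon)\vr$ and $\mP^\top\vones\leq O(\epsilon)\vc$ entrywise, Cauchy-Schwarz gives, for unit vectors $\vec{x},\vec{y}$,
\[
\vec{x}^\top\mr^{-1/2}\mP\mc^{-1/2}\vec{y}
\leq \sqrt{\sum_i \tfrac{x_i^2}{\vr_i}(\mP\vones)_i}\cdot \sqrt{\sum_j \tfrac{y_j^2}{\vc_j}(\mP^\top\vones)_j} \leq O(\epsilon),
\]
so $\|\mr^{-1/2}\mP\mc^{-1/2}\|_2 = O(\epsilon)$; choosing $c$ small enough absorbs all constants. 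The running time is dominated by $O(\nnz(\ma))$ to set up the sampling distribution plus $O(n\epsilon^{-2}\log(n/p))$ for sampling and patching. The main obstacle is the patch construction: respecting the zero-diagonal constraint while keeping $\mP$ sparse \emph{and} keeping the induced spectral error at $O(\epsilon)$. The strict positivity of all deficits after the rescaling is precisely what makes the combinatorial redirection and the Cauchy-Schwarz spectral bound line up cleanly.
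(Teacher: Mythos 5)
Your proposal is correct and follows essentially the same route as the paper's proof: sample entries via Theorem~\ref{thm:concentration_entry}, scale the sampled matrix down so its row and column sums are dominated by those of $\ma$, add a greedily constructed $O(n)$-sparse non-negative patch that restores the sums exactly, and bound the patch's spectral contribution through its relative row and column sums (your Cauchy--Schwarz step is precisely the bound of Lemma~\ref{lem:matrix_two_norm}). The one soft spot is your justification for avoiding forbidden diagonal cells: the greedy can only get stuck when the sole remaining deficits sit in row $i$ and column $i$ with $(i,i)$ forbidden, and resolving that terminal case requires a small cycle reroute through already-placed mass rather than the initial strict positivity of all deficits that you cite --- though the paper's own proof asserts this point just as briefly.
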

\begin{proof}
We prove this using Theorem~\ref{thm:concentration_entry}. Let $\epsilon'=\epsilon/4$. By sampling
as in Theorem~\ref{thm:concentration_entry} we can compute
$\widehat{\ma}\in\R^{n\times n}$ such that $\nnz(\widehat{\ma})=O(n\epsilon^{-2}\log(n/p))$,
$\norm{\mr^{-1/2}(\ma-\widehat{\ma})\mc^{-1/2}}_{2}\leq\epsilon'$,
\[
\norm{\mr^{-1}(\widehat{\ma}-\ma)\allones}_{\infty}\leq\epsilon'
\text{ and }
\norm{\mc^{-1}(\widehat{\ma}-\ma)^{\top}\allones}_{\infty}\leq\epsilon'\,.
\]
Note that the latter two conditions imply that entrywise the row sums and column sums of $\ma$ are approximately the same as $\widehat{\ma}$. Formally, it implies that entrywise the following inequalities hold
\begin{equation}
\label{eq:nonnegat_approx_eq1}
(1-\epsilon') \ma \vones \leq \widehat{\ma} \vones \leq(1+\epsilon') \ma \vones
\text{ and }
(1-\epsilon') \ma^\top \vones \leq \widehat{\ma}^\top \vones \leq(1+\epsilon') \ma^\top \vones\,.
\end{equation}
Therefore $(1+\epsilon')^{-1} \cdot \widehat{\ma}$ has row and column sums that are less than or equal to those of $\ma$. 

Next, we compute a matrix to make the row and column sums the same as in $\ma$. Formally, we let $\mE\in\R_{\geq0}^{n\times n}$ be a matrix with $\nnz(\mE)=O(n)$ such that
\[
((1+\epsilon')^{-1}\widehat{\ma}+\mE) \vones
= \ma \vones
\text{, and }
((1+\epsilon')^{-1}\widehat{\ma}+\mE)^\top = \ma^\top \vones
\]
and the $\ell_1$ norm of the entries of $\mE$ is at most $n\epsilon$.

We can compute
such a matrix $\mE$ in $O(\nnz(\widehat{\ma}))$ time 
by greedily adding in values to $\widehat{\ma}$ to make one of the row
or column sums as large as that of $\ma$, while maintaining the invariant
that no row or column sum is larger than it is in $\ma$. Note that
if $\ma_{ii}=0$ for all $i\in[n]$, then $\widehat{\ma}_{ii}=0$ for
all $i\in[n]$, and it can be ensured that $\mE_{ii}=0$ for all $i\in[n]$.

Finally, we output $\widetilde{\ma}=(1+\epsilon')^{-1}\widehat{\ma}+\mE$.
By construction $\nnz(\widetilde{\ma})=O(n\epsilon^{-2}\log(n/p))$ and $\ma$ and $\widetilde{\ma}$ have the same row and column sums, i.e.
$\widetilde{\ma} \vones= \ma \vones$,
$\widetilde{\ma}^\top \vones = \ma^\top \vones$. All that remains, is to show that the last property holds, ie., that $\widetilde{\ma}$ is still a good approximation of $\ma$. The previous conditions, together with \eqref{eq:nonnegat_approx_eq1} imply that
\[
\norm{\mr^{-1}\mE}_{\infty}=\norm{\mr^{-1}\mE\allones}_{\infty}=\norm{\mr^{-1}(\ma-(1+\epsilon')^{-1}\widehat{\ma})\allones}_{\infty}
\leq
\left(1-\frac{1-\epsilon'}{1+\epsilon'}\right)\leq2\epsilon'\,.
\]
and similarly, 
\[
\norm{\mc^{-1}\mE}_{1}=\norm{\mE^{\top}\mc^{-1}\allones}_{\infty}
= \norm{( \ma - (1+\epsilon')^{-1} \widehat{\ma} )^{\top}\mc^{-1}\allones}_{\infty}
\leq \left(1-\frac{1-\epsilon'}{1+\epsilon'}\right)
\leq2\epsilon'\,{.}
\]
Applying Lemma~\ref{lem:matrix_two_norm} then yields:
\begin{align*}
&\norm{\mr^{-1/2}(\ma-\widetilde{\ma})\mc^{-1/2}}_{2} \\
&\leq\norm{\mr^{-1/2}(\ma-\widehat{\ma})\mc^{-1/2}}_{2}+\left(1-\frac{1}{1+\epsilon'}\right)\norm{\mr^{-1/2}\widehat{\ma}\mc^{-1/2}}_{2}+\norm{\mr^{-1/2}\mE\mc^{-1/2}}_{2} \\
&\leq\epsilon'+\frac{\epsilon'}{1+\epsilon'}+2\epsilon'\leq4\epsilon'=\epsilon\,.
\end{align*}
\end{proof}

This lemma immediately implies the following fact on providing sparse approximations to directed (not necessarily Eulerian) Laplacians.

\begin{cor}[Crude Sparsification of Directed Laplacian]
\label{cor:sparsifying_directed_laplacian}  Let $\mlap=\md-\ma^{\top}\in\R^{n\times n}$
be a directed Laplacian associated with a (not necessarily Eulerian)
graph $G$ that has edges incident to at most $v$ vertices and let $\epsilon, p \in (0,1)$. The routine $\sparsifysubgraph(\mlap,p,\epsilon)$ computes a directed Laplacian $\widetilde{\mlap}=\md-\widetilde{\ma}^{\top}$ in time $O(\nnz(\mlap)+v\epsilon^{-2}\log(v/p))$ such that 
\begin{enumerate}
\item $\widetilde{\mlap}$ is sparse, i.e.
$\nnz(\widetilde{\mlap})=O(v\epsilon^{-2}\log(v/p))$, 
\item the in and out degrees of the graphs associated with $\mlap$ and $\widetilde{\mlap}$ are the same, i.e. $\ma \vones = \widetilde{\ma} \vones$ and $\ma^\top \vones = \widetilde{\ma}^\top \vones$,
\item  $\norm{\md_{in}^{-1/2}(\mlap-\widetilde{\mlap})\md_{out}^{-1/2}}_{2}\leq\epsilon$ with probability at least $1-p$  where $\md_{in} = \mdiag(\ma^\top \vones)$ and $\md_{out} = \mdiag(\ma \vones)$ are the diagonal matrices associated with the in and out degrees of $G$.
\end{enumerate}
\end{cor}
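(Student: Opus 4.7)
The plan is to invoke Lemma~\ref{lem:sparsifying_adjacency_matrix} on the weighted adjacency matrix $\ma$ of $G$ and set $\widetilde{\mlap} \defeq \md - \widetilde{\ma}^{\top}$, where $\widetilde{\ma}$ is the matrix returned by the lemma. The first thing to verify is that $\widetilde{\mlap}$ is actually a directed Laplacian with the same diagonal $\md$ as $\mlap$. For a directed Laplacian we have $\md = \mdiag(\ma \vones) = \md_{out}$ (since column sums of $\mlap$ vanish), and Lemma~\ref{lem:sparsifying_adjacency_matrix} guarantees $\widetilde{\ma} \vones = \ma \vones$, so $\mdiag(\widetilde{\ma} \vones) = \md$ and $\widetilde{\mlap} = \md - \widetilde{\ma}^{\top}$ is a directed Laplacian with diagonal $\md$. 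The preservation of in- and out-degrees in $G$ and $\widetilde{G}$ (property 2 of the corollary) is then exactly property 2 of the lemma.

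For the spectral bound in property 3, observe that
\[
\mlap - \widetilde{\mlap} = (\md - \ma^{\top}) - (\md - \widetilde{\ma}^{\top}) = (\widetilde{\ma} - \ma)^{\top},
\]
and hence
\[
\|\md_{in}^{-1/2}(\mlap - \widetilde{\mlap})\md_{out}^{-1/2}\|_{2} = \|\md_{out}^{-1/2}(\ma - \widetilde{\ma})\md_{in}^{-1/2}\|_{2}.
\]
With $\mr = \md_{out}$ and $\mc = \md_{in}$ in the notation of the lemma, this is precisely the quantity that property 4 of Lemma~\ref{lem:sparsifying_adjacency_matrix} bounds by $\epsilon$ with probability at least $1-p$.

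The main obstacle is that Lemma~\ref{lem:sparsifying_adjacency_matrix} assumes the input has no all-zero row or column, whereas $G$ may have isolated vertices as well as pure sources or pure sinks. Since only $v$ vertices have any incident edge, the nonzero portion of $\ma$ lives in a $v \times v$ principal submatrix, so we can restrict the sparsification to that submatrix and leave the remaining entries of $\widetilde{\ma}$ equal to zero. Within this submatrix, a pure sink contributes an all-zero row and a pure source an all-zero column; however, the underlying concentration result (Theorem~\ref{thm:concentration_entry}) and the patching construction in the proof of Lemma~\ref{lem:sparsifying_adjacency_matrix} apply verbatim to rectangular matrices, so we can restrict further to the rows and columns with nonzero sum, obtaining a matrix of dimensions at most $v \times v$ with no all-zero row or column, and run exactly the same argument. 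Interpreting $\md_{in}^{-1/2}$ and $\md_{out}^{-1/2}$ as pseudoinverses is harmless, because the rows and columns of $\mlap - \widetilde{\mlap}$ indexed by vertices with zero in- or out-degree are themselves zero, so they do not contribute to the norm. The stated bounds $\nnz(\widetilde{\mlap}) = O(v\epsilon^{-2}\log(v/p))$ and runtime $O(\nnz(\mlap) + v\epsilon^{-2}\log(v/p))$ then follow directly from Lemma~\ref{lem:sparsifying_adjacency_matrix} applied on this $v$-dimensional restriction.
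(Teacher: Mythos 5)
Your proof is correct and follows essentially the same route as the paper: apply Lemma~\ref{lem:sparsifying_adjacency_matrix} to the adjacency matrix restricted to its support, and transfer the resulting spectral bound to the Laplacians using $\mlap - \widetilde{\mlap} = (\widetilde{\ma}-\ma)^{\top}$ together with the fact that sparsification fixes the out-degrees, so the diagonals cancel. One place where you are more careful than the paper's terse proof is the treatment of pure sources and sinks: removing only the isolated vertices (as the paper's phrasing suggests) still leaves zero rows from pure sinks and zero columns from pure sources, which would violate the hypothesis of Lemma~\ref{lem:sparsifying_adjacency_matrix}. Your fix — restricting further to the nonzero rows and nonzero columns and applying the argument on the resulting rectangular submatrix — is exactly right, and is supported by the fact that Theorem~\ref{thm:concentration_entry} is stated for rectangular matrices and the patching step only needs that total row-sum and column-sum deficits agree. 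Your observation that reading $\md_{in}^{-1/2}$ and $\md_{out}^{-1/2}$ as pseudoinverses is harmless, since the corresponding rows and columns of $\mlap - \widetilde{\mlap}$ vanish, closes the remaining gap.
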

\begin{proof}
This follows by applying the sampling result from Lemma~\ref{lem:sparsifying_adjacency_matrix} 
to $\ma$, after removing the rows and columns corresponding to isolated
vertices. The guarantees still hold on the larger matrix after inserting the zero rows and columns back. We can then substitute the corresponding directed Laplacians matrices into the formulas, since the diagonal terms will cancel (note that this follows again from Lemma~\ref{lem:sparsifying_adjacency_matrix}, since the sparsified Laplacian has the same out degrees as the original one).
\end{proof}

Using this, we prove the main result of this section,
how to sparsify a subgraph that is contained in an expander, or more formally a particular undirected graph with large spectral gap. For notational convenience, we first formally define the type of graph symmetrization we consider for these subgraph arguments. Using this notation we then provide the main result of this subsection, Lemma~\ref{lem:subgraph_sparse}. Pseudocode of this routine is in Figure~\ref{fig:sparsifySubgraph}.

\begin{definition}[Graph Symmetrization]
For a directed Laplacian, $\mlap = \md - \ma^\top$, its \emph{graph symmetrization} $\ms_\mlap$ is the symmetric Laplacian given by $\ms_\mlap \defeq \mdiag(\mU_\ma \vones) - \mU_{\ma}$. Equivalently, if we considering the graph associated with $\mlap$ and replace every directed edge with an undirected edge of half the weight, then $\ms_\mlap$ is the symmetric Laplacian associated with this undirected graph.
\end{definition}

\begin{figure}[ht]
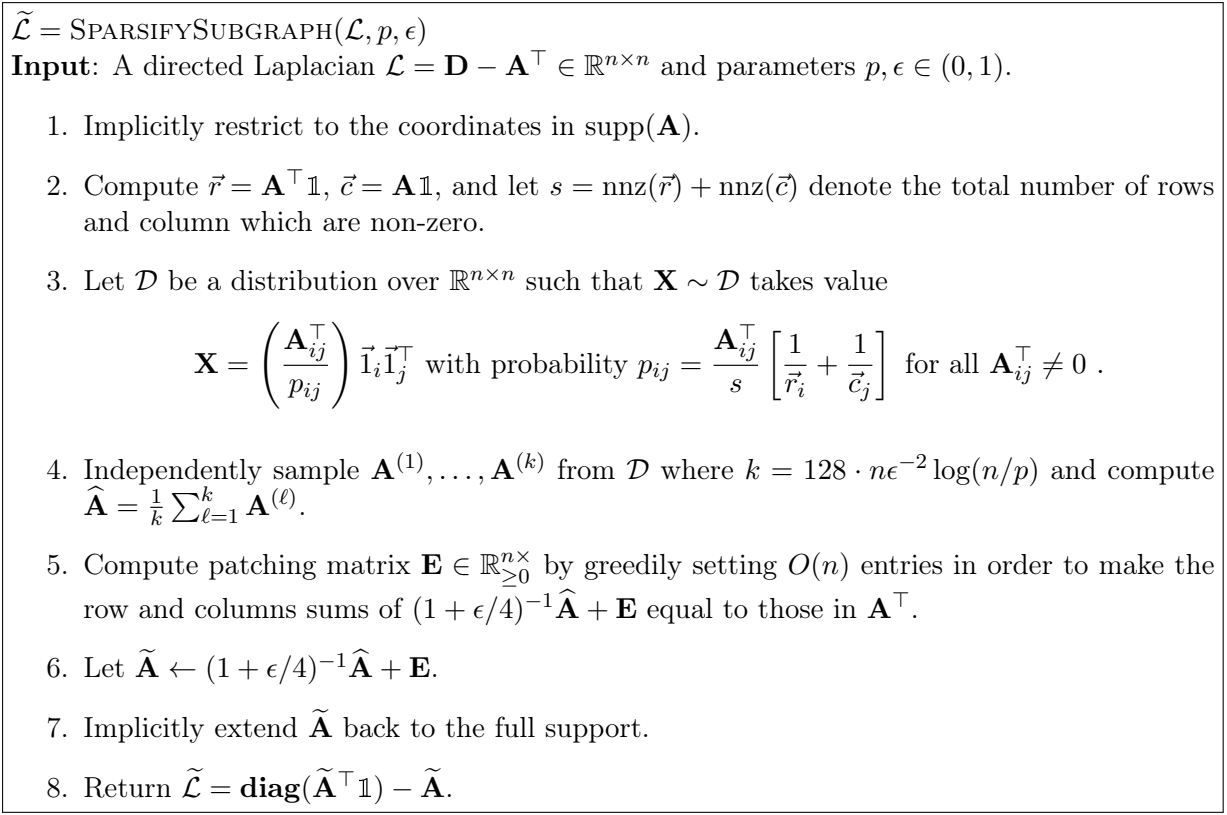

	\begin{algbox}
		$\widetilde{\mlap}=\sparsifysubgraph(\mlap, p, \epsilon)$\\
		\textbf{Input}: A directed Laplacian $\mlap = \md - \ma^\top \in \R^{n \times n}$ and parameters $p, \epsilon \in (0,1)$.
		\begin{enumerate}
		        \item Implicitly restrict to the coordinates in $\supp(\ma)$.
		        \item Compute $\vec{r} = \ma^\top \vones$, $\vec{c} = \ma \vones$, and let $s = \nnz(\vr) + \nnz(\vc)$ denote the total number of rows and column which are non-zero.
		        \item Let $\dist$ be a distribution over $\R^{n \times n}$
		        such that $\mx \sim \dist$ takes value
		        \[
		        \mx = \left(\frac{\ma_{ij}^\top}{p_{ij}} \right)\indic_{i}\indic_{j}^{\top}
		        \text{ with probability }
		        p_{ij} = \frac{\ma_{ij}^\top}{s} \left[\frac{1}{\vr_i} +\frac{1}{\vc_j} \right]
		        \text{ for all }
		        \ma_{ij}^\top \neq 0~.
		        \]
		        \item Independently sample $\ma^{(1)}, \dots, \ma^{(k)}$ from 
		       $\dist$ where $k = 128\cdot n\epsilon^{-2} \log(n/p)$ and compute $\widehat{\ma} = \frac{1}{k} \sum_{\ell=1}^{k} \ma^{(\ell)}$. 
			\item Compute patching matrix $\mE \in \R_{\geq 0}^{n \times }$ by greedily setting $O(n)$ entries in order to make the row and columns sums of $(1+\epsilon/4)^{-1}\widehat{\ma} + \mE$ equal to those in $\ma^\top$.
			\item Let $\widetilde{\ma} \leftarrow (1+\epsilon/4)^{-1} \widehat{\ma}+\mE$.
			\item Implicitly extend $\widetilde{\ma}$ back to the full support. 
			\item Return $\widetilde{\mlap} = \mdiag(\widetilde{\ma}^\top \vones)  - \widetilde{\ma}$.
		\end{enumerate}
	\end{algbox}
	\caption{Pseudocode of the subgraph sparsification routine}
	\label{fig:sparsifySubgraph}
\end{figure}

\begin{lem}[Subgraph Sparsification]
\label{lem:subgraph_sparse}  Let $\mlap$ be a directed Laplacian
and $\mU$ be an undirected Laplacian with spectral gap
at least $\alpha$, support size $v$, and  $\mdiag(\ms_{\mlap})\preceq\mdiag(\mU)$. 
For $\delta \leq \alpha \epsilon / 2$ the routine 
$\sparsifysubgraph(\mlap, p, \delta)$ in time 
$O(\nnz(\mlap)+ v \delta^{-2} \log(v/p))$ computes a Laplacian  $\widetilde{\mlap}$ with the same in and out degrees as $\mlap$, $\nnz(\widetilde{\mlap}) = O(v \delta^{-2} \log(v/p)))$, and 
$
\norm{\mU^{\dagger/2}(\mlap-\widetilde{\mlap})\mU^{\dagger/2}}_{2}\leq\epsilon
$ with probability at least $1-p$.
\end{lem}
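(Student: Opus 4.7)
The plan is a short two-step reduction: first produce a crude sparsifier by invoking Corollary~\ref{cor:sparsifying_directed_laplacian}, then upgrade that crude guarantee to the desired $\mU$-norm guarantee using the spectral gap of $\mU$.

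For the first step, I run $\sparsifysubgraph(\mlap, p, \delta)$. Since $\mdiag(\ms_\mlap) \preceq \mdiag(\mU)$, every vertex incident to an edge of $\mlap$ lies in the support of $\mU$, so the ``$v$'' parameter in Corollary~\ref{cor:sparsifying_directed_laplacian} is at most the one from the statement. The corollary therefore runs in time $O(\nnz(\mlap) + v\delta^{-2}\log(v/p))$ and returns, with probability at least $1-p$, a directed Laplacian $\widetilde{\mlap}$ that shares both in- and out-degree sequences with $\mlap$, has $\nnz(\widetilde{\mlap}) = O(v\delta^{-2}\log(v/p))$, and satisfies
\[
\norm{\md_{in}^{-1/2}(\mlap - \widetilde{\mlap})\md_{out}^{-1/2}}_2 \leq \delta,
\]
where $\md_{in} = \mdiag(\ma^\top \vones)$ and $\md_{out} = \mdiag(\ma \vones)$.

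For the second step, I factor
\[
\mU^{\dagger/2}(\mlap - \widetilde{\mlap})\mU^{\dagger/2} = \bigl(\mU^{\dagger/2}\md_{in}^{1/2}\bigr)\,\bigl(\md_{in}^{-1/2}(\mlap - \widetilde{\mlap})\md_{out}^{-1/2}\bigr)\,\bigl(\md_{out}^{1/2}\mU^{\dagger/2}\bigr)
\]
and bound the two outer factors. Since the diagonal of $\ms_\mlap$ equals $\tfrac{1}{2}(\md_{in} + \md_{out})$ and each of $\md_{in}, \md_{out}$ is a non-negative diagonal matrix, the hypothesis $\mdiag(\ms_\mlap) \preceq \mdiag(\mU)$ yields $\md_{in}, \md_{out} \preceq 2\mdiag(\mU)$. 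The spectral-gap hypothesis gives $\alpha\,\mdiag(\mU) \preceq \mU$ on the image of $\mU$, so $\mU^{\dagger/2}\mdiag(\mU)\mU^{\dagger/2}$ has spectral norm at most $1/\alpha$. Combining these,
\[
\norm{\mU^{\dagger/2}\md_{in}^{1/2}}_2^2 = \norm{\mU^{\dagger/2}\md_{in}\mU^{\dagger/2}}_2 \leq \tfrac{2}{\alpha},
\]
and the same bound holds for $\md_{out}$ in place of $\md_{in}$. Submultiplicativity of the spectral norm together with the assumption $\delta \leq \alpha\epsilon/2$ then gives
\[
\norm{\mU^{\dagger/2}(\mlap - \widetilde{\mlap})\mU^{\dagger/2}}_2 \leq \sqrt{\tfrac{2}{\alpha}}\cdot\delta\cdot\sqrt{\tfrac{2}{\alpha}} = \tfrac{2\delta}{\alpha} \leq \epsilon,
\]
which is the advertised bound.

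The only step needing any real care, and thus the main obstacle, is justifying the factorization identity with pseudoinverses: some vertex $i$ in the support of $\mU$ may have no incoming (respectively outgoing) edges under $\mlap$, putting it in the kernel of $\md_{in}$ (respectively $\md_{out}$). This is harmless because $\mlap$ and $\widetilde{\mlap}$ share both degree sequences, so $\mlap - \widetilde{\mlap} = \widetilde{\ma}^\top - \ma^\top$; a direct inspection shows that its $i$-th row vanishes precisely when vertex $i$ has no incoming edge, and its $j$-th column vanishes precisely when $j$ has no outgoing edge. Hence $\md_{in}^{1/2}\md_{in}^{\dagger/2}$ acts as the identity on the row space of $\mlap - \widetilde{\mlap}$ and $\md_{out}^{\dagger/2}\md_{out}^{1/2}$ acts as the identity on its column space, making the factorization valid and completing the argument.
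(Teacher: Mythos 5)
Your overall architecture (first invoke the crude sampling result of Corollary~\ref{cor:sparsifying_directed_laplacian}, then convert the $\md_{in},\md_{out}$-normalized bound into a $\mU$-norm bound via the spectral gap) matches the paper's, and your degree bound $\md_{in},\md_{out}\preceq 2\,\mdiag(\mU)$ is correct. However, the norm-conversion step has a genuine gap: the claim that the spectral-gap hypothesis gives $\alpha\,\mdiag(\mU)\preceq\mU$ on $\mathrm{im}(\mU)$, and hence $\norm{\mU^{\dagger/2}\mdiag(\mU)\mU^{\dagger/2}}_2\leq 1/\alpha$, is false. The spectral gap is defined via the \emph{normalized} Laplacian $\md^{-1/2}\mU\md^{-1/2}$, whose kernel is spanned by $\md^{1/2}\vones$; so what it gives is $\mU\succeq\alpha\big(\md-\frac{1}{\norm{\vec{d}}_1}\vec{d}\vec{d}^{\top}\big)$, i.e.\ the quadratic-form inequality $y^{\top}\mU y\geq\alpha\, y^{\top}\md y$ only for $y$ orthogonal to the degree vector $\vec{d}$, not for all $y\perp\vones=\mathrm{im}(\mU)$. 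A concrete counterexample: the weighted path on three vertices with edge weights $M$ and $1$ has conductance $1$, hence spectral gap bounded below by a constant, yet for $y=(1,1,-2)\perp\vones$ one gets $y^{\top}\md y = 2M+5$ while $y^{\top}\mU y = 9$, so $\norm{\mU^{\dagger/2}\md\,\mU^{\dagger/2}}_2=\Omega(M)$. Consequently your two outer factors $\norm{\mU^{\dagger/2}\md_{in}^{1/2}}_2$ and $\norm{\md_{out}^{1/2}\mU^{\dagger/2}}_2$ can each be $\Theta(\sqrt{M})$, and submultiplicativity only yields $O(M\delta)$, not $2\delta/\alpha$.

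The missing idea is exactly the one the paper uses: since $\mlap$ and $\widetilde{\mlap}$ have identical in- and out-degrees, the difference $\mlap-\widetilde{\mlap}$ annihilates $\vones$ on both sides, so in the Rayleigh-quotient form of the norm (Lemma~\ref{lem:asym_strong_equiv}) one may replace the maximizing $\vec{x},\vec{y}$ by $\vec{x}'=\vec{x}-\frac{\vec{x}^{\top}\vec{d}}{\norm{\vec{d}}_1}\vones$ and $\vec{y}'=\vec{y}-\frac{\vec{y}^{\top}\vec{d}}{\norm{\vec{d}}_1}\vones$ without changing the numerator; these centered vectors are $\vec{d}$-orthogonal, and only then does the spectral gap give $\vec{x}'^{\top}\mU\vec{x}'\geq\alpha\,\vec{x}'^{\top}\md\vec{x}'$, after which $\md_{in},\md_{out}\preceq 2\md$ and a second application of Lemma~\ref{lem:asym_strong_equiv} finish the bound $\frac{2}{\alpha}\delta\leq\epsilon$. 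If you prefer to keep your factorization language, the fix is to insert the oblique projection $\PPi=\mI-\frac{1}{\norm{\vec{d}}_1}\vones\vec{d}^{\top}$ on both sides of the middle factor (legitimate because $(\mlap-\widetilde{\mlap})\PPi=\mlap-\widetilde{\mlap}$ and $\PPi^{\top}(\mlap-\widetilde{\mlap})=\mlap-\widetilde{\mlap}$), and bound $\norm{\md^{1/2}\PPi\,\mU^{\dagger/2}}_2^2\leq 1/\alpha$, which is true precisely because $\PPi$ maps into $\vec{d}^{\perp}$ while leaving $\mU$-energy unchanged. Your final kernel discussion is a side issue and does not repair this; as written the proof does not establish the stated bound.
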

\begin{proof}
Without loss of generality let $\mlap=\md_{out}-\ma^{\top}$.
Furthermore, let $\md=\mdiag(\mU)$, $\md_{in} = \mdiag(\ma \vones)$, and $\md_{out} = \mdiag(\ma^\top \vones)$. By Corollary~\ref{cor:sparsifying_directed_laplacian}, we can compute a  directed Laplacian
$\widetilde{\mlap}=\md_{out}-\widetilde{\ma}$ with  in and out degrees  
being the same as those of $\mlap$, $\nnz(\widetilde{\mlap})=O(v \delta^{-2}\log(v/p))$,
and with probability at least $1 - p$
\[
\norm{\md_{in}^{-1/2}(\mlap-\widetilde{\mlap})\md_{out}^{-1/2}}_{2}=\norm{\md_{in}^{-1/2}(\ma-\widetilde{\ma})\md_{out}^{-1/2}}_{2}\leq \delta\,.
\]
If this happens, then since $\mlap$ and $\widetilde{\mlap}$ have the same in and out degrees
and $\mdiag(\ms_{\mlap})\preceq\mdiag(\mU)$ we have that $\mU$ has larger
support than $\mlap$ and $\widetilde{\mlap}$, and therefore $\ker(\mU)\subseteq\ker(\mlap-\widetilde{\mlap})\cap\ker((\mlap-\widetilde{\mlap})^{\top})$.
Consequently, by Lemma~\ref{lem:asym_strong_equiv} we have 
\[
\norm{\mU^{\dagger/2}(\mlap-\widetilde{\mlap})\mU^{\dagger/2}}_{2}
=\max_{\vec{x},\vec{y} \neq 0}
	\frac{\vec{x}^{\top}(\mlap-\widetilde{\mlap})\vec{y}}
	{\sqrt{\vec{x}^{\top}\mU \vec{x}\cdot \vec{y}^{\top}\mU \vec{y}}}\,.
\]
Now, clearly, there is a maximizing $\vec{x}, \vec{y}\perp\allones$.
Consequently
$\vec{x}' = \vec{x} - \frac{\vec{x}^{\top}d}{\norm{\vec{d}_{1}}}\allones$
and $\vec{y}' = \vec{y} - \frac{\vec{y}^{\top}\vec{d}}{\norm{\vec{d}_{1}}}\allones$
are nonzero and satisfy $(\vec{x}')^{\top} \vec{d} = 0$
and $(\vec{y}')^{\top} \vec{d} = 0$.
The spectral gap of $\mU$ being at least $\alpha$
implies
\[
\mU \succeq \alpha \left(\md -
	\frac{1}{\norm {\vec{d}}_{1}}\vec{d}\vec{d}^{\top}\right)~,
\]
which then gives
\begin{align*}
\norm{\mU^{\dagger/2}(\mlap-\widetilde{\mlap})\mU^{\dagger/2}}_{2} & =\frac{(\vec{x}')^{\top}(\mlap-\widetilde{\mlap})\vec{y}'}
	{\sqrt{(\vec{x}')^{\top}\mU \vec{x}'\cdot(\vec{y}')^{\top}\mU \vec{y}'}}
\leq \frac{1}{\alpha}\cdot\frac{(\vec{x}')^{\top}(\mlap-\widetilde{\mlap})\vec{y}'}{\sqrt{(\vec{x}')^{\top}\md \vec{x}' \cdot(\vec{y}')^{\top}\md \vec{y}'}}\,{.}
\end{align*}
Since $\md_{in}\preceq2\cdot\md$ and $\md_{out}\preceq2\cdot\md$,
applying Lemma~\ref{lem:asym_strong_equiv} again yields
\[
\norm{\mU^{\dagger/2}(\mlap-\widetilde{\mlap})\mU^{\dagger/2}}_{2}
\leq\frac{2}{\alpha}\cdot\frac{(\vec{x}')^{\top}(\mlap-\widetilde{\mlap}) \vec{y}'}{\sqrt{(\vec{x}')^{\top}\md_{in} \vec{x}'\cdot(\vec{y}')^{\top}\md_{out} \vec{y}'}}
\leq\frac{2}{\alpha}\cdot\norm{\md_{in}^{-1/2}(\ma-\widetilde{\ma})\md_{out}^{-1/2}}_{2}\leq \frac{2}{\alpha} \delta
\]
and the result follows by our restriction on $\delta$. 
\end{proof}

\subsection{Sparsifying an Eulerian Laplacian\label{sub:sparse_eulerian}}

Here we show how to produce an $\epsilon$-sparsifier
of an Eulerian Laplacian in nearly linear time.
We achieve this by applying our result on sparsifying
subgraphs (Lemma~\ref{lem:subgraph_sparse})
proved in Section~\ref{sub:sample_directed} on a
decomposition of the Eulerian graph into well-connected
pieces on an associated undirected graph.

The decomposition we use is essentially identical to the
expander decomposition used in
Spielman and Teng's work on graph sparsification~\cite{SpielmanT11}.
Interestingly, the quality of our decomposition
is measured only in terms of properties 
of the symmetrized graph, rather than of the original directed graph.
Ultimately, only the sampling probabilities that we use on the decomposition take into account edge direction.\footnote{Even this can possibly be overcome by 
choosing different sampling probabilities.
It is the patching of the graph, i.e. adding edges to preserve
degree imbalance, where we truly use the directed structure
of the graph.}

Below we formally define the type of decomposition we need,
and provide a theorem about computing such decompositions.
Finding these decompositions has been done in prior works~\cite{SpielmanTengSolver:journal, KLOS14, PengS14, OrecchiaV11},
and we defer the discussion of it to Appendix~\ref{sec:decomposition}.
\begin{definition}
An \emph{$(s,\alpha,\beta)$-decomposition} of a directed Laplacian $\mlap$  is a decomposition of $\mlap$ into directed Laplacians $\mlap^{(1)},...,\mlap^{(k)}\in\R^{n\times n}$, i.e. $\mlap = \sum_{i \in {k}} \mlap^{(i)}$, such that $\sum_{i\in[k]}\left|\supp(\mlap^{(i)})\right|\leq s$ and such that there exists undirected Laplacians $\mU^{(1)},...,\mU^{(k)}$ such that:
\begin{enumerate}
\item $\mdiag(\ms_{\mlap^{(i)}}) \preceq \mdiag(\mU^{(i)})$,
	for all $i\in[k]$,
\item $\mU^{(i)}$ has spectral gap at least $\alpha$,
	for all $i\in[k]$, and
\item $\sum_{i\in[k]}\mU^{(i)}\preceq\beta\mU_{\mlap}$.
\end{enumerate}
We call the $\mU^{(1)}, ... ,\mU^{(k)}$ with these properties an \emph{$(\alpha,\beta)$ undirected cover} of $\mlap^{(1)}, ... ,\mlap^{(k)}$.
\end{definition}

\begin{thm}
\label{thm:decomposition_thm} Given a directed Laplacian, $\mlap \in \R^{n \times n}$, the routine $\expanderpartition(\mlap)$ returns an $(\otilde(n),1/\alpha, \beta)$-decomposition, with $\alpha, \beta = \tilde{O}(1)$, in $\otilde(\nnz(\mlap))$ time.
\end{thm}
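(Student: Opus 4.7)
The plan is to reduce the task entirely to an undirected expander-decomposition problem on the symmetrization $\ms_{\mlap}$, and then inherit the directed structure piece by piece. The crucial observation is that every condition in the definition of $(s,\alpha,\beta)$-decomposition is phrased in terms of undirected Laplacians $\mU^{(i)}$ and the diagonal/symmetrized quantities $\mdiag(\ms_{\mlap^{(i)}})$ and $\mU_\mlap$. Thus, the directed edges can be split along a vertex partition produced from the symmetrized graph, and only standard undirected expander-decomposition machinery is needed. Since $\ms_\mlap$ is the symmetric Laplacian obtained by replacing each directed edge by an undirected edge of half the weight, $\nnz(\ms_\mlap) = O(\nnz(\mlap))$ and running known expander-decomposition routines on it costs $\otilde(\nnz(\mlap))$.

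The algorithm proceeds recursively. First I would invoke a known $\otilde(m)$-time undirected expander decomposition (of the Spielman--Teng / Orecchia--Vishnoi / KLOS type, reviewed in Appendix~\ref{sec:decomposition}) on $\ms_\mlap$ to find a partition $V = V_1 \sqcup \dotsb \sqcup V_k$ such that (i) the induced subgraph on each $V_j$ has normalized spectral gap $\geq 1/\alpha$ with $\alpha = \otilde(1)$, and (ii) the total weight of edges crossing the partition is at most half the total weight of $\ms_\mlap$. For each cluster $V_j$, define $\mlap^{(j)}$ to be the directed Laplacian of the directed edges of $\mlap$ whose endpoints both lie in $V_j$ (with the diagonal adjusted so that $\mlap^{(j)}$ is a genuine directed Laplacian), and let $\mU^{(j)}$ be the symmetric Laplacian of the induced subgraph of $\ms_\mlap$ on $V_j$, padded along the diagonal if necessary so that $\mdiag(\ms_{\mlap^{(j)}}) \preceq \mdiag(\mU^{(j)})$. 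Then recurse on the directed Laplacian of the remaining cross-cluster edges. Because the cross-cluster edge weight halves at each level, the recursion terminates in $O(\log n)$ levels.

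The verification splits into three checks. For the spectral-gap condition, each $\mU^{(j)}$ is the induced undirected expander and has spectral gap $\geq 1/\alpha$ by construction of the decomposition subroutine. For the support bound, each vertex is touched by at most one cluster at each recursion level and the depth is $O(\log n)$, giving $\sum_j |\supp(\mlap^{(j)})| = \otilde(n)$. For the containment $\sum \mU^{(j)} \preceq \beta \mU_\mlap$, I would use that $\mU_\mlap = \ms_\mlap$ and that the covers $\mU^{(j)}$ at each level sum to (a diagonally-dominated piece of) $\ms_\mlap$ restricted to within-cluster edges; since the recursion has depth $O(\log n)$, summing over all levels yields $\beta = \otilde(1)$. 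The main obstacle I foresee is carefully arranging the diagonal padding so that property (1) holds tightly without losing a polynomial factor in $\beta$: one has to ensure the ``extra'' diagonal mass added to dominate $\mdiag(\ms_{\mlap^{(j)}})$ is itself dominated by the degree of $\mU_\mlap$ at that vertex, which follows from the fact that the within-cluster degrees never exceed the full symmetrized degrees, but requires bookkeeping through the recursion. The precise pseudocode and the delicate choice of expander-decomposition primitive are deferred to Appendix~\ref{sec:decomposition}.
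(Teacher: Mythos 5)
Your high-level plan is the same as the paper's: decompose the symmetrization, take the directed edges inside each cluster as $\mlap^{(i)}$, recurse on what is left, and check the three properties. But there is a genuine gap in the step you lean on hardest. You claim the near-linear-time primitives you cite (Spielman--Teng / Orecchia--Vishnoi / KLOS) return a partition $V_1,\dots,V_k$ such that \emph{the induced subgraph on each $V_j$} has normalized spectral gap $1/\tilde{O}(1)$, and you then set $\mU^{(j)}$ to be that induced subgraph. Those routines do not certify this. The guarantee (Lemma~\ref{lem:decomp}, from~\cite{KLOS14}) is weaker: for each piece there is a nested pair $S_i \subseteq T_i \subseteq V_i$ where only $G(T_i)$ is certified to have conductance $\Omega(1/\log^2 n)$, and only the edges inside the \emph{smaller} sets $S_i$ are guaranteed to account for half the edges. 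An individual cluster $V_i$ (or even $S_i$) may itself be very far from an expander. This is exactly why the definition of an $(\alpha,\beta)$ undirected cover does \emph{not} require $\mU^{(i)}$ to be the symmetrization of $\mlap^{(i)}$: in the paper's proof, $\mlap^{(i)}$ is the directed Laplacian induced on $S_i$ while $\mU^{(i)}$ is the undirected Laplacian induced on the larger certified expander $T_i$, and only degree dominance $\mdiag(\ms_{\mlap^{(i)}}) \preceq \mdiag(\mU^{(i)})$ (immediate from $E[S_i]\subseteq E[T_i]$) plus $\sum_i \mU^{(i)} \preceq \beta\, \mU_\mlap$ are needed. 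Without this decoupling, your verification of the spectral-gap condition rests on a guarantee the cited subroutines do not provide; an expander decomposition in which every cluster is itself an expander is a substantially stronger (and, with the toolkit available to this paper, unavailable in $\otilde(m)$ time) primitive.

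A secondary issue: the decomposition lemma you invoke is stated for unweighted graphs, and its ``at least half'' guarantee is on edge \emph{count}, not weight. Your recursion instead halves cross-cluster \emph{weight}, which both presupposes a weighted version of the primitive and needs polynomially bounded weights to bound the depth (otherwise halving total weight can take $\omega(\log n)$ rounds). The paper handles this by bucketing edges into $O(\log(w_{\max}/w_{\min})) = \tilde{O}(1)$ dyadic weight classes, decomposing each class as an unweighted graph, rescaling, and taking the union, paying only a $\tilde{O}(1)$ factor in $\beta$ and in the total support. Your recursion-on-crossing-edges scheme can be made to work (the paper also iterates $O(\log n)$ times, but on halving the \emph{number} of remaining edges), yet as written it silently replaces the available unweighted, edge-count guarantee with a weighted one. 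Both gaps are repairable by following the $S_i \subseteq T_i$ containment argument and the weight-bucketing step, which is precisely what Appendix~\ref{sec:decomposition} does.
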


We produce our sparsifiers by computing the decomposition using Theorem~\ref{thm:decomposition_thm}
and then applying Lemma~\ref{lem:subgraph_sparse} repeatedly to
obtain the sparsifier.
Pseudocode of this algorithm is given
in Figure~\ref{fig:sparsifyEulerian}.
\begin{figure}[ht]
	\begin{algbox}
		$\widetilde{\mlap}=\sparsifyeulerian(\mlap, p, \epsilon)$\\
		\textbf{Input}:
		$\mlap$ an $n \times n$ directed Laplacian, parameters $p, \epsilon \in (0,1)$.
		\begin{enumerate}
			\item $((\mlap^{(1)}, \dots, \mlap^{(k)}), \alpha, \beta) \leftarrow \expanderpartition(\mlap)$.
		        \item For $i = 1, \dots, k$
			\begin{enumerate}
			\item $\widetilde{\mlap}^{(i)} \leftarrow \textsc{SparsifySubgraph}(\mlap^{(i)}, p/n^2, \epsilon/(2\alpha\beta))$.
			\end{enumerate}
			\item Return $\widetilde{\mlap} = \sum_{i=1}^k \widetilde{\mlap}^{(i)}$.
		\end{enumerate}
	\end{algbox}
	\caption{Pseudocode of the Eulerian graph sparsification routine}
	\label{fig:sparsifyEulerian}
\end{figure}
and the analysis of this algorithm is given in the following theorem.
\begin{thm}
\label{thm:order_n_sparsifier} For Eulerian Laplacian $\mlap\in\R^{n\times n}$
and $\epsilon,p\in(0,1)$ with probability at least $1 - p$ the routine $\textsc{SparsifyEulerian}(\mlap, p, \epsilon)$ computes 
in $\otilde(\nnz(\mlap)+n\epsilon^{-2}\log(1/p))$
time an Eulerian Laplacian $\widetilde{\mlap}\in\R^{n\times n}$ such that
\begin{enumerate}
\item $\widetilde{\mlap}$ is an
$\epsilon$-sparsifier
of $\mlap$,
\item the in and out degrees of the graphs associated with $\mlap$ and $\widetilde{\mlap}$ are identical.
\end{enumerate}
\end{thm}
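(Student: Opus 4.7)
The plan is to prove Theorem~\ref{thm:order_n_sparsifier} by essentially following the algorithm's structure: decompose, sparsify each piece with the per-cluster guarantee from Lemma~\ref{lem:subgraph_sparse}, and assemble the pieces. The main technical content is showing that local approximation in each $\mU^{(i)}$-norm composes into a global approximation in the $\mU_\mlap$-norm.

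First, I would invoke Theorem~\ref{thm:decomposition_thm} to obtain the directed Laplacians $\mlap^{(i)}$ together with an $(1/\alpha,\beta)$ undirected cover $\mU^{(i)}$ satisfying $\sum_i \lvert\supp(\mlap^{(i)})\rvert \le \tilde O(n)$, spectral gap of each $\mU^{(i)}$ at least $1/\alpha$, and $\sum_i \mU^{(i)} \preceq \beta \mU_{\mlap}$, with $\alpha,\beta=\tilde O(1)$. For each $i$, I would then apply Lemma~\ref{lem:subgraph_sparse} with failure probability $p/n^2$, target error $\epsilon/\beta$, and matching parameter $\delta = \epsilon/(2\alpha\beta) \le (1/\alpha)\cdot(\epsilon/\beta)/2$. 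This yields $\widetilde{\mlap}^{(i)}$ having the same in- and out-degrees as $\mlap^{(i)}$, with $\norm{(\mU^{(i)})^{\dagger/2}(\mlap^{(i)}-\widetilde{\mlap}^{(i)})(\mU^{(i)})^{\dagger/2}}_2 \le \epsilon/\beta$. A union bound over the at most $n$ pieces keeps the total failure probability below $p$.

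The crux is combining the per-piece bounds into a single global bound. I would use the equivalent formulation from Lemma~\ref{lem:asym_strong_equiv}: for any vectors $\vx,\vy\perp\ker(\mU_\mlap)$,
\[
\vx^\top(\mlap-\widetilde{\mlap})\vy \;=\; \sum_i \vx^\top(\mlap^{(i)}-\widetilde{\mlap}^{(i)})\vy \;\le\; \frac{\epsilon}{\beta}\sum_i \norm{\vx}_{\mU^{(i)}}\norm{\vy}_{\mU^{(i)}}.
\]
Applying Cauchy--Schwarz and then $\sum_i \mU^{(i)}\preceq \beta\mU_{\mlap}$ gives
\[
\sum_i \norm{\vx}_{\mU^{(i)}}\norm{\vy}_{\mU^{(i)}} \;\le\; \sqrt{\vx^\top\!\left(\textstyle\sum_i \mU^{(i)}\right)\!\vx}\,\sqrt{\vy^\top\!\left(\textstyle\sum_i \mU^{(i)}\right)\!\vy} \;\le\; \beta\norm{\vx}_{\mU_{\mlap}}\norm{\vy}_{\mU_{\mlap}},
\]
so $\vx^\top(\mlap-\widetilde{\mlap})\vy \le \epsilon\norm{\vx}_{\mU_{\mlap}}\norm{\vy}_{\mU_{\mlap}}$; the kernel condition is inherited from $\mdiag(\ms_{\mlap^{(i)}})\preceq\mdiag(\mU^{(i)})$ and the cover inequality, so Lemma~\ref{lem:asym_strong_equiv} gives the desired strict $\epsilon$-approximation.

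Finally, the Eulerian property of $\widetilde{\mlap}$ follows because each $\widetilde{\mlap}^{(i)}$ preserves both in- and out-degrees of $\mlap^{(i)}$ (by Lemma~\ref{lem:subgraph_sparse}), so the aggregate $\widetilde{\mlap}$ has in- and out-degrees equal to those of the Eulerian $\mlap$; hence the two graphs share the trivial Eulerian scaling and the approximation is strict, making $\widetilde{\mlap}$ an $\epsilon$-sparsifier in the sense of Definition~\ref{def:strongApprox}. For the running time, $\expanderpartition$ costs $\tilde O(\nnz(\mlap))$, and summing the per-piece costs from Lemma~\ref{lem:subgraph_sparse} gives $O\big(\nnz(\mlap) + \sum_i v_i \delta^{-2}\log(v_i n^2/p)\big) = \tilde O(\nnz(\mlap) + n\epsilon^{-2}\log(1/p))$ since $\sum_i v_i \le \tilde O(n)$ and $\alpha,\beta=\tilde O(1)$; the same sum bounds $\nnz(\widetilde{\mlap})$ by $\tilde O(n\epsilon^{-2})$. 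The main obstacle is really just the Cauchy--Schwarz step above, which is clean once the bilinear form characterization from Section~\ref{sub:sparse_facts} is in hand.
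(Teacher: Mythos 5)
Your proposal is correct and is essentially the paper's proof: decompose via Theorem~\ref{thm:decomposition_thm}, sparsify each piece with $\sparsifysubgraph$ at error $\epsilon/(2\alpha\beta)$ and failure probability $p/n^2$, union bound, and combine using the bilinear characterization of Lemma~\ref{lem:asym_strong_equiv} together with $\sum_i \mU^{(i)} \preceq \beta\,\mU_{\mlap}$, with degree preservation per piece giving the Eulerian/strictness claim. The only cosmetic difference is that you sum the geometric-mean form and apply Cauchy--Schwarz over the pieces, whereas the paper sums the arithmetic-mean form of the same lemma directly; the two are interchangeable.
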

\begin{proof}
Using the $\expanderpartition$ routine (Theorem~\ref{thm:decomposition_thm}) we compute Laplacians
$\mlap^{(1)},...,\mlap^{(k)}\in\R^{n\times n}$ that are a $(s,\alpha,\beta)$
partition of $\mlap$ with $(\alpha,\beta)$ undirected cover $\mU^{(1)},...,\mU^{(k)}$
for $s=\otilde(n)$, $\alpha=1/\otilde(1)$, and $\beta=\otilde(1)$.

We then apply the $\sparsifysubgraph$ routine (Lemma~\ref{lem:subgraph_sparse}) to each $\mlap^{(i)}$
to compute $\widetilde{\mlap}^{(i)}$ in $O(\nnz(\mlap)+s\epsilon^{-2}\beta^{-2}\alpha^{-2}\log (n/p))$
time  such that each $\widetilde{\mlap}^{(i)}$ has the same in and out degree as
$\mlap^{(i)}$ and $\norm{\left(\mU^{(i)}\right)^{+1/2}(\mlap^{(i)}-\widetilde{\mlap}^{(i)})\left(\mU^{(i)}\right)^{+/2}}_{2}\leq \epsilon / \beta$.  The running time follows
from the fact that $\sum_{i\in[k]}\left|\supp(\mlap^{(i)})\right|\leq s$ and thats happens with probability at least $1 - p$ by union bounding over the success probability of each call to $\sparsifysubgraph$.

Finally, considering $\widetilde{\mlap}=\sum_{i\in[k]}\widetilde{\mlap}^{(i)}$,  Lemma~\ref{lem:asym_strong_equiv}
yields that for all $\vec{x}, \vec{y}\neq0$ it is the case that
\[
\vec{x}^{\top}(\mlap-\widetilde{\mlap})\vec{y}
=\sum_{i\in[k]} \vec{x}^{\top}(\mlap^{(i)}-\widetilde{\mlap}^{(i)}) \vec{y}
\leq \sum_{i\in[k]}\frac{\epsilon}{2 \beta}
	\left[\vec{x}^{\top}\mU^{(i)}\vec{x} + 
		\vec{y}^{\top}\mU^{(i)}\vec{y}\right]
\leq\frac{\epsilon \beta}{2 \beta} 
	\left[\vec{x}^{\top}\mU_{\mlap} \vec{x}
		+ \vec{y}^{\top}\mU_{\mlap}\vec{y}\right]\,
\]
The result follows from Lemma~\ref{lem:asym_strong_equiv}
applied above and the the bounds on $s$,$\alpha$, and $\beta$. Note that the fact that in and out degrees are preserved is guaranteed by the fact that for each component in the decomposition the degrees are preserved, according to Lemma~\ref{lem:subgraph_sparse}.
\end{proof}

\subsection{Sparsifying a Squared Eulerian Laplacian\label{sub:sparse_square}}

\newcommand{\mlapsquared}{\mathcal{M}}
\newcommand{\magen}{\WWhat}

Here we build upon Section~\ref{sub:sparse_eulerian} and show how to
sparsify certain implicitly represented Eulerian Laplacians. In particular,
given an Eulerian Laplacian $\mlap = \md-\ma^{\top}$ associated with
a strongly connected graph we show how to compute a sparsifier
for the Eulerian Laplacian $\mlapsquared = \md-\ma^{\top}\md^{-1}\ma^{\top}$
in nearly-linear time with respect to $\mlap$, i.e. without explicitly constructing $\mlapsquared$. Note that running time we achieve may be sublinear in size of the matrix we are sparsifying, i.e
$\mlapsquared$.

Our approach is a natural directed extension of the approach taken by Peng and Spielman \cite{PengS14} for solving the same problem in the case when $\mlap$ is symmetric. Broadly speaking, we decompose $\mlapsquared$ into a directed Laplacian for each vertex. Each of these directed Laplacians may be dense but we show that they have a compact representation that allows us to efficiently implement a sampling scheme analogous to $\sparsifysubgraph$ for each of these Laplacians such that adding these approximation yields an Eulerian approximation to $\mlapsquared$ that has $\nnz(\mlap)$ non-zero entries. Applying $\sparsifyeulerian$ from Section~\ref{sub:sparse_eulerian} to the result then yields our desired sparsifier. 

Formally, we consider the slightly more general setting where we have a square matrix with non-negative entries, $\magen \in \R^{n \times n}_{\geq 0}$, that has the same same row and column sums, i.e. $\magen \vones = \magen^\top \vones$. We show how to compute a sparsifier of $\mlapsquared = \md - \magen \md^{-1} \magen$ for $\md = \mdiag(\magen \vones)$ in time nearly linear in $\nnz(\magen)$. This setting is more general as we allow entries on the diagonal of the squared matrix. We consider this case as it simplifies our analysis in Section~\ref{sec:solver}.

As discussed, we first decompose $\mm$ into a directed Laplacian for each vertex. For $i \in [n]$ we let 
\[
\mlap^{(i)} = \mdiag(\magen_{i,:}) - \frac{1}{\md_{i,i}} \magen_{:,i} \magen_{i,:}^\top
\]
where $\magen_{i,:}, \magen_{:,i} \in \R^{n}$ are the vectors corresponding to the row and column $i$ of $\magen$ respectively. In Theorem~\ref{thm:sparsify_square} we show that $\mlapsquared$ and each $\mlap^{(i)}$ are directed Laplacians such that $\mlapsquared = \sum_{i \in [n]} \mlap^{(i)}$. 

Note that while $\mlapsquared$ may be dense and forming each of the $\mlap^{(i)}$ explicitly maybe expensive, we have a compact representation of each $\mlap^{(i)}$ in terms of a single row and column of $\magen$. Moreover, if we look at the total support of $\mlap^{(i)}$ then since each row and column only appears once we see that the total support is just $O(\nnz(\magen))$. Furthermore, we show that due to the low rank structure of the graph symmetrization, $\ms_{\mlap^{(i)}}$, of each $\mlap^{(i)}$ has spectral gap of at least a constant (See Lemma~\ref{lem:bipartite_conductance}). Consequently, if we apply $\sparsifysubgraph$ to each $\mlap^{(i)}$ and sum the results, the analysis in Section~\ref{sub:sparse_eulerian} would imply that this matrix would be an approximation to $\mlapsquared$ with $O(\nnz(\magen))$ non-zero entries. 

The only difficulty in following this approach is to show that we can apply $\sparsifysubgraph$ to each $\mlap^{(i)}$ efficiently. We show that we can perform this operation in time proportional to the number of non-zeros in each row and column of $\magen$, rather than the naive $O(\nnz(\mlap^{(i)}))$ running time which could be much larger. To do this, as in Peng and Spielman \cite{PengS14}, we exploit the simple product structure of $\mlap^{(i)}$. Since $\magen$ has the same row and column sums this means that $\norm{\magen_{:,i}}_1 = \norm{\magen_{i,:}}_1 = \md_{i,i}$. Consequently, each $\mlap^{(i)}$ is of the form $\mn = \mdiag(\vy) - \frac{1}{\norm{\vy}_1} \vx \vy^\top$ for some $\vx,\vy \in \R^{n}_{\geq 0}$ with $\norm{\vx}_1 = \norm{\vy}_1$ that depend on $i$. Since, the off-diagonals and their row and column sums have simple closed form expressions we can show that with $O(\nnz(\vx) + \nnz(\vy))$ preprocessing time, we can sample from the distribution required to apply $\sparsifysubgraph$ on this matrix in $O(1)$ time. Consequently, we can implement the approach in our desired running time. 

In the remainder of this section we provide pseudocode for these routines and prove their correctness. First, in Figure~\ref{fig:sparsifyProduct} we provide $\sparsifyproduct$ the pseudocode for sparsifying matrices of the form $\mn$ given above. In Lemma~\ref{lem:bipartite_conductance} we prove that the graph symmetrizations, $\ms_{\mn}$, of these matrices have spectral gap of at least $1$ and using this fact in Lemma~\ref{lem:sparsify_product} we prove that $\sparsifyproduct$ does provide sparse approximations to these matrices. In Figure~\ref{fig:sparsifySquare} we provide $\sparsifysquare$ the pseudocode for producing sparsifiers of $\mlapsquared$ by invoking $\sparsifyproduct$ on each $\mlap^{(i)}$ and then invoking $\sparsifyeulerian$ on the result. Finally, we conclude the section with Theorem~\ref{thm:sparsify_square} which proves the correctness and analyzes the running time of $\sparsifyproduct$.   

\begin{figure}[ht]
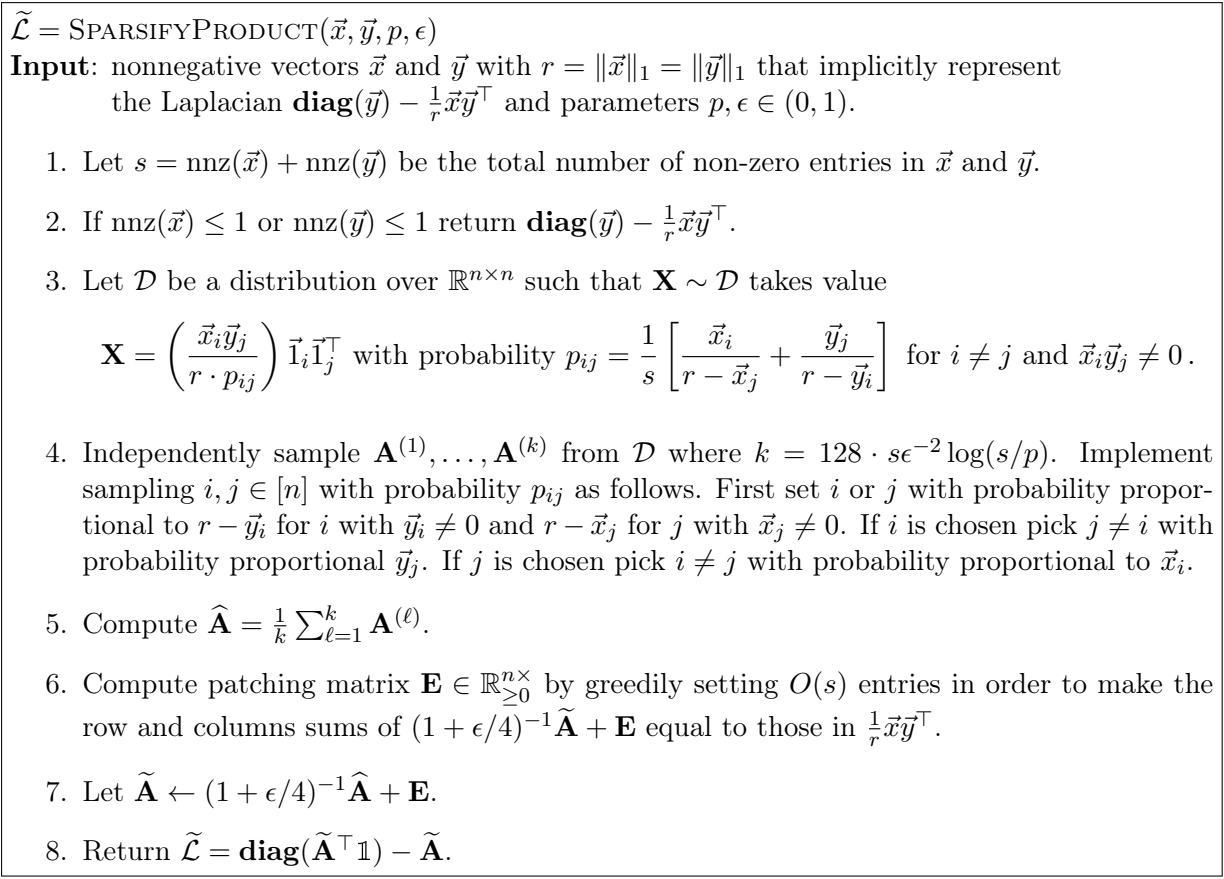

	\begin{algbox}
		$\widetilde{\mlap}=\sparsifyproduct(\vec{x}, \vec{y}, p, \epsilon)$\\
		\textbf{Input}: nonnegative vectors $\vec{x}$ and $\vec{y}$
		with $r = \norm{\vec{x}}_1 = \norm{\vec{y}}_1$
		that implicitly represent \\
		 \makebox[1.22cm]{}	 the Laplacian
		$\mdiag(\vec{y}) - \frac{1}{r} \vec{x} \vec{y}^{\top}$ and
		 parameters $p, \epsilon \in (0,1)$.
		\begin{enumerate}
			\item Let $s = \nnz(\vx) + \nnz(\vy)$ be the total number of non-zero entries in $\vx$ and $\vy$.	
			\item If $\nnz(\vx) \leq 1$ or $\nnz(\vy) \leq 1$ return $\mdiag(\vec{y}) - \frac{1}{r} \vec{x} \vec{y}^{\top}$.			
			\item Let $\dist$ be a distribution over $\R^{n \times n}$
			such that $\mx \sim \dist$ takes value
			\[
			\mx = \left(\frac{\vx_{i} \vy_{j}}{r \cdot p_{ij}} \right)\indic_{i}\indic_{j}^{\top}
			\text{ with probability }
			p_{ij} = \frac{1}{s} \left[\frac{\vx_i}{r - \vx_j} + \frac{\vy_j}{r - \vy_i} \right]
			\text{ for } i \neq j \text{ and }
			\vx_{i} \vy_{j} \neq 0\,.
			\]
			\item Independently sample $\ma^{(1)}, \dots, \ma^{(k)}$ from 
			$\dist$ where $k = 128 \cdot s \epsilon^{-2} \log(s/p)$.  Implement sampling $i,j\in[n]$ with probability $p_{ij}$ as follows.
			First set $i$ or $j$ with probability proportional to $r - \vy_i$ for $i$ with $\vy_i \neq 0$ and $r - \vx_j$ for $j$ with $\vx_{j} \neq 0$. If $i$ is chosen pick $j \neq i$ with probability proportional $\vy_j$. If $j$ is chosen pick $i \neq j$ with probability proportional to $\vx_i$. 
			\item Compute $\widehat{\ma} = \frac{1}{k} \sum_{\ell=1}^{k} \ma^{(\ell)}$. 
			\item Compute patching matrix $\mE \in \R_{\geq 0}^{n \times }$ by greedily setting $O(s)$ entries in order to make the row and columns sums of $(1+\epsilon/4)^{-1}\widetilde{\ma} + \mE$ equal to those in $\frac{1}{r} \vx \vy^\top$.
			\item Let $\widetilde{\ma} \leftarrow (1+\epsilon/4)^{-1} \widehat{\ma}+\mE$.
			\item Return $\widetilde{\mlap} = \mdiag(\widetilde{\ma}^\top \vones)  - \widetilde{\ma}$. 
		\end{enumerate}
	\end{algbox}
	\caption{Pseudocode for sparsifying a single product graph}
	\label{fig:sparsifyProduct}
\end{figure}

\begin{figure}[h!]
	\begin{algbox}
		$\widetilde{\magen}=\sparsifysquare(\magen, p, \epsilon)$\\
		\textbf{Input}: $\magen \in \R^{n \times n}_{\geq 0}$ with $\magen \vones = \magen^\top \vones$ that implicitly represents the Laplacian \\
		\makebox[1.22cm]{}	 $\mlapsquared = \md - \magen \md^{-1} \magen$ for $\md = \mdiag(\magen \vones)$ and parameters $p, \epsilon \in (0, 1)$.
		\begin{enumerate}
			\item For all $i = 1, \dots, n$,
			\begin{enumerate}
			\item Let $\magen_{i,:}, \magen_{:,i} \in \R^{n}$ denote row and column $i$ of $\magen$ respectively.
			\item  $\widetilde{\mlap}^{(i)} \leftarrow 
			\sparsifyproduct(\magen_{:,i}, \magen_{i,:}, p/{2n}, \epsilon/6)$ 
			\end{enumerate}
			\item Let $\widehat{\mathcal{M}} = \sum_{i=1}^n \widetilde{\mlap}^{(i)}$.
			\item $\widetilde{\mathcal{M}} \leftarrow \sparsifyeulerian(\widehat{\mathcal{M}}, p/2, \epsilon/3)$.
			\item Return $\md - \widetilde{\mathcal{M}}$
		\end{enumerate}
	\end{algbox}
	\caption{Pseudocode for producing an
	$\epsilon$-sparsifier of $\mathcal{M}=\md-\magen^{\top}\md^{-1}\magen^{\top}$,
	}
	\label{fig:sparsifySquare}
\end{figure}

\begin{lemma}
\label{lem:bipartite_conductance}
For $\vec{x}, \vec{y}\in\R_{\geq0}^{n}$ with
$r=\norm {\vec{x}}_{1}=\norm {\vec{y}}_{1}>0$
and $\my=\mdiag(\vec{y})$,
the matrix $\mlap=\my-\frac{1}{r}\vec{x}\vec{y}^{\top}$ is a directed Laplacian and the spectral gap of its symmetrization $\ms_{\mlap}$ is at least $1$.
\end{lemma}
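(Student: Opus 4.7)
First, I would verify the directed Laplacian property: the off-diagonal entries $\mlap_{ij} = -\tfrac{1}{r} x_i y_j$ for $i \neq j$ are non-positive, and the column sums vanish since $\vones^\top \mlap = \vec{y}^\top - \tfrac{1}{r}(\vones^\top \vec{x})\vec{y}^\top = \vzero^\top$.

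For the spectral gap, I would compute the symmetrization explicitly. With $\ma^\top = \tfrac{1}{r}\vec{x}\vec{y}^\top$, we have $\mU_\ma = \tfrac{1}{2r}(\vec{x}\vec{y}^\top + \vec{y}\vec{x}^\top)$, and hence
\[
\ms_\mlap = \tfrac{1}{2}\mdiag(\vec{x}+\vec{y}) - \tfrac{1}{2r}(\vec{x}\vec{y}^\top + \vec{y}\vec{x}^\top),
\qquad
d_i := (\md)_{ii} = \tfrac{x_i+y_i}{2} - \tfrac{x_iy_i}{r}.
\]
After restricting to the support $\{i : d_i > 0\}$ (on which the symmetrized graph is connected), the spectral-gap claim is equivalent to proving $\vec{w}^\top \ms_\mlap \vec{w} \geq \vec{w}^\top \md \vec{w}$ for every $\vec{w}$ with $\vec{d}^\top \vec{w} = 0$. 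A direct calculation gives
\[
\vec{w}^\top(\ms_\mlap - \md)\vec{w} = \tfrac{1}{r}\Bigl[\textstyle\sum_i x_iy_iw_i^2 - (\vec{x}^\top\vec{w})(\vec{y}^\top\vec{w})\Bigr],
\]
so introducing $A = \vec{x}^\top\vec{w}$, $B = \vec{y}^\top\vec{w}$, and $C = \sum_i x_iy_iw_i$, the linear constraint becomes $A+B = 2C/r$, and it suffices to show $\sum_i x_iy_iw_i^2 \geq AB$.

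I would prove this key inequality via a chain of three standard bounds. First, AM-GM (using the constraint) gives $AB \leq \tfrac{(A+B)^2}{4} = \tfrac{C^2}{r^2}$. Second, Cauchy-Schwarz applied as $C = \sum_i \sqrt{x_iy_i}\cdot\sqrt{x_iy_i}\,w_i$ yields $C^2 \leq \bigl(\sum_i x_iy_i\bigr)\bigl(\sum_i x_iy_iw_i^2\bigr)$. Third, the crude bound $\sum_i x_iy_i \leq (\max_i x_i)(\sum_i y_i) \leq r\cdot r = r^2$ chains with the above to give $\sum_i x_iy_iw_i^2 \geq C^2/\sum_i x_iy_i \geq C^2/r^2 \geq AB$. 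The degenerate subcase $\sum_i x_iy_i = 0$ forces $C=0$, so the constraint gives $A = -B$ and $AB = -A^2 \leq 0$, which is immediate; similarly the fully degenerate configuration $\vec{x} = \vec{y} = r\,\ve_i$ makes $\mlap = \ms_\mlap = \md = 0$ and the claim is vacuous.

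The main subtlety I anticipate is that the most natural ``bipartite lift'' idea -- lifting $\vec{w}$ to $(\vec{w},\vec{w})$ on a bipartite graph with sides weighted by $\vec{x}$ and $\vec{y}$ -- does \emph{not} deliver the sharp constant of $1$: the bipartite graph with its ``diagonal'' edges removed can have spectral gap only $1/2$ (e.g., for $\vec{x} = \vec{y} = \vones$ on three vertices, the lift is a weighted $6$-cycle whose normalized spectral gap equals $1/2$). What makes the constant come out to exactly $1$ is precisely the AM-GM/Cauchy-Schwarz interplay above, together with the (loose but crucial) slack in $\sum_i x_iy_i \leq r^2$.
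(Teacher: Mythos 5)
Your proof is correct, and it takes a genuinely different route from the paper's. The paper first relaxes the degrees via $\md \preceq \frac{1}{2}(\mx+\my)$, so that it suffices to bound $\mm = 2(\mx+\my)^{-1/2}\ms_{\mlap}(\mx+\my)^{-1/2} = \mI - \mn$ with $\mn$ of rank $2$; since $\mI-\mn$ has at most two eigenvalues different from $1$ and one of them is $0$, the remaining one equals $2-\tr(\mn)$, and the harmonic--arithmetic mean inequality gives $\tr(\mn)\leq 1$, so the gap is at least $1$ without ever choosing a test vector. You instead keep the exact degrees $d_i = \frac{x_i+y_i}{2}-\frac{x_iy_i}{r}$ and argue variationally: for $\vec{w}$ with $\vec{d}^\top\vec{w}=0$ you reduce the claim to $\sum_i x_iy_iw_i^2 \geq (\vec{x}^\top\vec{w})(\vec{y}^\top\vec{w})$ and close it with AM--GM, Cauchy--Schwarz, and $\sum_i x_iy_i\leq r^2$; all three steps check out, as does the degenerate case $\sum_i x_iy_i = 0$. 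One small remark: since $\md^{1/2}\vones$ is a known kernel vector of $\md^{-1/2}\ms_{\mlap}\md^{-1/2}$, proving your Rayleigh-quotient bound over its orthogonal complement already forces the second-smallest eigenvalue (with multiplicity) to be at least $1$, so your parenthetical connectivity-of-the-support claim is a consequence of the bound rather than a hypothesis you need to establish separately. What the paper's rank-2/trace trick buys is that it sidesteps the constrained optimization and the support case analysis entirely; what your argument buys is a completely elementary computation that uses the exact diagonal (the constraint $\vec{d}^\top\vec{w}=0$ is essential, since $\ms_{\mlap}-\md$ itself is not PSD) and makes the degenerate configurations explicit. Your closing observation that the naive bipartite lift only certifies a gap of $1/2$ is accurate but not needed for the proof.
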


\begin{proof}
Note that $\mlap_{ij}=-\frac{1}{r}x_i y_j\leq 0$ for $i\neq j$ and $\allones^{\top}\mlap=\allones^{\top}\my-\frac{1}{r}\allones^\top \vec{x}\vec{y}^{\top}=\vec{y}^{\top}-\vec{y}^{\top}=\allzeros^{\top}$. Consequently, $\mlap$ is a directed Laplacian. All that remains is to lower bound the spectral gap of $\ms_{\mlap}$. 

Letting $\mx = \mdiag(\vec{x})$, we see that the graph symmetrization $\ms_{\mlap}$ of $\mlap$ is the undirected Laplacian  
\[
\ms_{\mlap}
=
\frac{1}{2}
\left(\mx+\my - \frac{1}{r}\left(\vec{x}\vec{y}^{\top} +\vec{y}\vec{x}^{\top}\right)\right)\,{.}
\]
Furthermore, the diagonal entries of $\ms_{\mlap}$, denoted $\vd = \diag(\ms_{\mlap})$, are given by
\[
\vec{d}_i={[\ms_{\mlap}]}_{ii}
=\frac{1}{2}(\vec{x}_i+\vec{y}_i)-\frac{1}{r}\vec{x}_i \vec{y}_i
\leq \frac{1}{2}(\vec{x}_i+\vec{y}_i)\,{.}
\]
Now recall that the spectral gap of $\ms_{\mlap}$ is defined to be the smallest nonzero eigenvalue of the normalized Laplacian $\md^{-1/2}\ms_{\mlap}\md^{-1/2}$,
where $\md=\mdiag(\vec{d})$.
\newcommand{\boundMat}{\mm}
Since $d_i\leq \frac{1}{2}\left(\vec{x}_i+\vec{y}_i\right)$, we have 
\[
\md^{-1/2}\ms_{\mlap}\md^{-1/2}
\succeq \left(\frac{1}{2}\left(\mx+\my\right)\right)^{-1/2} \ms_{\mlap} \left(\frac{1}{2}\left(\mx+\my\right)\right)^{-1/2}
=2\left(\mx+\my\right)^{-1/2} \ms_{\mlap} \left(\mx+\my\right)^{-1/2}=: \boundMat.
\]
This implies that the eigenvalues of  $\md^{-1/2}\ms_{\mlap}\md^{-1/2}$ dominate those of $\boundMat$.  The multiplicity of zero as an eigenvalue is the same for the two matrices, so it thus suffices to show that the smallest nonzero eigenvalue of $\boundMat$ is at least 1. 
Plugging the definition of $\ms_{\mlap}$ in our expression for $\boundMat$ gives
\begin{align*}
\boundMat
&=2\left(\mx+\my\right)^{-1/2} \ms_{\mlap} \left(\mx+\my\right)^{-1/2}\\
&=\left(\mx+\my\right)^{-1/2}\left(\mx+\my - \frac{1}{r}\left(\vec{x}\vec{y}^{\top} +\vec{y}\vec{x}^{\top}\right)\right) \left(\mx+\my\right)^{-1/2}%
=\mI-\mn,
\end{align*}
where
$\mn=\frac{1}{r}\left(\mx+\my\right)^{-1/2}\left(\vec{x}\vec{y}^{\top} +\vec{y}\vec{x}^{\top}\right)\left(\mx+\my\right)^{-1/2}$.
The matrix $\mn$ has rank 2, so $\mI-\mn$ has at most 2 eigenvalues that are not equal to 1.  Furthermore, we know 
that $\mm$ has a nontrivial kernel, so at least one of these eigenvalues is 0. 
Let $\lambda$ be the one remaining eigenvalue.  Since $\tr(\mm)$ equals the sum of these eigenvalues, we have 
$\tr(\mm)=(n-2)\cdot 1 + 0 +\lambda = n-2+\lambda$, so 
\[\lambda=\tr(\mm)-n+2=\tr(\mI)-\tr(\mn)-n+2 = 2-\tr(\mn).\]
The inequality $\frac{2ab}{a+b}\leq \frac{a+b}{2}$ between the harmonic and arithmetic means then gives
\[
\tr(\mn)
= \sum_{i \in [n]} \mn_{ii}=\frac{1}{r}\sum_{i \in [n]} \frac{2\vec{x}_i \vec{y}_i}{\vec{x}_i+\vec{y}_i}
\leq \frac{1}{r}\sum_{i \in [n]} \frac{\vec{x}_i+\vec{y}_i}{2}=1,
\]
so $\lambda=2-\tr(\mn) \geq 1$.  The nonzero eigenvalues of $\mm$ are thus all at least 1, as desired.
\end{proof}

\begin{lem}[Product Sparsification]
	\label{lem:sparsify_product} 
	Let $\vx, \vy \in \R^{n}_{\geq 0}$ be non-negative vectors with $\norm{\vx}_1 = \norm{\vy}_1 = r$ and let $\epsilon, p \in (0,1)$. Furthermore, let $s$ denote the total number of non-zero entries in $\vx$ and $\vy$, i.e. $s = \nnz(\vx) + \nnz(\vy)$ and let $\mlap \defeq \mdiag(\vy) - \frac{1}{r} \vx \vy^\top$. The routine $\sparsifyproduct(\vx, \vy, p, \epsilon/2)$ in time 
	$O(s \epsilon^{-2} \log(s/p))$ computes with probability at least $1-p$ a Laplacian  $\widetilde{\mlap}$ with the same in and out degrees as $\mlap$, $\nnz(\widetilde{\mlap}) = O(s \epsilon^{-2}   \log(s/p)))$, and 
	$
	\norm{\ms_{\mlap}^{\dagger/2}(\mlap-\widetilde{\mlap})\ms_{\mlap}^{\dagger/2}}_{2}\leq\epsilon
	$.
\end{lem}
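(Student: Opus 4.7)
The plan is to reproduce the proof template of Lemma~\ref{lem:subgraph_sparse}, specialized to the Laplacian $\mlap = \mdiag(\vy) - \tfrac{1}{r}\vx\vy^\top$, but using the internal spectral-gap bound of Lemma~\ref{lem:bipartite_conductance} in place of an externally-supplied expander cover, and exploiting the rank-one product structure to meet the running-time target.

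The first step is to recognize that $\sparsifyproduct$ is secretly running the entrywise-sampling scheme of Theorem~\ref{thm:concentration_entry} on the off-diagonal part of $\tfrac{1}{r}\vx\vy^\top$, viewed as a non-negative matrix $\ma$ with $\ma_{ij} = \tfrac{\vx_i\vy_j}{r}$ for $i\neq j$ and $\ma_{ii} = 0$. A short computation gives $\vr_i = (\ma\vones)_i = \tfrac{\vx_i(r-\vy_i)}{r}$ and $\vc_j = (\ma^\top\vones)_j = \tfrac{\vy_j(r-\vx_j)}{r}$, and substituting these into the formula $p_{ij} = \tfrac{\ma_{ij}}{s}\bigl[\tfrac{1}{\vr_i}+\tfrac{1}{\vc_j}\bigr]$ from Theorem~\ref{thm:concentration_entry} exactly recovers the sampling distribution prescribed in $\sparsifyproduct$, with $s = \nnz(\vx)+\nnz(\vy)$. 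Theorem~\ref{thm:concentration_entry} with $k = \Theta(s\epsilon^{-2}\log(s/p))$ samples then produces $\widehat{\ma}$ of support $O(s\epsilon^{-2}\log(s/p))$ that obeys $\norm{\mr^{-1/2}(\widehat{\ma}-\ma)\mc^{-1/2}}_2 \leq O(\epsilon)$ together with $\ell_\infty$-close row and column sums, all with probability at least $1-p$. The greedy patching step then proceeds verbatim from the proof of Lemma~\ref{lem:sparsifying_adjacency_matrix}, yielding $\widetilde{\ma}$ whose row and column sums match $\ma$ exactly while preserving the spectral approximation quality up to a constant. Setting $\widetilde{\mlap} = \mdiag(\widetilde{\ma}^\top\vones) - \widetilde{\ma}$, the preserved column sums make the diagonal correction in $\widetilde{\mlap}-\mlap$ vanish, so $\widetilde{\mlap}-\mlap = \ma-\widetilde{\ma}$, giving matching in- and out-degrees and $\norm{\md_{\mathrm{in}}^{-1/2}(\mlap-\widetilde{\mlap})\md_{\mathrm{out}}^{-1/2}}_2 \leq O(\epsilon)$.

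To upgrade this into the $\ms_\mlap^{\dagger/2}$-norm guarantee in the statement, I will invoke Lemma~\ref{lem:bipartite_conductance}, which guarantees that $\ms_\mlap$ has spectral gap at least $1$, and then walk through the chain of inequalities in the proof of Lemma~\ref{lem:subgraph_sparse} essentially verbatim with $\mU = \ms_\mlap$ and $\alpha = 1$: apply Lemma~\ref{lem:asym_strong_equiv} to rewrite the norm as a supremum of a bilinear form, reduce to vectors orthogonal to the degree vector, use $\ms_\mlap \succeq \mdiag(\ms_\mlap) - \tfrac{1}{\norm{\vd}_1}\vd\vd^\top$ with $\vd = \diag(\ms_\mlap)$, apply the entrywise comparison $\md_{\mathrm{in}},\md_{\mathrm{out}} \preceq 2\,\mdiag(\ms_\mlap)$ (which follows immediately from the formulas for $\vd$ computed inside the proof of Lemma~\ref{lem:bipartite_conductance}), and finish with another application of Lemma~\ref{lem:asym_strong_equiv}. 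Invoking the routine with parameter $\epsilon/2$ absorbs the constant factors and yields the claimed $\epsilon$-bound.

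The main obstacle is meeting the running-time bound of $O(s\epsilon^{-2}\log(s/p))$: a naive implementation of Theorem~\ref{thm:concentration_entry} would materialize the entire distribution over up to $\Theta(n^2)$ off-diagonal entries, which may vastly exceed the allowed budget. The rescue is the product structure of $\ma$: the distribution $p_{ij}$ decomposes as a mixture of two components, each factoring (conditionally) as a marginal over one of $i,j$ restricted to $\supp(\vx)$ or $\supp(\vy)$ times a conditional proportional to $\vx$ or $\vy$ on the other coordinate. Both components can then be sampled in $O(1)$ amortized time after $O(s)$ preprocessing of cumulative-sum or alias-table structures, and the rare $i=j$ collisions are handled by rejection; the degenerate cases $\nnz(\vx)\leq 1$ or $\nnz(\vy)\leq 1$ are handled by the explicit early return of $\sparsifyproduct$, where $\mlap$ already has $O(n)$ non-zeros and can be returned as is. Combining the preprocessing, $k = O(s\epsilon^{-2}\log(s/p))$ samples, and $O(s)$ patching gives the claimed total runtime.
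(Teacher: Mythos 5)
Your proposal is correct and follows essentially the same route as the paper: the paper likewise observes that the sampling probabilities in $\sparsifyproduct$ coincide with those of $\sparsifysubgraph$ applied to $\mlap$, invokes Lemma~\ref{lem:bipartite_conductance} for the spectral gap of $\ms_{\mlap}$, and handles the running time via the product structure with $O(s)$ preprocessing and $O(1)$-time samples. The only difference is presentational: the paper cites Lemma~\ref{lem:subgraph_sparse} as a black box, whereas you unroll its internals (Theorem~\ref{thm:concentration_entry}, the patching of Lemma~\ref{lem:sparsifying_adjacency_matrix}, and the norm-upgrade argument) inline.
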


\begin{proof}
	Note that if $\nnz(\vx) \leq 1$ or $\nnz(\vy) \leq 1$. Then clearly $\frac{1}{r} \vx \vy^\top$ has at most $s$ non-zero entries and $\mlap$ has $O(s)$ non-zero entries. Furthermore, we can clearly compute $\mlap$ in $O(s)$ time and therefore the result follows. In the remainder of the proof we therefore assume that $\nnz(\vx) \geq 2$ and $\nnz(\vy) \geq 2$. 
	
	First, we show that $\widetilde{\mlap}$ is precisely the output of an execution of $\sparsifysubgraph(\mlap, p, \epsilon/2)$. Let $\mx = \mdiag(\vx)$, $\my = \mdiag(\vy)$, $\ma^\top \defeq \frac{1}{r} [\vx \vy^\top - \mx \my]$, and $\md = \my - \frac{1}{r} \mx \my$. Clearly, $\mlap = \md - \ma^\top$. Furthermore, we see that $\ma^\top$ is non-negative with a zero-diagonal $\diag(\ma^\top) = \vzero$ and therefore $\md$ is diagonal and this is the standard decomposition of $\mlap$. 
	
	Now as in $\sparsifysubgraph(\mlap, p, \epsilon)$ let $\vr = \ma^\top \vones = \vx - \frac{1}{r} \mx \my \vones$ and $\vc = \ma \vones = \vy - \frac{1}{r} \mx \my \vones$. Furthermore, since $\nnz(\vx) \geq 2$ and $\nnz(\vy) \geq 2$ we see that a row or column of $\frac{1}{r} \vx \vy^\top$ is non-zero if and only if the corresponding row and column in $\ma^\top$ is non-zero and thus $s$ is the same in $\sparsifysubgraph(\mlap, p, \epsilon)$ and $\sparsifyproduct(\vx, \vy, p, \epsilon)$. Now, for all $i \neq j$  with $\ma_{ij}^\top = \vx_i \vy_j \neq 0$ we have
	\[
	\frac{\ma_{ij}^\top}{s} \left[\frac{1}{\vr_i} + \frac{1}{\vc_j}\right]
	= \frac{\vx_i \vy_j}{r \cdot s} \left[\frac{1}{x_i - \frac{1}{r} \vx_i \vy_i} + \frac{1}{{y_j - \frac{1}{r} \vx_j \vy_j}}\right]
	= \frac{1}{s} \left[\frac{\vx_i}{r - \vx_j} + \frac{\vy_j}{r - \vy_i} \right]\,.
	\]
	Consequently, we see that $\widetilde{\mlap}$ is precisely the output of an execution of $\sparsifysubgraph(\mlap, p, \epsilon/2)$. Furthermore, since $\ms_\mlap$ has spectral gap at least $1$ by Lemma~\ref{lem:bipartite_conductance} we have by Lemma~\ref{lem:subgraph_sparse} which analyzed $\sparsifysubgraph$ that $\widetilde{\mlap}$ has the desired properties. 
	
	All the remains is to bound the running time of $\sparsifyproduct(\vx, \vy, p, \epsilon/2)$. Note that computing $s$ takes $O(s)$ time and computing all the $r - \vy_i$ and $r - \vx_j$ can be done in $O(s)$ time. Consequently, with $O(s)$ preprocessing we can build a table so that each sample from $\dist$ takes $O(1)$ time and computing $\widehat{\ma}$ takes $O(s + k) = O(s \epsilon^{-2} \log(s/p)$ time. Furthermore, since computing the patching $\mE$ also takes $O(s + k)$ time we have that the total running time is as desired.
\end{proof}

\begin{thm}[Sparsifying Squares] \label{thm:sparsify_square}
Let $\magen \in \R^{n \times n}_{\geq 0}$ have the same row and column sums, i.e. $\magen \vones = \magen^\top \vones$, and let $\epsilon, p \in (0,1)$. Let $\md = \mdiag(\magen \vones)$, $\mlap = \md - \magen$, and $\mlapsquared = \md - \magen \md^{-1} \magen$. Both $\mlap$ and $\mlapsquared$ are Eulerian Laplacians and  in $\otilde(\nnz(\mlap)\epsilon^{-2}\log(n/p))$ time the routine $\sparsifysquare(\mlap,p,\epsilon)$ computes $\widetilde{\magen}$ such that with probability at least $1 - p$ the matrix $\mapp{\mathcal{M}} = \md - \widetilde{\magen}$ is an $\epsilon$-sparsifier of $\mathcal{M}$.
\end{thm}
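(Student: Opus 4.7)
The plan is to combine the decomposition $\mathcal{M} = \sum_{i=1}^n \mlap^{(i)}$ with $\mlap^{(i)} = \mdiag(\magen_{i,:}) - \magen_{:,i}\magen_{i,:}^\top/\md_{i,i}$ (exactly the matrices processed by $\sparsifyproduct$ inside $\sparsifysquare$) with Lemma~\ref{lem:sparsify_product}, the equivalent form of approximation in Lemma~\ref{lem:asym_strong_equiv}, Theorem~\ref{thm:order_n_sparsifier}, and transitivity (Lemma~\ref{lem:strong_transitivity}). First I would verify $\mlap$ and $\mathcal{M}$ are Eulerian: the hypothesis $\magen\vones = \magen^\top\vones = \md\vones$ gives $\mlap\vones = 0$ and $\vones^\top\mlap = 0$, and then $\mathcal{M}\vones = \md\vones - \magen\md^{-1}\magen\vones = \md\vones - \magen\vones = 0$ (similarly for $\vones^\top \mathcal{M}$). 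Second, I would check the decomposition: $\sum_i \mdiag(\magen_{i,:}) = \mdiag(\magen^\top\vones) = \md$, and $(\magen\md^{-1}\magen)_{j,k} = \sum_i \magen_{j,i}\magen_{i,k}/\md_{i,i}$ is precisely the $(j,k)$ entry of $\sum_i \magen_{:,i}\magen_{i,:}^\top/\md_{i,i}$, so $\sum_i \mlap^{(i)} = \mathcal{M}$.

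Next I invoke Lemma~\ref{lem:sparsify_product} on each $\mlap^{(i)}$. This is legal because $\norm{\magen_{:,i}}_1 = \norm{\magen_{i,:}}_1 = \md_{i,i}$ by the row-sum/column-sum equality. With $s_i \defeq \nnz(\magen_{:,i}) + \nnz(\magen_{i,:})$, the call $\sparsifyproduct(\magen_{:,i}, \magen_{i,:}, p/(2n), \epsilon/6)$ returns in $\otilde(s_i\epsilon^{-2}\log(n/p))$ time a Laplacian $\widetilde{\mlap}^{(i)}$ that, with probability $\geq 1-p/(2n)$, has the same in/out degrees as $\mlap^{(i)}$, has $\nnz(\widetilde{\mlap}^{(i)}) = \otilde(s_i\epsilon^{-2})$, and satisfies $\|\ms_{\mlap^{(i)}}^{\dagger/2}(\mlap^{(i)}-\widetilde{\mlap}^{(i)})\ms_{\mlap^{(i)}}^{\dagger/2}\|_2 \leq \epsilon/3$. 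Union-bounding over $i\in[n]$, all of these succeed with probability $\geq 1-p/2$.

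To combine the per-component errors I use the bilinear characterization of Lemma~\ref{lem:asym_strong_equiv}, which gives, for all $\vec{x},\vec{y}$,
\[
\vec{x}^\top(\widetilde{\mlap}^{(i)}-\mlap^{(i)})\vec{y} \leq \tfrac{\epsilon/6}{1}\bigl(\vec{x}^\top\ms_{\mlap^{(i)}}\vec{x} + \vec{y}^\top\ms_{\mlap^{(i)}}\vec{y}\bigr).
\]
Summing over $i$ and using that the graph symmetrization is linear, so $\sum_i \ms_{\mlap^{(i)}} = \ms_{\mathcal{M}}$, the same lemma run in reverse shows that $\widehat{\mathcal{M}} = \sum_i \widetilde{\mlap}^{(i)}$ is an $(\epsilon/3)$-approximation of $\mathcal{M}$. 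Moreover $\widehat{\mathcal{M}}$ is Eulerian, since each $\widetilde{\mlap}^{(i)}$ preserves the in/out degrees of $\mlap^{(i)}$ and these degrees sum coordinate-wise to those of $\mathcal{M}$. Thus $\sparsifyeulerian(\widehat{\mathcal{M}}, p/2, \epsilon/3)$ is a legal call; by Theorem~\ref{thm:order_n_sparsifier} it returns in $\otilde(\nnz(\widehat{\mathcal{M}})+n\epsilon^{-2}\log(1/p))$ time an $(\epsilon/3)$-sparsifier $\widetilde{\mathcal{M}}$ of $\widehat{\mathcal{M}}$ with probability $\geq 1-p/2$. Composing the two approximations with Lemma~\ref{lem:strong_transitivity} gives an $(\epsilon/3)(2+\epsilon/3) \leq \epsilon$-approximation of $\mathcal{M}$ with $\otilde(n\epsilon^{-2})$ nonzeros, completing the sparsifier guarantee, and a final union bound handles the failure probability.

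For the running time, the key identity is $\sum_i s_i = 2\nnz(\magen) = O(\nnz(\mlap))$, so the $n$ calls to $\sparsifyproduct$ together cost $\otilde(\nnz(\mlap)\epsilon^{-2}\log(n/p))$ and produce $\widehat{\mathcal{M}}$ with the same sparsity bound, after which the $\sparsifyeulerian$ call also fits within this budget. The main thing to watch is the error synthesis step: the proof would fail if we tried to use a quadratic form $\vec{x}^\top(\cdot)\vec{x}$ definition of approximation, because individual components $\ms_{\mlap^{(i)}}$ can be tiny along directions where $\ms_{\mathcal{M}}$ is large; the bilinear/two-vector formulation of Lemma~\ref{lem:asym_strong_equiv} is precisely what makes the per-component bounds aggregate cleanly into a bound against $\ms_{\mathcal{M}}$.
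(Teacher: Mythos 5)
Your proposal is correct and follows essentially the same route as the paper's own proof: the per-vertex decomposition $\mathcal{M}=\sum_i \mlap^{(i)}$, Lemma~\ref{lem:sparsify_product} on each piece, aggregation of the errors via the bilinear (two-vector) form of Lemma~\ref{lem:asym_strong_equiv} using $\sum_i \ms_{\mlap^{(i)}}=\mU_{\mathcal{M}}$, then $\sparsifyeulerian$ plus transitivity (Lemma~\ref{lem:strong_transitivity}), with the same $\sum_i s_i = 2\nnz(\magen)$ running-time accounting. Your closing remark about why the two-vector formulation (rather than a quadratic form) is what makes the per-component bounds sum against $\mU_{\mathcal{M}}$ matches the paper's reasoning exactly.
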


\begin{proof}
Clearly, both $\magen$ and $\magen \md^{-1} \magen$ are entrywise non-negative and therefore the off diagonals of $\mlap$ and $\mlapsquared$ are non-positive. Furthermore, since clearly $\magen \vones = \magen^\top \vones = \magen \md^{-1} \magen \vones = [\magen \md^{-1} \magen]^\top \vones = \md \vones$ we have that $\mlap \vones = \mlap^\top \vones = \mlapsquared \vones = \mlapsquared^\top \vones = \vzero$ and both $\mlap$ and $\mlapsquared$ are Eulerian Laplacians. 

Next, for all $i \in [n]$ let $s_i = \nnz(\magen_{:,i}) + \nnz(\magen_{i,:})$ and
\[
\mlap^{(i)} = \mdiag(\magen_{i,:}) - \frac{1}{\md_{i,i}} \magen_{:,i} \magen_{i,:}^\top\,.
\]
Note that since $\magen \vones = \magen^\top \vones$ and $\magen$ is entrywise non-negative we have that $\md_{i,i} = \norm{\magen_{i,:}}_1 = \norm{\magen_{:,i}}_1$. Consequently, by Lemma~\ref{lem:sparsify_product} and union bound with probability at least $1 - \frac{1}{2p}$ it is the case that each $\widetilde{\mlap}^{(i)}$ is a directed Laplacian with the same in and out degrees as $\mlap^{(i)}$, $\nnz(\widetilde{\mlap}^{(i)}) = O(s_i \epsilon^{-2} \log(s n /p))$, and
\begin{equation}
\label{eq:square_subgraph_apx}
\left\Vert{\ms_{\mlap^{(i)}}^{\dagger/2}(\mlap^{(i)} - \widetilde{\mlap}^{(i)})
	\ms_{\mlap^{(i)}}^{\dagger/2}}\right\Vert_{2}\leq\epsilon/3\,{.}
\end{equation}
Furthermore, since clearly $\sum_{i} s_i = 2 \nnz(\magen)$ we have that 
\[
\nnz(\widehat{\mlapsquared}) \leq \sum_{i \in [n]} \nnz(\widetilde{\mlap}^{(i)}) 
\leq \sum_{i \in [n]} s_i \epsilon^{-2} \log(n s_i / p)
\leq O(\nnz(\magen) \epsilon^{-2} \log(n/p))
\]
and therefore the total running time for computing $\widehat{\mlapsquared}$ is $O(\nnz(\mlap) \epsilon^{-2} \log(n/p)$. Using Theorem~\ref{thm:order_n_sparsifier} to reason about the effect of $\sparsifyeulerian$ then completes our running time analysis and union bounding yields that $\widetilde{\mlapsquared}$ has the desired degrees and sparsity. Consequently, $\widetilde{\magen}$ also has the desired sparsity and since the degrees of the graph associated with $\mlapsquared$ are at most the degrees of the graph associated with $\mlap$ we see that $\widetilde{\magen} \in \R^{n \times n}_{\geq 0}$.

All that remains is to verify that $\widetilde{\mlapsquared}$ is an $\epsilon$-approximation of $\mathcal{M}$. By Lemma~\ref{lem:asym_strong_equiv}, \eqref{eq:square_subgraph_apx} implies that for all $\vec{x}, \vec{y}\neq0$ it is the case that
\[
\vec{x}^{\top}(\widehat{\mathcal{M}} - \mathcal{M})\vec{y}
=\sum_{i=1}^n\vec{x}^{\top}(\widetilde{\mlap}^{(i)}-\mlap^{(i)})\vec{y}
\leq\sum_{i=1}^n\frac{\epsilon}{3}
\left[\vec{x}^{\top}\ms_{\mlap^{(i)}}\vec{x} + 
\vec{y}^{\top}\ms_{\mlap^{(i)}}\vec{y}\right]
=\frac{\epsilon}{3}
	\left[\vec{x}^{\top}\mU_{\mathcal{M}}\vec{x}+\vec{y}^{\top}\mU_{\mathcal{M}}\vec{y}\right]
\]
where in the last identity we used that $\sum_{i=1}^n \ms_{\mlap^{(i)}} = \mU_{\mathcal{M}}$ from the fact that $\sum_{i \in [n]} \mlap^{(i)} = \mathcal{M}$. Applying Lemma~\ref{lem:asym_strong_equiv}
again on the above bound, we obtain that $\widehat{\mlapsquared}$ is an $\epsilon/3$ approximation of $\mlapsquared$. Since, Theorem~\ref{thm:order_n_sparsifier} implies that $\widetilde{\mlapsquared}$ is an $\epsilon/3$-approximation of $\widehat{\mlapsquared}$, invoking the transitivity bound, Lemma~\ref{lem:strong_transitivity}, yields that $\widetilde{\mlapsquared}$ is an $(\epsilon/3)(2 + \epsilon/3) \leq \epsilon$-approximation of $\mlapsquared$ as desired. 
\end{proof}

\section{Solving Directed Laplacian Systems}
\label{sec:solver}
In this section, we show how to solve directed Laplacian systems in almost-linear time.  Our main result is as follows:
\begin{theorem}\label{thm:laplacian_general}
  Let $\mm$ be an arbitrary $n\times n$ column-diagonally-dominant or row-diagonally-dominant matrix with diagonal $\md$ and
  $m$ non-zero entries.
Let $\kappa(\md)$
  be the ratio between the maximum and minimum diagonal
  entries of $\md$.
  Then for any $\vec{b}\in\im{\mm}$ and $0<\epsilon\leq1$,
  one can compute, with high probability and in time 
  \[
  \Otil\left(\left( m + n 2^{O\left(\sqrt{\log{n}\log\log{n}}\right)} \right)
  	\log^{3}\left(\frac{\kappa(\md)\cdot \kappa(\mm)}{\epsilon}\right)\right)
  \]
  a vector $\vec{x}'$ satisfying $\|\mm\vec{x}'-\vec{b}\|_{2}\leq\epsilon\norm{\vec{b}}_{2}$.
\end{theorem}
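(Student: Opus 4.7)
The proof composes three ingredients, the first from prior work and the other two developed in this paper. First, I would invoke Theorem~\ref{thm:dd-solver} (Theorem~42 of \cite{cohen2016faster}) on $\mm$, which reduces solving a column- or row-diagonally-dominant system to $\Otil(1)$ solves of linear systems in Eulerian Laplacians, while introducing an outer $\log^{2}(n \kappa(\md) \kappa(\mm)/\epsilon)$ factor. Crucially, that theorem certifies that every intermediate Eulerian Laplacian $\mr$ it produces satisfies $\kappa(\mr+\mr^\top), \kappa(\mdiag(\mr)) \leq (n \kappa(\md) \kappa(\mm)/\epsilon)^{O(1)}$, so the Eulerian subproblems we encounter are polynomially well-conditioned in the original parameters.

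Second, each Eulerian subproblem will be handled by the almost-linear-time Eulerian Laplacian solver built in the remainder of this section, following the template of Section~\ref{sec:solverOverview}: precompute a square-sparsification chain of length $d = O(\log \kappa)$ via $\sparsifysquare$ with per-step error $\esparse = 2^{-\Theta(\Delta)}$, then apply the identity~\eqref{eqn:key} recursively in blocks of size $\Delta$, wrapping each block in a short run of Richardson iteration to prevent the exponential-in-block-size error blowup from propagating. Choosing $\Delta = \sqrt{d \log d}$ balances the $O(\Delta)^{d/\Delta}$ recursion-depth blowup against the additive $n \cdot 2^{O(\Delta)}$ per-level sparsifier cost, yielding a running time of $\Otil(m) + n \cdot 2^{O(\sqrt{\log \kappa \log \log \kappa})}$ per Eulerian solve, where $\kappa$ is the condition number of the normalized symmetrization of the input.

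Third, I would compose this with the condition-number reduction of Appendix~\ref{sec:reduction}, which converts any Eulerian solve into $O(\log(n \kappa))$ Eulerian solves on instances whose condition numbers are polynomial in $n$. Since the Eulerian subproblems handed to us by Theorem~\ref{thm:dd-solver} already have $\kappa \leq \poly(n\kappa(\md) \kappa(\mm)/\epsilon)$, after this reduction every Eulerian solve we actually perform is on a polynomially-conditioned instance, costing $\Otil(m + n \cdot 2^{O(\sqrt{\log n \log \log n})})$. Multiplying the per-solve cost by the $\log^{2}$ factor from the DD reduction and the extra $\log$ factor from the condition-number reduction produces the $\log^{3}(\kappa(\md)\kappa(\mm)/\epsilon)$ factor in the claimed bound, and the accuracy requirement $\|\mm\vec{x}' - \vec{b}\|_2 \leq \epsilon\|\vec{b}\|_2$ propagates through all three reductions using their standard accuracy-amplification guarantees.

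The main obstacle is not in this composition, which is essentially bookkeeping once the pieces are in place, but rather in the Eulerian solver itself: relating our directed-graph $\epsilon$-approximations (Definition~\ref{def:epsaprox}) to an operator-sense notion of an approximate pseudoinverse so that preconditioning is meaningful, and carefully analyzing the recursive scheme to ensure that the exponential error dependence on block size $\Delta$ is actually tamed by the interleaved Richardson iterations rather than appearing in the final bound. The asymmetry of the matrices involved prevents us from invoking the stable symmetric-product arguments used in \cite{PengS14}, so this analysis is the technical heart of the proof.
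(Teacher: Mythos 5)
Your proposal is correct and follows essentially the same route as the paper: combine the reduction of Theorem~\ref{thm:dd-solver} from diagonally dominant systems to Eulerian Laplacian solves, the almost-linear-time Eulerian solver of Theorem~\ref{thm:eulerianGuarantee}, and the condition-number reduction of Appendix~\ref{sec:reduction} to restrict attention to polynomially conditioned Eulerian instances, with the $\log^{3}$ factor arising exactly as you describe.
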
 %
Note that column-diagonally-dominant matrices include
Laplacians of directed graphs.
This bound follows from combining the reduction to solving
linear systems in Eulerian Laplacians stated in Theorem 42
of~\cite{cohen2016faster} with our main solver result.
The condition number of the undirected Laplacians that arise
can be bounded by $O(\kappa(\md) \cdot \kappa(\mm))$ by the preceding
Theorem 41 in~\cite{cohen2016faster}.
This condition number becomes a logarithmic overhead
by the condition number reductions in Appendix~\ref{sec:reduction},
which allows us to focus on solving $\poly(n)$ conditioned
Eulerian systems.

The result that we will focus on in this section is an algorithm
that given an Eulerian Laplacian
$\mlap = \md - \ma^\top \in \R^{n \times n}$ with $m$ non-zero entries
computes an $\epsilon$-approximate solution to $\mlap \vx = \vb$ in
time $\Otil((m + n \exp(O(\sqrt{\log \kappa  \log\log \kappa }))) \log(1/\epsilon))$ where $\kappa = \kappa(\mU_{\md^{-1/2} \mlap \md^{-1/2}})$. Note that $\exp(O(\sqrt{\log(\kappa)\log\log\kappa}))$ is a term that is $\kappa^{o(1)}$, i.e. it grows less than $O(\kappa^\epsilon)$ for any constant $\epsilon > 0$, whereas iterative methods for solving such systems typically have a dependence of $\kappa^{1/2}$ or higher in their running time.
An overview of the main components of this algorithm
is in Section~\ref{sec:solverOverview}.

\subsection{Preconditioned Richardson Iteration and Approximate Pseudoinverse}
 \label{sec:richardson}

The key iterative method that we use to build our solver is the preconditioned Richardson iteration. It can be thought of as a general-purpose tool that boosts the quality of a linear system solver by iteratively applying the solver on the residual.\footnote{One should note that being able to boost a solver using preconditioned Richardson relies on the fact that the solver is a linear operator, which is precisely the case in our algorithm.}
In this section we describe the preconditioned Richardson iteration, and based on it, we derive a measure of quality of a linear system solver in terms of how well it functions as an approximate pseudoinverse for the matrix involved in the system we want to solve. In Section~\ref{sec:approxInverse} we analyze our solver chain in terms of this notion.

The Richardson iteration refers to perhaps one of the simplest methods for solving a linear system $\mm \vx = \vb$: start with $\vx_0 = \vzero$, then repeatedly move in the direction of the residual, i.e. $\vx_{k + 1} := \vx_k + \eta (\vb - \mm \vx)$, for some step size $\eta$. The preconditioned Richardson iteration refers to applying the same method, with the aid of a matrix $\mz$ whose purpose is to improve the quality of the iterations by producing better approximations to the matrix inverse applied to the residual: $\vx_{k + 1} = \vx_k + \eta \mz (\vb - \mm \vx)$. Note that whenever $\mz = \mm^{\dag}$, preconditioned Richardson adds in every step an $\eta$ fraction of the true solution to our current iterate; therefore, in that case, we obtain the exact solution in one iteration by setting $\eta = 1$. Intuitively, the quality of the preconditioner $\mz$ dictates the size of the steps we are allowed to take, and therefore how long it takes to get close to optimum. This motivates the notion of approximation we introduce in this section.

The preconditioned Richardson iteration is very well studied and fundamental to numerical methods (see e.g. Section 13.2.1 of~\cite{Saad03:book}). However,
 we are not aware of an operator-based
analysis involving asymmetric matrices in the
different matrix norms required in our algorithm. Therefore, we provide the algorithm (see  Figure~\ref{fig:richardson}) and its short analysis in Lemma~\ref{lem:precond_richardson} below.  

\begin{figure}[ht]
	
	\begin{algbox}
		
		$\vec{x}=\textsc{PreconRichardson}(\MM,  \ZZ, \vec{b}, \eta, N)$
		
		\textbf{Input}: $n \times n$ matrix $\MM$,\par
 \makebox[1.22cm]{}		preconditioning linear operator $\ZZ$ (in the unpreconditioned case, $\ZZ = \mI$), \par
 \makebox[1.22cm]{}		right hand side vector $\vec{b} \in \im{\MM}$, step size $\eta$, iteration count $N$.
			\begin{enumerate}
			\item Initialize $\vec{x}_0 \leftarrow 0$.
			\item For $k = 0, \ldots, N - 1$
				\begin{enumerate}
						\item $\vec{x}_{k + 1} \leftarrow \vec{x}_{k} + \eta \ZZ \left(\vec{b}- \MM \vec{x}_{k} \right)$. \label{ln:richardsonStep}
				\end{enumerate}
			\item Return $\vec{x}_N$.
		\end{enumerate}
	\end{algbox}
	
	\caption{Pseudocode for the (preconditioned) Richardson Iteration}
	
	\label{fig:richardson}
	
\end{figure}

\begin{lemma}[Preconditioned Richardson]
\label{lem:precond_richardson}
Let $\vec{b} \in \R^{n}$ and $\MM, \ZZ,  \mU \in \R^{n \times n}$ such that $\mU$ is symmetric positive semidefinite, $\ker(\mU) \subseteq \ker(\MM)=\ker(\MM^\intercal) = \ker(\ZZ)=\ker(\ZZ^\intercal)$, and $b \in \im{\mm}$. Then $N \geq 0$ iterations of preconditioned Richardson with step size $\eta > 0$, results in a vector $\vx_N =\textsc{PreconRichardson}(\MM, \ZZ, \vec{b}, \eta, N)$ such that 
$$\normFull{\vx_N - \mm^{\dag} \vb}_{\mU} \leq \normFull{\mI_{\imFull{\MM}} -\eta \ZZ \MM}_{\mU \rightarrow \mU}^{N} 
\normFull{\mm^\dag \vb}_{\mU}\,{.}$$ 
Furthermore, preconditioned Richardson implements a linear operator, in the sense that $\vec{x}_N = \ZZ_{N} \vec{b}$, for some matrix $\ZZ_N$ only depending on $\ZZ$, $\MM$, $\eta$ and $N$.
\end{lemma}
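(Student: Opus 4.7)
The plan is to reduce the iteration to the error recursion and apply submultiplicativity of the $\mU \to \mU$ norm. Let $\vx^{*} \defeq \mm^{\dag} \vb$. Since $\vb \in \im{\mm}$, we have $\mm \vx^{*} = \vb$, and by the defining property of the Moore--Penrose pseudoinverse, $\vx^{*} \in \im{\mm^{\top}}$. The hypothesis $\ker(\mm) = \ker(\mm^{\top})$ implies $\im{\mm} = \im{\mm^{\top}}$; denote this common subspace by $V$. Similarly $\ker(\ZZ) = \ker(\ZZ^{\top})$ and the joint kernel condition give $\im{\ZZ} = \im{\mm} = V$.

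The first step is to show by induction that every iterate $\vx_{k}$ lies in $V$. The base case $\vx_{0} = \vzero \in V$ is immediate, and the inductive step follows because the update $\vx_{k+1} = \vx_{k} + \eta \ZZ(\vb - \mm \vx_{k})$ adds to $\vx_{k}$ a vector in $\im{\ZZ} = V$. Consequently $\vx_{k} - \vx^{*} \in V$ for all $k$.

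The second step is the key algebraic identity: substituting $\vb = \mm \vx^{*}$ into the update gives
\[
\vx_{k+1} - \vx^{*} = (\vx_{k} - \vx^{*}) - \eta \ZZ \mm (\vx_{k} - \vx^{*}) = (\mI_{\im{\mm}} - \eta \ZZ \mm)(\vx_{k} - \vx^{*}),
\]
where we are free to insert $\mI_{\im{\mm}}$ because $\vx_{k} - \vx^{*} \in V = \im{\mm}$. Taking $\mU$-norms and using the operator-norm inequality $\|\ma \vec{v}\|_{\mU} \leq \|\ma\|_{\mU \to \mU} \|\vec{v}\|_{\mU}$ (which is well-defined on $V$ since $\ker(\mU) \subseteq \ker(\mm) = V^{\perp}$ forces $\mU$ to be positive definite on $V$), induction on $k$ yields
\[
\|\vx_{N} - \vx^{*}\|_{\mU} \leq \|\mI_{\im{\mm}} - \eta \ZZ \mm\|_{\mU \to \mU}^{N} \, \|\vx_{0} - \vx^{*}\|_{\mU} = \|\mI_{\im{\mm}} - \eta \ZZ \mm\|_{\mU \to \mU}^{N} \, \|\mm^{\dag} \vb\|_{\mU},
\]
which is the stated bound.

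For the linearity claim, each update is affine in $\vb$ (and linear in $\vx_{k}$), so unrolling the recurrence gives $\vx_{N} = \eta \bigl( \sum_{j=0}^{N-1} (\mI - \eta \ZZ \mm)^{j} \bigr) \ZZ \vb$, exhibiting $\ZZ_{N}$ explicitly. The main obstacle I anticipate is pedantic rather than conceptual: one must verify that using $\mI_{\im{\mm}}$ in place of the full identity is legitimate and that the $\mU \to \mU$ norm is a genuine norm on the subspace $V$ where the error vectors live, both of which follow from the kernel containment hypotheses.
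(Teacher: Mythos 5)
Your proof is correct and follows essentially the same route as the paper's: rewrite the update as the error recursion $\vx_{k+1}-\vx^{*}=(\mI_{\imFull{\MM}}-\eta\ZZ\MM)(\vx_k-\vx^{*})$, apply the induced $\mU\to\mU$ norm inductively with $\vx_0=\vzero$, and unroll the recurrence to exhibit $\ZZ_N$ as a polynomial in $\ZZ\MM$ applied to $\ZZ$. Your explicit induction showing the iterates stay in $\im{\MM}$ (so the insertion of $\mI_{\im{\MM}}$ and the use of the $\mU$-seminorm are legitimate) is a point the paper treats more tersely, but it is the same argument.
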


\begin{proof}
	Let $\vec{x}^{*} \defeq \MM^{\dag} \vec{b}$. The iteration on Line~\ref{ln:richardsonStep}, together with the fact that $\vec{b}$ lives inside the image of $\MM$, implies
\begin{align*}
\vec{x}_{k+1} - \vec{x}^* 
&= ((\mI_{\imFull{\MM}} - \eta\mz \MM)\vec{x}_k + \eta \mz \vec{b}) - \vec{x}^* = \left((\mI_{\imFull{\MM}} - \eta\mz \MM)\vec{x}_k + \eta \mz \MM \vec{x}^*\right) - \vec{x}^* \\
&\hfill = (\mI_{\imFull{\MM}}-\eta \mz\MM)(\vec{x}_k -\vec{x}^*)\,{,}
\end{align*}
and therefore,
	 $$\normFull{\vec{x}_{k+1}-\vec{x}^*}_{\mU} = \normFull{(\mI_{\imFull{\MM}}-\eta\mz\MM)(\vec{x}_{k}-\vec{x}^*)}_{\mU} \leq \normFull{\mI_{\imFull{\MM}}-\eta\mz\MM}_{\mU\rightarrow\mU} \normFull{\vec{x}_{k}-\vec{x}^*}_{\mU}\,{.}$$
	 By induction, this shows that
	 $$\normFull{\vec{x}_N - \vec{x}^*}_{\mU} \leq \normFull{\mI_{\imFull{\MM}} - \eta \mz \MM}_{\mU\rightarrow\mU}^N \normFull{\vec{x}_0-\vec{x}^*}_{\mU}=\normFull{\mI_{\imFull{\MM}} - \eta \mz \MM}_{\mU\rightarrow\mU}^N \normFull{\vec{x}^*}_{\mU}\,{.}$$
Now, by writing the iteration as $\vec{x}_{k+1} = (\mI_{\imFull{\mm}}-\eta \mz\mm)\vec{x}_k + \eta\mz\vec{b}$, and expanding, we see by induction that $\vec{x}_N = \sum_{k=1}^{N} (\mI_{\imFull{\mm}}-\eta \mz\mm)^{k-1} \eta\mz\vec{b}$, and therefore $$\mz_N = \eta \sum_{k=1}^{N} (\mI_{\imFull{\mm}}-\eta \mz\mm)^{k-1} \mz\,{.}$$

\end{proof}

Lemma~\ref{lem:precond_richardson} shows that if $\eta \mz \mm$ is sufficiently close to the identity, then preconditioned Richardson converges quickly when solving a linear system. This highlights a precise way to quantify how good a matrix is as a preconditioner for the Richardson iteration.

\begin{defn}[Approximate Pseudoinverse]
\label{defn:approxInv}
Matrix $\ZZ$ is an \emph{$\epsilon$-approximate pseudoinverse of
	matrix $\MM$ with respect to a symmetric positive semidefinite matrix $\UU$}, if
$\ker(\UU) \subseteq \ker(\MM)=\ker(\MM^\intercal)=\ker(\ZZ)=\ker(\ZZ^\intercal)$, and 
\[
\normFull{ \II_{im(\MM)} - \ZZ \MM }_{\UU \rightarrow \UU} \leq \epsilon.~\footnote{Note that the ordering of $\ZZ$ and $\MM$ is crucial: this definition is not equivalent to $\norm{\II_{\im{\MM}} - \MM\ZZ}_{\UU \rightarrow \UU}$
being small.}
\]
\end{defn}

With this definition and Lemma~\ref{lem:precond_richardson}, we see that our problem of solving a linear system can be reduced to producing an approximate pseudoinverse that we can apply efficiently. This is the approach we take in the rest of the paper. In the remainder of this section we give two tools towards producing such pseudoinverses. We show how to use preconditioned Richardson to improve the quality of an approximate pseudoinverse (Lemma~\ref{lem:precon}), and how to produce one for a matrix whose symmetrization is well conditioned (Lemma~\ref{lem:solveWellConditioned}). 

\begin{lem}[Pseudoinverse Improvement]
\label{lem:precon}
If $\ZZ$ is an $\epsilon$-approximate pseudoinverse of $\MM$ with respect to $\mU$, for $\epsilon \in (0, 1)$, $\vec{b} \in \im{\MM}$, and $N \geq 0$, then $\textsc{PreconRichardson}(\MM, \ZZ, \vec{b}, 1, N)$ computes $\vx_N =\ZZ_N$, for some
matrix $\ZZ_N$ only depending on $\ZZ$, $\MM$ and $N$, such that $\ZZ_N$ is an $\epsilon^{N}$-approximate pseudoinverse of $\mm$ with respect to $\mU$. 
\end{lem}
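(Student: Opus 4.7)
My plan is to observe that when $\eta = 1$, Lemma~\ref{lem:precond_richardson} already provides us with the explicit linear operator
\[
  \mz_N \;=\; \sum_{k=1}^N (\mI_{\im{\mm}} - \mz\mm)^{k-1}\, \mz,
\]
which only depends on $\mz$, $\mm$, and $N$. So the entire content of the lemma reduces to bounding $\|\mI_{\im \mm} - \mz_N \mm\|_{\mU \to \mU}$ and checking the kernel conditions of Definition~\ref{defn:approxInv}.

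The key step is a telescoping identity. Writing $\mathbf{E} = \mI_{\im \mm} - \mz\mm$, so that $\mz\mm = \mI_{\im \mm} - \mathbf{E}$ (as operators on $\im \mm$), I would compute
\[
  \mz_N \mm \;=\; \sum_{k=1}^N \mathbf{E}^{k-1}(\mI_{\im \mm} - \mathbf{E}) \;=\; \sum_{k=1}^N \mathbf{E}^{k-1} - \sum_{k=1}^N \mathbf{E}^{k} \;=\; \mI_{\im \mm} - \mathbf{E}^N.
\]
Therefore $\mI_{\im \mm} - \mz_N \mm = \mathbf{E}^N = (\mI_{\im \mm} - \mz\mm)^N$, and submultiplicativity of the operator norm $\|\cdot\|_{\mU \to \mU}$ gives
\[
  \bigl\|\mI_{\im \mm} - \mz_N \mm\bigr\|_{\mU \to \mU} \;\leq\; \bigl\|\mI_{\im \mm} - \mz\mm\bigr\|_{\mU \to \mU}^N \;\leq\; \epsilon^N,
\]
using the hypothesis that $\mz$ is an $\epsilon$-approximate pseudoinverse.

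It remains to verify the kernel conditions $\ker(\mU) \subseteq \ker(\mm) = \ker(\mm^\intercal) = \ker(\mz_N) = \ker(\mz_N^\intercal)$. The first three equalities are inherited from the hypothesis on $\mz$. For $\mz_N$, note that every term in the sum defining $\mz_N$ has $\mz$ on the right, so $\ker(\mz) \subseteq \ker(\mz_N)$, giving $\ker(\mm) \subseteq \ker(\mz_N)$. For the reverse inclusion and the statement about $\mz_N^\intercal$, I would argue that $\im(\mz_N) \subseteq \im(\mz) \subseteq \im(\mm) = \ker(\mm^\intercal)^\perp$ (and symmetrically for $\mz_N^\intercal$), which combined with the dimension count forces equality of the kernels. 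The only minor subtlety, which I expect to be the main thing requiring care, is that the projection-like operator $\mI_{\im \mm}$ must be interpreted consistently so that $\mI_{\im \mm} \mz = \mz$ and $\mI_{\im \mm} \mm = \mm$; once that convention is fixed, every step above is straightforward algebra with no further obstacles.
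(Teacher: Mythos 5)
Your bound $\norm{\mI_{\im{\mm}}-\mz_N\mm}_{\mU\rightarrow\mU}\leq\epsilon^N$ is correct and is essentially the paper's argument made explicit: the telescoping identity $\mI_{\im{\mm}}-\mz_N\mm=(\mI_{\im{\mm}}-\mz\mm)^N$ is exactly what underlies the per-iteration contraction in Lemma~\ref{lem:precond_richardson}, which the paper then converts into the operator-norm statement by letting $\vec{b}$ range over $\im{\mm}$. (One small point worth recording: since $\mU$ may be singular, submultiplicativity of the seminorm $\norm{\cdot}_{\mU\rightarrow\mU}$ requires that $\mI_{\im{\mm}}-\mz\mm$ map $\ker(\mU)$ into $\ker(\mU)$; this holds because it annihilates $\ker(\mm)\supseteq\ker(\mU)$, and the same remark justifies your use of $\mm\,\mI_{\im{\mm}}=\mm$ via $\ker(\mm)=\ker(\mm^\top)$.)

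The genuine gap is in the kernel verification. The inclusions $\ker(\mz)\subseteq\ker(\mz_N)$ and $\ker(\mz^\top)\subseteq\ker(\mz_N^\top)$ are fine, but the reverse inclusions do not follow from ``$\im{\mz_N}\subseteq\im{\mz}\subseteq\im{\mm}$ plus a dimension count.'' That containment only bounds the rank of $\mz_N$ from \emph{above}, which is the wrong direction: nothing in it rules out $\mz_N$ having a strictly larger kernel (it is even consistent with $\mz_N=\mzero$), and indeed the conclusion can fail when $\epsilon\geq 1$, so any correct argument must use $\epsilon<1$ --- which your kernel step never does. The paper closes this exactly via the norm bound: if some $\vec{x}\perp\ker(\mz)$ had $\mz_N\vec{x}=0$, then $\vec{x}\in\im{\mm}$, so $(\mI_{\im{\mm}}-\mz_N\mm)\mm^{\dagger}\vec{x}=\mm^{\dagger}\vec{x}$, and $\normFull{\mm^{\dagger}\vec{x}}_{\mU}>0$ because $\ker(\mU)\subseteq\ker(\mz)$, forcing $\norm{\mI_{\im{\mm}}-\mz_N\mm}_{\mU\rightarrow\mU}\geq 1$, contradicting $\epsilon^N<1$; the analogous argument with $\mm^\top$ handles $\ker(\mz_N^\top)$. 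If you prefer to stay with your telescoping identity, the same fix reads: if $\mz_N\vec{x}=0$, write the component of $\vec{x}$ in $\im{\mm}$ as $\mm\vec{y}$ with $\vec{y}\in\im{\mm}$; then $0=\mz_N\mm\vec{y}=\vec{y}-(\mI_{\im{\mm}}-\mz\mm)^N\vec{y}$, so $\normFull{\vec{y}}_{\mU}\leq\epsilon^N\normFull{\vec{y}}_{\mU}$, hence $\vec{y}\in\ker(\mU)\cap\im{\mm}=\{0\}$ and $\vec{x}\in\ker(\mm)$. Either way the $\epsilon$-bound must be invoked (and, as in the paper's own proof, this really requires $N\geq 1$); a dimension count alone cannot do it.
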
 

\begin{proof}
By Lemma~\ref{lem:precond_richardson} we know that $\vx_{N} = \mz_N \vb$, for some $\mz_N$ that only depends on $\mz$, $\mm$, and $N$; furthermore, we know that:
\[
\normFull{(\mI_{\imFull{\mm}} - \mz_N\mm) \mm^{\dag} \vec{b} }_{\mU\rightarrow\mU} = 
\normFull{\vec{x}_N - \mm^{\dag}\vec{b}}_{\mU} \leq \normFull{\mI_{\imFull{\mm}}-\mz\MM}_{\mU\rightarrow\mU}^N \normFull{\mm^{\dag}\vec{b}}_{\mU} \leq \epsilon^N \normFull{\mm^{\dag}\vec{b}}_{\mU}\,{,}
\] 
and that this holds for any vector $\vec{b} \in \imFull{\mm}$ (since $\mz_N$ does not depend on $\vec{b}$). Equivalently, this means that
\[
\normFull{\mI_{\imFull{\mm}}-\mz_N\mm}_{\mU\rightarrow\mU} \leq \epsilon^N\,{.}
\]

In order to complete the proof we need to show that $\ker(\mz) = \ker(\mz^{\top}) = \ker(\mz)$.

As we saw in the proof of Lemma~\ref{lem:precond_richardson}, over $\mI_{\im{\mm}}$, $\mz_N$ is a polynomial in $\mz$ and $\mm$ with no constant term. Also, since all the kernels and cokernels of $\mz$ and $\mm$ are identical by definition, $\ker{\mz_N} \supseteq \ker{\mz}$, and similarly $ \ker{\mz^{\top}}\supseteq\ker{\mz}$. 

Now suppose the first inclusion is strict, i.e. there exists $\vec{x} \perp \ker(\mz)$ such that $\mz_N\vec{x} = 0$. Then, $\norm{\mm^{\dag}\vec{x}}_{\mU} > 0$ since $\mm^{\dag} \vec{x} \perp \ker(\mU)$, as by definition, the kernel of $\mU$ is a subset of that of $\mz$. This implies that
$$\normFull{ (\mI_{\imFull{\mm}} - \mz_N \mm)\mm^{\dag} \vec{x}  }_{\mU}=\normFull{\mm^{\dag} \vec{x}}_{\mU}$$
This shows that $\norm{\mI_{\im{\mm}} - \mz_N \mm}_{\mU\rightarrow\mU} \geq 1$, which contradicts the fact that it is at most $\epsilon^N$.

Similarly, if there exists $\vec{x} \perp \ker(\mz^{\top})$ such that $\mz^{\top}\vec{x} = 0$, then we obtain 
$$\normFull{ (\mI_{\imFull{\mm}} - \mz_N^{\top} \mm^{\top})(\mm^{\top})^{ \dag} \vec{x}  }_{\mU}=\normFull{(\mm^{\top})^{\dag} \vec{x}}_{\mU}\,{,}$$
and thus $\normFull{\mI_{\imFull{\mm}}-\mz_N^{\top} \mm^{\top}}_{\mU\rightarrow\mU}  \geq 1$. Equivalently, this shows that $\norm{\mI_{\im{\mm}} - \mz_N \mm}_{\mU\rightarrow\mU} \geq 1$, which yields a contradiction. The fact that the norm is the same when taking transposes follows from writing it in terms of the $\ell_2$ norm, and using the fact that $\ker(\mU)$ is a subset of both the kernel of the matrix and that of its transpose.
\end{proof}

In addition, we show that preconditioned Richardson converges quickly whenever the matrix $\mm$ is well conditioned (as a matter of fact, for our purposes we only care about the case when the ratio between $\norm{\mm}$ and $\lambdanonzero(\mU_\mm)$ is at most a constant). The lemma below gives precise bounds on the number of iterations required to obtain a good approximate pseudoinverse with respect to $\mI$. Such an approximate pseudoinverse is the standard object that can be used as a preconditioner in order to obtain a small number of preconditioned Richardson iterations when measuring error with respect to the $\ell_2$ norm.

\begin{lemma}[Building a Pseudoinverse]
\label{lem:solveWellConditioned}
Let $\mm \in \R^{n \times n}$ such that $\mU_\mm$ is positive semidefinite, and $\ker(\mm) = \ker(\mm^\top)$. Let the vector $\vb \in \im{\mm}$, the step size $\eta \leq \lambdanonzero(\mU_{\mm}) / \norm{\mm}_2^2$, and the number of iterations $N$. Then
$\textsc{PreconRichardson}(\MM, \eta\mI_{\im{\mm}}, \vec{b}, 1, N)$ computes $\vx_{N} = \ZZ_{N} \vec{b}$, for some
matrix $\ZZ_N$ only depending on $\ZZ$, $\MM$ and $N$, such that $\ZZ_N$ is an $\exp(- N \eta \lambdanonzero(\mU_\mm)/2)$-approximate pseudoinverse of $\mm$ with respect to $\mI$. 
\end{lemma}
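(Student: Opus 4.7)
I will invoke Lemma~\ref{lem:precond_richardson} with $\mU = \mI$, preconditioner $\mz = \eta\,\mI_{\im{\mm}}$, and step size $1$; that lemma guarantees the iterate is of the form $\mz_N \vb$ for a linear operator $\mz_N$ depending only on $\mz$, $\mm$, $\eta$, and $N$, and that
\[
  \|\vx_N - \mm^{\dag}\vb\|_2 \;\leq\; \bigl\|\mI_{\im{\mm}} - \eta\,\mI_{\im{\mm}}\mm\bigr\|_{2}^{N}\,\|\mm^{\dag}\vb\|_2 .
\]
The whole proof therefore boils down to (a) the one-step spectral bound
$\|\mI_{\im{\mm}} - \eta\,\mI_{\im{\mm}}\mm\|_2 \leq \exp(-\eta\,\lambdanonzero(\mU_\mm)/2)$
and (b) upgrading the resulting vector-level error bound into the operator-level statement required by Definition~\ref{defn:approxInv}.

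\textbf{The one-step bound.} For any $\vx \in \R^{n}$ split it as $\vx = \vx_1 + \vx_2$ with $\vx_1 = \mI_{\im{\mm}}\vx \in \im{\mm}$ and $\vx_2 \in \im{\mm}^{\perp} = \ker(\mm^\top) = \ker(\mm)$, where the last equality is the hypothesis. Then $\mm\vx_2 = \vzero$ and $\mm\vx_1 \in \im{\mm}$, so $(\mI_{\im{\mm}} - \eta\,\mI_{\im{\mm}}\mm)\vx = (\mI - \eta\mm)\vx_1$, and taking the supremum over unit $\vx$ reduces the problem to bounding $\|(\mI - \eta\mm)\vx_1\|_2$ for unit $\vx_1 \in \im{\mm}$. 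Expanding the square and using $\vx_1^\top(\mm + \mm^\top)\vx_1 = 2\,\vx_1^\top\mU_\mm\vx_1$ produces the standard three-term expression $1 - 2\eta\,\vx_1^\top\mU_\mm\vx_1 + \eta^2\|\mm\vx_1\|_2^2$. Since $\vx_1$ is orthogonal to $\ker(\mU_\mm)$, the middle term is at least $2\eta\,\lambdanonzero(\mU_\mm)$, while the hypothesis $\eta \leq \lambdanonzero(\mU_\mm)/\|\mm\|_2^2$ makes the last term at most $\eta\,\lambdanonzero(\mU_\mm)$; combining these yields $\|(\mI - \eta\mm)\vx_1\|_2^2 \leq 1 - \eta\,\lambdanonzero(\mU_\mm) \leq \exp(-\eta\,\lambdanonzero(\mU_\mm))$, and the claimed contraction follows by taking a square root.

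\textbf{From vector-level to operator-level, and main obstacle.} Because the error bound of Lemma~\ref{lem:precond_richardson} holds for every $\vb \in \im{\mm}$ simultaneously and $\mz_N$ does not depend on $\vb$, it is equivalent to the operator inequality $\|\mI_{\im{\mm}} - \mz_N\mm\|_{\mI \to \mI} \leq \exp(-N\eta\,\lambdanonzero(\mU_\mm)/2)$, which is the second requirement of Definition~\ref{defn:approxInv}. For the kernel condition, expanding the Richardson recursion writes $\mz_N$ as a polynomial in $\mm$ and $\mI_{\im{\mm}}$ with no constant term, so $\ker(\mm) \subseteq \ker(\mz_N) \cap \ker(\mz_N^\top)$; strictness of either inclusion would, by the contradiction argument at the end of the proof of Lemma~\ref{lem:precon}, force $\|\mI_{\im{\mm}} - \mz_N\mm\|_2$ to be at least $1$, contradicting what we just established. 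The one subtle point throughout is the orthogonality $\vx_1 \perp \ker(\mU_\mm)$ used in the previous paragraph: $\ker(\mm) \subseteq \ker(\mU_\mm)$ is automatic, but the reverse inclusion can fail for purely antisymmetric perturbations, so the main obstacle to a completely clean proof is to verify this identification. Either one interprets $\lambdanonzero(\mU_\mm)$ in the statement as the smallest eigenvalue of $\mU_\mm$ on $\im{\mm}$ (which only makes the bound stronger), or one uses that every invocation of this lemma in the paper has $\mm$ arising as a normalized Eulerian Laplacian, in which case $\ker(\mU_\mm) = \ker(\mm)$ holds directly.
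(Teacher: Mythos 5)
Your proof is correct and follows essentially the paper's route: compute $\norm{\mI_{\im{\mm}} - \eta\mm}_2^2$ via the three-term expansion $1 - 2\eta\,\vx^\top\mU_\mm\vx + \eta^2\norm{\mm\vx}_2^2$ over unit $\vx\in\im{\mm}$ and bound it using the step-size hypothesis, then iterate. You diverge only in bookkeeping: the paper first shows $\eta\mI_{\im{\mm}}$ is a single-step approximate pseudoinverse and then feeds that into Lemma~\ref{lem:precon}, whereas you invoke Lemma~\ref{lem:precond_richardson} directly and handle the kernel conditions separately; the two orderings are interchangeable.

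You are right to flag the orthogonality issue, and it is in fact a gap in the lemma \emph{as stated}, present in the paper's own proof as well. The hypotheses give $\ker(\mm)\subseteq\ker(\mU_\mm)$, hence $\ker(\mU_\mm)^\perp\subseteq\im{\mm}$, but the reverse inclusion $\im{\mm}\subseteq\ker(\mU_\mm)^\perp$---which is exactly what is needed in the paper's step $\min_{\vx\in\im{\mm},\,\norm{\vx}_2=1}\vx^\top\mU_\mm\vx\geq\lambdanonzero(\mU_\mm)$---does not follow. Concretely, take $n=3$ and $\mm$ block diagonal with a $1\times 1$ block equal to $1$ and a $2\times 2$ skew block $\bigl(\begin{smallmatrix}0&1\\-1&0\end{smallmatrix}\bigr)$: then $\ker(\mm)=\ker(\mm^\top)=\{0\}$, $\mU_\mm=\diag(1,0,0)$ is PSD, $\lambdanonzero(\mU_\mm)=\norm{\mm}_2^2=1$, yet for any admissible $\eta\in(0,1]$ one has $\norm{\mI-\eta\mm}_2=\sqrt{1+\eta^2}>1$, so the claimed one-step contraction (and hence the $N$-step bound) fails. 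Both your proof and the paper's require the extra condition $\ker(\mU_\mm)=\ker(\mm)$, which Lemma~\ref{lem:normalized_lap_properties} supplies at every call site. One small slip in your phrasing: redefining $\lambdanonzero(\mU_\mm)$ as the minimum of $\vx^\top\mU_\mm\vx$ over unit $\vx\in\im{\mm}$ does not ``only make the bound stronger''---in the failure case this restricted minimum is zero and the statement becomes vacuous, so that reinterpretation \emph{weakens} the lemma rather than strengthening it, which is precisely why adding the kernel hypothesis is the cleaner fix.
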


\begin{proof}
We begin by showing that $\eta\mI_{\im{\MM}}$ is a $(1-\eta\lambdanonzero(\mU_\mm))^{1/2}$-approximate pseudoinverse of $\mm$ with respect to $\mI$. First we notice that the kernel conditions for $\mI_{\im{\MM}}$ being an approximate pseudoinverse for $\MM$ are trivially satisfied. Second, we bound the matrix induced norm of $\mI_{\im{\mm}}-\eta\mI_{\im{\mm}} \cdot \mm = \mI_{\im{\mm}} - \eta\mm$:
\begin{align*}
\norm{\mI_{\im{\MM}} -\eta \MM}_{2}^2
 &= \max_{\vec{x} \in \im{\MM},\norm{\vec{x}}_{2} = 1} \vec{x}^{\top}
\left(\mI_{\im{\MM}}-\eta \MM\right)^{\top}
\left(\mI_{\im{\MM}}-\eta \MM \right) \vec{x} \\
&= \max_{\vec{x} \in \im{\MM},\normFull{\vec{x}}_{2} = 1}
\left(
\vec{x}^{\top} \mI_{\im{\MM}} \vec{x}
- \eta \vec{x}^{\top} \left(\MM + \MM^{\top} \right) \vec{x}
+ \eta^2 \vec{x}^{\top} \MM^{\top}\MM \vec{x}
\right) \\
&\leq 1 - 2 \eta \min_{\vec{x} \in \im{\MM},\normFull{\vec{x}}_{2} = 1}
\vec{x}^{\top} \mU_{\mm}  \vec{x}
+ \eta^2 \max_{\vec{x} \in \im{\MM}, \normFull{\vec{x}_{2}} = 1} \vec{x}^{\top} \MM^{\top} \MM \vec{x}\,{.} \\
&\leq 1 - 2 \eta \lambdanonzero(\mU_\mm) + \eta^2 \norm{\mm}_2^2
\leq 1 - \eta \lambdanonzero(\mU_\mm) ~.
\end{align*}
Consequently, by 
Lemma~\ref{lem:precon}
we have that $\mz_ N$ is a $(1-\eta \lambdanonzero(\mU_\mm))^{N/2}$-approximate pseudoinverse of $\mm$ with respect to $\mI$. The conclusion follows by using $(1-\eta\lambdanonzero(\mU_\mm))^{N/2} \leq \exp(-N\eta\lambdanonzero(\mU_\mm)/2)$.
\end{proof}

\subsection{Construction of
Square-Sparsification Chains}
\label{sec:construction}
Here we define the square-sparsification chain we use in our algorithm and show how to compute such a chain efficiently. In other words, we show how to create the sequence of matrices that through careful application yield an almost linear time algorithm for solving an Eulerian Laplacian system.

\begin{defn}
\label{defn:squarechain}[Square Sparsifier Chain] We call a sequence of matrices $\WWhat_0, \WWhat_1, \ldots \WWhat_{d} \in \R^{n \times n}$ a 
\emph{square-sparsifier chain of length $d$ with parameter $0 < \alpha< \frac{1}{2}$ and error $\epsilon \leq 1/2$} (or a \emph{$(d,\epsilon,\alpha)$-chain} for short) if under the definitions 
${\mLL}_i = \mI - \WWhat_i$ and 
$\WWhat^{(\alpha)}_i = \alpha \II + (1-\alpha)\WWhat_i$ for all $i$ the following hold
\begin{enumerate}
\item $\norm{\WWhat_i}_2 \leq 1$ for all $i$,
\item $\mI - \WWhat_{i}$ is an $\epsilon$-approximation of $\mI - ( \WWhat^{(\alpha)}_{i-1} )^2$ for all $i\geq 1$,
\item\label{item:ker} $\ker({\mLL}_i)=\ker({\mLL}_i^\intercal) = \ker({\mLL}_j) = \ker({\mLL}_j^\intercal)=\ker(\mU_{\mLL_i}) = \ker(\mU_{{\mLL}_j})$ for all $i,j$.
\end{enumerate}
\end{defn}

\newcommand{\buildchain}{\textsc{BuildChain}}

\begin{figure}[ht!]
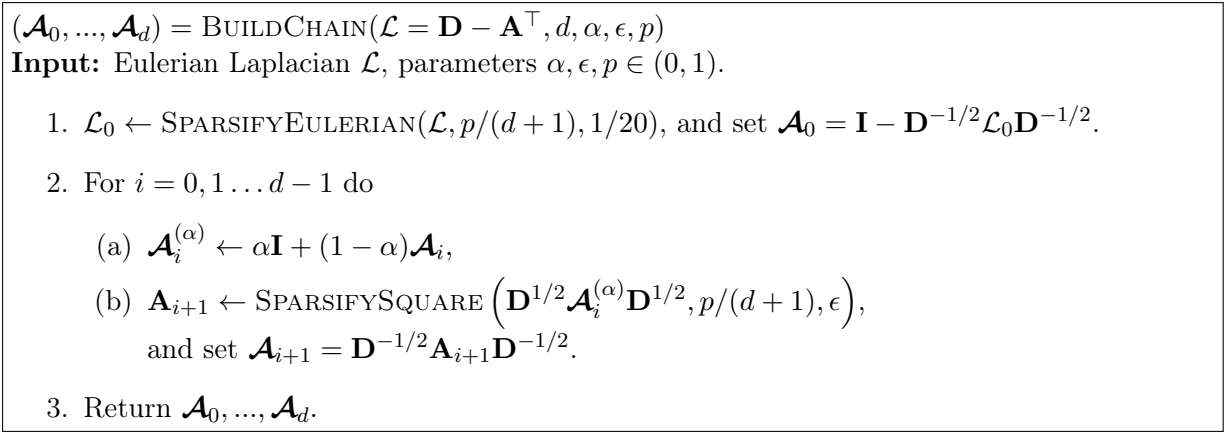

\begin{algbox}
$(\WWhat_0,...,\WWhat_{d})=\buildchain(\mlap = \md - \ma^{\top}, d, \alpha, \epsilon, p)$ 

\textbf{Input:} Eulerian Laplacian $\mlap$,
	parameters $\alpha, \epsilon, p \in (0,1)$.
\begin{enumerate}

\item 
\label{itm:sparsify}
$\mlap_0 \leftarrow \textsc{SparsifyEulerian}(\mlap, p/(d + 1), 1/20)$, and set $\WWhat_0 = \mI-\md^{-1/2} \mlap_0 \md^{-1/2}$.
\item For $i=0,1 \ldots d-1$ do
\begin{enumerate}
\item $\WWhat^{(\alpha)}_{i} \leftarrow \alpha \II + (1-\alpha)\WWhat_i,$ 
\item $\ma_{i + 1}  \leftarrow \sparsifysquare \left(\md^{1/2}  \WWhat^{(\alpha)}_{i} \md^{1/2}, p/(d + 1), \epsilon \right)$, \\ 
and set $\WWhat_{i+ 1} = \md^{-1/2} \ma_{i+1} \md^{-1/2}$.
\end{enumerate}
\item Return $\WWhat_0,...,\WWhat_{d}.$
\end{enumerate}
\end{algbox}
\caption{Algorithm for Constructing the Square-Sparsification Chain.}
\label{fig:buildChain}
\end{figure}

Pseudocode for the construction of the square-sparsifier chain is given in Figure~\ref{fig:buildChain}. In the remainder of this subsection we prove correctness of this construction. We first provide a helper lemma, Lemma~\ref{lem:normalized_lap_properties} and then analyze the algorithm in Lemma~\ref{lem:chain_construction}. In Section~\ref{sec:progresstools} we prove additional properties regarding solver chains such as how the $\mU_{\LL_i}$ multiplicatively approximate each other and how the smallest eigenvalue at the end of the chain must eventually rise to at least a constant. 

\begin{lem}\label{lem:normalized_lap_properties} If for $\ma \in \R^{n \times n}_{\geq 0}$ and $\md = \mdiag(\ma \vones)$ the matrix $\mlap=\md-\ma$ is an Eulerian Laplacian associated with a strongly connected graph then, $\norm{\md^{-1/2}\ma^{\top}\md^{-1/2}}_{2} \leq 1$ and $\ker(\mlap) = \ker(\mlap^\top) = \ker(\mU_{\mlap}) = \mathrm{span}(\md^{1/2} \vones)$.\end{lem}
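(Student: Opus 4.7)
My plan is to split the proof into two parts: the spectral-norm bound and the kernel characterization.

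For the norm bound, the key structural consequence of $\mlap$ being Eulerian is that the row and column sums of $\ma$ coincide, i.e. $\ma \vones = \ma^\top \vones = \md \vones$ (the first equality is the definition of $\md$, and the second follows from $\vones^\top \mlap = \vzero$). I would then show the equivalent statement $\ma^\top \md^{-1} \ma \preceq \md$. Given any $\vx$, I can write
\[
\vx^\top \ma^\top \md^{-1} \ma \vx = \sum_i \frac{1}{\md_{ii}} \left(\sum_j \ma_{ij} \vx_j\right)^2.
\]
I would apply Cauchy--Schwarz to the inner sum with weights $\ma_{ij}$, using $\sum_j \ma_{ij} = \md_{ii}$ (row-sum equation), to obtain $(\sum_j \ma_{ij} \vx_j)^2 \leq \md_{ii} \sum_j \ma_{ij} \vx_j^2$. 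Dividing by $\md_{ii}$, summing over $i$, swapping the order of summation, and finally using $\sum_i \ma_{ij} = \md_{jj}$ (column-sum equation, which is precisely the Eulerian condition) gives $\vx^\top \ma^\top \md^{-1} \ma \vx \leq \vx^\top \md \vx$. Taking square roots on operator norms yields $\norm{\md^{-1/2} \ma^\top \md^{-1/2}}_2 \leq 1$.

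For the kernel claim, I would first note that $\vones \in \ker(\mlap) \cap \ker(\mlap^\top)$ is immediate: $\mlap \vones = \md \vones - \ma \vones = \vzero$ by definition of $\md$, and $\vones^\top \mlap = \vzero$ by the Eulerian assumption. Hence $\mU_{\mlap} \vones = \vzero$ as well. To obtain the reverse inclusions, I would observe that $\mU_{\mlap}$ is the Laplacian of the undirected symmetrization of $G_{\mlap}$, which is a connected undirected graph because $G_{\mlap}$ is strongly connected (every directed path yields an undirected path). Standard undirected Laplacian theory then gives $\ker(\mU_{\mlap}) = \sspan(\vones)$. For $\ker(\mlap)$, any $\vx \in \ker(\mlap)$ satisfies $\vx^\top \mU_{\mlap} \vx = \vx^\top \mlap \vx = 0$; since $\mU_{\mlap}$ is PSD, this forces $\mU_{\mlap} \vx = \vzero$, hence $\vx \in \sspan(\vones)$. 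The identical argument applied to $\mlap^\top$ (which is also Eulerian for the reverse graph, still strongly connected) handles the cokernel.

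The only subtle point is translating the final equality into the claimed form involving $\sspan(\md^{1/2}\vones)$: since $\mlap = \md^{1/2}(\mI - \md^{-1/2}\ma\md^{-1/2})\md^{1/2}$ and $\md^{1/2}$ is invertible, the kernels of $\mlap$ and of the normalized operator $\mI - \md^{-1/2}\ma\md^{-1/2}$ are related by the bijection $\vv \mapsto \md^{1/2} \vv$, which sends $\sspan(\vones)$ to $\sspan(\md^{1/2}\vones)$; the analogous statement for the symmetrization $\mU_{\mlap}$ is handled the same way, and $\mlap^\top$ is symmetric in structure to $\mlap$. I do not expect any real obstacle here; the main care required is simply keeping track of which row/column sum condition is being invoked at each Cauchy--Schwarz step, and recognizing that strong connectedness of the directed graph suffices to make the symmetrization connected.
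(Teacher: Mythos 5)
Your proof is correct, and it is more self-contained than the paper's. For the norm bound, the paper notes $\norm{\md^{-1}\ma^{\top}}_{\infty}=\norm{\ma^{\top}\md^{-1}}_{1}=1$ and invokes its general bound $\norm{\md^{-1/2}\ma^{\top}\md^{-1/2}}_{2}\leq\sqrt{\norm{\md^{-1}\ma^{\top}}_{\infty}\cdot\norm{\ma^{\top}\md^{-1}}_{1}}$ (Lemma~\ref{lem:matrix_two_norm}); your Cauchy--Schwarz computation showing $\ma^{\top}\md^{-1}\ma\preceq\md$ is in effect a direct proof of that bound in this special case, and the stated claim follows since a matrix and its transpose have the same spectral norm. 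For the kernels, the paper simply cites Lemma~\ref{lem:stationary-equivalence} (the kernel characterization imported from~\cite{cohen2016faster}) together with $\mlap\vones=\mlap^{\top}\vones=\mU_{\mlap}\vones=\vzero$, whereas you rederive everything from scratch: $\mU_{\mlap}$ is the Laplacian of a connected undirected graph, so $\ker(\mU_{\mlap})=\sspan(\vones)$, and the identity $\vx^{\top}\mlap\vx=\vx^{\top}\mU_{\mlap}\vx$ plus positive semidefiniteness collapses $\ker(\mlap)$ and $\ker(\mlap^{\top})$ onto $\ker(\mU_{\mlap})$; this is elementary and avoids the external citation at the cost of a few extra lines. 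You were also right to flag the normalization issue: what both arguments literally produce for $\mlap=\md-\ma$ is $\ker(\mlap)=\ker(\mlap^{\top})=\ker(\mU_{\mlap})=\sspan(\vones)$, while $\sspan(\md^{1/2}\vones)$ is the kernel of the normalized matrix $\md^{-1/2}\mlap\md^{-1/2}$ (equivalently of $\mI-\WWhat$ with $\WWhat=\md^{-1/2}\ma\md^{-1/2}$), which is how the lemma is actually applied in Lemma~\ref{lem:chain_construction}; your conjugation-by-$\md^{1/2}$ remark is exactly the right way to reconcile the statement with the proof, a point the paper's one-line proof glosses over.
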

\begin{proof}
Since $\mlap$ is Eulerian $\ma \vones = \ma^\top \vones = \md \vones$ and therefore we have  $\norm{\md^{-1}\ma^{\top}}_{\infty}=\norm{\ma^{\top}\md^{-1}}_{1}=1$. Consequently, by Lemma~\ref{lem:matrix_two_norm}
we have that
$\norm{\md^{-1/2}\ma^{\top}\md^{-1/2}}_{2}\leq\sqrt{\norm{\md^{-1}\ma^{\top}}_{\infty}\cdot\norm{\ma^{\top}\md^{-1}}_{1}} \leq 1$. The characterization of the kernels follows from Lemma~\ref{lem:stationary-equivalence} and that $\mlap \vones = \mlap^\top \vones = \mU_{\mlap} \vones = \vzero$.
\end{proof}

\begin{lem}[Chain Construction]\label{lem:chain_construction}
Let $\mlap = \md - \ma^\top$ be an Eulerian Laplacian that is associated with a strongly connected graph, let $\alpha, \epsilon, p \in (0,1)$, and let $d \geq 1$. Then in $\otilde(\nnz(\mlap) +n\epsilon^{-2} d)$ time the routine $\buildchain(\mlap, d, \alpha, \epsilon, p)$ produces $\WWhat_0, ..., \WWhat_d \in \R^{n \times n}$ that with probability $1 - p$
\begin{enumerate}
	\item $\WWhat_0, ..., \WWhat_d$ is a $(d, \alpha, \epsilon)$-chain,
	\item $\nnz(\WWhat_i) = \otilde(n \epsilon^{-2})$ for all $i$, and
	\item $\mI - \WWhat_0$ is a $(1/20)$-approximation of $\md^{-1/2} \mlap \md^{-1/2}$.
\end{enumerate}
\end{lem}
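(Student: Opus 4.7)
The plan is to verify the three conclusions by maintaining, as an inductive invariant over $i \in \{0,1,\dots,d\}$, that $\mlap_i := \md - \md^{1/2} \WWhat_i \md^{1/2}$ is an Eulerian Laplacian of a strongly-connected graph with diagonal exactly $\md$ and $\nnz(\mlap_i) = \otilde(n \epsilon^{-2})$. Once this invariant is in hand, the three bullets of Definition~\ref{defn:squarechain} essentially pop out of the already-proven machinery in Section~\ref{sec:sparsification}: each $\WWhat_i$ is the normalized random-walk matrix of an Eulerian graph sharing a common degree matrix $\md$, so $\norm{\WWhat_i}_2 \leq 1$ and the kernels agree by Lemma~\ref{lem:normalized_lap_properties}.

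For the base case, Theorem~\ref{thm:order_n_sparsifier} applied to Step~\ref{itm:sparsify} gives that $\mlap_0$ is a strict $(1/20)$-sparsifier of $\mlap$ with identical in- and out-degrees, hence Eulerian with diagonal $\md$ and $\otilde(n)$ nonzeros. Conclusion~(3) of the lemma then follows by invoking the basis-change result, Lemma~\ref{cor:strong_basis_change}, with $\mm = \md^{-1/2}$, which transfers the strict $(1/20)$-approximation from $(\mlap_0,\mlap)$ over to $(\mI - \WWhat_0,\, \md^{-1/2} \mlap \md^{-1/2})$. For the inductive step, assuming $\mlap_i$ satisfies the invariant, the matrix $\md^{1/2} \WWhat^{(\alpha)}_i \md^{1/2} = \alpha \md + (1-\alpha) \ma_i^\top$ is entrywise nonnegative and has equal row and column sums (both $\md \vones$), so it meets the input requirements of $\sparsifysquare$. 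Theorem~\ref{thm:sparsify_square} then yields that $\mlap_{i+1} = \md - \ma_{i+1}^\top$ is an Eulerian $\epsilon$-sparsifier of
\[
\md - (\md^{1/2} \WWhat^{(\alpha)}_i \md^{1/2}) \md^{-1} (\md^{1/2} \WWhat^{(\alpha)}_i \md^{1/2}) = \md^{1/2} \left[ \mI - (\WWhat^{(\alpha)}_i)^2 \right] \md^{1/2},
\]
which simultaneously preserves the Eulerian/diagonal invariant and, after again applying Lemma~\ref{cor:strong_basis_change} with $\mm = \md^{-1/2}$, shows that $\mI - \WWhat_{i+1}$ is an $\epsilon$-approximation of $\mI - (\WWhat^{(\alpha)}_i)^2$, giving condition~2 of Definition~\ref{defn:squarechain}.

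The common-kernel condition~\ref{item:ker} follows from Lemma~\ref{lem:normalized_lap_properties} once strong connectivity of each $\mlap_i$ is checked: this propagates because the approximation property combined with Lemma~\ref{lem:asym_strong_implies_undir} forces $\mU_{\mlap_i}$ to be spectrally close to $\mU_{\mlap}$ (up to harmless constant factors from $\alpha$ and $\epsilon < 1/2$), so $\mU_{\mlap_i}$ inherits the one-dimensional kernel $\mathrm{span}(\vones)$. For the running-time and failure-probability bookkeeping, I would simply sum the guarantees from Theorem~\ref{thm:order_n_sparsifier} (giving $\otilde(\nnz(\mlap))$) and from Theorem~\ref{thm:sparsify_square} across the $d$ iterations, using the invariant that the inputs to $\sparsifysquare$ always have $\otilde(n \epsilon^{-2})$ nonzeros, and union-bound the $d+1$ failure events each of probability $p/(d+1)$.

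The step I expect to need the most care is confirming that strong connectivity of $\mlap_i$ genuinely persists through the chain—this is needed to apply Lemma~\ref{lem:normalized_lap_properties} at every level, and it rests on the chained approximations combined with Lemma~\ref{lem:asym_strong_implies_undir}, rather than on a direct combinatorial argument. All the other pieces are transparent applications of the already-stated sparsification theorems and the basis-change lemma once the Eulerian invariant is maintained.
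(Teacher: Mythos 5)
Your proposal is correct and follows essentially the same induction as the paper: base case via Theorem~\ref{thm:order_n_sparsifier} plus the basis-change Lemma~\ref{cor:strong_basis_change}, inductive step via Theorem~\ref{thm:sparsify_square} plus the same basis change, norm and kernel facts via Lemma~\ref{lem:normalized_lap_properties}, and a union bound over the $d+1$ calls. The only point of divergence is how strong connectivity is propagated: the paper observes combinatorially that $\md - \ma_k^{(\alpha)}\md^{-1}\ma_k^{(\alpha)}$ contains the previous graph as a subgraph (and the sparsifier preserves this via degree/kernel preservation), whereas you deduce connectivity of each symmetrization from the chained spectral approximations (Lemma~\ref{lem:asym_strong_implies_undir}) plus Eulerianness; this works, though your phrase ``harmless constant factors'' is imprecise---the spectral factors compound roughly like $21^{d}$ along the chain---but this is immaterial since only equality of kernels at each step is needed.
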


\begin{proof}
By Theorem~\ref{thm:order_n_sparsifier}, the call to \textsc{SparsifyEulerian} computes in $\otilde(m+n)$ time $\mlap_0$ that with probability $1 - p/(d + 1)$ is a $(1/20)$-sparsifier of $\mlap$ with the same diagonal as $\mlap$. Consequently, for $\ma_0 = \md^{1/2} \WWhat_0 \md^{1/2}$ we  have 
 $\mlap_0 = \md - \ma_0$. By Lemma~\ref{cor:strong_basis_change} this implies that $\mI - \WWhat_0$ is a $(1/20)$-approximation of $\md^{-1/2} \mlap \md^{-1/2}$. Furthermore, Lemma~\ref{lem:normalized_lap_properties} then implies that $\norm{\WWhat_0}_2 = \norm{\md^{-1/2} \ma_0 \md^{-1/2}}_2 \leq 1$, and $\ker(\mI - \WWhat_0) = \ker((\mI - \WWhat_0)^\top) = \ker(\mU_{\mI - \WWhat_0}) = \mathrm{span}(\md^{1/2} \vones)$. Thus, $\WWhat_0$ has all the desired properties.

Now suppose the desired properties hold for $\WWhat_0, ..., \WWhat_k$ with probability $1 - p (k + 1) / (d + 1)$, for some $k \in [0, d - 1]$, and that for all $i \in [k]$ we have $\ma_i \in \R^{n \times n}_{\geq 0}$ such that $\md - \ma_i$ is an Eulerian Laplacian associated with a strongly connected graph. Under this assumption, clearly $\md^{-1/2} \WWhat_k \md^{-1/2} = \alpha \md + (1- \alpha)\ma_k$ has both row and column sums equal to $\md\vones$. By Theorem~\ref{thm:sparsify_square}, the call to \textsc{SparsifySquare} computes in $\otilde(m+n\epsilon^{-2})$ time a matrix $\ma_{k + 1} \in \R^{n \times n}_{\geq 0}$ such that, with probability $1 - p(k + 2)/(d + 1)$,  $\md - \ma_{k + 1}$ is an  $\epsilon$-sparsifier for
$\md - \ma_k^{(\alpha)} \md^{-1} \ma_k^{(\alpha)}$, where $\ma_k^{(\alpha)} = \alpha \mI + (1 - \alpha) \ma_k$. 
Again, using Lemma~\ref{cor:strong_basis_change},
we see that $\mI - \WWhat_{k + 1}$ is an $\epsilon$-approximation of $\mI - (\WWhat_{k}^{(\alpha)})^2$. Furthermore, since $\md - \ma_k^{(\alpha)} \md^{-1} \ma_k^{(\alpha)}$ contains $\md - \ma_k$ as a subgraph,  this Eulerian Laplacian is strongly connected and therefore so is $\md - \ma_{k + 1}$. Consequently, by Lemma~\ref{lem:normalized_lap_properties} we have that $\norm{\WWhat_{k + 1}}_2 = \norm{\md^{-1/2} \ma_{k + 1} \md^{-1/2}}_2 \leq 1$, and $\ker(\mI - \WWhat_{k + 1}) = \ker((\mI - \WWhat_{k + 1})^\top) = \ker(\mU_{\mI - \WWhat_{k + 1}}) = \mathrm{span}(\md^{1/2} \vones)$. Therefore, by induction all the desired properties hold.
\end{proof}

\subsection{Properties of the Square-Sparsification Chain}
\label{sec:progresstools}
Here we prove several properties of the square-sparsification chain which we use in the analysis of our solver algorithm. The main result of this subsection is to prove the following:

\begin{lem} \label{lem:buildChain}[Chain Properties]
For length $d\geq 1$, parameter $\alpha = 1/4$, and error $\epsilon\in(0,1/2)$, and $(d,\alpha,\epsilon)$-chain $\WWhat_0, \WWhat_1, \ldots \WWhat_{d}$ the following properties hold:
	\begin{enumerate}
		\item $\kappa(\mI-\mU_{\WWhat_{i}}, \mI-\mU_{\WWhat_{i-1}}) \leq 21$, for all $i \in [d]$\,{, and}\label{item:prop1}
				\item $\lambdanonzero(\mI- \mU_{\WWhat_d}  ) \geq \min\{1/4, \lambdanonzero(\mI - \mU_{\WWhat_0})\cdot ((1-\epsilon)1.25)^d \}$\,{.}\label{item:prop2}
	\end{enumerate}
\end{lem}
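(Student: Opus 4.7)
The plan is to start with a single algebraic identity. For any matrix $W$ and scalar $\alpha \in (0,1)$, direct expansion of $(\alpha\mI + (1-\alpha)W)^2$ gives
\[
\mI - (W^{(\alpha)})^2 \;=\; (1-\alpha)^2(\mI - W^2) + 2\alpha(1-\alpha)(\mI - W),
\]
and symmetrizing (using that $\mU_{(\cdot)}$ is linear) yields $\mI - \mU_{(W^{(\alpha)})^2} = (1-\alpha)^2(\mI - \mU_{W^2}) + 2\alpha(1-\alpha)(\mI - W_s)$ with $W_s := \mU_W$. Writing $W = W_s + W_a$ where $W_a = (W - W^\top)/2$ is antisymmetric, a short computation shows $\mU_{W^2} = W_s^2 + W_a^2$. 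Since $W_a^2 \preceq 0$, this gives $\mI - \mU_{W^2} \succeq \mI - W_s^2$; while the alternative identity $\mU_{W^2} = 2W_s^2 - \frac{1}{2}(WW^\top + W^\top W)$, combined with $\frac{1}{2}(WW^\top + W^\top W) \preceq \norm{W}_2^2\,\mI \preceq \mI$, yields $\mI - \mU_{W^2} \preceq 2(\mI - W_s^2)$. Combining these with $\mI - W_s^2 \preceq 2(\mI - W_s)$ on the image of $\mI - W_s$ (which follows from $\mI + W_s \preceq 2\mI$) relates $\mI - \mU_{W^2}$ to $\mI - W_s$ up to small constants in both directions.

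For Property 1, substituting the two-sided bounds into the identity above yields
\[
2\alpha(1-\alpha)(\mI - W_s) \;\preceq\; \mI - \mU_{(W^{(\alpha)})^2} \;\preceq\; 2(1-\alpha)(2-\alpha)(\mI - W_s),
\]
which at $\alpha = 1/4$ becomes $\tfrac{3}{8}(\mI - W_s) \preceq \mI - \mU_{(\WWhat_{i-1}^{(\alpha)})^2} \preceq \tfrac{21}{8}(\mI - W_s)$, a relative condition number of $7$. Applying Lemma~\ref{lem:asym_strong_implies_undir} to the defining $\epsilon$-approximation in the chain gives $(1-\epsilon)(\mI - \mU_{(\WWhat_{i-1}^{(\alpha)})^2}) \preceq \mI - \mU_{\WWhat_i} \preceq (1+\epsilon)(\mI - \mU_{(\WWhat_{i-1}^{(\alpha)})^2})$; chaining these estimates yields $\kappa(\mI - \mU_{\WWhat_i}, \mI - \mU_{\WWhat_{i-1}}) \leq 7(1+\epsilon)/(1-\epsilon) \leq 21$ whenever $\epsilon \leq 1/2$.

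For Property 2, I will use the tighter lower bound $\mI - \mU_{W^2} \succeq \mI - W_s^2$ and factor
\[
(1-\alpha)^2(\mI - W_s^2) + 2\alpha(1-\alpha)(\mI - W_s) \;=\; (1-\alpha)(\mI - W_s)\big[(1+\alpha)\mI + (1-\alpha)W_s\big].
\]
Both factors on the right are symmetric, commute (being polynomials in $W_s$), and share the eigenbasis of $W_s$; using $\norm{W}_2 \leq 1$ one verifies that every $v \in \ker(\mI - W_s)$ satisfies $W_a v = 0$, so all three relevant kernels coincide with $\ker(\mI - W_s)$. Consequently, the nonzero eigenvalues of the product are $f(\lambda) := (1-\alpha)(1-\lambda)[(1+\alpha) + (1-\alpha)\lambda]$ as $\lambda$ ranges over the non-kernel eigenvalues of $W_s$. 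Writing $\mu = 1 - \lambda$ and plugging $\alpha = 1/4$, this becomes $f(\mu) = \tfrac{3\mu}{2} - \tfrac{9\mu^2}{16}$, a concave quadratic with $f(\mu)/\mu \geq 5/4$ for every $\mu \leq 4/9$; for $\mu > 4/9$ a direct check gives $f(\mu) \geq 5/9$. Combined with the $(1-\epsilon)$-loss from the chain step and setting $\mu_i := \lambdanonzero(\mI - \mU_{\WWhat_i})$, this produces the one-step bound $\mu_i \geq (1-\epsilon)(5/4)\min(\mu_{i-1}, 1/4)$ (the case $\mu_{i-1} > 4/9$ uses $(1-\epsilon)(5/9) > 1/4$ for $\epsilon \leq 1/2$). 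Iterating this recurrence on the potential $\phi_i := \min(\mu_i, 1/4)$ then gives $\phi_d \geq \min(1/4,\, ((1-\epsilon)\cdot 5/4)^d\,\mu_0)$, which is the desired bound.

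The main technical obstacle is isolating the decomposition $\mU_{W^2} = W_s^2 + W_a^2$ and the two complementary bounds for $\mI - \mU_{W^2}$ that let the asymmetric $W$ be controlled by its symmetric part $W_s$. Once these are in place (and kernels are verified to agree throughout), both properties reduce to elementary spectral calculus on the quadratic $f$; the specific value $\alpha = 1/4$ is chosen precisely so that the constants $5/4$ and $21$ emerge cleanly.
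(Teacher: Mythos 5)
Your proof is correct and takes essentially the same route as the paper's: Property 1 comes from the sandwich $2\alpha(1-\alpha)\left(\mI-\mU_{\WWhat_{i-1}}\right)\preceq \mI-\mU_{(\WWhat_{i-1}^{(\alpha)})^2}\preceq (4-2\alpha)(1-\alpha)\left(\mI-\mU_{\WWhat_{i-1}}\right)$ combined with Lemma~\ref{lem:asym_strong_implies_undir}, and Property 2 from the eigenvalue-improvement bound after lazy squaring followed by iterating the one-step recurrence, exactly as in the paper. The only difference is that you reprove the paper's auxiliary linear-algebra facts (Lemmas~\ref{lem:square_condition_number}, \ref{lem:square_asym_upper_bound}, and \ref{lem:kappa-improvement}) inline, using the decomposition $\mU_{W^2}=\mU_W^2+W_a^2$ with $W_a$ antisymmetric and an explicit factorization of $\mI-\mU_{W^{(\alpha)}}^2$, which is the same device the paper uses in its appendix.
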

The first property shows that the matrix $\mI - \mU_{\WWhat_i}$ changes only within a constant factor for a constant change in $i$, while the second implies that the smallest non-zero eigenvalue improves geometrically with squaring, up to some fixed value that depends on $\alpha$.

The proof of Lemma~\ref{lem:buildChain} relies on several components. First, we use a lemma which shows how squaring changes the associated symmetric matrix (see Lemma~\ref{lem:square_condition_number} in Appendix~\ref{sec:linear_algebra}). Combining with the sparsification guarantees from Lemma~\ref{lem:asym_strong_implies_undir}, we derive the first property. Second, we use a bound on how the smallest non-zero eigenvalue improves after squaring (see Lemma~\ref{lem:kappa-improvement} from Appendix~\ref{sec:linear_algebra}), in order to show that this is also the case with the matrices in our chain.

\begin{proof}[Proof of Lemma~\ref{lem:buildChain}]

By definition, $\mI - \WWhat_i$ is an $\epsilon$-approximation of $\mI-( \WWhat^{(\alpha)}_{i-1} )^2$, for all $i\geq 1$. Therefore by Lemma~\ref{lem:asym_strong_implies_undir} we know that
$$(1-\epsilon)\left(\mI- \mU_{(\WWhat^{(\alpha)}_{i-1})^2}\right)\preceq \mI-\mU_{\WWhat_i} \preceq (1+\epsilon)\left(\mI - \mU_{(\WWhat^{(\alpha)}_{i-1})^2}\right)\,{.}$$ 
Applying Lemma~\ref{lem:square_condition_number}, we obtain
$$  2\alpha \left(\mI - \mU_{\WWhat^{(\alpha)}_{i-1}}\right)  \preceq \mI - \mU_{\left(\WWhat^{(\alpha)}_{i-1}\right)^2}  \preceq (4-2\alpha) \left(\mI - \mU_{\WWhat^{(\alpha)}_{i-1}}\right)\,{.}$$
Combining with the sparsification guarantee from above, we obtain:
$$(1-\epsilon) 2\alpha \left(\mI - \mU_{\WWhat^{(\alpha)}_{i-1}}\right)\preceq  \mI-\mU_{\WWhat_i} \preceq (1+\epsilon) (4-2\alpha) \left(\mI - \mU_{\WWhat^{(\alpha)}_{i-1}}\right)\,{.}$$
Finally, writing $\mI-\mU_{\WWhat_{i-1}^{(\alpha)}} = \mI - (\alpha\mI + (1-\alpha)\mU_{\WWhat_{i-1}}) = (1-\alpha)\mU_{\WWhat_{i-1}}$, this gives:
$$(1-\epsilon) 2\alpha(1-\alpha) \left(\mI - \mU_{\WWhat_{i-1}}\right)\preceq  \mI-\mU_{\WWhat_i} \preceq (1+\epsilon) (4-2\alpha)(1-\alpha) \left(\mI - \mU_{\WWhat_{i-1}}\right)\,{,}$$
which shows that $$\kappa(\mI-\mU_{\WWhat_i}, \mI-\mU_{\WWhat_{i-1}}) \leq \frac{ (1+\epsilon)(4-2\alpha) }{ (1-\epsilon)2\alpha } \leq 21\,{.}$$
For the second part, we see that Lemma~\ref{lem:kappa-improvement} yields:
$$\lambdanonzero(\mI - \mU_{ (\WWhat^{(\alpha)})_{i-1}^2  }) \geq \min\{\alpha, (1+\alpha)\lambdanonzero(\mI - \mU_{\WWhat_{i-1}})\}\,{.}$$ 
Combining with the first inequality from the sparsification guarantee, this gives
$$\lambdanonzero(\mI-\mU_{\WWhat_i}) \geq (1-\epsilon)\cdot \min\{\alpha, (1+\alpha)\lambdanonzero(\mI - \mU_{\WWhat_{i-1}})\}\,{.}$$
Applying this inequality $d$ times yields the second part of the result.

\end{proof}

\subsection{Pseudoinverse Properties}
\label{sec:approxInverse}
Here we show that $\epsilon$-approximations in the square
sparsifier chain imply useful properties for building approximate pseudoinverses. We provide key lemmas to show that an approximate pseudoinverse for $\mI-\WWhat_j$ can be transformed into one for
$\mI-\WWhat_i$, where $i < j$. More precisely, this is based on Equation~\ref{eqn:key}, and can be seen by substituting
the condition from Definition~\ref{defn:squarechain}
that $\II - \WWhat_{j}$ is an $\epsilon$-approximation of
$\II - ( \WWhat^{(\alpha)}_{j-1})^2$ into the identity
\[
\left( \II - \WWhat_{j- 1} \right)^{\dag}
= \left(1 - \alpha\right) \left( \II - \WWhat^{(\alpha)}_{j-1} \right)^{\dag}
=  \left(1 - \alpha\right)  \left( \II - \left( \WWhat^{(\alpha)}_{j-1} \right)^2 \right)^{\dag}
\left( \II + \WWhat^{(\alpha)}_{j-1} \right)\,{.}
\]

This method of producing solvers accumulates error very quickly. However, as discussed in Section~\ref{sec:richardson}, we can reduce the error accumulation using preconditioned Richardson iterations. Consequently, the main goal of the remainder of this section is to formally bound this accumulated error so that we can ensure Richardson will quickly produce a high quality approximate pseudoinverse.  

We prove this result in several steps. First we provide several properties regarding approximate pseudoinverses, showing that they are well behaved under composition, and that they exhibit desirable properties such as approximate triangle inequality, and being preserved under right multiplication. Then using these lemmas we prove the main result of this section, Lemma~\ref{lem:chainproperty}.

Given that ultimately we build our pseudoinverse recursively, in our presentation $\ZZ$ will often take the role of a solver being used as a preconditioner. This is consistent with the way we use it, since all solvers we produce are linear operators.

The following lemma bounds the quality of a preconditioner obtained via composition.
\begin{lem}[Triangle Inequality
of Approximate Pseudoinverses]
\label{lem:approxInvBasic}
If matrix $\ZZ$ is an $\epsilon$-approximate pseudoinverse of $\MM$ with respect to $\UU$, and $\MMtil^{\dag}$ is an
$\epsilon'$-approximate pseudoinverse of  $\ZZ^{\dag}$ with respect to $\UU$,
and has the same left and right kernels as  $\MM$ and $\ZZ$,
then $\MMtil^{\dag}$ is an $(\epsilon + \epsilon' + \epsilon\epsilon')$-approximate
pseudoinverse of $\MM$ with respect to $\UU$.
\end{lem}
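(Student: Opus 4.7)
The plan is to decompose the error $\II_{\im{\MM}} - \MMtil^{\dag}\MM$ into two pieces, each separately controlled by one of the given hypotheses, and then apply the triangle inequality together with submultiplicativity of $\|\cdot\|_{\UU \to \UU}$.

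First I would verify the kernel conditions needed for $\MMtil^{\dag}$ to qualify as an approximate pseudoinverse of $\MM$ with respect to $\UU$. The two given pseudoinverse hypotheses already force $\ker(\UU) \subseteq \ker(\MM) = \ker(\MM^{\intercal}) = \ker(\ZZ) = \ker(\ZZ^{\intercal})$, and the ``same left and right kernels'' hypothesis extends this chain to $\MMtil^{\dag}$ and $\MMtil^{\dag \intercal}$. Writing $V = \im{\MM} = \im{\ZZ} = \im{\ZZ^{\dag}} = \im{\MMtil^{\dag}}$ and letting $\Pi := \II_{\im{\MM}}$ be the orthogonal projection onto $V$, one obtains the three useful identities $\ZZ^{\dag}\ZZ = \Pi$, $\Pi\ZZ = \ZZ$, and $\MMtil^{\dag}\Pi = \MMtil^{\dag}$ (the last holding because $(\II - \Pi)x \in V^{\perp} = \ker(\MMtil^{\dag})$).

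The central algebraic step is the identity
\[
\II_{\im{\MM}} - \MMtil^{\dag}\MM \;=\; (\Pi - \ZZ\MM) \;+\; (\Pi - \MMtil^{\dag}\ZZ^{\dag})\,\ZZ\MM,
\]
obtained by adding and subtracting $\ZZ\MM$ and rewriting $(\ZZ - \MMtil^{\dag})\MM = (\Pi - \MMtil^{\dag}\ZZ^{\dag})\ZZ\MM$ via the three identities above. The first summand has $\|\cdot\|_{\UU \to \UU}$-norm at most $\epsilon$ by the hypothesis on $\ZZ$, and submultiplicativity together with the hypothesis on $\MMtil^{\dag}$ bounds the second summand by $\epsilon' \|\ZZ\MM\|_{\UU \to \UU}$. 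Writing further $\ZZ\MM = \Pi - (\Pi - \ZZ\MM)$ and applying $\|\Pi\|_{\UU \to \UU} \leq 1$ (which holds because the orthogonal decomposition of any $x$ along $V \oplus V^{\perp}$ is also $\UU$-orthogonal in the intended setting where $\ker(\UU) = \ker(\MM)$, so that $(\II - \Pi)x \in \ker(\UU)$) yields $\|\ZZ\MM\|_{\UU \to \UU} \leq 1 + \epsilon$. Combining the three estimates gives $\epsilon + \epsilon'(1 + \epsilon) = \epsilon + \epsilon' + \epsilon\epsilon'$, as claimed.

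The main obstacle I anticipate is the kernel/cokernel bookkeeping needed to validate both the algebraic rewriting $(\ZZ - \MMtil^{\dag})\MM = (\Pi - \MMtil^{\dag}\ZZ^{\dag})\ZZ\MM$ and the norm bound $\|\Pi\|_{\UU \to \UU} \leq 1$; once those are in hand the remainder is a one-line triangle-inequality computation. The identity itself is really the crux: it is chosen precisely so that its two summands match the quantities controlled by the two given pseudoinverse hypotheses, turning the problem into a routine composition-of-errors estimate.
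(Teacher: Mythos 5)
Your proof is correct and follows essentially the same route as the paper's: a triangle-inequality split of $\II_{\im{\MM}} - \MMtil^{\dag}\MM$ into two error terms, each controlled by one hypothesis, followed by submultiplicativity of $\norm{\cdot}_{\UU\to\UU}$ and a bound of the form $1+\mathrm{(small)}$ on the leftover factor, giving $\epsilon+\epsilon'+\epsilon\epsilon'$. The only difference is a cosmetic mirror image: you interpose $\ZZ\MM$ and pay the factor $1+\epsilon$ on $\norm{\ZZ\MM}_{\UU\to\UU}$, whereas the paper interposes $\MMtil^{\dag}\ZZ^{\dag}$ and pays $1+\epsilon'$ on $\norm{\MMtil^{\dag}\ZZ^{\dag}}_{\UU\to\UU}$; both arguments rest on the same bound $\norm{\II_{\im{\MM}}}_{\UU\to\UU}\leq 1$ that you rightly flag (and which the paper uses without comment).
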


\begin{proof}

Applying triangle inequality for the $\mU\rightarrow \mU$ norm, we see that, for all $x \in \R^n$,
\[
\normFull{ \left(\II_{\imFull{\MM}} - \MMtil^{\dag} \MM \right) \vec{x}}_{\UU }\\
\leq \normFull{ \left(\II_{\imFull{\MM}}
		- \MMtil^{\dag} \ZZ^{\dag} \right) \vec{x}}_{\UU }
+\normFull{\left(\MMtil^{\dag} \ZZ^{\dag} - \MMtil^{\dag} \MM\right) \vec{x}}_{\UU }\,{.}
\]
The first term is upper bounded by 
$\epsilon' \norm{\vec{x}}_{\UU}$ by the condition given in the statement.
The second term can be rewritten as:
\[
\normFull{\left(\MMtil^{\dag} \ZZ^{\dag} - \MMtil^{\dag} \MM\right) \vec{x}}_{\UU }
= \normFull{\MMtil^{\dag} \ZZ^{\dag} 
\left(\II_{\imFull{\MM}} - \ZZ \MM \right) \vec{x}}_{\UU}
\leq \normFull{\MMtil^{\dag} \ZZ^{\dag}}_{\UU \rightarrow \UU}
	\cdot \normFull{ \left(\II_{\imFull{\MM}} - \ZZ \MM  \right) \vec{x}}_{\UU}\,{.}
\]
Next, we bound the two components of the product. For the first one, using triangle inequality along with the condition given in the statement, we obtain 
$$\normFull{\MMtil^{\dag}\ZZ^{\dag}}_{\UU\rightarrow \UU} \leq \normFull{\II_{\imFull{\mm}}}_{\UU\rightarrow\UU}  + \normFull{\II_{\im{\MM}} - \MMtil^{\dag}\ZZ^{\dag}}_{\UU\rightarrow\UU} \leq 1+\epsilon'\,{.}$$
The second term is by definition bounded by $\epsilon\norm{\vec{x}}_{\UU}$.  Combining these bounds, we obtain
\[
\normFull{ \left( \mI_{\imFull{\MM}} -\MMtil^{\dag}\MM \right)\vec{x} }_{\UU}\leq \epsilon' \normFull{\vec{x}}_{\UU} + (1+\epsilon')\epsilon \normFull{\vec{x}}_{\UU} = (\epsilon + \epsilon' + \epsilon \epsilon')\normFull{\vec{x}}_{\UU}\,{.}
\]
\end{proof}

The following lemma is used to show that $\epsilon$-approximations obtained via the sparsification routines from Section~\ref{sec:sparsification} also yield matrices whose pseudoinverses are good preconditioners for the original.
\begin{lem}\label{thm:spectral-to-approxInv}
Let $\MM$ be any matrix with $\mU_{\MM}$ positive semidefinite, such that  $\ker(\MM)=\ker(\MM^\top)=\ker(\mU_{\MM})$.
Suppose that matrix $\MMtil$  $\epsilon$-approximates $\MM$, for some $\epsilon \leq 1/2$.
Then $\MMtil^\dagger$ is an $2\epsilon$-approximate pseudoinverse
for $\MM$ with respect to $\mU_{\MM}$.
Furthermore, $\ker(\mU_\MM) = \ker(\mU_{\MMtil})$.%
\end{lem}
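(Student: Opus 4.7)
The plan is to verify both parts of Definition~\ref{defn:approxInv} directly, handling the kernel identities first and then using them to reduce the norm bound to two separable estimates.

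\textbf{Kernel identities.} First I will establish $\ker(\mU_\MMtil) = \ker(\mU_\MM)$: Lemma~\ref{lem:asym_strong_implies_undir} gives $(1-\epsilon)\mU_\MM \preceq \mU_\MMtil \preceq (1+\epsilon)\mU_\MM$, which with $\epsilon < 1$ forces these two PSD kernels to coincide (proving the ``furthermore'' claim). I then show $\ker(\mU_\MM) = \ker(\MMtil) = \ker(\MMtil^\top)$: for $\subseteq$, any $x \in \ker(\mU_\MM) = \ker(\MM) = \ker(\MM^\top)$ lies in $\ker(\MMtil - \MM) \cap \ker((\MMtil - \MM)^\top)$ by the definition of approximation, so $\MMtil x = 0$ and $\MMtil^\top x = 0$; for $\supseteq$, $\MMtil x = 0$ implies $x^\top \mU_\MMtil x = 0$ and hence $x \in \ker(\mU_\MMtil) = \ker(\mU_\MM)$ since $\mU_\MMtil$ is PSD. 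The pseudoinverse identities $\ker(\MMtil^\dagger) = \ker(\MMtil^\top)$ and $\ker((\MMtil^\dagger)^\top) = \ker(\MMtil)$ then give the kernel clause of Definition~\ref{defn:approxInv}.

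\textbf{Reduction to two norm bounds.} The kernel identities imply $\im(\MMtil) = \im(\MMtil^\top) = \im(\MM)$, so $\MMtil^\dagger \MMtil = \II_{\im(\MM)}$ and hence $\II_{\im(\MM)} - \MMtil^\dagger \MM = \MMtil^\dagger(\MMtil - \MM)$. Writing the $\mU_\MM \to \mU_\MM$ norm as a spectral norm and inserting $\mU_\MM^{1/2}\mU_\MM^{\dagger/2}$ (legal because $\im(\MMtil-\MM) \subseteq \im(\mU_\MM)$ by the kernel hypothesis on $(\MMtil-\MM)^\top$), submultiplicativity gives
\[
\normFull{\II_{\im(\MM)} - \MMtil^\dagger\MM}_{\mU_\MM \to \mU_\MM}
\leq \normFull{\mU_\MM^{1/2}\MMtil^\dagger\mU_\MM^{1/2}}_2 \cdot \normFull{\mU_\MM^{\dagger/2}(\MMtil-\MM)\mU_\MM^{\dagger/2}}_2,
\]
and the right factor is at most $\epsilon$ by definition of $\epsilon$-approximation.

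\textbf{Controlling the operator-norm factor by rebasing to $\mU_\MMtil$.} The main work is showing $\norm{\mU_\MM^{1/2}\MMtil^\dagger\mU_\MM^{1/2}}_2 \leq (1-\epsilon)^{-1}$. I will sandwich as
\[
\mU_\MM^{1/2}\MMtil^\dagger\mU_\MM^{1/2}
= \bigl(\mU_\MM^{1/2}\mU_\MMtil^{\dagger/2}\bigr)\bigl(\mU_\MMtil^{1/2}\MMtil^\dagger\mU_\MMtil^{1/2}\bigr)\bigl(\mU_\MMtil^{\dagger/2}\mU_\MM^{1/2}\bigr).
\]
Each outer factor has spectral norm at most $(1-\epsilon)^{-1/2}$, since $\mU_\MMtil \succeq (1-\epsilon)\mU_\MM$ conjugated by $\mU_\MMtil^{\dagger/2}$ gives $\mU_\MMtil^{\dagger/2}\mU_\MM\mU_\MMtil^{\dagger/2} \preceq (1-\epsilon)^{-1}\II_{\im(\mU_\MMtil)}$. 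The middle factor is the pseudoinverse of $W := \mU_\MMtil^{\dagger/2}\MMtil\mU_\MMtil^{\dagger/2}$, and the symmetrization of $W$ is exactly $\mU_\MMtil^{\dagger/2}\mU_\MMtil\mU_\MMtil^{\dagger/2} = \II_{\im(\mU_\MMtil)}$; thus $W = \II + S$ on $\im(\mU_\MMtil)$ with $S$ antisymmetric, giving $\norm{Wx}_2^2 = \norm{x}_2^2 + \norm{Sx}_2^2 \geq \norm{x}_2^2$ (using $x^\top Sx = 0$) and hence $\norm{W^\dagger}_2 \leq 1$. Multiplying the three factors yields $(1-\epsilon)^{-1}$; combining with the $\epsilon$ from Step~2 gives $\epsilon/(1-\epsilon) \leq 2\epsilon$ since $\epsilon \leq 1/2$. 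The main obstacle is the careful bookkeeping of pseudoinverses restricted to the correct subspaces, which is fully resolved by the kernel identities from Step~1.
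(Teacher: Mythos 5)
Your proof is correct and the final numerical bound $\epsilon/(1-\epsilon) \leq 2\epsilon$ matches the paper's, but you arrive there by a genuinely different route. The paper first bounds $\normFull{\II_{\im{\MM}} - \MM^\dagger\MMtil}_{\mU_\MM \to \mU_\MM} \leq \epsilon$ directly via Lemma~\ref{lem:sym-hsm} applied to $\MM$ (the matrix whose symmetrization is given as PSD), and then pivots to $\normFull{\II_{\im{\MM}} - \MMtil^\dagger\MM}$ through the identity $\II_{\im{\MM}} - \MM^\dagger\MMtil = (\MMtil^\dagger\MM - \II_{\im{\MM}})\MM^\dagger\MMtil$ and a reverse triangle inequality, effectively solving for the quantity you want. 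You instead factor the target directly as $\II_{\im{\MM}} - \MMtil^\dagger\MM = \MMtil^\dagger(\MMtil-\MM)$, peel off the $\epsilon$ factor by submultiplicativity, and control $\normFull{\mU_\MM^{1/2}\MMtil^\dagger\mU_\MM^{1/2}}_2$ by detouring through $\mU_\MMtil$: your observation that $W = \mU_\MMtil^{\dagger/2}\MMtil\mU_\MMtil^{\dagger/2}$ has symmetrization $\II$ on its support, so $W = \II + S$ with $S$ skew and $\norm{W^\dagger}_2 \leq 1$, is precisely the content of Lemma~\ref{lem:sym-hsm}, but applied to $\MMtil$ rather than to $\MM$. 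Both hinge on the same inequality $\mU_\mLL \preceq \mLL^\top\mU_\mLL^\dagger\mLL$; the paper keeps everything anchored at $\mU_\MM$ and uses an algebraic pivot, while your version pays the $(1-\epsilon)^{-1}$ through the rebasing $\mU_\MM \leftrightarrow \mU_\MMtil$ and avoids the reverse-triangle-inequality step. Your approach is arguably more direct; the paper's avoids ever forming $\mU_\MMtil^{\pm 1/2}$, which keeps the $\MMtil$-side bookkeeping lighter. One small but essential point you handled correctly and which is easy to miss: the insertion of $\mU_\MM^{1/2}\mU_\MM^{\dagger/2}$ between $\MMtil^\dagger$ and $\MMtil - \MM$ requires $\im(\MMtil-\MM) \subseteq \im(\mU_\MM)$, which follows from the kernel clause of Definition~\ref{def:epsaprox}, and likewise the identity $W^\dagger = \mU_\MMtil^{1/2}\MMtil^\dagger\mU_\MMtil^{1/2}$ relies on $\ker(\MMtil) = \ker(\MMtil^\top) = \ker(\mU_\MMtil)$, which you established first.
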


\begin{proof}
First, we prove that $\ker(\MMtil)=\ker(\MMtil^\top)=\ker(\mU_{\MMtil})$ and that these kernels are the same as those of $\MM$, $\MM^\top$ and $\mU_{\MM}$.

We already know that $\ker(\MM)= \ker(\MM^\top) = \ker(\mU_{\MM})$ and it is not hard to prove that $\ker(\MMtil), \ker(\MMtil^{\top}) \subseteq \ker(\mU_{\MMtil})$. Thus, it suffices to prove that $\ker(\mU_\MM) \subseteq \ker(\MMtil),\ker(\MMtil^\top)$ and $\ker(\mU_{\MMtil}) \subseteq \ker(\mU_\MM)$.

To prove $\ker(\mU_\MM) \subseteq \ker(\MMtil),\ker(\MMtil^\top)$, consider any vector $x$ in the kernel of $\MM$. Then $x$ is also in the kernel of $\mU_{\MM}$. However, by the definition of strong approximation, this means that $x$ is in the kernel of $\MM - \MMtil$ and $\MM^\top - \MMtil^\top$. Since $x$ is in the left and right kernels of $\MM$, we have $0=(\MM - \MMtil)x= \MMtil x$ and similarly obtain $0=(\MM^\top - \MMtil^\top)x= \MMtil x$. Thus, the left and right kernels of $\MMtil$ are supersets of $\ker(\MM)$. 

For the reverse direction, note that Lemma~\ref{lem:asym_strong_implies_undir} implies that $\mU_\MM$ and $\mU_{\MMtil}$ approximate each other in the standard positive semidefinite sense for undirected matrices, and thus have the same kernel. Thus, the kernels of $\MM$ meet the requirements for being approximate pseudoinverses.

Now we can show the requisite inequality for
$\MMtil^{\dag}$ being an approximate pseudoinverse of $\MM$ with respect to $\mU_\MM$. First we show that $\MM^{\dag}$ is an $\epsilon$-approximate pseudoinverse of $\MMtil$ with respect to $\mU_\MM$.  We can apply a lemma upper bounding matrix norms whose proof can be found in the appendix (Lemma~\ref{lem:sym-hsm}) in order to obtain:
\[
\normFull{\mI_{\im{\MM}}-\MM^{\dag}\MMtil }_{\mU_\MM\rightarrow \mU_\MM}  
\leq \normFull{\mU_\MM^{\dag/2}\MM \left(\mI_{\im{\MM}}-\MM^{\dag}\MMtil\right) \mU_\MM^{\dag/2}}_2 
= \normFull{\mU_\MM^{\dag/2}\left( \MM-\MMtil\right) \mU_\MM^{\dag/2}}_2  \leq \epsilon\,{.}
\]
Next we prove that this implies the desired conclusion by writing
\begin{align*}
&\normFull{\mI_{\im{\MM}}-\MM^{\dag}\MMtil}_{\mU_\MM\rightarrow \mU_\MM}
= \normFull{(\MMtil^{\dag}\MM-\mI_{\im{\MM}})\MM^{\dag}\MMtil}_{\mU_\MM\rightarrow \mU_\MM} \\
&\geq \normFull{\MMtil^{\dag}\MM-\mI_{\im{\MM}}}_{\mU_\MM\rightarrow \mU_\MM} 
-\normFull{(\MMtil^{\dag}\MM-\mI_{\im{\MM}})(\MM^{\dag}\MMtil-\mI_{\im{\MM}})}_{\mU_\MM\rightarrow \mU_\MM} \\
&\geq \normFull{\mI_{\im{\MM}}-\MMtil^{\dag}\MM}_{\mU_\MM\rightarrow \mU_\MM} \left(1 - \normFull{\mI_{\im{\MM}}-\MM^{\dag}\MMtil}_{\mU_\MM\rightarrow \mU_\MM} \right)\,{.}
\end{align*}
For the first inequality we used triangle inequality, for the second one we used the fact that the norm of a product is upper bounded by the product of norms, which follows immediately from applying the definition of our matrix norm.
By rearranging terms, and using the fact that $\MM^{\dag}$ is an $\epsilon$-approximate pseudoinverse of $\MMtil$ with respect to $\mU_\MM$ we obtain:
\[
\normFull{\mI_{\im{\MM}}-\MMtil^{\dag}\MM}_{\mU_\MM\rightarrow\mU_\MM} \leq \frac{\epsilon}{1-\epsilon} \leq 2\epsilon\,{.}
\]
\end{proof}

Next, recall that the goal of the square-sparsifier chain is to allow
an approximate inverse for $\mI-\WWhat_j$ to be used as preconditioner for $\mI-\WWhat_i$, with $i < j$.  We show that our notion of approximate pseudoinverse
is (approximately) preserved under right-multiplications, and when
changing the reference matrix $\mU$.

\begin{lem}[Composition of Approximate Pseudoinverses]
\label{lem:approxInv-composition}
Let $\ZZ, \MM, \UU\in \R^{n \times n}$ be matrices such that
$\UU$ is symmetric positive semidefinite, and 
$\ker(\ZZ) = \ker(\ZZ^{\top}) = \ker(\MM) = \ker(\MM^{\top})
\supseteq \ker(\UU)$.
 Then the following hold.
\begin{enumerate}
\item \label{part:approxInv-multiply}
(Preserved under right multiplication)
Let $\mc \in \R^{n \times n}$ such that
both $\mc$ and $\mc^\top$ are invariant on $\ker(\MM)$, in the sense that $x\in \ker(\MM)$ if and only if $\mc \in \ker(\MM)$, and similarly for $\mc^\perp$.
Then $\ZZ$ is an $\epsilon$-approximate pseudoinverse for $\mc \MM$ with respect to $\UU$ if and only if
$\ZZ \mc$ is an $\epsilon$-approximate pseudoinverse for
  $\mm$ with respect to $\mU$.
\item \label{part:weak-norm-change}
(Approximately preserved under norm change)
If $\ZZ$ is an $\epsilon$-approximate pseudoinverse for $\MM$
with respect to $\UU$, then for any symmetric positive semidefinite matrix $\UUtil$, such that $\ker(\UUtil) = \ker(\UU)$,
$\ZZ$ is an $(\epsilon \cdot \sqrt{\kappa(\UUtil,\mU)})$-approximate pseudoinverse
of $\MM$ with respect to $\widetilde{\UU}$.
\end{enumerate}
\end{lem}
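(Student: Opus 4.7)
The plan is to treat the two parts almost independently, since Part~\ref{part:approxInv-multiply} is essentially a bookkeeping claim about kernels and images while Part~\ref{part:weak-norm-change} is a standard change-of-norm estimate.

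For Part~\ref{part:approxInv-multiply}, I would first unpack the hypothesis on $\mc$. Because $\ker(\mm) = \ker(\mm^\top)$, we have $\im{\mm} = \ker(\mm)^\perp$. The ``if and only if'' statement for $\mc$ tells us $\mc$ preserves $\ker(\mm)$ and moreover $\mc^{-1}(\ker(\mm)) \subseteq \ker(\mm)$; the analogous condition for $\mc^\top$ together with taking orthogonal complements shows $\mc$ also preserves $\im{\mm}$. Combined with $\ker(\mc) \subseteq \ker(\mm)$ (which falls out of the ``only if'' direction applied to $x = 0$-candidates), we see that $\mc|_{\im{\mm}}$ is injective on a finite-dimensional space it maps into itself, hence a bijection. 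The same argument works for $\mc^\top$. From this I would derive four key equalities: $\im{\mc\mm} = \mc(\im{\mm}) = \im{\mm}$ (so $\II_{\im{\mc\mm}} = \II_{\im{\mm}}$), $\ker(\mc\mm) = \ker(\mm)$, $\ker((\mc\mm)^\top) = \ker(\mm^\top \mc^\top) = \ker(\mm)$, $\ker(\ZZ\mc) = \ker(\mm)$, and $\ker((\ZZ\mc)^\top) = \ker(\ZZ^\top) = \ker(\mm)$. With these in hand, the two assertions collapse to the same expression: both ``$\ZZ$ is an $\epsilon$-approximate pseudoinverse of $\mc\mm$ w.r.t.\ $\UU$'' and ``$\ZZ\mc$ is an $\epsilon$-approximate pseudoinverse of $\mm$ w.r.t.\ $\UU$'' reduce to $\norm{\II_{\im{\mm}} - \ZZ\mc\mm}_{\UU \to \UU} \leq \epsilon$, so the bi-implication is immediate.

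For Part~\ref{part:weak-norm-change}, I would proceed by the standard comparison of operator norms under change of quadratic form. Let $\ma := \II_{\im{\mm}} - \ZZ\mm$ and note that by hypothesis $\ker(\UUtil) = \ker(\UU) \subseteq \ker(\mm) = \ker(\mm^\top) = \ker(\ZZ) = \ker(\ZZ^\top)$, so the kernel conditions required for $\ZZ$ to be an approximate pseudoinverse of $\mm$ with respect to $\UUtil$ are satisfied. Let $\beta, \alpha > 0$ be the tightest constants (on $\ker(\UU)^\perp$) satisfying $\alpha \UU \preceq \UUtil \preceq \beta \UU$, so that $\kappa(\UUtil,\UU) = \beta/\alpha$. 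For any $x$ not in $\ker(\UU)$, since $\ma x$ also lies in $\ker(\UU)^\perp$, we have
\[
\norm{\ma x}_{\UUtil}^2 = (\ma x)^\top \UUtil (\ma x) \leq \beta\, \norm{\ma x}_{\UU}^2, \qquad \norm{x}_{\UUtil}^2 \geq \alpha\, \norm{x}_{\UU}^2.
\]
Dividing and taking a supremum yields $\norm{\ma}_{\UUtil \to \UUtil} \leq \sqrt{\beta/\alpha}\,\norm{\ma}_{\UU \to \UU} \leq \sqrt{\kappa(\UUtil,\UU)}\cdot \epsilon$, which is precisely the desired bound.

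The only mild obstacle is the careful handling of kernels in Part~\ref{part:approxInv-multiply}, specifically arguing that the invariance hypotheses really do force $\mc$ to act as a bijection on $\im{\mm}$ and not merely preserve it (otherwise $\im{\mc\mm}$ could be a proper subspace of $\im{\mm}$ and $\II_{\im{\mc\mm}} \neq \II_{\im{\mm}}$). Once this is established, both parts are short computations; nothing nontrivial is happening analytically.
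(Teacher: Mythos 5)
Your proof is correct and follows essentially the same route as the paper's: Part~1 is the associativity identity $\II_{\im{\MM}} - (\ZZ\mc)\MM = \II_{\im{\MM}} - \ZZ(\mc\MM)$ plus kernel bookkeeping, and Part~2 is the standard change-of-norm estimate via $\alpha \UU \preceq \UUtil \preceq \beta \UU$. Your explicit argument that $\mc$ acts bijectively on $\im{\MM}$, so that $\im{\mc\MM} = \im{\MM}$ and the two image projections coincide, merely fills in a detail the paper leaves implicit.
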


\begin{proof}
For preservation under right multiplication (claim~\ref{part:approxInv-multiply}), we immediately see that by associativity: 
\[
\normFull{\II_{\imFull{\MM}} - \left( \ZZ \mc \right) \MM }_{\mU \rightarrow \mU}\\
= \normFull{\II_{\imFull{\MM}} - \ZZ (\mc \MM) }_{\mU \rightarrow \mU}
\,.
\]
What we have left is to verify that kernel conditions are satisfied. The assumptions on $\mc$, together with the fact that all the left and right kernels of $\ZZ$ and $\MM$ coincide, ensure that $\ker(\ZZ\mc) = \ker(\MM)$, $\ker(\mc^\top \ZZ^\top) = \ker(\ZZ^\top)$, $\ker(\mc\MM) = \ker(\MM)$, $\ker(\MM^\top \mc^\top) = \ker(\MM^\top)$.
Therefore the matrices satisfy the kernel requirements for being approximate pseudoinverses.

For approximate preservation under change of norms (claim~\ref{part:weak-norm-change}),
the bound on $\kappa(\UUtil, \UU)$ means that there exist
$\alpha$ and $\beta$ such that
$
\alpha \mU \preceq \UUtil \preceq \beta \mU
$ 
and $\beta/\alpha \leq \kappa(\UUtil, \UU)$.
Using this, we obtain:
\begin{align*}
\normFull{ \II_{\imFull{\MM}} - \ZZ \MM }_{\UUtil \rightarrow \UUtil}
&= \max_{\vec{x}: \UUtil\vec{x}\neq 0}
	\frac{\normFull{ \left(\II_{\imFull{\MM}}- \ZZ \MM \right)\vec{x} }_{\UUtil}}{\normFull{\vec{x}}_{\UUtil}} 
\leq \max_{\vec{x}: \UU\vec{x} \neq 0}
	\frac{\beta \normFull{  \left(\II_{\im{\MM}} - \ZZ \MM\right)\vec{x} }_{\UU}}
		{\alpha \norm{\vec{x}}_{\UU}} \\
&\leq \sqrt{\kappa\left(\UUtil, \UU\right)}
	\cdot \normFull{\II_{\im{\MM}} - \ZZ \MM}_{\UU \rightarrow \UU} 
\leq \epsilon \cdot \sqrt{\kappa\left(\UUtil, \UU\right)}.
\end{align*}
\end{proof}

The preservation under right-multiplications
combined with Equation~\ref{eqn:key}
suggests that if we have a linear operator $\ZZ$
that is an approximate pseudoinverse for $\mI-\WWhat_{j}$,
we can right-multiply it by $(1 - \alpha) (\mI + \WWhat_{j-1}^{(\alpha)})$
to form an operator that is an approximate pseudoinverse for $\mI-\WWhat_{j-1}$.
This process can then be repeated down the chain,
but will lead to an accumulation of error. The following lemma bounds the amount of error accumulated after repeating this process $\Delta$ times.

\begin{lem}
\label{lem:chainproperty}

Let the sequence $\WWhat_0, \WWhat_1, \ldots \WWhat_{d}$ be a $(d,\epsilon,\alpha)$-chain as specified in Definition~\ref{defn:squarechain}, with $\epsilon \leq 1/2$ and $\alpha = 1/4$. Using the notation from Definition~\ref{defn:squarechain}, consider the matrix
\[
 \ZZbar_{i, i+\Delta}
	= \left(1 - \alpha\right)^\Delta (\mI-\WWhat_{i+\Delta})^{\dagger} \left( \II + \WWhat_{i+\Delta - 1}^{(\alpha)} \right ) \cdot \cdot \cdot \left( \II +  \WWhat_{i}^{(\alpha)} \right)\,{,}
\]
for any $i,\Delta \geq 0$. Then $\ZZbar_{i,i+\Delta}$
is an $(\exp(5\Delta) \cdot \epsilon)$-approximate pseudoinverse of $\mI-\WWhat_{i}$ with respect to $\mI-\UU_{\WWhat_i}$.
\end{lem}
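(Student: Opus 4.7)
I would prove the lemma by reverse induction on $\Delta$, peeling off one factor of $(\mI + \WWhat_k^{(\alpha)})$ at a time using the recurrence $\ZZbar_{i,i+\Delta} = (1-\alpha)\,\ZZbar_{i+1,i+\Delta}(\mI + \WWhat_i^{(\alpha)})$. The base case $\Delta = 0$ is immediate: $\ZZbar_{i,i} = (\mI-\WWhat_i)^\dagger$ is an exact pseudoinverse, and all the kernel requirements of Definition~\ref{defn:approxInv} are supplied by Definition~\ref{defn:squarechain}\ref{item:ker}. The backbone of the inductive step is the algebraic identity
\[
(1-\alpha)\bigl(\mI + \WWhat_i^{(\alpha)}\bigr)(\mI - \WWhat_i)
  = \bigl(\mI + \WWhat_i^{(\alpha)}\bigr)\bigl(\mI - \WWhat_i^{(\alpha)}\bigr)
  = \mI - (\WWhat_i^{(\alpha)})^2,
\]
which would turn an exact pseudoinverse of $\mI - (\WWhat_i^{(\alpha)})^2$ into one of $\mI-\WWhat_i$; but the chain only supplies an $\epsilon$-approximation of $\mI-(\WWhat_i^{(\alpha)})^2$ by $\mI-\WWhat_{i+1}$, so one has to track the error.

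Given the inductive hypothesis that $\ZZbar_{i+1,i+\Delta}$ is an $e^{5(\Delta-1)}\epsilon$-approximate pseudoinverse of $\mI-\WWhat_{i+1}$ with respect to $\mU_{\mI-\WWhat_{i+1}}$, I would compose four transformations. \emph{Step~1 (move the reference norm).} Lemma~\ref{lem:asym_strong_implies_undir} applied to the chain's approximation yields $\kappa\bigl(\mU_{\mI-(\WWhat_i^{(\alpha)})^2},\mU_{\mI-\WWhat_{i+1}}\bigr) \leq (1+\epsilon)/(1-\epsilon) \leq 3$, so Lemma~\ref{lem:approxInv-composition}\ref{part:weak-norm-change} upgrades the hypothesis to a $\sqrt{3}\,e^{5(\Delta-1)}\epsilon$-approximate pseudoinverse of $\mI-\WWhat_{i+1}$ with respect to $\mU_{\mI-(\WWhat_i^{(\alpha)})^2}$. \emph{Step~2 (triangle inequality).} Lemma~\ref{thm:spectral-to-approxInv} shows that $(\mI-\WWhat_{i+1})^\dagger$ is a $2\epsilon$-approximate pseudoinverse of $\mI-(\WWhat_i^{(\alpha)})^2$ in the same norm; merging with Step~1 through Lemma~\ref{lem:approxInvBasic} gives that $\ZZbar_{i+1,i+\Delta}$ itself is an approximate pseudoinverse of $\mI-(\WWhat_i^{(\alpha)})^2$ with error at most $2\epsilon + (1+2\epsilon)\sqrt{3}\,e^{5(\Delta-1)}\epsilon \leq 2\epsilon + 2\sqrt{3}\,e^{5(\Delta-1)}\epsilon$, using $\epsilon\leq 1/2$. \emph{Step~3 (right-multiplication).} Apply Lemma~\ref{lem:approxInv-composition}\ref{part:approxInv-multiply} with $\mc=(1-\alpha)(\mI+\WWhat_i^{(\alpha)})$ and the identity above; the kernel-invariance hypothesis is automatic because $\|\WWhat_i^{(\alpha)}\|_2 \leq \alpha + (1-\alpha)\cdot 1 = 1$ precludes $-1$ as an eigenvalue of $\WWhat_i^{(\alpha)}$, making $\mc$ invertible. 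The conclusion is that $\ZZbar_{i+1,i+\Delta}\cdot \mc = \ZZbar_{i,i+\Delta}$ is an approximate pseudoinverse of $\mI-\WWhat_i$ with the same error, now with respect to $\mU_{\mI-(\WWhat_i^{(\alpha)})^2}$. \emph{Step~4 (move the reference norm back).} Using $\mI-\mU_{\WWhat_i^{(\alpha)}} = (1-\alpha)\mU_{\mI-\WWhat_i}$ together with Lemma~\ref{lem:square_condition_number} yields $\kappa\bigl(\mU_{\mI-\WWhat_i},\mU_{\mI-(\WWhat_i^{(\alpha)})^2}\bigr) \leq (2-\alpha)/\alpha = 7$ at $\alpha=1/4$, so one more invocation of Lemma~\ref{lem:approxInv-composition}\ref{part:weak-norm-change} costs a factor $\sqrt{7}$.

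Concatenating the four factors gives the per-step bound $\sqrt{7}\bigl(2\epsilon + 2\sqrt{3}\,e^{5(\Delta-1)}\epsilon\bigr) = \bigl(2\sqrt{7} + 2\sqrt{21}\,e^{5(\Delta-1)}\bigr)\epsilon$, which is at most $e^{5\Delta}\epsilon$ for every $\Delta\geq 1$ since $(2\sqrt{7}+2\sqrt{21})/e^5 < 1/10$; this closes the induction. The main obstacle is not any single estimate but bookkeeping: every approximate-pseudoinverse statement comes tagged with a PSD reference matrix, and the four steps must be threaded precisely through the intermediate norm $\mU_{\mI-(\WWhat_i^{(\alpha)})^2}$ in order for the triangle inequality and the right-multiplication to compose legally. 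The exponent $5$ is chosen with slack --- it only needs to dominate $\log(2\sqrt{21}) \approx 2.2$ plus a negligible correction --- so the $\sqrt{3}$ and $\sqrt{7}$ factors from the two norm changes and the $2\epsilon$ overhead from Lemma~\ref{thm:spectral-to-approxInv} all fit comfortably.
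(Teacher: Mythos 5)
Your proposal is correct and follows essentially the same route as the paper's proof: the same induction on $\Delta$ with the same ingredients (Lemma~\ref{thm:spectral-to-approxInv}, the triangle inequality of Lemma~\ref{lem:approxInvBasic}, the right-multiplication and norm-change properties of Lemma~\ref{lem:approxInv-composition}, and Lemma~\ref{lem:square_condition_number}); the only difference is bookkeeping, since you thread the middle steps through the intermediate norm $\mI-\mU_{(\WWhat_i^{(\alpha)})^2}$ at costs $\sqrt{3}$ and $\sqrt{7}$, whereas the paper moves everything directly into $\mI-\mU_{\WWhat_i}$ via the chain property $\kappa \leq 21 = 3\cdot 7$ -- the same arithmetic, and your constant is even slightly tighter. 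One small nit: excluding $-1$ as an eigenvalue of $\WWhat_i^{(\alpha)}$ does not follow from $\norm{\WWhat_i^{(\alpha)}}_2 \leq 1$ alone, but from the shift $\WWhat_i^{(\alpha)} = \alpha\mI + (1-\alpha)\WWhat_i$ with $\alpha>0$, which confines its spectrum to the disk of radius $1-\alpha$ centered at $\alpha$ (and one also needs, trivially, that $\mI+\WWhat_i^{(\alpha)}$ maps $\ker(\mI-\WWhat_i)$ into itself).
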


\begin{proof}
We will prove this by induction on $\Delta$.
The base case $\Delta = 0$ follows immediately, since $\ZZbar_{i,i} = (\mI-\WWhat_i)^\dag$, which is a $0$-approximate pseudoinverse of $\mI-\WWhat_i$ with respect to $\mI-\mU_{\WWhat_i}$.

For the induction step, let us assume
that the claim is true for $\Delta-1$. Therefore, the matrix
\[
\ZZbar_{i + 1, i+\Delta}
= \left(1 - \alpha\right)^{\Delta - 1} (\mI-\WWhat_{i+\Delta})^{\dagger}
\left ( \II + \WWhat_{i+\Delta - 1}^{(\alpha)} \right ) \cdot \cdot \cdot \left ( \II +  \WWhat_{i + 1}^{(\alpha)} \right)
\]
is an $(\exp(5(\Delta - 1))\cdot \epsilon)$-approximate
pseudoinverse of $ \mI-\WWhat_{i + 1}$ with respect to $\mI-\UU_{\WWhat_{i + 1}}$.

From Lemma~\ref{lem:buildChain} we see that for our choice of $\alpha=1/4$, we have  $\kappa(\mI-\UU_{\WWhat_{i + 1}}, \mI-\UU_{\WWhat_i}) \leq 21$. Since the matrices $\mI-\mU_{\WWhat_i}$ and $\mI-\mU_{\WWhat_{i+1}}$ have the same kernel, we can use the bound on their relative condition number with Lemma~\ref{part:approxInv-multiply}, part~\ref{part:weak-norm-change} to obtain that
 $\ZZbar_{i + 1, i+\Delta}$ is also a
$(\sqrt{21} \exp(5(\Delta - 1))\cdot \epsilon)$-approximate
pseudoinverse of $\mI- \WWhat_{i + 1}$ with respect to $\mI-\UU_{\WWhat_i}$.

By definition, we have that $\mI-\WWhat_{i+1}$ is an $\epsilon$-approximation of $\mI-(\WWhat_i^{(\alpha)})^2$. Therefore, by Lemma~\ref{thm:spectral-to-approxInv}, we know that $(\mI-\WWhat_{i+1})^\dag$ is a $2\epsilon$-approximate pseudoinverse of $\mI-(\WWhat_i^{(\alpha)})^2$ with respect to $\mI-\mU_{(\WWhat_i^{(\alpha)})^2}$. In order to change norms, we use Lemma~\ref{lem:square_condition_number}, which gives us that 
\[
\kappa\left(\mI-\mU_{\left(\WWhat_i^{(\alpha)}\right)^2}, (1-\alpha)\left(\mI-\mU_{\WWhat_i}\right)\right) 
= \kappa\left(\mI-\mU_{\left(\WWhat_i^{(\alpha)}\right)^2}, \mI-\mU_{\WWhat_i^{(\alpha)}}\right) \leq \frac{4-2\alpha}{2\alpha}\,{,}
\]
and therefore
\[
\kappa\left(\mI-\mU_{\left(\WWhat_i^{(\alpha)}\right)^2}, \mI-\mU_{\WWhat_i}\right) \leq \frac{4-2\alpha}{2\alpha(1-\alpha)} \leq \frac{28}{3}\,{.}
\]
Therefore, using Lemma~\ref{part:weak-norm-change}, and since $\sqrt{28/3}\cdot 2 \leq 7$, we obtain that $(\mI-\WWhat_{i+1})^\dag$ is a $7\epsilon$-approximate pseudoinverse of $\mI-(\WWhat_i^{(\alpha)})^2$ with respect to $\mI-\mU_{\WWhat_i}$.
Combining these two results via the triangle inequality for approximate pseudoinverses (Lemma~\ref{lem:approxInvBasic}), we obtain that
$\ZZbar_{i+1,i+\Delta}$ is an $\epsilon'$-approximate pseudoinverse of $\mI-(\WWhat_i^{(\alpha)})^2$ with respect to $\mI-\mU_{\WWhat_i}$, where
\[
\epsilon' = \sqrt{21}\exp(5(\Delta-1)) \epsilon + 7\epsilon + \sqrt{21}\exp(5(\Delta-1)) \epsilon \cdot 7\epsilon \leq 50 \exp(5(\Delta-1))\epsilon\,{.}
\]
Equivalently, by writing $\mI-(\WWhat_i^{(\alpha)})^2 = (\mI+\WWhat_i^{(\alpha)})(\mI-\WWhat_i^{(\alpha)})=(1-\alpha)(\mI+\WWhat_i^{(\alpha)})\cdot(\mI-\WWhat_i)$, and applying the composition under multiplication property from Lemma~\ref{lem:approxInv-composition} 
Part~\ref{part:approxInv-multiply}, we obtain that 
$\ZZbar_{i+1,i+\Delta} \cdot (1-\alpha)(\mI+\WWhat_i^{(\alpha)})$ is an $\epsilon'$-approximate pseudoinverse of $\mI-\WWhat_i$ with respect to $\mI-\mU_{\WWhat_i}$.
Note that in order to correctly apply the lemma, we require the kernel requirement for $(1-\alpha)(\mI+\WWhat_i^{(\alpha)})$ to be satisfied, but this follows easily since all left and right kernels of the other matrices involved are identical to $\ker(\mI-\WWhat_i^{(\alpha)})$.

Finally, since $\ZZbar_{i,i+\Delta} = \ZZbar_{i+1,i+\Delta} \cdot (1-\alpha)(\mI+\WWhat_i^{(\alpha)})$, this is equivalent to saying that $\ZZbar_{i,i+\Delta}$ is an approximate pseudoinverse of $\mI-\WWhat_i$ with respect to $\mI-\mU_{\WWhat_i}$, with error bounded by 
\[
50\exp(5(\Delta - 1))\epsilon \leq \exp(5\Delta)\epsilon\,{.}
\]
\end{proof}

Note that the amount of error accumulated through this process is significantly greater than the sum of $\epsilon$'s across
the different levels of the chain.
This is because we are measuring the quality of 
the approximate inverse with respect to a matrix that may change by a constant factor at each level, rather than with respect to a fixed one.

If we only invoke Lemma~\ref{lem:chainproperty}
for the first and last matrices of the chain
($i = 0$, $j = d$), it would give an error of
$\exp(O(d))\epsilon = \poly(\kappa(\mlap)) \epsilon$,
necessitating a sparsifier accuracy that is more or
less keeping everything dense.
Instead, in our algorithms we will only invoke the
above result for $j \approx i + \sqrt{d}$.
Between such steps, we will remove the accumulated
error using the preconditioned Richardson iteration, which was described in Section~\ref{sec:richardson}.

\newcommand{\echain}{\mathrm{chain}}
\newcommand{\weps}{\widehat{\epsilon}}

\subsection{The Recursive Solver}
\label{sec:recursivesolver}

Here we combine the pseudoinverse properties of the solver chain proved in Lemma~\ref{lem:chainproperty} with the preconditioned
Richardson iteration from Lemma~\ref{lem:precon} to obtain
an almost-linear time solver for Eulerian Laplacians.
The resulting algorithm makes recursive calls on the
square-sparsifier chain.
These recursive calls can be viewed as phases.
For some moderate value of $\Delta$ which we set to
$\sqrt{\log\kappa}$, where $\kappa$ is the condition number of $\mU_{\md^{-1/2} \mlap \md^{-1/2}}$,
we utilize Lemma~\ref{lem:chainproperty} to turn an approximate
pseudoinverse of $\LL_{i + \Delta}$ into an approximate
pseudoinverse for $\LL_{i}$ with larger error.
The error accumulated in this process is then removed
via preconditioned Richardson iteration.
This iteration leads to recursive calls to $\LL_{i + \Delta}$.

The resulting algorithm is a linear operator:
its output can be viewed as multiplying the input
by a fixed matrix.
To analyze it, it is helpful to define the notion of
implicit matrices, which is a more succinct way of writing
``linear operator''-style solver statements like
those in~\cite{SpielmanTengSolver:journal} as well as
subsequent works.
An \emph{implicit matrix} is a routine that applies
a linear operator to a vector.
It's \emph{complexity} is defined as the time it
takes to run it when given a vector. 
Note that if we have a matrix explicitly given, we can
view it as an implicit matrix with complexity equal
to one plus its number of nonzero entries.

If $\ma$ is an implicit matrix, then we will use the
notation $\ma\vec{x}$ to denote $\ma(\vec{x})$.
In particular, this notation choice means we can write
$\ma(\mb(\cdot))$ as $\ma \mb$ and
$\ma(\cdot)+ \mb(\cdot)$ as $\ma + \mb$.
If we form a new implicit matrix from two (or more) explicit
matrices in either of these manners, the complexity of the
new explicit matrix is equal to the sum of the complexities
of the matrices it was formed from.

Because an implicit matrix implements a linear operator, we
are---for the purposes of analysis---free to treat it as if
it is an actual matrix and talk about things like its eigenvalues
or whether it approximates something---provided we do so with the
understanding that whenever we say such things, we are really
talking about the linear operator that the implicit matrix implements.
When possible, we will use $\ZZ$ to denote implicit
matrices that represent inverses of matrices,
and $\MM$ to represent matrices related to linear systems
that we are trying to solve.

In particular, preconditioned Richardson iteration from
Lemma~\ref{lem:precon}
can be viewed as an implicit matrix
$\textsc{PreconRichardson}(\mI- \WWhat_i, \MM, \frac{1}{2}, O(1/\Delta))$
built from $\mI-\WWhat_i$ and $\MM$, which might themselves be implicit matrices.
With this in mind we can state the function that does most of the work in our algorithm
in Figure~\ref{fig:solve}.

\begin{figure}[ht]
\begin{algbox}
$\textsc{Solve}((\AAcal_{i} \ldots \AAcal_{d}), \widehat{\lambda}, \epsilon)$ 

\textbf{Input:} Matrices $\AAcal_{i} \ldots \AAcal_{d}$
forming a subsequence of a $(d,\weps,1/4)$-chain corresponding to an \\
\makebox[1.22cm]{} Eulerian Laplacian $\mlap=\md-\ma^\top$, a lower bound $\widehat{\lambda}$ on $\lambdanonzero(\md^{-1/2}\mlap\md^{-1/2})$, accuracy $\epsilon$. \\
\textbf{Output:} Implicit matrix that is an $\epsilon$-approximate pseudoinverse of
$\mI-\WWhat_i$ with respect \\
\makebox[1.22cm]{} to $\mI-\mU_{\WWhat_i}$.
\begin{enumerate}
\item If $i = d$, 
\begin{enumerate}
\item $\ell \leftarrow \min\{1/4, 1.125^d \cdot 0.9 \cdot \widehat{\lambda} \}$.
\item Return
	$\textsc{PreconRichardson}(\II - \AAcal_{d}, \frac{\ell}{4}\mI_{\im{\mI-\WWhat_d}},1, \frac{8}{\ell^2} \log(1 / \epsilon) )$.
\end{enumerate}
\item  $\Delta \leftarrow \min\{ \sqrt{d\log d}, d-i\}$.

\item $\ZZtil_i \leftarrow  {\left(1-\alpha\right)^{\Delta}}
        \cdot \textsc{Solve}(\AAcal_{i+\Delta} \ldots \AAcal_{d}, \widehat{\lambda}, \exp(-5\Delta)/30)
        \cdot (\II + \AAcal_{i+\Delta-1}^{(1/4)})
		\cdots  (\II + \AAcal_{i}^{(1/4)})$.
\item Return $\textsc{PreconRichardson}( \mI-\WWhat_i, \ZZtil_i, 1, \log(1 / \epsilon) )$.
\end{enumerate}
\end{algbox}
\caption{Algorithm that produces a matrix polynomial
that produces an $\epsilon$-approximate pseudoinverse of $\mI-\WWhat_i$ with respect to $\mI-\mU_{\WWhat_i}$, using the global solver chain
constructed via \textsc{BuildChain}.}
\label{fig:solve}
\end{figure}

We now show that given a  square-sparsifier chain
and access to an implicit matrix that is an approximate pseudoinverse
of $\mI-\WWhat_{d}$ with respect to $\mI-\mU_{\WWhat_d}$, we can efficiently compute
an implicit matrix $\ZZ_{0}$ which is an approximate pseudoinverse of
$\mI-\WWhat_0$ with respect to $\mI-\mU_{\WWhat_0}$. 

This done by invoking the transformations of approximate
pseudoinverses in Lemma~\ref{lem:chainproperty}, but also
swapping out the exact $(\mI-\WWhat_{i})^{\dag}$ with
an operator which is an approximate pseudoinverse for it.

We first provide a helper lemma, which shows that error does not increase between recursive calls to the $\textsc{Solve}$ routine.

\begin{lem}\label{lem:chain_sequence}
Let $\WWhat_0, \WWhat_1, \ldots, \WWhat_d$ be a $(d,\weps,1/4)$-chain. Let $0 \leq i < d$, and let $\Delta = \min\{\sqrt{d\log d}, d-i\}$. Suppose that $\weps \leq \exp(-5\Delta)/30$, and that for any $\epsilon_\Delta \leq \exp(-5\Delta)/30$, calling the routine $\textsc{Solve}((\WWhat_{i+\Delta}, \ldots, \WWhat_d), \widehat{\lambda}, \epsilon_{\Delta})$ returns an implicit matrix $\ZZ_{i+\Delta}$ which is an $\epsilon_{\Delta}$-approximate pseudoinverse of $\mI-\WWhat_{i+\Delta}$ with respect to $\mI-\mU_{\WWhat_{i+\Delta}}$. Then, calling the routine $\textsc{Solve}((\WWhat_i, \ldots, \WWhat_d), \widehat{\lambda}, \epsilon)$ returns an implicit matrix $\ZZ_i$ which is an $\epsilon$-approximate pseudoinverse of $\mI-\WWhat_i$ with respect to $\mI-\mU_{\WWhat_i}$.
\end{lem}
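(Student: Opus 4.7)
The plan is to show that the matrix $\ZZtil_i$ constructed in the penultimate step of $\textsc{Solve}$ is a $c$-approximate pseudoinverse of $\mI - \WWhat_i$ with respect to $\mI - \mU_{\WWhat_i}$ for some fixed constant $c \leq 1/e$. Granting this, the final line of $\textsc{Solve}$ runs $\log(1/\epsilon)$ iterations of preconditioned Richardson with preconditioner $\ZZtil_i$, and Lemma~\ref{lem:precon} then yields that the returned implicit matrix $\ZZ_i$ is a $c^{\log(1/\epsilon)} \leq \epsilon$-approximate pseudoinverse, which is the desired conclusion.

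To bound the quality of $\ZZtil_i$, I would adapt the inductive argument used in the proof of Lemma~\ref{lem:chainproperty}, differing only in the base case. Concretely, define
\[
\ZZtil_{j, i+\Delta} \defeq (1-\alpha)^{i+\Delta-j} \ZZ_{i+\Delta} \prod_{k=i+\Delta-1}^{j} \left(\mI + \WWhat_k^{(\alpha)}\right),
\]
so that $\ZZtil_{i+\Delta, i+\Delta} = \ZZ_{i+\Delta}$ and $\ZZtil_{i, i+\Delta} = \ZZtil_i$, and induct on $k = i+\Delta - j$, proving that $\ZZtil_{j, i+\Delta}$ is an $e_k$-approximate pseudoinverse of $\mI - \WWhat_j$ with respect to $\mI - \mU_{\WWhat_j}$. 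The base case $k = 0$ is exactly the hypothesis: $\ZZ_{i+\Delta}$ is an $e_0 = \epsilon_\Delta \leq \exp(-5\Delta)/30$-approximate pseudoinverse. The inductive step mirrors the one in Lemma~\ref{lem:chainproperty}: change norms from $\mI - \mU_{\WWhat_{j+1}}$ to $\mI - \mU_{\WWhat_j}$ via Lemma~\ref{lem:approxInv-composition}, part~\ref{part:weak-norm-change}, using the constant $\kappa \leq 21$ from Lemma~\ref{lem:buildChain}; combine with the $2\weps$-approximation of $\mI - (\WWhat_j^{(\alpha)})^2$ by $(\mI - \WWhat_{j+1})^\dag$ (Lemma~\ref{thm:spectral-to-approxInv}) via the triangle inequality for approximate pseudoinverses (Lemma~\ref{lem:approxInvBasic}); and close out by right multiplication by $(1-\alpha)(\mI + \WWhat_j^{(\alpha)})$ using Lemma~\ref{lem:approxInv-composition}, part~\ref{part:approxInv-multiply}. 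Since the induction differs from Lemma~\ref{lem:chainproperty} only by the initial value $e_0 = \epsilon_\Delta$ in place of $0$, the same recurrence $e_k \leq \sqrt{21}\, e_{k-1} + 7\weps + 7\sqrt{21}\weps\, e_{k-1}$ unrolls to $e_\Delta \leq \exp(5\Delta)(\epsilon_\Delta + \weps) \leq 2/30 < 1/e$.

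The main obstacle I anticipate is careful bookkeeping of constants: the choice $\epsilon_\Delta = \weps = \exp(-5\Delta)/30$ in the hypotheses and in the algorithm is calibrated precisely so that the $\exp(5\Delta)$ multiplicative blow-up across the $\Delta$ levels of composition is absorbed into a universal constant strictly below $1/e$, which is exactly the threshold needed for $\log(1/\epsilon)$ Richardson iterations to produce an $\epsilon$-approximate pseudoinverse. A slightly weaker coefficient anywhere in the chain of applications of Lemmas~\ref{lem:approxInvBasic}, \ref{thm:spectral-to-approxInv}, \ref{lem:approxInv-composition}, and \ref{lem:buildChain} would break the margin, so this numerical step demands care. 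A secondary technical issue is verifying that all kernel conditions required by the composition and change-of-norm lemmas propagate through the induction; this ultimately follows from item~\ref{item:ker} of Definition~\ref{defn:squarechain}, together with the observation that each $\mI + \WWhat_j^{(\alpha)}$ acts invertibly on $\ker(\mI - \WWhat_j)^\perp$ while mapping $\ker(\mI - \WWhat_j)$ to itself. Finally, since preconditioned Richardson is a linear operator in its right-hand side (Lemma~\ref{lem:precond_richardson}), the returned $\ZZ_i$ is indeed an implicit matrix, as required so that it may itself serve as a preconditioner one level up the recursion.
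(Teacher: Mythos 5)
Your proposal is correct, and it reaches the same final structure as the paper (bound the quality of $\ZZtil_i$ by a fixed constant below $1/e$, then boost with $\log(1/\epsilon)$ steps of preconditioned Richardson via Lemma~\ref{lem:precon}), but it handles the inexactness of the recursive solve differently. The paper keeps Lemma~\ref{lem:chainproperty} as a black box: it first concludes that the operator $\ZZbar_i$ built from the \emph{exact} $(\mI-\WWhat_{i+\Delta})^{\dagger}$ is a $(1/30)$-approximate pseudoinverse of $\mI-\WWhat_i$, then splices in the recursively computed $\ZZ_{i+\Delta}$ by (i) one norm change across all $\Delta$ levels, using $\kappa(\mI-\mU_{\WWhat_i},\mI-\mU_{\WWhat_{i+\Delta}})\leq 21^{\Delta}$ together with $\epsilon_{\Delta}\leq \exp(-5\Delta)/30$ so that $\sqrt{21^{\Delta}}\,\epsilon_{\Delta}\leq 1/30$, (ii) the right-multiplication property to reinterpret $\ZZbar_i$ as saying $(\mI-\WWhat_{i+\Delta})^{\dagger}$ approximately inverts $\MM(\mI-\WWhat_i)$, (iii) the triangle inequality of Lemma~\ref{lem:approxInvBasic}, and (iv) right-multiplication again, arriving at a $(1/10)$-approximate pseudoinverse. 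You instead re-run the induction underlying Lemma~\ref{lem:chainproperty} with a nonzero base case $e_0=\epsilon_{\Delta}$, propagating both error sources level by level through the recurrence $e_k\leq\sqrt{21}\,e_{k-1}+7\weps+7\sqrt{21}\,\weps\,e_{k-1}$, which indeed unrolls to $e_{\Delta}\leq\exp(5\Delta)(\epsilon_{\Delta}+\weps)\leq 1/15$ under the stated hypotheses. Your route is slightly more self-contained and avoids both the $\sqrt{21^{\Delta}}$ norm change and the reinterpretation maneuver, at the cost of re-deriving (a mild generalization of) Lemma~\ref{lem:chainproperty} rather than citing it; the paper's route is more modular and reuses the lemma verbatim. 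Your kernel bookkeeping (each $\mI+\WWhat_j^{(1/4)}$ is nonsingular and preserves the common kernel and cokernel from item~\ref{item:ker} of Definition~\ref{defn:squarechain}) is exactly what is needed for Lemmas~\ref{lem:approxInvBasic} and \ref{lem:approxInv-composition}, and your constants clear the $1/e$ threshold with room to spare, so the argument goes through.
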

\begin{proof}
First we notice that by Lemma~\ref{lem:buildChain} part~\ref{item:prop1}, $\kappa(\mI-\mU_{\WWhat_i}, \mI-\mU_{\WWhat_i +\Delta}) \leq 21^\Delta$. Therefore, by the norm change property from Lemma~\ref{lem:approxInv-composition} part~\ref{part:weak-norm-change}, and our hypothesis, we obtain that $\ZZ_{i+\Delta}$ is an $(\sqrt{21^\Delta}\cdot \epsilon_\Delta)$- and therefore also a $(1/30)$-approximate pseudoinverse of $\mI-\WWhat_{i+\Delta}$  with respect to $\mI-\mU_{\WWhat_i}$.

On the other hand, the error propagation down the chain, bounded in Lemma~\ref{lem:chainproperty} gives that the matrix
\[
\ZZbar_i
=
{(1-1/4)^\Delta} (\mI- \WWhat_{i + \Delta})^{\dagger}
\left( \II +  \WWhat_{i + \Delta - 1}^{(1/4)} \right) \ldots \left( \II +  \WWhat_{i}^{(1/4)} \right) 
\]
is an $(\exp(5 \Delta) \widehat{\epsilon})$- and by the bound on $\weps$ also a $(1/30)$-approximate pseudoinverse of $\mI-\WWhat_{i}$ with
respect to $\mI-\UU_{\WWhat_i}$.

Using these two facts, we can now show that the implicit matrix
\[
\ZZtil_i = {(1-1/4)^\Delta} \ZZ_{i + \Delta}
\left( \II +  \WWhat_{i + \Delta - 1}^{(1/4)} \right) \ldots \left( \II +  \WWhat_{i}^{(1/4)} \right) 
\]
is an $1/2$-approximate pseudoinverse of $\mI-\WWhat_{i}$ with
respect to $\mI-\UU_{\WWhat_i}$. 

This can be easily seen by applying the properties of approximate pseudoinverses we proved in Section~\ref{sec:approxInverse}. Letting $\MM = (1-1/4)^\Delta ( \II +  \WWhat_{i + \Delta - 1}^{(1/4)} ) \ldots ( \II +  \WWhat_{i}^{(1/4)} ) $, and applying Lemma~\ref{part:approxInv-multiply} part~\ref{part:weak-norm-change} we obtain that $(\mI-\WWhat_{i+\Delta})^\dag$ is a $(1/30)$-approximate pseudoinverse of $\MM (\mI-\WWhat_i)$ with respect to $\mI-\mU_{\WWhat_i}$.
Applying the triangle inequality from Lemma~\ref{lem:approxInvBasic} we then obtain that $\ZZ_{i+\Delta}$ is a $(1/10)$-approximate pseudoinverse of $\MM(\mI-\WWhat_i)$ with respect to $\mI-\mU_{\WWhat_i}$. Finally, applying Lemma~\ref{part:approxInv-multiply} part~\ref{part:weak-norm-change} again, we obtain that $\ZZtil_i = \ZZ_{i+\Delta} \MM$ is a $(1/10)$-approximate pseudoinverse of $\mI-\WWhat_i$ with respect to $\mI-\mU_{\WWhat_i}$.

Finally, the guarantee on the output matrix $\ZZ_i = \textsc{PreconRichardson}( \LL_i, \ZZtil_{i},
1, \log(1 / \epsilon))$
then follows from Lemma~\ref{lem:precon}.
\end{proof}

Given this, we can now analyze the quality of the implicit matrix produced by calling $\textsc{Solve}$ on the entire square sparsification chain.
\begin{lem} \label{lem:chainerror}
Given a $(d,\weps,1/4)$-square-sparsifier chain 
$\AAcal_0, \AAcal_1, \ldots \AAcal_d$ constructed for a Laplacian $\mlap = \md-\ma^{\top}$, with $\weps \leq \exp(-5\Delta)/30$,
calling the  routine $\textsc{Solve}((\AAcal_0 \ldots \AAcal_d), \widehat{\lambda}, \epsilon)$, where $\epsilon \leq \exp(-5\Delta)/30$,
returns an implicit matrix $\ZZ$ which is
an $\epsilon$-approximate pseudoinverse of $\mI-\WWhat_0$ with respect to $\mI-\mU_{\WWhat_0}$. 
\end{lem}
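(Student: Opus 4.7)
The natural approach is backward induction on the starting index $i \in \{0, 1, \ldots, d\}$, proving the stronger claim: for every $0 \le i \le d$ and every accuracy parameter $\epsilon' \le \exp(-5\Delta)/30$, the call $\textsc{Solve}((\AAcal_i, \ldots, \AAcal_d), \widehat{\lambda}, \epsilon')$ returns an implicit matrix that is an $\epsilon'$-approximate pseudoinverse of $\mI - \WWhat_i$ with respect to $\mI - \mU_{\WWhat_i}$. The desired lemma is then the $i = 0$ instance. The inductive step from $i + \Delta$ to $i$ (where $\Delta = \min\{\sqrt{d\log d}, d - i\}$) is already essentially packaged in Lemma~\ref{lem:chain_sequence}: given the inductive hypothesis, its conclusion is exactly what we need, and the condition $\weps \le \exp(-5\Delta)/30$ in the hypothesis of this lemma matches the global assumption. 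So the only real work is to verify the base case $i = d$ and to check that the $\Delta$ used in the recursive call (which may differ at different levels) is always at most the global $\Delta = \sqrt{d \log d}$, so that the error hypothesis of Lemma~\ref{lem:chain_sequence} is satisfied uniformly.

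\noindent\textbf{Base case.} For $i = d$, the algorithm returns the output of $N = (8/\ell^2)\log(1/\epsilon')$ steps of unpreconditioned Richardson on $\mI - \WWhat_d$ with step size $\eta = \ell/4$, where $\ell = \min\{1/4, 1.125^d \cdot 0.9 \cdot \widehat{\lambda}\}$. Using Lemma~\ref{lem:chain_construction} (which gives that $\mI - \WWhat_0$ is a $(1/20)$-approximation of $\md^{-1/2} \mlap \md^{-1/2}$) together with Lemma~\ref{lem:asym_strong_implies_undir}, we get $\lambdanonzero(\mI - \mU_{\WWhat_0}) \ge (1 - 1/20)\widehat{\lambda} \ge 0.9\widehat{\lambda}$. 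Feeding this into Lemma~\ref{lem:buildChain} part~\ref{item:prop2} (with $\alpha = 1/4$ and $\weps \le 1/10$, giving $(1-\weps)\cdot 1.25 \ge 1.125$) yields $\lambdanonzero(\mI - \mU_{\WWhat_d}) \ge \ell$. Combined with $\|\mI - \WWhat_d\|_2 \le 2$ (from $\|\WWhat_d\|_2 \le 1$), we verify $\eta = \ell/4 \le \lambdanonzero(\mU_{\mI-\WWhat_d})/\|\mI - \WWhat_d\|_2^2$, so Lemma~\ref{lem:solveWellConditioned} applies and produces an $\exp(-N\eta\lambdanonzero/2) = \exp(-N\ell^2/8) \le \epsilon'$-approximate pseudoinverse of $\mI - \WWhat_d$ with respect to $\mI$.

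\noindent\textbf{Changing the reference norm in the base case.} Since we require the approximation to be with respect to $\mI - \mU_{\WWhat_d}$ rather than $\mI$, I would finish the base case by invoking the norm-change property (Lemma~\ref{lem:approxInv-composition}, part~\ref{part:weak-norm-change}). The relative condition number $\kappa(\mI - \mU_{\WWhat_d}, \mI_{\im{\mI - \WWhat_d}})$ is at most $2/\ell$ by the eigenvalue bounds above, so switching norms inflates the error by a constant factor $\sqrt{2/\ell} \le O(1)$. To absorb this constant cleanly, I would simply run a constant number of additional Richardson iterations in the base case (equivalently, set the iteration count to $(c/\ell^2)\log(1/\epsilon')$ for a slightly larger constant $c$), so that the final base-case output is an honest $\epsilon'$-approximate pseudoinverse of $\mI - \WWhat_d$ with respect to $\mI - \mU_{\WWhat_d}$.

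\noindent\textbf{Main obstacle and wrap-up.} With the base case in hand, the induction goes through by direct application of Lemma~\ref{lem:chain_sequence} at each recursive step, since the routine always recurses with accuracy $\exp(-5\Delta)/30$ (which satisfies the lemma's precondition by the global bound on $\weps$), and at every level $i < d$ the local value of $\Delta$ is at most the global $\sqrt{d \log d}$. The main technical obstacle I anticipate is the careful bookkeeping in the base case: verifying the quantitative eigenvalue lower bound via Lemma~\ref{lem:buildChain}, ensuring the Richardson step-size condition, and handling the constant-factor loss from the change of reference norm without breaking the recursive-error accounting. Once that is done, the inductive step is essentially a direct appeal to Lemma~\ref{lem:chain_sequence}.
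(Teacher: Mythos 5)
Your proposal follows essentially the same route as the paper: backward induction on the starting index with Lemma~\ref{lem:chain_sequence} supplying the inductive step, and the base case $i=d$ handled by combining the eigenvalue bound from Lemma~\ref{lem:buildChain}, the approximation guarantee of Lemma~\ref{lem:chain_construction} with Lemma~\ref{lem:asym_strong_implies_undir}, the bound $\norm{\mI-\WWhat_d}_2\leq 2$, and Lemma~\ref{lem:solveWellConditioned}. Your extra step converting the base-case guarantee from the reference matrix $\mI$ to $\mI-\mU_{\WWhat_d}$ via Lemma~\ref{lem:approxInv-composition} is a careful refinement of a point the paper's write-up passes over silently (just note that the resulting $\sqrt{2/\ell}$ factor is only a constant when $\ell$ is bounded below by a constant, as in the algorithm's intended setting $d=\Theta(\log(1/\widehat{\lambda}))$), but it does not change the overall argument.
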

\begin{proof}
The proof relies on Lemma~\ref{lem:chain_sequence}, and follows from induction on the depth of the call to the \textsc{Solve} routine. 

The base case is $i = d$, for which we prove that 
the operator 
\[
\ZZ_{d}
= \textsc{PreconRichardson}\left(\mI-\WWhat_{d}, \frac{\ell}{4}\II_{\im{\mI-\WWhat_d}}, 1, \frac{8}{\ell^2}\log\left(1 / \epsilon\right)\right)
\]
is an $\epsilon$-approximate pseudoinverse of $\mI-\WWhat_d$
with respect to $\mI-\mU_{\WWhat_d}$.
In order to do so, we show that the input parameters fulfill the requirements for applying Lemma~\ref{lem:solveWellConditioned}.

First we notice that by Lemma~\ref{lem:buildChain}, we have $\lambdanonzero(\mI- \mU_{\WWhat_d}  ) \geq \min\{1/4, \lambdanonzero(\mI - \mU_{\WWhat_0})\cdot 1.125^d \}$. Also, from Lemma~\ref{lem:chain_construction}, we know that $\mI-\WWhat_0$ is a $(1/10)$-approximation of $\md^{-1/2} \mlap \md^{-1/2}$. Therefore by Lemma~\ref{lem:asym_strong_implies_undir} we know that $\lambdanonzero(\mI-\mU_{\WWhat_0}) \geq 9/10 \cdot \lambdanonzero(\mU_{\md^{-1/2}\mlap \md^{-1/2}})\geq 9/10 \cdot \widehat{\lambda}$. Hence $\lambdanonzero(\mI-\mU_{\WWhat_d}) \geq \min\{1/4, 1.125^d \cdot 0.9 \cdot \widehat{\lambda} \} = \ell$.

By  Lemma~\ref{lem:normalized_lap_properties} we have $\norm{\mI-\WWhat_d}_2 \leq 2$, and therefore, applying Lemma~\ref{lem:solveWellConditioned}, we obtain that
 $\textsc{PreconRichardson}(\mI-\WWhat_d, \frac{\ell}{4}\cdot\mI_{\im{\mI-\WWhat_d}}, 1, \frac{8}{\ell^2} \log(1/\epsilon))$ returns an implicit matrix $\ZZ$ that is an $\epsilon$-approximate pseudoinverse of $\mI-\WWhat_d$ with respect to $\mI-\mU_{\WWhat_d}$.
 
Now for the induction step, suppose that the induction hypothesis holds for $i+\Delta$. Then, by Lemma~\ref{lem:chain_sequence}, the matrix  produced when calling the chain starting at $i$, $\ZZ_i$, is an $\epsilon$-approximate pseudoinverse of $\mI-\WWhat_i$ with respect to $\mI-\mU_{\WWhat_i}$. Therefore, this property also holds for the matrix at the bottom of the call stack, which is what we wanted to prove.
\end{proof}

Having seen that the $\textsc{Solve}$ routine controls the accumulation of error, as it produces an approximate pseudoinverse for the first matrix in the square sparsification chain, we can now use its output as a preconditioner for the Richardson iteration, which yields our final algorithm, described in Figure~\ref{fig:solveEulerian}.

\begin{figure}[ht]
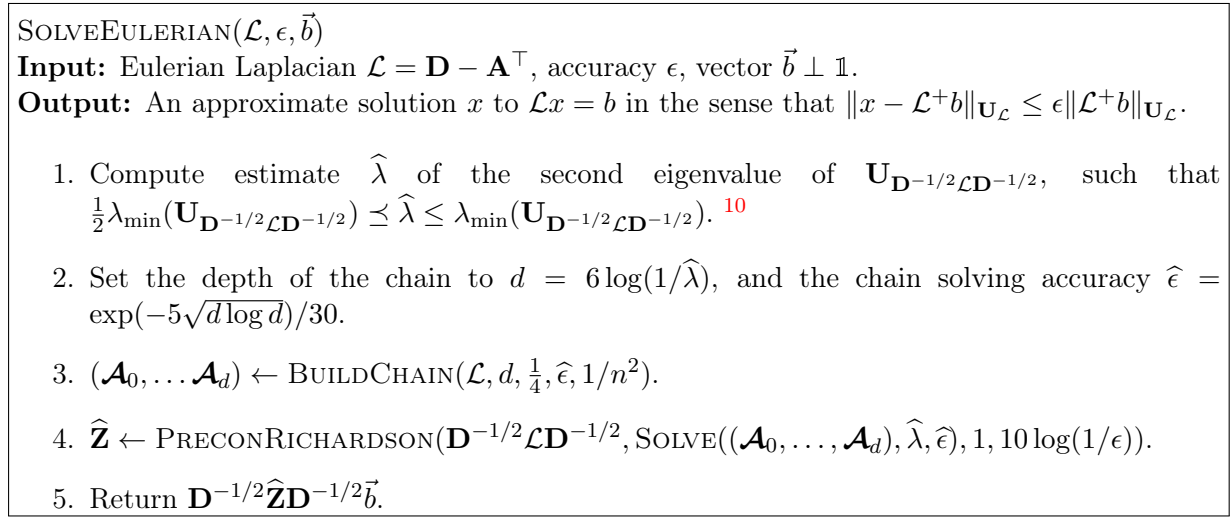

\begin{algbox}
$\textsc{SolveEulerian}(\mlap, \epsilon,\vec{b})$

\textbf{Input:} Eulerian Laplacian $\mlap = \DD - \AA^{\top}$,
	accuracy $\epsilon$, vector $\vec{b} \perp \vones$.
	
\textbf{Output:} An approximate solution $x$ to $\mlap x = b$ in the sense that $\norm{x-\mlap^\dagger b}_{\mU_{\mlap}} \leq \epsilon \norm{\mlap^\dagger b}_{\mU_{\mlap}}$.

\begin{enumerate}

\item Compute estimate $\widehat{\lambda}$ of the second eigenvalue of $\UU_{\md^{-1/2}\mlap\md^{-1/2}}$, such that $\frac{1}{2}\lambda_{\min}(\UU_{\md^{-1/2}\mlap\md^{-1/2}}) \preceq \widehat{\lambda} \leq \lambda_{\min}(\UU_{\md^{-1/2}\mlap\md^{-1/2}})$.~\footnotemark

\item Set the depth of the chain to $d = 6 \log (1/\widehat{\lambda})$, and the chain solving accuracy $\weps = \exp(-5\sqrt{d \log d})/30$.

\item $(\AAcal_0, \ldots \AAcal_d) \leftarrow
	\textsc{BuildChain}(\mlap,d, \frac{1}{4}, \weps, 1/n^2)$.
 
\item $\widehat{\ZZ} \leftarrow \textsc{PreconRichardson}( \md^{-1/2}\mlap\md^{-1/2},\textsc{Solve}((\AAcal_0, \ldots, \AAcal_{d}),\widehat{\lambda},\weps),1, 10\log(1 / \epsilon) )$.

 \item Return $\DD^{-1/2} \widehat{\ZZ} \DD^{-1/2} \vec{b}$.
\end{enumerate}
\end{algbox}
\caption{Full algorithm for solving Eulerian Laplacian systems.}
\label{fig:solveEulerian}
\end{figure}

\footnotetext{Since we know the nullspace of $\UU_{\LL}$, its minimum non-zero eigenvalue can be estimated in linear time to high accuracy  with high probability via inverse powering. See e.g. Section 7 of~\cite{SpielmanTengSolver:journal} or Chapter 8 of~\cite{Vishnoi13}.}

What we have left is to bound the running time of our solver. We do so by analyzing the recursion tree, as well as the outermost call to the preconditioned Richardson iteration. The final bound is provided in the theorem below.

\begin{thm}[Eulerian Solver Guarantee]\label{thm:eulerianGuarantee}
Given an Eulerian Laplacian $\mlap = \md-\ma^\top \in \R^{n\times n}$ with $m$ nonzero entries, and given an error parameter $0 < \epsilon \leq 1/2$, the algorithm $\textsc{SolveEulerian}(\mlap, \epsilon, \vec{b})$ returns, with probability at least $1-1/n$, an approximate solution $\vec{x}$ to $\mlap \vec{x} = \vec{b}$ in the sense that
$$\normFull{\vec{x} - \mlap^\dag \vec{b}}_{\mU_\mlap} \leq \epsilon \normFull{\mlap^\dag \vec{b}}_{\mU_\mlap}\,{.}$$
Furthermore, the total running time is 
$$\Otil \left( \left( m + n  e^{ O\left(\sqrt{\log \kappa \cdot \log \log \kappa} \right)} \right)
	\log \left(1/\epsilon \right)\right),
$$
 where $\kappa$ is the condition number of the normalized Laplacian $\md^{-1/2}\mlap \md^{-1/2}$.
\end{thm}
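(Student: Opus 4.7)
The plan is to reduce the problem to solving the normalized system $\widetilde{\mlap}\vec{y} = \md^{-1/2}\vec{b}$ with $\widetilde{\mlap} \defeq \md^{-1/2}\mlap\md^{-1/2}$, since $\vec{x} = \md^{-1/2}\widetilde{\mlap}^\dagger \md^{-1/2}\vec{b}$ and $\mU_\mlap = \md^{1/2}\mU_{\widetilde{\mlap}}\md^{1/2}$, so the $\|\cdot\|_{\mU_\mlap}$ guarantee on $\vec{x}$ is equivalent to a $\|\cdot\|_{\mU_{\widetilde{\mlap}}}$ guarantee on $\md^{1/2}\vec{x}$. Writing $\widetilde{\mlap} = \mI - \md^{-1/2}\ma^{\top}\md^{-1/2}$, our task is then to produce an $\epsilon$-approximate pseudoinverse of $\widetilde{\mlap}$ with respect to $\mU_{\widetilde{\mlap}}$ and apply it to $\md^{-1/2}\vec{b}$. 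The chain depth $d = 6\log(1/\widehat{\lambda})$ and chain-accuracy $\widehat{\epsilon} = \exp(-5\sqrt{d\log d})/30$ are chosen precisely so that the two ingredients we have at hand -- construction correctness (Lemma~\ref{lem:chain_construction}) and recursive-solver correctness (Lemma~\ref{lem:chainerror}) -- both apply.

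For correctness, I would chain together four facts. First, by Lemma~\ref{lem:chain_construction}, with probability at least $1-1/n^2$ the call $\textsc{BuildChain}$ yields a $(d,\widehat{\epsilon},1/4)$-chain $\AAcal_0,\dots,\AAcal_d$ whose first matrix $\II-\WWhat_0$ is a $(1/20)$-approximation of $\widetilde{\mlap}$. Second, by Lemma~\ref{lem:buildChain}(2), with $\Delta=\sqrt{d\log d}$ the base level has $\lambdanonzero(\mI-\mU_{\WWhat_d}) \geq \min\{1/4, 1.125^d\cdot 0.9\cdot\widehat{\lambda}\}$, which under $d=6\log(1/\widehat{\lambda})$ is simply $1/4$; this is exactly the quantity $\ell$ used to set the base-case Richardson step size so that Lemma~\ref{lem:solveWellConditioned} applies. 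Third, Lemma~\ref{lem:chainerror} then gives that the implicit operator $\ZZ$ returned by $\textsc{Solve}$ is an $\widehat{\epsilon}$-approximate pseudoinverse of $\II-\WWhat_0$ with respect to $\mI-\mU_{\WWhat_0}$, and Theorem~\ref{thm:spectral-to-approxInv} combined with Lemma~\ref{lem:approxInv-composition}(\ref{part:weak-norm-change}) (using $\kappa(\mU_{\widetilde{\mlap}}, \mI-\mU_{\WWhat_0}) \leq O(1)$ by Lemma~\ref{lem:asym_strong_implies_undir}) upgrades this into a constant-error approximate pseudoinverse of $\widetilde{\mlap}$ with respect to $\mU_{\widetilde{\mlap}}$. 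Fourth, Lemma~\ref{lem:precon} shows that the outer $O(\log(1/\epsilon))$ Richardson iterations in $\textsc{SolveEulerian}$ shrink this constant error to $\epsilon$ in the $\mU_{\widetilde{\mlap}}$-norm, which is exactly the required guarantee.

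For the runtime, the dominant work splits into two parts. The one-shot cost of $\textsc{BuildChain}$ is $\tilde{O}(m + n\widehat{\epsilon}^{-2} d)$ by Lemma~\ref{lem:chain_construction}; with $d=O(\log\kappa)$ and $\widehat{\epsilon}^{-2}=\exp(O(\sqrt{d\log d}))$, this is within the claimed bound. The recursive cost I track by letting $T(i)$ denote the time to apply the operator returned by $\textsc{Solve}((\AAcal_i,\dots,\AAcal_d),\widehat{\lambda},\exp(-5\Delta)/30)$. Each non-base call performs $O(\log(1/\widehat{\epsilon})) = O(\Delta)$ Richardson steps, each of which applies the inner operator once and performs $\Delta$ matrix-vector products on the chain sparsifiers, each costing $\tilde{O}(n\widehat{\epsilon}^{-2})$; thus
\[
T(i) \;\leq\; O(\Delta)\cdot T(i+\Delta) \;+\; \tilde{O}\bigl(\Delta^2 n \widehat{\epsilon}^{-2}\bigr).
\]
The base case similarly costs $\tilde{O}(n\widehat{\epsilon}^{-2}/\ell^2 \cdot \log(1/\widehat{\epsilon})) = \tilde{O}(n e^{O(\Delta)})$. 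Unfolding the recursion $d/\Delta$ times gives $T(0) = O(\Delta)^{d/\Delta}\cdot \tilde{O}(ne^{O(\Delta)}) = n\cdot 2^{O(\Delta + (d/\Delta)\log\Delta)}$; setting $\Delta=\sqrt{d\log d}$ balances the two exponent terms at $O(\sqrt{d\log d}) = O(\sqrt{\log\kappa\log\log\kappa})$. The outer Richardson loop multiplies this by another $O(\log(1/\epsilon))$, yielding the stated $\tilde{O}((m + n e^{O(\sqrt{\log\kappa\log\log\kappa})})\log(1/\epsilon))$ bound.

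The main obstacle I anticipate is bookkeeping the norm changes: at each recursive step the natural norm of approximation shifts from $\mI-\mU_{\WWhat_i}$ to $\mI-\mU_{\WWhat_{i+\Delta}}$, and Lemma~\ref{lem:buildChain}(\ref{item:prop1}) only guarantees a constant-factor distortion per level, so the cumulative distortion across $\Delta$ levels is $21^{\Delta/2}$. Ensuring the chain-accuracy target $\widehat{\epsilon}=\exp(-5\sqrt{d\log d})/30$ dominates this distortion -- which is exactly what Lemma~\ref{lem:chain_sequence} required -- is the delicate check, and it is the reason for the particular choice of $\Delta=\sqrt{d\log d}$ rather than anything smaller. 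The success probability $1-1/n$ then follows by union-bounding over the single $\textsc{BuildChain}$ call and the (high-probability) eigenvalue estimation of $\widehat{\lambda}$.
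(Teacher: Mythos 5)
Your proposal follows the paper's own proof essentially step for step: normalize to $\widetilde{\mlap} = \md^{-1/2}\mlap\md^{-1/2}$, invoke $\textsc{BuildChain}$ and $\textsc{Solve}$ via Lemmas~\ref{lem:chain_construction} and \ref{lem:chainerror}, upgrade to a constant-quality preconditioner of $\widetilde{\mlap}$ via Theorem~\ref{thm:spectral-to-approxInv} and the norm-change lemma, apply outer Richardson via Lemma~\ref{lem:precon}, and analyze the recursion tree with branching factor $O(\Delta)$ and depth $d/\Delta$ to get the $e^{O(\sqrt{d\log d})}$ factor. The only place where the paper works a little harder is the final bookkeeping showing $\mI_{\perp\vones}\md^{-1/2}\widetilde{\mlap}^\dagger\md^{-1/2}\vec{b} = \mlap^\dagger\vec{b}$ (since $\md^{-1/2}\widetilde{\mlap}^\dagger\md^{-1/2}$ is not literally $\mlap^\dagger$ — they differ by a component along $\vones$, which you implicitly discard by arguing in the $\|\cdot\|_{\mU_\mlap}$ norm); that step and the explicit invocation of the triangle inequality for approximate pseudoinverses (Lemma~\ref{lem:approxInvBasic}) when transferring from $\mI-\WWhat_0$ to $\widetilde{\mlap}$ are the details you would need to fill in, but your outline is sound and matches the paper's argument.
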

\begin{proof}
By Lemma~\ref{lem:chain_construction}, it takes 
\[
\Otil(m + n   \weps^{-2} d) = \Otil\left(m + n d \cdot e^{O\left(\sqrt{d \log d}\right)}\right)
\]
 time to build the square sparsification chain.

Next we analyze the cost of the recursive calls to $\textsc{Solve}$. As we can see in the description of $\textsc{Solve}$,  when invoked on $\mI-\WWhat_d$, $\textsc{PreconRichardson}$ requires $({8}/{\ell^2}) \log(1/\weps)$ iterations, where $\ell= \min\{1/4, 1.125^d \cdot 0.9 \widehat{\lambda} \}$. Therefore, whenever $d = O(\log \widehat{\lambda}^{-1})$, the base case of \textsc{Solve} requires 
\[
O\left(\frac{1}{\widehat{\lambda} e^{O(d)}} \log(1/\weps)\right ) 
= O\left( \frac{ \kappa \sqrt{d \log d}}{  e^{\Omega(d)} }\right)
\] iterations.

For the cost of invoking $\textsc{Solve}(\mI-\WWhat_0, \widehat{\lambda}, \weps)$,
note that at each recursive call, the branching factor in the recursion is
\[
O\left(\log (1/\weps) \right) = O\left(\sqrt{d \log d}\right)\,{,}
\]
due to the iterations  of preconditioned Richardson, each of them invoking the solver for a matrix from further down the chain.

Furthermore, the depth of the call stack for \textsc{Solve} (or the number of layers of the recursion tree)
is bounded by $d / \sqrt{d \log d} = \sqrt{d / \log d}$.
Therefore the total number of recursive calls made before the final solve on $\mI-\WWhat_d$ is
\[
O(\sqrt{d \log d})^{O(\sqrt{d / \log d})}
= e^{O(\sqrt{d \log{d}})}.
\]
Now, note that each of the recursive call requires one multiplication by each of the matrices in the chain. Therefore each such recursive call makes
\[
\Otil\left(n d \weps^{-2} \right) = \Otil\left(n d \cdot e^{O\left(\sqrt{d \log d}\right)}\right)
\]
work. Thus we see that the total amount of work it takes to construct and invoke the implicit matrix given by $\textsc{Solve}(\mI-\WWhat_0, \widehat{\lambda}, \weps)$
is 
\[
\Otil\left( nd \cdot e^{O(\sqrt{d \log d})}
\right) \cdot 
e^{O(\sqrt{d \log d})} \cdot
O\left(\frac{ \kappa \sqrt{d \log d}}{  e^{\Omega(d)} }\right)
=
\Otil\left(
n \kappa \cdot \frac{d}{ e^{\Omega\left(\sqrt{d / \log d}\right)}}
\right)
\,{.}
\]
Hence for our setting of $d = \Theta(\log \kappa)$, this quantity becomes
\[
\Otil\left(n e^{O(\sqrt{\log \kappa \log \log \kappa})}\right)\,{.}
\]

Finally, since by definition and Lemma~\ref{thm:spectral-to-approxInv} we know that $(\mI-\WWhat_0)^\dagger$  is a $1/10$-approximate pseudoinverse of $\md^{-1/2} \mlap \md^{-1/2}$ with respect to $\UU_{\md^{-1/2} \mlap \md^{-1/2}}$, producing the implicit matrix $\widehat{\ZZ}$
requires $O(\log(1/\epsilon))$ iterations (by  Lemma~\ref{lem:precon}), each of them requiring one multiplication by $\md^{-1/2}\mlap\md^{-1/2}$ and one call to the recursive \textsc{Solve}. Thus the total running time to construct and apply $\widehat{\ZZ}$ in \textsc{SolveEulerian} is
\[
\Otil  \left( \left ( m  +   n  e^{ O (\sqrt{\log \kappa \log \log \kappa})  } \right ) \log  \frac{1}{ \epsilon}  \right) \,{.}
\]

To analyze the solution quality, let 
\[
\vec{y} \defeq \widehat{\ZZ} \DD^{-1/2} \vec{b},
\]
and let $\vec{x} =  \DD^{-1/2} \vec{y}$ be the solution returned by the algorithm. Also, to simplify notation, let us define $\LL = \md^{-1/2} \mlap \md^{-1/2}$.
 
For any vector $\vec{v}$,
let $\mI_{\perp v}$ denote orthogonal projection projection orthogonal to $\vec{v}$.
 By Lemma~\ref{lem:precon}, we have
\begin{align*}
\normFull{\vec{y}-\LL^\dagger \md^{-1/2} \vec{b}}_{\mU_{\LL}}
&\leq \epsilon \normFull{\LL^\dagger \md^{-1/2} \vec{b}}_{\mU_{\LL}}\,{.} 
\end{align*}
Since $\mU_{\LL} = \DD^{-1/2} \mU_{\mlap} \DD^{-1/2}$,
we have that for every vector $\vec{v}$,
$\norm{\vec{v}}_{\mU_{\LL}} =   \norm{\DD^{-1/2} \vec{v}}_{\mU_{\mlap}}$.
Also, since $\ker(\mU_\mlap)=\sspan(\vones)$
we have that for every vector $\vec{v}$, 
$\norm{\vec{v}}_{\mU_{\mlap}}
=   \norm{ \mI_{\perp \vones} \DD^{-1/2} \vec{v}}_{\mU_{\mlap}}.$ Using these, the above inequality is equivalent to 
\[
\epsilon \normFull{\mI_{\perp \vones} \md^{-1/2} \LL^\dagger \md^{-1/2} \vec{b}}_{\mU_{\mlap}}
\geq \normFull{\vec{x}-\mI_{\perp \vones}\md^{-1/2}\LL^\dagger \md^{-1/2} \vec{b}}_{\mU_{\mlap}}
= \normFull{\vec{x} - \mlap^{\dagger}}_{\mU_{\mlap}},
\]
where we have used $\mI_{\perp \vones} \vec{x} = \vec{x}.$

To finish the proof, it suffices to show that
$\mI_{\perp \vones} \md^{-1/2}\LL^\dagger \md^{-1/2} \vec{b} = \mlap^\dagger \vec{b}$.
We first make the substitution $\mI_{\perp \vones} = \mlap^\dagger \mlap$
and then use $\mlap = \md^{1/2} \LL \md^{1/2}$
\[
\mI_{\perp \vones} \md^{-1/2}\LL^\dagger \md^{-1/2} \vec{b}
= \mlap^\dagger \mlap (\md^{-1/2}\LL^\dagger \md^{-1/2}) \vec{b} \\
= \mlap^\dagger \md^{1/2} \LL \LL^\dagger \md^{-1/2} \vec{b}.
\]
Since $\DD^{1/2} \vones$ is the kernel of $\LL$,
we have $\LL \LL^{\dagger}  =  \mI_{\perp \DD^{1/2} \vones}$.
By the fact that $\vec{b} \perp  \vones$, we have $\DD^{-1/2} b \perp \DD^{1/2}  \vones$,
and therefore $ \mI_{\perp \md^{1/2}  \vones} \DD^{-1/2} \vec{b}  = \DD^{-1/2} \vec{b}$.
Making these substitutions, we obtain
\begin{align*}
\mI_{\perp  \vec{1}} \md^{-1/2}\LL^\dagger \md^{-1/2} \vec{b}
&= \mlap^\dagger \vec{b}.
\end{align*}

\end{proof}

\bibliographystyle{plain}
\bibliography{ref}

\begin{thebibliography}{10}

\bibitem{AbrahamDKKP16:arxiv}
Ittai Abraham, David Durfee, Ioannis Koutis, Sebastian Krinninger, and Richard
  Peng.
\newblock On fully dynamic graph sparsifiers.
\newblock {\em CoRR}, abs/1604.02094, 2016.
\newblock Available at: http://arxiv.org/abs/1604.02094.

\bibitem{AndersenCL07}
Reid Andersen, Fan Chung, and Kevin Lang.
\newblock Local partitioning for directed graphs using pagerank.
\newblock In {\em International Workshop on Algorithms and Models for the
  Web-Graph}, pages 166--178. Springer, 2007.

\bibitem{BasharinLN04}
Gely~P Basharin, Amy~N Langville, and Valeriy~A Naumov.
\newblock The life and work of aa markov.
\newblock {\em Linear Algebra and its Applications}, 386:3--26, 2004.

\bibitem{BatsonSS12}
Joshua Batson, Daniel~A. Spielman, and Nikhil Srivastava.
\newblock Twice-{Ramanujan} sparsifiers.
\newblock {\em SIAM Journal on Computing}, 41(6):1704--1721, 2012.

\bibitem{BatsonSST13}
Joshua Batson, Daniel~A. Spielman, Nikhil Srivastava, and Shang-Hua Teng.
\newblock Spectral sparsification of graphs: theory and algorithms.
\newblock {\em Communications of the ACM}, 56(8):87--94, August 2013.

\bibitem{BenczurK96}
Andr\'{a}s~A. Bencz\'{u}r and David~R. Karger.
\newblock Approximating s-t minimum cuts in $\tilde{O}(n^2)$ time.
\newblock In {\em Proceedings of the Twenty-Eighth Annual ACM Symposium on
  Theory of computing}, STOC '96, pages 47--55, New York, NY, USA, 1996. ACM.

\bibitem{ChengCLPT15}
Dehua Cheng, Yu~Cheng, Yan Liu, Richard Peng, and Shang{-}Hua Teng.
\newblock Efficient sampling for {G}aussian graphical models via spectral
  sparsification.
\newblock {\em Proceedings of The 28th Conference on Learning Theory}, pages
  364--390, 2015.
\newblock Available at http://jmlr.org/proceedings/papers/v40/Cheng15.pdf.

\bibitem{Chung05}
Fan Chung.
\newblock Laplacians and the cheeger inequality for directed graphs.
\newblock {\em Annals of Combinatorics}, 9(1):1--19, 2005.

\bibitem{ChungS13}
Fan Chung and Olivia Simpson.
\newblock Solving linear systems with boundary conditions using heat kernel
  pagerank.
\newblock In {\em Algorithms and Models for the Web Graph - 10th International
  Workshop, {WAW} 2013, Cambridge, MA, USA, December 14-15, 2013, Proceedings},
  pages 203--219, 2013.

\bibitem{ChungS14}
Fan Chung and Olivia Simpson.
\newblock Computing heat kernel pagerank and a local clustering algorithm.
\newblock In {\em Combinatorial Algorithms - 25th International Workshop,
  {IWOCA} 2014, Duluth, MN, USA, October 15-17, 2014, Revised Selected Papers},
  pages 110--121, 2014.

\bibitem{ChungZ10}
Fan Chung and Wenbo Zhao.
\newblock A sharp pagerank algorithm with applications to edge ranking and
  graph sparsification.
\newblock In {\em International Workshop on Algorithms and Models for the
  Web-Graph}, pages 2--14. Springer, 2010.

\bibitem{cohen2016faster}
Michael~B. Cohen, Jonathan~A. Kelner, John Peebles, Richard Peng, Aaron
  Sidford, and Adrian Vladu.
\newblock Faster algorithms for computing the stationary distribution,
  simulating random walks, and more.
\newblock {\em arXiv preprint arXiv:1608.03270}, 2016.

\bibitem{CohenKMPPRX14}
Michael~B. Cohen, Rasmus Kyng, Gary~L. Miller, Jakub~W. Pachocki, Richard Peng,
  Anup Rao, and Shen~Chen Xu.
\newblock Solving {SDD} linear systems in nearly $m \log^{1/2} n$ time.
\newblock In {\em STOC}, pages 343--352, 2014.

\bibitem{DBLP:journals/corr/CohenMSV16}
Michael~B. Cohen, Aleksander Madry, Piotr Sankowski, and Adrian Vladu.
\newblock Negative-weight shortest paths and unit capacity minimum cost flow in
  {\~{o}}(m\({}^{\mbox{10/7}}\) log {W)} time.
\newblock {\em CoRR}, abs/1605.01717, 2016.

\bibitem{EneMPS16}
Alina Ene, Gary~L. Miller, Jakub Pachocki, and Aaron Sidford.
\newblock Routing under balance.
\newblock In {\em Proceedings of the 48th Annual {ACM} {SIGACT} Symposium on
  Theory of Computing, {STOC} 2016, Cambridge, MA, USA, June 18-21, 2016},
  pages 598--611, 2016.
\newblock Available at: https://arxiv.org/abs/1603.09009.

\bibitem{FungHHP11}
Wai~Shing Fung, Ramesh Hariharan, Nicholas~J.A. Harvey, and Debmalya Panigrahi.
\newblock A general framework for graph sparsification.
\newblock In {\em Proceedings of the Forty-Third Annual ACM Symposium on Theory
  of Computing}, STOC '11, pages 71--80, New York, NY, USA, 2011. ACM.
\newblock Available at http://arxiv.org/abs/1004.4080.

\bibitem{JindalK15:arxiv}
Gorav Jindal and Pavel Kolev.
\newblock An efficient parallel algorithm for spectral sparsification of
  laplacian and sddm matrix polynomials.
\newblock {\em arXiv preprint arXiv:1507.07497}, 2015.

\bibitem{karger1994random}
David~R Karger.
\newblock Random sampling in cut, flow, and network design problems.
\newblock In {\em Proceedings of the twenty-sixth annual ACM symposium on
  Theory of computing}, pages 648--657. ACM, 1994.

\bibitem{Karger00:journal}
David~R Karger.
\newblock Minimum cuts in near-linear time.
\newblock {\em Journal of the ACM (JACM)}, 47(1):46--76, 2000.

\bibitem{KargerL02}
David~R Karger and Matthew~S Levine.
\newblock Random sampling in residual graphs.
\newblock In {\em Proceedings of the thiry-fourth annual ACM symposium on
  Theory of computing}, pages 63--66. ACM, 2002.

\bibitem{KLOS14}
Jonathan~A. Kelner, Yin~Tat Lee, Lorenzo Orecchia, and Aaron Sidford.
\newblock An almost-linear-time algorithm for approximate max flow in
  undirected graphs, and its multicommodity generalizations.
\newblock In {\em Proceedings of the Twenty-Fifth Annual {ACM-SIAM} Symposium
  on Discrete Algorithms, {SODA} 2014, Portland, Oregon, USA, January 5-7,
  2014}, pages 217--226, 2014.

\bibitem{KelnerOSZ13}
Jonathan~A. Kelner, Lorenzo Orecchia, Aaron Sidford, and Zeyuan~Allen Zhu.
\newblock A simple, combinatorial algorithm for solving {SDD} systems in
  nearly-linear time.
\newblock In {\em Proceedings of the 45th Annual Symposium on Theory of
  Computing}, STOC '13, pages 911--920, New York, NY, USA, 2013. ACM.
\newblock Available at http://arxiv.org/abs/1301.6628.

\bibitem{KoutisMP10}
Ioannis Koutis, Gary~L. Miller, and Richard Peng.
\newblock Approaching optimality for solving {SDD} linear systems.
\newblock In {\em Proceedings of the 2010 IEEE 51st Annual Symposium on
  Foundations of Computer Science}, FOCS '10, pages 235--244, Washington, DC,
  USA, 2010. IEEE Computer Society.
\newblock Available at http://arxiv.org/abs/1003.2958.

\bibitem{KoutisMP11}
Ioannis Koutis, Gary~L. Miller, and Richard Peng.
\newblock A nearly-m log n time solver for {SDD} linear systems.
\newblock In {\em Proceedings of the 2011 IEEE 52nd Annual Symposium on
  Foundations of Computer Science}, FOCS '11, pages 590--598, Washington, DC,
  USA, 2011. IEEE Computer Society.
\newblock Available at http://arxiv.org/abs/1102.4842.

\bibitem{KyngLPSS16}
Rasmus Kyng, Yin~Tat Lee, Richard Peng, Sushant Sachdeva, and Daniel~A
  Spielman.
\newblock Sparsified cholesky and multigrid solvers for connection laplacians.
\newblock In {\em Proceedings of the 48th Annual ACM SIGACT Symposium on Theory
  of Computing}, pages 842--850. ACM, 2016.
\newblock Available at http://arxiv.org/abs/1512.01892.

\bibitem{KyngS16}
Rasmus Kyng and Sushant Sachdeva.
\newblock Approximate gaussian elimination for laplacians: Fast, sparse, and
  simple.
\newblock {\em CoRR}, abs/1605.02353, 2016.

\bibitem{lee2013efficient}
Yin~Tat Lee and Aaron Sidford.
\newblock Efficient accelerated coordinate descent methods and faster
  algorithms for solving linear systems.
\newblock In {\em Foundations of Computer Science (FOCS), 2013 IEEE 54th Annual
  Symposium on}, pages 147--156. IEEE, 2013.

\bibitem{LeeS14}
Yin~Tat Lee and Aaron Sidford.
\newblock Path finding methods for linear programming: Solving linear programs
  in $\tilde{O}\sqrt{rank}$ iterations and faster algorithms for maximum flow.
\newblock In {\em Foundations of Computer Science (FOCS), 2014 IEEE 55th Annual
  Symposium on}, pages 424--433. IEEE, 2014.
\newblock Available at http://arxiv.org/abs/1312.6677 and
  http://arxiv.org/abs/1312.6713.

\bibitem{LeeS15}
Yin~Tat Lee and He~Sun.
\newblock Constructing linear-sized spectral sparsification in almost-linear
  time.
\newblock In {\em Foundations of Computer Science (FOCS), 2015 IEEE 56th Annual
  Symposium on}, pages 250--269, Oct 2015.

\bibitem{DBLP:conf/focs/Madry13}
Aleksander Madry.
\newblock Navigating central path with electrical flows: From flows to
  matchings, and back.
\newblock In {\em 54th Annual {IEEE} Symposium on Foundations of Computer
  Science, {FOCS} 2013, 26-29 October, 2013, Berkeley, CA, {USA}}, pages
  253--262, 2013.

\bibitem{DBLP:journals/corr/Madry16}
Aleksander Madry.
\newblock Computing maximum flow with augmenting electrical flows.
\newblock {\em CoRR}, abs/1608.06016, 2016.

\bibitem{OrecchiaV11}
Lorenzo Orecchia and Nisheeth~K. Vishnoi.
\newblock Towards an {SDP}-based approach to spectral methods: {A}
  nearly-linear-time algorithm for graph partitioning and decomposition.
\newblock In {\em Proceedings of the Twenty-Second Annual {ACM-SIAM} Symposium
  on Discrete Algorithms, {SODA} 2011, San Francisco, California, USA, January
  23-25, 2011}, pages 532--545, 2011.

\bibitem{PengS14}
Richard Peng and Daniel~A. Spielman.
\newblock An efficient parallel solver for {SDD} linear systems.
\newblock In {\em Proceedings of the 46th Annual ACM Symposium on Theory of
  Computing}, STOC '14, pages 333--342, New York, NY, USA, 2014. ACM.
\newblock Available at http://arxiv.org/abs/1311.3286.

\bibitem{Saad03:book}
Yousef Saad.
\newblock {\em Iterative Methods for Sparse Linear Systems}.
\newblock Society for Industrial and Applied Mathematics, Philadelphia, PA,
  USA, 2nd edition, 2003.
\newblock Available at: http://www-users.cs.umn.edu/\textasciitilde
  saad/toc.pdf.

\bibitem{SpielmanS08:journal}
Daniel~A. Spielman and Nikhil Srivastava.
\newblock Graph sparsification by effective resistances.
\newblock {\em SIAM Journal on Computing}, 40(6):1913--1926, 2011.
\newblock Available at http://arxiv.org/abs/0803.0929.

\bibitem{SpielmanT11}
Daniel~A. Spielman and Shang-Hua Teng.
\newblock Spectral sparsification of graphs.
\newblock {\em SIAM Journal on Computing}, 40(4):981--1025, 2011.
\newblock Available at http://arxiv.org/abs/0808.4134.

\bibitem{SpielmanTengSolver:journal}
Daniel~A. Spielman and Shang-Hua Teng.
\newblock Nearly linear time algorithms for preconditioning and solving
  symmetric, diagonally dominant linear systems.
\newblock {\em SIAM Journal on Matrix Analysis and Applications},
  35(3):835--885, 2014.
\newblock Available at http://arxiv.org/abs/cs/0607105.

\bibitem{Stewart94:book}
Williams~J Stewart.
\newblock {\em Introduction to the numerical solutions of Markov chains}.
\newblock Princeton Univ. Press, 1994.

\bibitem{Teng10}
Shang-Hua Teng.
\newblock The {L}aplacian paradigm: Emerging algorithms for massive graphs.
\newblock In {\em Proceedings of the 7th Annual Conference on Theory and
  Applications of Models of Computation}, TAMC'10, pages 2--14, Berlin,
  Heidelberg, 2010. Springer-Verlag.

\bibitem{Tropp12}
Joel~A Tropp.
\newblock User-friendly tail bounds for sums of random matrices.
\newblock {\em Foundations of computational mathematics}, 12(4):389--434, 2012.

\bibitem{Vishnoi13}
Nisheeth~K. Vishnoi.
\newblock {Lx = b}.
\newblock {\em Foundations and Trends in Theoretical Computer Science},
  8(1-2):1--141, 2013.

\bibitem{ZhuLO15}
Zeyuan~Allen Zhu, Zhenyu Liao, and Lorenzo Orecchia.
\newblock Spectral sparsification and regret minimization beyond matrix
  multiplicative updates.
\newblock In {\em Proceedings of the Forty-Seventh Annual {ACM} on Symposium on
  Theory of Computing, {STOC} 2015, Portland, OR, USA, June 14-17, 2015}, pages
  237--245, 2015.

\end{thebibliography}

\appendix
\section{Entrywise Sparsification\label{sec:entry_sparsification}}

In this section we prove Theorem~\ref{thm:concentration_entry},
our main result about entrywise sampling for sparsification. Our
main technical tool for this is a rectangular matrix concentration
of Tropp that we restate below:
\begin{thm}[Matrix Bernstein (Theorem 1.6 of \cite{Tropp12}, restated)]
\label{thm:concentration_tropp} Let $\mz_{1},...,\mz_{k}\in\R^{d_{1}\times d_{2}}$
be independent random matrices such that $\E\mz_{i}=\mzero$ and $\norm{\mz_{i}}_{2}\leq R$
almost surely for all $i$. Then 
\[
\Pr\left[\normFull{\sum_{i\in[k]}\mz_{i}}_{2}\geq t\right]\leq(d_{1}+d_{2})\cdot\exp\left(\frac{-t^{2}/2}{\sigma^{2}+Rt/3}\right)
\]
where
\[
\sigma^{2}\defeq\max\left\{ \normFull{\sum_{i\in[k]}\E\mz_{i}\mz_{i}^{\top}}_{2}\,,\,\normFull{\sum_{i\in[k]}\E\mz_{i}^{\top}\mz_{i}}_{2}\right\} \,.
\]

\end{thm}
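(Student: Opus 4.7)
The plan is to verify Theorem~\ref{thm:concentration_entry} by setting up matrix-martingale-style centered summands, computing an almost-sure norm bound $R$ and a variance proxy $\sigma^2$, and then plugging into Theorem~\ref{thm:concentration_tropp} (for the spectral-norm statement) or its scalar analogue (for the two $\ell_\infty$ statements). The single computational observation that drives every bound is the identity
\[
\frac{\ma_{ij}}{p_{ij}} \;=\; \frac{s\,\vr_i\vc_j}{\vr_i+\vc_j},
\]
which is immediate from the definition of $p_{ij}$, together with the harmonic--arithmetic inequality $\frac{\vr_i\vc_j}{\vr_i+\vc_j}\le \tfrac{1}{2}\min(\vr_i,\vc_j)$.

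For the spectral bound, I would define the i.i.d.\ centered summands $\mz_\ell \defeq \tfrac{1}{k}\,\mr^{-1/2}(\ma_\ell-\ma)\mc^{-1/2}$ and note that by construction $\E\mz_\ell = 0$. The norm bound $R$ follows because $\mr^{-1/2}\ma_\ell\mc^{-1/2}$ is a rank-one matrix of the form $\frac{\ma_{ij}}{p_{ij}\sqrt{\vr_i\vc_j}}\indic_i\indic_j^\top$, whose spectral norm is $\frac{s\sqrt{\vr_i\vc_j}}{\vr_i+\vc_j}\le s/2$ by AM-GM; meanwhile $\|\mr^{-1/2}\ma\mc^{-1/2}\|_2 \le 1$ by a quick Cauchy--Schwarz argument on the bilinear form (using that rows of $\mr^{-1}\ma$ and columns of $\ma\mc^{-1}$ are stochastic). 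So $R = O(s/k)$. For the variance, compute
\[
\E\bigl[\mr^{-1/2}\ma_\ell\mc^{-1}\ma_\ell^\top\mr^{-1/2}\bigr]
\;=\;\sum_{ij}\frac{\ma_{ij}^2}{p_{ij}\,\vr_i\vc_j}\,\indic_i\indic_i^\top
\;=\;s\sum_i\indic_i\indic_i^\top\sum_j\frac{\ma_{ij}}{\vr_i+\vc_j},
\]
which is diagonal with entries at most $s\sum_j \ma_{ij}/\vr_i = s$; combining with the $\mr^{-1/2}\ma\mc^{-1/2}$ term (norm $\le 1$) gives $\|\sum_\ell \E\mz_\ell\mz_\ell^\top\|_2 \le O(s/k)$, and the same bound for $\|\sum_\ell \E\mz_\ell^\top\mz_\ell\|_2$ by the symmetric calculation. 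Plugging $R, \sigma^2 = O(s/k)$ and $t=\epsilon$ into Theorem~\ref{thm:concentration_tropp} yields failure probability at most $s\exp(-\Omega(k\epsilon^2/s))$, which is $\le p$ for $k\ge 128 s\epsilon^{-2}\log(s/p)$.

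The two $\ell_\infty$ statements reduce to scalar Bernstein, coordinate by coordinate, combined with a union bound. Fix a row index $i$ and let $X_\ell \defeq (\mr^{-1}\ma_\ell\allones)_i$. Then $X_\ell$ equals $\frac{\ma_{ij'}}{p_{ij'}\vr_i} = \frac{s\vc_{j'}}{\vr_i+\vc_{j'}}\le s$ with probability $p_{ij'}$ when the sampled row index is $i$ (and $0$ otherwise); an easy calculation gives $\E X_\ell = 1$ and $\E X_\ell^2 \le \frac{s}{\vr_i}\sum_{j'}\ma_{ij'} = s$. Scalar Bernstein applied to $\tfrac{1}{k}\sum_\ell X_\ell - 1$, together with a union bound over the $d_1$ coordinates, yields the row-sum statement. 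The column-sum statement is identical after swapping $\ma\leftrightarrow\ma^\top$.

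The only place where care is needed is the constant $128$ in the bound on $k$, which I would just verify by tracking constants in Theorem~\ref{thm:concentration_tropp} (the denominator $\sigma^2 + Rt/3$ is dominated by $\sigma^2$ for $\epsilon\le 1$ since $R\epsilon/3\le s/3 \lesssim \sigma^2 k$). Nothing in the proof is subtle beyond identifying the right centering, and the key observation is really the one-line calculation exploiting the choice of $p_{ij}$; the rest is mechanical substitution into Matrix Bernstein.
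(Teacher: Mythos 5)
The statement you were asked to prove is the Matrix Bernstein inequality itself (Theorem~\ref{thm:concentration_tropp}), which the paper does not prove but imports verbatim as Theorem~1.6 of Tropp's survey. Your proposal never addresses this statement: it instead sketches a proof of the downstream entrywise-sampling result, Theorem~\ref{thm:concentration_entry}, and in doing so it invokes Matrix Bernstein as a black box. As a proof of the stated theorem this is a complete gap --- nothing in your write-up establishes the tail bound $\Pr[\|\sum_{i}\mz_i\|_2\ge t]\le(d_1+d_2)\exp\bigl(\tfrac{-t^2/2}{\sigma^2+Rt/3}\bigr)$ for sums of independent, centered, uniformly bounded random matrices. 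An actual proof requires genuinely different machinery: the Hermitian dilation placing $\mz_i$ and $\mz_i^\top$ in the off-diagonal blocks of a $(d_1+d_2)\times(d_1+d_2)$ symmetric matrix (this is where the dimension factor $d_1+d_2$ and the $\max$ in the definition of $\sigma^2$ come from), the matrix Laplace-transform method bounding $\Pr[\lambda_{\max}(\sum_i\mz_i)\ge t]$ by $\inf_{\theta>0}e^{-\theta t}\,\tr\exp\bigl(\sum_i\log\E e^{\theta\mz_i}\bigr)$, the subadditivity of matrix cumulant generating functions via Lieb's concavity theorem, and a Bernstein-type estimate on $\log\E e^{\theta\mz_i}$ exploiting $\E\mz_i=\mzero$ and $\norm{\mz_i}_2\le R$. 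None of this appears in your proposal, and the scalar-Bernstein reasoning you do include cannot substitute for it, since the whole difficulty of the matrix version is that the exponentials of non-commuting summands do not factor.

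For what it is worth, the argument you did write out for Theorem~\ref{thm:concentration_entry} tracks the paper's own proof of that theorem quite closely: the same identity $\ma_{ij}/p_{ij}=s\,\vr_i\vc_j/(\vr_i+\vc_j)$, the same bounds $R\le s/2$ and $\sigma^2\le s$ for the normalized rank-one summands, and the same per-coordinate treatment of the row- and column-sum guarantees (the paper handles those by applying its i.i.d.\ corollary of Matrix Bernstein to the rank-one matrices $\indic_i\ma_{ij}/\vr_i$ rather than by scalar Bernstein plus a union bound, but the content is equivalent). So that work is sound; it simply answers a different question than the one posed.
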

First we simplify this theorem, tailoring it to the case where we
are sampling a sequence of matrices with the same expectation. 
\begin{thm}
\label{thm:concentration_simple} Let $\dist$ be a distribution over
$\R^{d_{1}\times d_{2}}$. Let $\mSigma\defeq\E_{\ma\sim\dist}\ma$
and let $R_{\dist}$ and $\sigma_{\dist}^{2}$ satisfy
\[
\max\left\{ \norm{\E_{\ma\sim\dist}\ma\ma^{\top}}_{2}\,,\,\norm{\E_{\ma\sim\dist}\ma^{\top}\ma}_{2}\right\} \leq\sigma_{\dist}^{2}\enspace\text{ and }\enspace\max_{\ma\in\supp(\dist)}\norm{\ma}_{2}\leq R_{\dist}\,.
\]
Then for $\ma_{1},...,\ma_{k}$ sampled independently from $\dist$
we have
\[
\Pr\left[\left\Vert\frac{1}{k}\sum_{i\in[k]}\ma_{i}-\mSigma\right\Vert_{2}\geq\epsilon\right]\leq(d_{1}+d_{2})\cdot\exp\left(\frac{-k\epsilon^{2}/2}{\sigma_{\dist}^{2}+R_{\dist}\epsilon}\right)\,.
\]
and for $k\geq 64 \cdot \left(\frac{\sigma_{\dist}^{2}}{\epsilon^{2}}+\frac{R_{\dist}}{\epsilon}\right)\log\frac{d}{p}$
it is the case that $\Pr\left[\norm{\frac{1}{k}\sum_{i\in[k]}\ma_{i}-\mSigma}_{2}\geq\epsilon\right]\leq p$.\end{thm}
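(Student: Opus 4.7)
The plan is to reduce this directly to the rectangular matrix Bernstein inequality, Theorem~\ref{thm:concentration_tropp}, by applying it to the centered, rescaled summands $\mz_i \defeq \frac{1}{k}(\ma_i - \mSigma)$. Then $\sum_i \mz_i = \frac{1}{k}\sum_i \ma_i - \mSigma$, each $\mz_i$ is mean-zero, and the $\mz_i$ are independent, so Tropp's theorem applies once we control the two required parameters $R$ and $\sigma^2$.

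First I would bound the almost-sure norm $R$. Since $\mSigma = \E_{\ma \sim \dist} \ma$, Jensen's inequality (for the operator norm) gives $\|\mSigma\|_2 \leq \E\|\ma\|_2 \leq R_{\dist}$, and then by the triangle inequality $\|\mz_i\|_2 \leq \frac{1}{k}(\|\ma_i\|_2 + \|\mSigma\|_2) \leq \frac{2 R_{\dist}}{k}$ almost surely. Next I would bound the matrix variance $\sigma^2$. Using independence and identical distributions,
\[
\sum_{i\in[k]} \E \mz_i \mz_i^\top = \frac{1}{k}\, \E_{\ma \sim \dist}\!\left[(\ma - \mSigma)(\ma - \mSigma)^\top\right] = \frac{1}{k}\left(\E \ma \ma^\top - \mSigma \mSigma^\top\right),
\]
and since $\E \ma \ma^\top - \mSigma \mSigma^\top \succeq 0$ (matrix variance is PSD) while $\E \ma \ma^\top \succeq 0$ as well, the operator norm of the difference is at most $\|\E \ma \ma^\top\|_2 \leq \sigma_{\dist}^2$. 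The analogous calculation for $\sum_i \E \mz_i^\top \mz_i$ gives the same bound, so $\sigma^2 \leq \sigma_{\dist}^2 / k$.

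Plugging these two bounds into Theorem~\ref{thm:concentration_tropp} with $t = \epsilon$ yields
\[
\Pr\!\left[\left\|\tfrac{1}{k}\textstyle\sum_i \ma_i - \mSigma\right\|_2 \geq \epsilon\right] \leq (d_1+d_2)\exp\!\left(\frac{-\epsilon^2/2}{\sigma_{\dist}^2/k + (2R_{\dist}/3k)\epsilon}\right) \leq (d_1+d_2)\exp\!\left(\frac{-k\epsilon^2/2}{\sigma_{\dist}^2 + R_{\dist}\epsilon}\right),
\]
which is the first claim (we used $2/3 < 1$ in the denominator, which only weakens the bound). For the sample-complexity consequence, substituting $k \geq 64\bigl(\sigma_{\dist}^2/\epsilon^2 + R_{\dist}/\epsilon\bigr)\log(d/p)$ (with $d = d_1 + d_2$) into the exponent gives
\[
\frac{k\epsilon^2/2}{\sigma_{\dist}^2 + R_{\dist}\epsilon} \geq \frac{32\bigl(\sigma_{\dist}^2 + R_{\dist}\epsilon\bigr)\log(d/p)}{\sigma_{\dist}^2 + R_{\dist}\epsilon} = 32 \log(d/p),
\]
so the failure probability is at most $d \cdot (p/d)^{32} \leq p$, completing the proof.

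There is no real obstacle here: the whole proof is a routine application of Tropp's inequality, and the only small point worth flagging is the use of matrix-Jensen to pass from the raw second moment of $\ma$ to that of the centered variable $\ma - \mSigma$, which is what lets us state the bound purely in terms of $\sigma_{\dist}^2$ and $R_{\dist}$ as defined from the uncentered distribution.
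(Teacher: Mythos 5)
Your proposal is correct and follows essentially the same route as the paper: center the samples, bound the almost-sure norm via triangle inequality plus Jensen (giving $2R_{\dist}$), bound the matrix variance by dropping the PSD term $\mSigma\mSigma^{\top}$ (giving $\sigma_{\dist}^2$ per sample), and invoke Theorem~\ref{thm:concentration_tropp}; the only cosmetic difference is that you rescale the summands by $1/k$ up front, whereas the paper applies Tropp to the unscaled sum and sets $t = k\epsilon$. Your verification of the sample-complexity consequence (the $32\log(d/p)$ exponent with $d = d_1 + d_2$) is also sound.
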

\begin{proof}
Let $\mz_{i}\defeq\ma_{i}-\mSigma$. Clearly $\E\mz_{i}=\mzero$,
and by Jensen's inequality,
\[
\norm{\mz_{i}}_{2}\leq\norm{\ma_{i}}_{2}+\norm{\mSigma}_{2}\leq\norm{\ma_{i}}_{2}+\E_{\ma\sim\dist}\norm{\ma}_{2}\leq2\cdot R_{\dist}\,.
\]
Furthermore, 
\[
\mzero\preceq\E\mz_{i}^{\top}\mz_{i}=\E_{\ma\sim\dist}(\ma-\mSigma)^{\top}(\ma-\mSigma)=\E_{\ma\sim\dist}\ma^{\top}\ma-\mSigma^{\top}\mSigma\preceq\E_{\ma\sim\dist}\ma^{\top}\ma
\]
and
\[
\mzero\preceq\E\mz_{i}^{\top}\mz_{i}=\E_{\ma\sim\dist}(\ma-\mSigma)(\ma-\mSigma)^{\top}=\E_{\ma\sim\dist}\ma\ma^{\top}-\mSigma\mSigma^{\top}\preceq\E_{\ma\sim\dist}\ma\ma^{\top}\,.
\]
Consequently, 
\[
\max\left\{ \normFull{\sum_{i\in[k]}\E\mz_{i}\mz_{i}^{\top}}_{2}\,,\,\normFull{\sum_{i\in[k]}\E\mz_{i}^{\top}\mz_{i}}_{2}\right\} \leq k\cdot\sigma_{\dist}^{2}\,.
\]
Therefore, by Theorem~\ref{thm:concentration_tropp} we have that
for all $t$,
\[
\Pr\left[\normFull{\sum_{i\in[k]}\mz_{i}}_{2}\geq t\right]\leq(d_{1}+d_{2})\cdot\exp\left(\frac{-t^{2}/2}{k\cdot\sigma_{\dist}^{2}+2Rt/3}\right)\,.
\]
Since $\sum_{i\in[k]}\mz_{i}=k\cdot(\frac{1}{k}\sum_{i\in[k]}\ma_{i}-\mSigma)$
picking $t=k\cdot\epsilon$ yields the result.
\end{proof}
Using Theorem~\ref{thm:concentration_simple} we can now prove Theorem~\ref{thm:concentration_entry},
our main result of this section. 

\begin{proof}[Proof of Theorem~\ref{thm:concentration_entry}]
First note that by the definition of $s$ we have that
\[
\sum_{i,j} p_{ij} = \frac{1}{s} \sum_{i,j} \left[ \frac{\ma_{ij}}{\vr_i} + \frac{\ma_{ij}}{\vc_j}\right]
= \frac{1}{s} \left[\text{\# non-zero rows} + \text{\# non-zero columns}\right]
= 1
\]
and therefore $\dist$ is a valid probability distribution. All that remains is to prove each of the claims of Theorem~\ref{thm:concentration_entry}
by carefully applying Theorem~\ref{thm:concentration_simple}. 

First apply Theorem~\ref{thm:concentration_simple} for the distribution $\mathcal{D}$ which assigns probability $p_{ij}$ to matrix $\frac{\ma_{ij}\indic_{i}\indic_{j}^{\top}}{\sqrt{\vr_{i}\cdot \vc_{j}}}$.
For this application of Theorem~\ref{thm:concentration_simple}, using that $x \cdot y \leq \frac{1}{2} x^2 + \frac{1}{2} y^2$ we
have 
\[
R_{\dist}
= \max_{i,j}\normFull{\frac{\ma_{ij} \indic_{i}\indic_{j}^{\top} }{\sqrt{\vr_{i}\cdot \vc_{j}}}\cdot\frac{1}{p_{ij}}}
= \max_{i,j} s \cdot \frac{1}{\sqrt{\vr_i \vc_j}} \left(
\frac{1}{\vr_i} + \frac{1}{\vc_j}
\right)^{-1}
\leq
\frac{s}{2}\,.
\]
Furthermore we have
\[
\normFull{\E_{\mm\sim\dist}\mm\mm^{\top}}_{2}
=
 s \normFull{
	\sum_{i,j} \frac{1}{\vr_{i}} \cdot 
	\frac{1}{\vc_{j}} \cdot \ma_{ij} \cdot \indic_{i} \indic_{i}^{\top} \cdot \left(
	\frac{1}{\vr_i} + \frac{1}{\vc_j}
	\right)^{-1} }_{2}
\leq s
\]
and
\[
\normFull{\E_{\mm\sim\dist}\mm^{\top}\mm}_{2} 
= s \normFull{\sum_{i,j}\frac{1}{\vr_{i}}\cdot\frac{1}{\vc_{j}}\cdot\ma_{ij}\cdot\indic_{j}\indic_{j}^{\top} \left(
	\frac{1}{\vr_i} + \frac{1}{\vc_j}
	\right)^{-1}}_{2} \leq s\,.
\]
Consequently, $\sigma_{\dist}\leq s$ and since $\epsilon\in(0,1)$
and $k$ is chosen appropriately, the first inequality follows by
Theorem~\ref{thm:concentration_simple}.

Next, we apply Theorem~\ref{thm:concentration_simple} f or the distribution $\mathcal{D}$ which assigns probability $p_{ij}$ to matrix $\indic_{i}\frac{\ma_{ij}}{\vr_{i}} $.
For this application of Theorem~\ref{thm:concentration_simple} we
have 

\[
R_{\dist}=\max_{i,j} \normFull{\frac{\ma_{ij}}{\vr_{i}}\cdot\frac{1}{p_{ij}}} \leq s \cdot\frac{1}{\vr_{i}}\cdot \left(
\frac{1}{\vr_i} + \frac{1}{\vc_j}
\right)^{-1} \leq s\,.
\]
Furthermore we have
\[
\sigma_{\dist}^{2}=\normFull{\sum_{j}\frac{1}{\vr_{j}^{2}}\cdot\ma_{ij}^{2}\cdot \left(
	\frac{1}{\vr_i} + \frac{1}{\vc_j}
	\right)^{-1} \cdot s}_{2}\leq s\,{.}
\]
Since $\epsilon\in(0,1)$, with $k$ appropriately chosen, the
second inequality follows by Theorem~\ref{thm:concentration_simple}.
By symmetry the last inequality follows as well. Finally, by choosing $p$
to be $s$ times larger we can make all conditions hold simultaneously
by union bound.\end{proof}
\section{Linear Algebra Facts \label{sec:linear_algebra}}

In this section we provide various general linear algebra facts we use throughout the paper.

\begin{lem}
\label{lem:relative-diff} Suppose that $\norm{\ma-\mb}_{2}\leq\epsilon$
then for all $c>0$ we have
\[
(1-c)\mb^{\top}\mb-c^{-1}\epsilon^{2}\mI\preceq\ma^{\top}\ma\preceq(1+c)\mb^{\top}\mb+(1+c^{-1})\epsilon^{2}\mI
\]
\end{lem}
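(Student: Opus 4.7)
The plan is to write $\AA = \mathbf{B} + \mathbf{E}$ where $\mathbf{E} \defeq \AA - \mathbf{B}$ has $\|\mathbf{E}\|_2 \leq \epsilon$, and expand
\[
\AA^\top \AA = \mathbf{B}^\top \mathbf{B} + \mathbf{B}^\top \mathbf{E} + \mathbf{E}^\top \mathbf{B} + \mathbf{E}^\top \mathbf{E}.
\]
The entire result then reduces to controlling the cross term $\mathbf{B}^\top \mathbf{E} + \mathbf{E}^\top \mathbf{B}$ by a weighted combination of $\mathbf{B}^\top \mathbf{B}$ and $\mathbf{E}^\top \mathbf{E}$.

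The key observation is the matrix version of the AM-GM / Young's inequality: for any $c > 0$, the PSD inequality $(\sqrt{c}\,\mathbf{B} \pm c^{-1/2}\mathbf{E})^\top(\sqrt{c}\,\mathbf{B} \pm c^{-1/2}\mathbf{E}) \succeq 0$ expands, upon choosing the sign appropriately, to
\[
-c\,\mathbf{B}^\top \mathbf{B} - c^{-1}\mathbf{E}^\top \mathbf{E} \;\preceq\; \mathbf{B}^\top \mathbf{E} + \mathbf{E}^\top \mathbf{B} \;\preceq\; c\,\mathbf{B}^\top \mathbf{B} + c^{-1}\mathbf{E}^\top \mathbf{E}.
\]
Substituting this into the expansion of $\AA^\top \AA$ immediately gives
\[
(1-c)\mathbf{B}^\top \mathbf{B} + (1 - c^{-1})\mathbf{E}^\top \mathbf{E} \;\preceq\; \AA^\top \AA \;\preceq\; (1+c)\mathbf{B}^\top \mathbf{B} + (1+c^{-1})\mathbf{E}^\top \mathbf{E}.
\]

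Finally, I would use the spectral norm bound on $\mathbf{E}$ in the form $\mathbf{E}^\top \mathbf{E} \preceq \epsilon^2 \mathbf{I}$. The upper bound is immediate since $1 + c^{-1} > 0$. For the lower bound, I would split by whether $c \geq 1$ (in which case $1 - c^{-1} \geq 0$ and the term is discarded, giving a bound of $0 \succeq -c^{-1}\epsilon^2 \mathbf{I}$) or $c < 1$ (in which case $(1-c^{-1})\mathbf{E}^\top \mathbf{E} \succeq (1-c^{-1})\epsilon^2 \mathbf{I} \succeq -c^{-1}\epsilon^2 \mathbf{I}$); either way the claimed $-c^{-1}\epsilon^2 \mathbf{I}$ bound holds.

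There is no real obstacle here — this is a standard Young-type matrix inequality argument. The only mild subtlety is that the stated lower bound's $-c^{-1}\epsilon^2 \mathbf{I}$ is weaker than the sharper $(1-c^{-1})\epsilon^2 \mathbf{I}$ that falls out of the expansion, so one needs to observe the case split above rather than just plugging in $\mathbf{E}^\top \mathbf{E} \preceq \epsilon^2 \mathbf{I}$ with a fixed sign.
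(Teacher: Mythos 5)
Your proposal is correct and follows essentially the same route as the paper: the paper likewise writes $\ma=\mb+(\ma-\mb)$, expands $\ma^{\top}\ma$, and controls the cross term by $c\,\mb^{\top}\mb+c^{-1}(\ma-\mb)^{\top}(\ma-\mb)$ (via Cauchy--Schwarz and scalar AM--GM on quadratic forms, which is exactly your completed-square inequality evaluated at a vector $x$), then applies $(\ma-\mb)^{\top}(\ma-\mb)\preceq\epsilon^{2}\mI$. Your case split for the lower bound is fine but not needed: writing $(1-c^{-1})\mE^{\top}\mE=\mE^{\top}\mE-c^{-1}\mE^{\top}\mE\succeq \mzero-c^{-1}\epsilon^{2}\mI$ (equivalently, the paper simply discards the nonnegative term and bounds only $-c^{-1}\mE^{\top}\mE$) gives the claimed bound uniformly in $c$.
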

\begin{proof}
Using the trivial expansion of $\ma=\mb+(\ma-\mb)$ we have that 
\[
x^{\top}\ma^{\top}\ma x-x^{\top}\mb^{\top}\mb x=2x\mb^{\top}(\ma-\mb)x+x^{\top}(\ma-\mb)^{\top}(\ma-\mb)x
\]
Now since $xy\leq\frac{c}{2}x^{2}+\frac{1}{2c}y^{2}$ for all $x,y$
and $c>0$ we have
\[
\left|2x\mb^{\top}(\ma-\mb)x\right|\leq2\norm{\mb x}_{2}\norm{(\ma-\mb)x}_{2}\leq c\norm{\mb x}_{2}^{2}+c^{-1}\norm{(\ma-\mb)x}_{2}^{2}\,.
\]
Combining this with the fact that 
\[
0\leq x^{\top}(\ma-\mb)^{\top}(\ma-\mb)x=\norm{(\ma-\mb)x}_{2}^{2}\leq\epsilon^{2}\norm x_{2}^{2}=\epsilon^{2}x^{\top}\mI x
\]
we obtain the result.
\end{proof}

\begin{lem}
\label{lem:spectral_equivalence} For all $\ma\in\R^{n\times n}$
and symmetric PSD $\mm,\mn\in\R^{n\times n}$ such that $\ker(\mm)\subseteq\ker(\ma^{\top})$
and $\ker(\mn)\subseteq\ker(\ma)$ we have 
\[
\norm{\mm^{-1/2}\ma\mn^{-1/2}}_{2}=\max_{x,y\neq0}\frac{x^{\top}\ma y}{\sqrt{\left(x^{\top}\mm x\right)\left(y^{\top}\mn y\right)}}=2\cdot\max_{x,y\neq0}\frac{x^{\top}\ma y}{x^{\top}\mm x+y^{\top}\mn y}
\]
where in each of the maximization problems we define $0/0$ to be
$0$.\end{lem}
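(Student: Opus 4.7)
My plan is to prove the two equalities separately, with the first coming from a direct change of variables under the pseudoinverse square roots and the second being an AM-GM argument.

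For the first equality, I would expand the operator norm as $\norm{\mm^{\dag/2} \ma \mn^{\dag/2}}_2 = \sup_{u,v \neq 0} \frac{u^\top \mm^{\dag/2} \ma \mn^{\dag/2} v}{\norm{u}_2 \norm{v}_2}$ and then substitute $x = \mm^{\dag/2} u$, $y = \mn^{\dag/2} v$. Because $\mm^{\dag/2}$ and $\mn^{\dag/2}$ project onto $\im(\mm)$ and $\im(\mn)$ respectively, it suffices to consider $u \in \im(\mm)$ and $v \in \im(\mn)$, for which $u = \mm^{1/2} x$, $\norm{u}_2^2 = x^\top \mm x$, and similarly for $v$. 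The kernel hypotheses $\ker(\mm) \subseteq \ker(\ma^\top)$ and $\ker(\mn) \subseteq \ker(\ma)$ let me pass freely between ``$x \in \im(\mm), y \in \im(\mn)$'' and ``$x, y$ arbitrary'': any kernel component of $x$ or $y$ is annihilated by $\ma$ on the relevant side and contributes nothing to $x^\top \ma y$ while only increasing the denominators $x^\top \mm x$ and $y^\top \mn y$ (or keeping them equal). This yields the first equality, with the $0/0 = 0$ convention absorbing the case when $x^\top \mm x = 0$ or $y^\top \mn y = 0$, which forces $x^\top \ma y = 0$ by the same kernel conditions.

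For the second equality, I would use AM-GM in the form $x^\top \mm x + y^\top \mn y \geq 2 \sqrt{(x^\top \mm x)(y^\top \mn y)}$, which immediately gives
\[
2 \cdot \frac{x^\top \ma y}{x^\top \mm x + y^\top \mn y} \leq \frac{x^\top \ma y}{\sqrt{(x^\top \mm x)(y^\top \mn y)}}
\]
whenever the numerator is nonnegative (and the other direction is vacuous since we can flip signs of $x$). This gives $\le$ of one form over the other. For the reverse inequality, I would exploit the $(x, y) \mapsto (\alpha x, \beta y)$ homogeneity of both expressions: the numerator scales by $\alpha \beta$, the sqrt-denominator also scales by $\alpha \beta$, so the sqrt-ratio is invariant; but the sum-denominator can be tuned. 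Choosing $\alpha, \beta > 0$ with $\alpha \beta = 1$ so that $\alpha^2 x^\top \mm x = \beta^2 y^\top \mn y$ (namely $\alpha^2 = \sqrt{y^\top \mn y / x^\top \mm x}$) achieves equality in AM-GM at the rescaled vectors, which shows that the sup of the sum-form is at least half the sup of the sqrt-form.

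I don't expect any real obstacle here: both steps are standard manipulations. The main thing to be careful about is handling the kernels, to make sure the substitution in the first equality is legitimate and to check that the degenerate cases (vanishing denominators) are consistent with the $0/0 = 0$ convention; the kernel hypotheses on $\ma$ with respect to $\mm$ and $\mn$ are exactly what make this work.
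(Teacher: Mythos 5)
Your proof is correct and follows essentially the same route as the paper's: the first equality via the change of variables $x = \mm^{\dag/2}u$, $y = \mn^{\dag/2}v$ with the kernel hypotheses handling the degenerate directions, and the second via AM-GM together with the observation that equality can be forced by rescaling. The paper is terser (it simply invokes "change of basis" and notes AM-GM is tight when $\norm{x}_{\mm}=\norm{y}_{\mn}=1$), but your version makes the same two steps explicit.
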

\begin{proof}
Let $L\defeq\norm{\mm^{-1/2}\ma\mn^{-1/2}}_{2}$. Since $\norm x_{2}=\max_{\norm y_{2}=1}y^{\top}x$ we have that 
\[
L=\max_{\norm x_{2}=\norm y_{2}=1}x^{\top}\mm^{-1/2}\ma\mn^{-1/2}y\,.
\]
Now, performing the change of basis $x:=\mm^{-1/2}x$ and $y:=\mm^{-1/2}y$
we have
\[
L=\max_{\norm x_{\mm}=\norm y_{\mn}=1}x^{\top}\ma y=\max_{x,y\neq0}\frac{x^{\top}\ma y}{\norm x_{\mm}\norm y_{\mm}}
\]
where in each of these maximization problems we restrict that $x\in\im{\mm}$
and $y\in\im{\mn}$. However, for all $x\perp\im{\mm}$ or $y\perp\im{\mn}$,
i.e. $x\in\ker(\mm)$ or $y\in\ker(\mn)$, we have that either $\norm x_{\mm}=0$
or $\norm y_{\mn}=0$ and $x^{\top}\ma y=0$. Consequently, the above
equalities hold without the $x\in\im{\mm}$ and $y\in\im{\mn}$ restriction
by our definition of $0/0=0$. The final equality we wish to prove
follows from the fact that $\norm x_{\mm}\norm y_{\mn}\leq\frac{1}{2}(\norm x_{\mm}^{2}+\norm x_{\mn}^{2})$
and that this inequality is tight when $\norm x_{\mm}=\norm y_{\mn}=1$.\end{proof}

\begin{lem}
\label{lem:simple_spec_inequalities}
For any $\MM \in \R^{n \times n}$ and symmetric positive semidefinite matrices $\ma, \mb \in \R^{n \times n}$ such that $\ma \preceq \mb$ we have that $\norm{\ma^{1/2} \mm}_2 \leq \norm{\mb^{1/2} \mm}_2$ and $\norm{\mm \ma^{1/2}}_2 \leq \norm{\mm \mb^{1/2}}_2$.
\end{lem}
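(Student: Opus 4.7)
The plan is to reduce the claim for operator norms into a claim about the PSD ordering of symmetrized matrices, where the hypothesis $\ma \preceq \mb$ can be applied directly by conjugation.

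First, I would invoke the standard identity $\|\mn\|_2^2 = \|\mn^\top \mn\|_2$, which lets me rewrite the quantity of interest as $\|\ma^{1/2}\mm\|_2^2 = \|(\ma^{1/2}\mm)^\top(\ma^{1/2}\mm)\|_2 = \|\mm^\top \ma \mm\|_2$, using that $\ma^{1/2}$ is symmetric (since $\ma$ is PSD). Similarly, $\|\mb^{1/2}\mm\|_2^2 = \|\mm^\top \mb \mm\|_2$. So it suffices to show $\|\mm^\top \ma \mm\|_2 \leq \|\mm^\top \mb \mm\|_2$.

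Next, I would observe that the hypothesis $\ma \preceq \mb$ is equivalent to $\mb - \ma$ being PSD, and conjugating by $\mm$ preserves the PSD property: for any vector $\vec{x}$, $\vec{x}^\top \mm^\top(\mb-\ma)\mm \vec{x} = (\mm\vec{x})^\top(\mb-\ma)(\mm\vec{x}) \geq 0$. Hence $\mm^\top \ma \mm \preceq \mm^\top \mb \mm$. Since both sides are symmetric PSD matrices, their spectral norms coincide with their maximum eigenvalues, and the PSD ordering immediately gives $\|\mm^\top \ma \mm\|_2 \leq \|\mm^\top \mb \mm\|_2$. Taking square roots yields the first inequality.

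The second inequality follows by an entirely symmetric argument: $\|\mm \ma^{1/2}\|_2^2 = \|(\mm\ma^{1/2})(\mm\ma^{1/2})^\top\|_2 = \|\mm \ma \mm^\top\|_2$, and the same conjugation argument shows $\mm \ma \mm^\top \preceq \mm \mb \mm^\top$, whence $\|\mm\ma^{1/2}\|_2 \leq \|\mm\mb^{1/2}\|_2$. There is no real obstacle here; the whole proof is a routine unpacking of the $\ell_2$ operator norm in terms of PSD ordering of Gram-type matrices. If anything, the only subtlety worth stating explicitly is the identity $\|\mn\|_2 = \|\mn^\top\|_2 = \sqrt{\|\mn^\top \mn\|_2} = \sqrt{\|\mn \mn^\top\|_2}$, which is what makes both halves of the lemma reduce to the same kind of PSD comparison.
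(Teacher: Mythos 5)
Your proof is correct and follows essentially the same route as the paper's: both arguments come down to the pointwise bound $x^\top \mm^\top \ma \mm x \leq x^\top \mm^\top \mb \mm x$ (the paper phrases it via the variational definition of the operator norm, you via the identity $\norm{\mn}_2^2 = \norm{\mn^\top \mn}_2$ and PSD ordering of the Gram matrices), and both dispatch the second inequality by transposition symmetry of the spectral norm.
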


\begin{proof}
The first claim follows from the fact that adopting the convention $0 / 0 = 0$
\[
\norm{\ma^{1/2} \mm}_2
= \max_{x \in \R^{n}} \frac{\norm{\ma^{1/2} \mm x}_2}{\norm{x}_2}
= \max_{x \in \R^{n}} \frac{\sqrt{x^\top \mm^\top \ma \mm x}}{\norm{x}_2}
\leq \max_{x \in \R^{n}} \frac{\sqrt{x^\top \mm^\top \mb \mm x}}{\norm{x}_2}
= \norm{\mb^{1/2} \mm}_2\,.
\]
The second follows from this and the fact that $\norm{\ma^{1/2} \mm}_2 = \norm{\mm^\top \ma^{1/2}}_{2}$.
\end{proof}

\begin{lem}
\label{lem:matrix_two_norm} Let $\mm\in\R^{n\times m},$ 
 $a\in\R_{\geq0}^{n}$ and $b\in\R_{\geq0}^{m}$ be arbitrary. Let $\AA = \diag(a)$ and $\mb = \diag(b).$
We have that for all $\alpha,\beta\in[0,1]$
\[
\norm{\ma\mm\mb}_{2}\leq\sqrt{\norm{\ma^{2\alpha}\mm\mb^{2\beta}}_{\infty}\cdot\norm{\ma^{2(1-\alpha)}\mm\mb^{2(1-\beta)}}_{1}}.
\]
Consequently, for PSD diagonal matrices $\md_{1}\in\R^{n\times n}$ and $\md_{2}\in\R^{m\times m}$
we have
\[
\norm{\md_{1}^{-1/2}\mm\md_{2}^{-1/2}}_{2}\leq\max\left\{ \norm{\md_{1}^{-1}\mm}_{\infty}\,,\,\norm{\md_{2}^{-1}\mm^{\top}}_{\infty}\right\} \,.
\]
\end{lem}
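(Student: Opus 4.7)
The plan is to prove this as a weighted version of the classical Schur inequality $\|\mm\|_2 \leq \sqrt{\|\mm\|_\infty \cdot \|\mm\|_1}$, which follows from one application of Cauchy--Schwarz. The weights come in by carefully splitting the product $a_i M_{ij} b_j x_j$ using the parameters $\alpha$ and $\beta$.

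Let $N \defeq \ma \mm \mb$, so that $N_{ij} = a_i M_{ij} b_j$, and let
\[
\mp \defeq \ma^{2\alpha} \mm \mb^{2\beta}, \qquad \mq \defeq \ma^{2(1-\alpha)} \mm \mb^{2(1-\beta)}.
\]
First I would write, for any $x \in \R^m$,
\[
a_i |M_{ij}| b_j |x_j| = \left( a_i^{2\alpha} |M_{ij}| b_j^{2\beta} \right)^{1/2} \cdot \left( a_i^{2(1-\alpha)} |M_{ij}| b_j^{2(1-\beta)} x_j^2 \right)^{1/2},
\]
which uses that $a_i, b_j \geq 0$ and $\alpha,\beta \in [0,1]$ so all exponents make sense. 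Then applying Cauchy--Schwarz in the index $j$ and pulling the triangle inequality inside,
\[
|(Nx)_i|^2 \leq \left( \sum_j a_i |M_{ij}| b_j |x_j| \right)^2 \leq \left( \sum_j a_i^{2\alpha} |M_{ij}| b_j^{2\beta} \right) \cdot \left( \sum_j a_i^{2(1-\alpha)} |M_{ij}| b_j^{2(1-\beta)} x_j^2 \right).
\]
The first factor is at most $\|\mp\|_\infty$ (the maximum $\ell_1$ norm of a row of $\mp$, since the row sums of $|\mp|$ agree with the quantity above). Summing over $i$ and swapping the order of summation,
\[
\|Nx\|_2^2 \leq \|\mp\|_\infty \cdot \sum_j x_j^2 \sum_i a_i^{2(1-\alpha)} |M_{ij}| b_j^{2(1-\beta)} \leq \|\mp\|_\infty \cdot \|\mq\|_1 \cdot \|x\|_2^2,
\]
since $\sum_i a_i^{2(1-\alpha)} |M_{ij}| b_j^{2(1-\beta)}$ is the $\ell_1$ norm of the $j$-th column of $\mq$. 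Taking square roots and the supremum over unit vectors $x$ gives the first inequality.

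For the consequence, I would apply the inequality with $\ma = \md_1^{-1/2}$, $\mb = \md_2^{-1/2}$, $\alpha = 1$, $\beta = 0$, so $\mp = \md_1^{-1} \mm$ and $\mq = \mm \md_2^{-1}$. Using that $\|\mq\|_1 = \|\mq^\top\|_\infty = \|\md_2^{-1}\mm^\top\|_\infty$, we get
\[
\norm{\md_1^{-1/2} \mm \md_2^{-1/2}}_2 \leq \sqrt{\norm{\md_1^{-1}\mm}_\infty \cdot \norm{\md_2^{-1}\mm^\top}_\infty} \leq \max\left\{ \norm{\md_1^{-1}\mm}_\infty,\ \norm{\md_2^{-1}\mm^\top}_\infty \right\},
\]
as desired. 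The only real step is the Cauchy--Schwarz split; I do not anticipate any genuine obstacle, just bookkeeping to make sure the exponents balance.
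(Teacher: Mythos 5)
Your proof is correct and is essentially the paper's argument: the same Cauchy--Schwarz split of $a_i|M_{ij}|b_j$ with exponents $2\alpha,2\beta$ versus $2(1-\alpha),2(1-\beta)$, with the two factors bounded by the row-sum norm $\norm{\ma^{2\alpha}\mm\mb^{2\beta}}_{\infty}$ and the column-sum norm $\norm{\ma^{2(1-\alpha)}\mm\mb^{2(1-\beta)}}_{1}$, and the consequence obtained via $\norm{\mc}_1=\norm{\mc^{\top}}_{\infty}$ together with $\sqrt{xy}\leq\max\{x,y\}$. The only (immaterial) difference is that you bound $\norm{\ma\mm\mb x}_2$ directly with a row-wise Cauchy--Schwarz, whereas the paper bounds the bilinear form $x^{\top}\ma\mm\mb y$ over unit vectors $x,y$ with a single Cauchy--Schwarz over index pairs $(i,j)$.
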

\begin{proof}
Let $x\in\R^{n}$ and $y\in\R^{m}$ be arbitrary with $\norm x_{2}=\norm y_{2}=1$.
We have 
\[
x^{\top}\ma\mm\mb y=\sum_{i,j}\mm_{i,j}a_{i}b_{j}x_{i}y_{j}\leq\sum_{i,j}|\mm_{i,j}|\cdot a_{i}\cdot b_{j}\cdot|x_{i}|\cdot|y_{j}|
\]
Consequently, by Cauchy Schwarz we have that 
\begin{align*}
\left(x^{\top}\ma\mm\mb y\right)^{2}
&\leq\left(\sum_{i,j}|\mm_{i,j}|\cdot a_{i}^{2\alpha}\cdot b_{j}^{2\beta}\cdot x_{i}^{2}\right)\cdot\left(\sum_{i,j}|\mm_{i,j}|\cdot a_{i}^{2(1-\alpha)}\cdot b_{j}^{2(1-\beta)}\cdot y_{j}^{2}\right)\\
&\leq\norm{\ma^{2\alpha}\mm\mb^{2\beta}}_{\infty}\cdot\norm{\ma^{2(1-\alpha)}\mm\mb^{2(1-\beta)}}_{1}.
\end{align*}
The final conclusion follows from the fact that for any matrix $\mc \in \R^{n\times m}$
$$ \norm{\mc}_{1} = \norm{\mc^T}_{\infty}.$$ 

\end{proof}

\begin{lem}
\label{lem:square_sym_upper_bound} If $\mm\in\R^{n\times n}$ is a symmetric matrix with $\norm{\mm}_{2}\leq1$, then
$
\mzero\preceq\mI-\mm^{2}\preceq2\cdot\left(\mI-\mm\right)\,.
$
\end{lem}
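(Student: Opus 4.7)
The plan is to exploit the symmetry of $\mm$ to reduce the matrix inequality to a pointwise scalar inequality on eigenvalues. Since $\mm$ is symmetric, it admits a spectral decomposition $\mm = \sum_i \lambda_i \vv_i \vv_i^\top$ with orthonormal eigenvectors $\{\vv_i\}$. The hypothesis $\|\mm\|_2 \leq 1$ is equivalent to $\lambda_i \in [-1,1]$ for every $i$. Both $\mI - \mm^2$ and $\mI - \mm$ are then simultaneously diagonalized in this eigenbasis, with eigenvalues $1 - \lambda_i^2$ and $1 - \lambda_i$ respectively, so the matrix inequalities $\mzero \preceq \mI - \mm^2 \preceq 2(\mI - \mm)$ reduce to verifying, for each $\lambda \in [-1,1]$,
\[
0 \;\leq\; 1 - \lambda^2 \;\leq\; 2(1 - \lambda).
\]

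The first inequality is immediate from $\lambda^2 \leq 1$. For the second, I factor $1 - \lambda^2 = (1-\lambda)(1+\lambda)$, and note that since $\lambda \geq -1$ we have $1+\lambda \geq 0$, and since $\lambda \leq 1$ we have $1-\lambda \geq 0$ and $1+\lambda \leq 2$. Multiplying the nonnegative factor $1-\lambda$ by the inequality $1+\lambda \leq 2$ preserves the direction, yielding $(1-\lambda)(1+\lambda) \leq 2(1-\lambda)$, as desired.

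An equivalent way to present this which avoids explicitly passing to eigenvalues is to observe that $\mI - \mm$ and $\mI + \mm$ are symmetric PSD (by $-\mI \preceq \mm \preceq \mI$) and commute, so $\mI - \mm^2 = (\mI-\mm)(\mI+\mm) \succeq \mzero$, and since commuting PSD matrices admit a common square root basis one can write $\mI - \mm^2 = (\mI-\mm)^{1/2}(\mI+\mm)(\mI-\mm)^{1/2}$ and then use $\mI + \mm \preceq 2\mI$ together with conjugation to obtain the upper bound. I expect no real obstacle here; the statement is essentially the scalar identity $1-x^2 = (1-x)(1+x) \leq 2(1-x)$ for $x \in [-1,1]$ lifted to the matrix setting via the spectral theorem, and the whole proof should fit in a few lines.
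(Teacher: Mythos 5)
Your proof is correct and follows essentially the same route as the paper: both diagonalize the symmetric matrix $\mm$ to reduce the claim to the scalar inequality $0 \leq 1-\lambda^2 \leq 2(1-\lambda)$ for $\lambda \in [-1,1]$, differing only in whether that inequality is verified by factoring $(1-\lambda)(1+\lambda)$ or by rearranging to $(\lambda-1)^2 \geq 0$.
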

\begin{proof}
Since $\mm$ is symmetric we have that $\mm^{2}$ and $\mm$ are mutually
diagonalizable and the above inequality reduces to showing that $0\leq1-x^{2}\leq2\cdot(1-x)$
for $x\in\R$ with $|x|\leq1$. The left hand side of the inequality is true because $x^2 \leq 1$ and the right hand side follows by noticing that it is equivalent to $0\leq x^2 -2x + 1 = (x-1)^2$ after rearranging terms.
\end{proof}
The following gives a similar statement involving the symmetrization of an arbitrary matrix. While it is based on the above proof, it loses a factor of $2$ over the previous lemma.
\begin{lem}
\label{lem:square_asym_upper_bound} If $\mm\in\R^{n\times n}$ is a possibly asymmetric matrix satisfying $\norm{\mm}_{2}\leq 1$ then
\[
\mzero\preceq\mI - \mU_{\mm^2} \preceq  2 (\mI - \mU^2_{\mm}) \preceq
4 (\mI - \mU_{\mm})
\,.
\]
\end{lem}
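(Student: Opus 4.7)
The plan is to prove each of the three inequalities separately. Let me denote $\mU_\mm = \frac{1}{2}(\mm + \mm^\top)$ and use that $\mU_{\mm^2} = \frac{1}{2}(\mm^2 + (\mm^\top)^2)$.

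First I would handle the rightmost inequality $2(\mI - \mU^2_\mm) \preceq 4(\mI - \mU_\mm)$. Since $\mU_\mm$ is symmetric with $\|\mU_\mm\|_2 \leq \frac{1}{2}(\|\mm\|_2 + \|\mm^\top\|_2) \leq 1$, this is a direct invocation of the preceding Lemma~\ref{lem:square_sym_upper_bound} applied to $\mU_\mm$, followed by multiplying both sides by $2$.

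Next I would tackle the middle inequality $\mI - \mU_{\mm^2} \preceq 2(\mI - \mU^2_\mm)$, which is the key computational step. Rearranging, this is equivalent to showing $2\mU^2_\mm - \mU_{\mm^2} \preceq \mI$. A direct expansion gives
\[
2\mU^2_\mm - \mU_{\mm^2} = \tfrac{1}{2}(\mm + \mm^\top)^2 - \tfrac{1}{2}(\mm^2 + (\mm^\top)^2) = \tfrac{1}{2}(\mm\mm^\top + \mm^\top\mm),
\]
so the claim reduces to $\frac{1}{2}(\mm\mm^\top + \mm^\top \mm) \preceq \mI$, which follows immediately from $\mm \mm^\top \preceq \mI$ and $\mm^\top \mm \preceq \mI$ (both consequences of $\|\mm\|_2 \leq 1$).

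Finally, for the leftmost inequality $\mzero \preceq \mI - \mU_{\mm^2}$, I would show that for any $\vec{x}$, $\vec{x}^\top \mU_{\mm^2} \vec{x} = \vec{x}^\top \mm^2 \vec{x} \leq \|\vec{x}\|_2^2$. This follows by Cauchy-Schwarz: $|\vec{x}^\top \mm(\mm \vec{x})| \leq \|\vec{x}\|_2 \cdot \|\mm\|_2 \cdot \|\mm \vec{x}\|_2 \leq \|\mm\|_2^2 \|\vec{x}\|_2^2 \leq \|\vec{x}\|_2^2$.

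I don't anticipate a main obstacle here — the proof is a short algebraic verification once one sees the identity $2\mU^2_\mm - \mU_{\mm^2} = \frac{1}{2}(\mm\mm^\top + \mm^\top\mm)$, which is the one non-obvious step and explains exactly where the extra factor of $2$ (relative to Lemma~\ref{lem:square_sym_upper_bound}) comes from in the asymmetric case.
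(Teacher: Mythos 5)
Your proof is correct and takes essentially the same approach as the paper: the rightmost inequality is delegated to Lemma~\ref{lem:square_sym_upper_bound} in both cases, and your identity $2\mU_\mm^2 - \mU_{\mm^2} = \tfrac{1}{2}(\mm\mm^\top + \mm^\top\mm)$ is just a cleaner rearrangement of the paper's intermediate bound $(\mm+\mm^\top)^2 \preceq \mm^2 + (\mm^\top)^2 + 2\mI$, both hinging on $\mm\mm^\top, \mm^\top\mm \preceq \mI$. The leftmost inequality differs only cosmetically (Cauchy--Schwarz on the quadratic form versus a triangle-inequality norm bound on $\mU_{\mm^2}$).
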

\begin{proof}
Since the norm of $\mm$ is at most $1$, we immediately obtain $\norm{(\mm^{2})^{\top}}_2 = \norm{\mm^{2}}_{2}\leq1$, $\norm{\mm^{\top}\mm}_{2}\leq1$, $\norm{\mm\mm^{\top}}_{2}\leq1$. Then, by triangle inequality,
\[
\norm{\mU_{\mm^2}}_2 = \normFull{\frac{1}{2}(\mm^2 + (\mm^2)^{\top})}_2 \leq \frac{1}{2}(\norm{\mm^2}_2 + \norm{(\mm^2)^{\top}}_2) \leq 1\,.
\] Therefore $\mU_{\mm^2} \preceq \mI$, yielding the
left hand side of the desired inequality. 

Next, we note that these inequalities imply $\mm^{\top}\mm\preceq\mI$
and $\mm\mm^{\top}\preceq\mI$, yielding 
\[
(\mm+\mm^{\top})^2=\mm^{2}+\mm^{\top}\mm+\mm\mm^{\top}+(\mm^{\top})^{2}\preceq\mm^{2}+(\mm^{\top})^{2}+2\mI\,.
\]
Consequently,
\[
\mI - \mU_{\mm^2} = 
\mI-\frac{1}{2}(\mm^{2}+(\mm^{2})^{\top})\preceq2\mI-\frac{1}{2}(\mm+\mm^{\top})^2
=2(\mI-\mU^2_{\mm})\,.
\]
Finally, since $\mU_{\mm}$ is symmetric with $\norm{\mU_{\mm}}_{2}\leq1$, by  Lemma~\ref{lem:square_sym_upper_bound}
we have $\mI-\mU_{\mm}^2\preceq2\cdot (\mI-\mU_{\mm})$. 
\end{proof}
\begin{lem}
\label{lem:square_condition_number}Let $\mm\in\R^{n\times n}$ be
a matrix such that $\norm{\mm}_{2}\leq1$. Furthermore, for $\alpha\in[0,1)$
let $\mn=\alpha\mI+(1-\alpha)\mm$ and let $\mL_{i}=\mI-\mU_{\mn^i}$.
Then,
\[
2\alpha\mL_{1}\preceq\mL_{2}\preceq(4-2\alpha)\cdot\mL_{1}\,{.}
\]
\end{lem}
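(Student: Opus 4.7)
The plan is to expand $\mn^2$, collect terms, and then sandwich the result using the bounds from Lemma~\ref{lem:square_asym_upper_bound}.

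First, I would write $\mL_1$ in a more convenient form. Since $\mn = \alpha \mI + (1-\alpha)\mm$ and symmetrization is linear, $\mU_\mn = \alpha \mI + (1-\alpha)\mU_\mm$, so
\[
\mL_1 = \mI - \mU_\mn = (1-\alpha)(\mI - \mU_\mm).
\]
Next I would expand $\mn^2 = \alpha^2 \mI + 2\alpha(1-\alpha)\mm + (1-\alpha)^2 \mm^2$, symmetrize, and split the identity via $1 = \alpha^2 + 2\alpha(1-\alpha) + (1-\alpha)^2$ to get the clean decomposition
\[
\mL_2 = \mI - \mU_{\mn^2} = 2\alpha(1-\alpha)(\mI - \mU_\mm) + (1-\alpha)^2 (\mI - \mU_{\mm^2}).
\]
This is the key identity; both desired bounds will fall out of it.

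For the upper bound I would invoke Lemma~\ref{lem:square_asym_upper_bound}, which gives $\mI - \mU_{\mm^2} \preceq 4(\mI - \mU_\mm)$ (using $\norm{\mm}_2 \leq 1$). Substituting yields
\[
\mL_2 \preceq \bigl[2\alpha(1-\alpha) + 4(1-\alpha)^2\bigr](\mI - \mU_\mm) = 2(1-\alpha)(2-\alpha)(\mI - \mU_\mm) = (4-2\alpha)\mL_1,
\]
after factoring and using $\mL_1 = (1-\alpha)(\mI - \mU_\mm)$. For the lower bound, the same lemma gives $\mI - \mU_{\mm^2} \succeq 0$, so dropping that (nonnegative) term in the decomposition above leaves
\[
\mL_2 \succeq 2\alpha(1-\alpha)(\mI - \mU_\mm) = 2\alpha \mL_1,
\]
as desired.

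There is no real obstacle here: the whole argument is the clean decomposition of $\mL_2$ followed by the sandwich bound from Lemma~\ref{lem:square_asym_upper_bound}. The only mild care is making sure the algebraic split of $\mI = (\alpha + (1-\alpha))^2 \mI$ is done cleanly so that the $2\alpha(1-\alpha)(\mI - \mU_\mm)$ and $(1-\alpha)^2(\mI - \mU_{\mm^2})$ pieces cleanly match, respectively, the $\mL_1$-factor that produces the lower bound's coefficient $2\alpha$ and the $\mm^2$-term that absorbs the factor of $4$ en route to $4-2\alpha$.
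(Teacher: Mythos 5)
Your proof is correct and follows essentially the same route as the paper: the identical decomposition $\mL_2 = 2\alpha(1-\alpha)(\mI-\mU_\mm) + (1-\alpha)^2(\mI-\mU_{\mm^2}) = 2\alpha\mL_1 + (1-\alpha)^2(\mI-\mU_{\mm^2})$, with the lower bound from dropping the PSD second term and the upper bound from $\mI-\mU_{\mm^2} \preceq 4(\mI-\mU_\mm)$ via Lemma~\ref{lem:square_asym_upper_bound}. No gaps.
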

\begin{proof}
Note that $\mI-\mn=(1-\alpha)(\mI-\mm)$, and therefore $\mL_{1}=(1-\alpha)(\mI-\mU_{\mm})$.
The first identity gives us that
\[
\mn^{2}=\alpha^{2}\mI+2\alpha(1-\alpha)\mm+(1-\alpha)^{2}\mm^{2}\,.
\]
Consequently,
\begin{align*}
\mL_{2} & =\mI-\alpha^{2}\mI-2\alpha(1-\alpha)\mU_{\mm}-(1-\alpha)^{2}\mU_{\mm^2}\\
&=\left(1-\alpha^2 - (1-\alpha)^2 \right)\mI - 2\alpha (1-\alpha) \mU_{\mm} + (1-\alpha^2)\left(\mI - \mU_{\mm^2}\right) \\
&= 2\alpha(1-\alpha)(\mI - \mU_{\mm}) + (1-\alpha)^2 (\mI - \mU_{\mm^2})\\
 & =2\alpha\mL_{1}+(1-\alpha)^2(\mI - \mU_{\mm^2})\,{.}
\end{align*}
This yields the first part of the inequality, since the second term in the last line is positive semidefinite.
 Now by Lemma~\ref{lem:square_asym_upper_bound}
we know that 
\[
\mzero
\preceq\mI-\mU_{\mm^2}
\preceq
4 \left(
\mI-\mU_{\mm}\right)
=\frac{4}{1-\alpha}\mL_{1}\,{,}
\]
Plugging this into our previous identity we obtain
\begin{align*}
\mL_2 \preceq 2\alpha  \mL_1 + (1-\alpha)^2 \cdot \frac{4}{1-\alpha} \mL_1 = (4-2\alpha) \mL_1\,{,}
\end{align*}
thus yielding the result.
\end{proof}

\begin{lemma}[Condition Number Improvement]\label{lem:kappa-improvement}
Let a nonzero matrix $\mm \in \R^{n \times n}$ be such that $\ker(\mm)=\ker(\mm^{\top})$, and $\norm{\mm}_2 \leq 1$. For $\alpha \in (0, 1/4]$ let $\mn \defeq \alpha \mI + (1 - \alpha) \mm$. Then, for $\lambdanonzero \defeq \lambdanonzero(\mI - \mU_{\mm})$ we have 
\begin{equation}
\label{eq:kappa-improve2}
\lambdanonzero (\mI - \mU_{\mn^2} )
 \geq 
 \min \left\{\alpha, (1+\alpha) \lambdanonzero\right\}\,{.}
\end{equation}
\end{lemma}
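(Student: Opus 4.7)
The plan is to reduce the lemma to a one-variable eigenvalue estimate by first passing from $\mI - \mU_{\mn^2}$ to a polynomial in the symmetric matrix $\mU_\mm$ that bounds it from below in the PSD sense, then diagonalizing and doing a careful case analysis on $\lambdanonzero$.

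First I would expand $\mn^2$ and collect terms, as is done in the proof of Lemma~\ref{lem:square_condition_number}, obtaining
\[
\mI - \mU_{\mn^2} \;=\; 2\alpha(1-\alpha)(\mI - \mU_\mm) \;+\; (1-\alpha)^2\,(\mI - \mU_{\mm^2}).
\]
To eliminate the dependence on $\mU_{\mm^2}$, I would use the PSD inequality $\mU_\mm^2 \succeq \mU_{\mm^2}$, which follows from the direct computation $\mU_\mm^2 - \mU_{\mm^2} = \tfrac{1}{4}(\mm - \mm^\top)^\top(\mm - \mm^\top) \succeq 0$ (the matrix $\mm - \mm^\top$ is skew-symmetric). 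Since all matrices in the resulting bound are polynomials in $\mU_\mm$ and thus commute, one can factor
\[
\mI - \mU_{\mn^2} \;\succeq\; (1-\alpha)\,(\mI - \mU_\mm)\,\bigl[(1+\alpha)\,\mI + (1-\alpha)\,\mU_\mm\bigr].
\]

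Next I would verify $\ker(\mI - \mU_{\mn^2}) = \ker(\mI - \mU_\mm)$, so that $\lambdanonzero(\mI - \mU_{\mn^2})$ is indeed controlled by eigenvalues of $\mU_\mm$ different from $1$. The inclusion $\supseteq$ uses that $\|\mm\|_2 \leq 1$ forces $\mU_\mm \vec{x} = \vec{x} \Rightarrow \mm \vec{x} = \mm^\top \vec{x} = \vec{x}$ (a short Cauchy--Schwarz argument applied to $\vec{x}^\top(\mm + \mm^\top)\vec{x} = 2\|\vec{x}\|^2$), and the reverse inclusion follows because both summands in the expansion of $\mI - \mU_{\mn^2}$ are PSD. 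Diagonalizing $\mU_\mm$ on $\ker(\mI - \mU_\mm)^\perp$ reduces the lemma to lower-bounding the scalar
\[
f(\mu) \;=\; (1-\alpha)(1-\mu)\bigl[(1+\alpha) + (1-\alpha)\,\mu\bigr]
\]
over the eigenvalues $\mu \neq 1$ of $\mU_\mm$, all of which lie in $[-1,\,1-\lambdanonzero]$.

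Finally, since $f$ is a concave quadratic in $\mu$, its minimum on this interval is attained at an endpoint. A short calculation yields $f(-1) = 4\alpha(1-\alpha) \geq \alpha$ for $\alpha \in (0, 1/4]$, while $f(1-\lambdanonzero) = (1-\alpha)\lambdanonzero\bigl[2 - (1-\alpha)\lambdanonzero\bigr]$. To bound the latter by $\min\{\alpha, (1+\alpha)\lambdanonzero\}$, I would split cases at the threshold $\lambdanonzero = \alpha/(1+\alpha)$: in the small case $(1-\alpha)\lambdanonzero \leq \alpha$, giving $f(1-\lambdanonzero) \geq (1-\alpha)(2-\alpha)\lambdanonzero \geq (1+\alpha)\lambdanonzero$ precisely when $1 - 4\alpha + \alpha^2 \geq 0$ (valid for $\alpha \leq 2 - \sqrt{3}$, hence for $\alpha \leq 1/4$); in the large case $\min\{\alpha, (1+\alpha)\lambdanonzero\} = \alpha$, and concavity of $\lambda \mapsto (1-\alpha)\lambda\bigl(2 - (1-\alpha)\lambda\bigr)$ lets me verify the bound by checking only the two endpoints $\lambda = \alpha/(1+\alpha)$ and $\lambda = 2$, reducing it to the polynomial inequalities $1 - 3\alpha - \alpha^2 - \alpha^3 \geq 0$ and $4\alpha(1-\alpha) \geq \alpha$, both valid for $\alpha \in (0, 1/4]$. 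The main obstacle will be this case analysis at the boundary $\alpha = 1/4$: the polynomial inequalities that arise are quite tight there, so any looseness introduced earlier (for instance by a weaker PSD comparison than $\mU_{\mm^2} \preceq \mU_\mm^2$) would cost enough to lose the constant in the stated bound.
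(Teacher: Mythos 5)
Your proposal is correct and takes essentially the same route as the paper: your factorized lower bound $(1-\alpha)(\mI-\mU_{\mm})\bigl[(1+\alpha)\mI+(1-\alpha)\mU_{\mm}\bigr]$ is exactly the paper's intermediate matrix $\mI-\mU_{\mn}^2$ (the paper reaches it via $\mU_{\mn^2}\preceq \mU_{\mn}^2$ rather than your equivalent $\mU_{\mm^2}\preceq \mU_{\mm}^2$), and both arguments then reduce to the same two scalar quantities $4\alpha(1-\alpha)$ and $(1-\alpha)\lambdanonzero\bigl(2-(1-\alpha)\lambdanonzero\bigr)$. The only differences are presentational: your explicit kernel identification $\ker(\mI-\mU_{\mn^2})=\ker(\mI-\mU_{\mm})$ and a case split at $\lambdanonzero=\alpha/(1+\alpha)$ instead of the paper's threshold $(1-3\alpha)(1-\alpha)^{-2}$, both of which yield the stated bound for $\alpha\in(0,1/4]$.
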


\begin{proof}

Note that we can write
\[
\mU_{\mn^2} =
\frac{1}{2} \left(\mn^2 +(\mn^\top)^2\right)
= 
\left(\frac{1}{2} (\mn +  \mn^\top)\right)^2
-
\left(\frac{1}{2} (\mn -  \mn^\top) \cdot \frac{1}{2} (\mn -  \mn^\top)^{\top} \right)
\preceq \mU_{\mn}^2
\,{.}
\]
Therefore
$\mI - \mU_{\mn^2} \succeq \mI - \mU_{\mn}^2$,
so it is sufficient to lower bound the smallest nonzero eigenvalue of the latter. By expanding, we obtain:
\begin{align*}
\mI - \mU_{\mn}^2 &=\mI - ( \alpha \mI + (1-\alpha) \mU_{\mm} )^2 = \mI -  (  \mI - (1-\alpha)(\mI - \mU_{\mm}) )^2  
{.} 
\end{align*}
Since $\norm{\mm}_2 \leq 1$, we also have $\norm{\mU_{\mm}}_2\leq 1$ by triangle inequality, and thus $\lambdanonzero \mI_{\im{\mm}} \preceq \mI - \mU_{\mm} \preceq 2\mI_{\im{\mm}}$. Therefore, 
\[
\mI-(1-\alpha) 2 \mI_{\im{\mm}}\preceq 
\mI - (1-\alpha)(\mI - \mU_{\mm}) \preceq \mI - (1-\alpha) \lambdanonzero \mI_{\im{\mm}}\,{,}
\]
and equivalently
\[
\mI_{\perp \im{\mm}}+ (2\alpha-1) \mI_{\im{\mm}}\preceq 
\mI - (1-\alpha)(\mI - \mU_{\mm}) \preceq \mI_{\perp \im{\mm}}+(1 - (1-\alpha) \lambdanonzero) \mI_{\im{\mm}}\,{.}
\]

Hence, after squaring, each eigenvalue of the middle term will become upper bounded by the maximum of the squares of those in the lower and the upper bound. This can be seen as a matrix version of the inequality $b^2 \leq \max\{a^2, c^2\}$, if $a\leq b \leq c$. Hence,
\[
 (\mI - (1-\alpha)(\mI - \mU_{\mm}) )^2
\preceq \mI_{\perp \im{\mm}}+
\max\{(2\alpha-1)^2, (1 - (1-\alpha) \lambdanonzero)^2 \}\mI_{\im{\mm}}\,{,}
\]
so after subtracting both sides from $\mI$ we obtain:
\[
\mI-\mU_{\mn}^2 \succeq 1-\max\{(2\alpha-1)^2, (1 - (1-\alpha) \lambdanonzero)^2 \}\mI_{\im{\mm}}\,{.}
\]
Therefore,
\begin{align*}
\lambdanonzero(\mI-\mU_{\mn^2}) &\geq \min\{1-(2\alpha-1)^2, 1-(1-(1-\alpha)\lambdanonzero)^2\} \\
&= \min\{1-(2\alpha-1)^2, 2(1-\alpha)\lambdanonzero - (1-\alpha)^2\lambdanonzero^2\}
\,{.}
\end{align*}
Observe that if $\lambdanonzero \leq (1-3\alpha)(1-\alpha)^{-2}$, then the second part of the lower bound is at least
$$2(1-\alpha)\lambdanonzero - (1-\alpha)^2 (1-3\alpha)(1-\alpha)^{-2}\lambdanonzero = (1+\alpha)\lambdanonzero\,{.}$$
Otherwise, it can be lower bounded simply by
$$2(1-\alpha)\lambdanonzero - (1-\alpha)^2 \lambdanonzero = (1-\alpha^2)\lambdanonzero \geq 1-3\alpha{.}$$
Finally, since for $\alpha \leq 1/4$, both $1-3\alpha \geq \alpha$ and $1-(1-2\alpha)^2 \geq \alpha$ are true, the result follows.
\end{proof}

\begin{lem}\label{lem:sym-hsm}
If $\LL$ is a matrix with $\ker(\LL)=\ker(\LL^\intercal)=\ker(\mU_{\mLL})$, and $\mU_{\mLL}$ is positive semidefinite, then
\[
\mU_{\mLL} \preceq \mLL^\intercal \mU_{\mLL}^\dagger \mLL\,{.}
\]
Furthermore, for any matrix $\AA$ with the same left and right kernels as $\LL$, one has that
\[
\normFull{ \AA }_{\mU_{\LL}\rightarrow \mU_{\LL}} \leq \normFull{\mU_\LL^{\dag/2} \LL \AA  \mU_\LL^{\dag/2}}_2\,{.}
\]
\end{lem}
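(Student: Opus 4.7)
The plan is to prove the two inequalities sequentially, with the first providing the main tool for the second. The core identity driving everything is that, under the kernel assumptions, $\vec{x}^\top \LL \vec{x} = \vec{x}^\top \mU_\LL \vec{x}$ for all $\vec{x}$, since the skew-symmetric part contributes zero to a quadratic form, and moreover $\im{\LL} = \im{\mU_\LL}$ (the latter comes from taking orthogonal complements of the kernel hypothesis $\ker(\LL^\top) = \ker(\mU_\LL)$).

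To prove $\mU_\LL \preceq \LL^\top \mU_\LL^\dagger \LL$, I would fix an arbitrary $\vec{x}$ and apply a generalized Cauchy--Schwarz inequality with $\mU_\LL$ as the reference form. Concretely, since $\LL \vec{x} \in \im{\mU_\LL}$, we have $\LL \vec{x} = \mU_\LL \mU_\LL^\dagger \LL \vec{x}$, so
\[
\vec{x}^\top \LL \vec{x} \;=\; \vec{x}^\top \mU_\LL \bigl( \mU_\LL^\dagger \LL \vec{x} \bigr) \;=\; \bigl(\mU_\LL^{1/2} \vec{x}\bigr)^\top \bigl( \mU_\LL^{\dagger/2} \LL \vec{x} \bigr),
\]
so Cauchy--Schwarz yields $(\vec{x}^\top \LL \vec{x})^2 \leq (\vec{x}^\top \mU_\LL \vec{x})(\vec{x}^\top \LL^\top \mU_\LL^\dagger \LL \vec{x})$. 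Replacing $\vec{x}^\top \LL \vec{x}$ by $\vec{x}^\top \mU_\LL \vec{x}$ on the left and dividing by $\vec{x}^\top \mU_\LL \vec{x}$ (on its image, and noting both sides vanish on $\ker(\mU_\LL)$ by the kernel hypotheses on $\LL$) gives the desired PSD inequality.

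For the second inequality, I would unfold the definition of the operator norm $\|\AA\|_{\mU_\LL \to \mU_\LL}$ and reduce it to bounding $\|\AA \vec{x}\|_{\mU_\LL}^2 / \|\vec{x}\|_{\mU_\LL}^2$ over $\vec{x} \notin \ker(\mU_\LL)$. Applying the first part to the vector $\AA \vec{x}$ (valid because $\AA$ shares kernels with $\LL$, hence $\AA \vec{x}$ inherits the needed image properties) gives
\[
\|\AA \vec{x}\|_{\mU_\LL}^2 \;\leq\; (\AA \vec{x})^\top \LL^\top \mU_\LL^\dagger \LL (\AA \vec{x}) \;=\; \bigl\| \mU_\LL^{\dagger/2} \LL \AA \vec{x} \bigr\|_2^2.
\]
Then I would perform the change of variables $\vec{z} = \mU_\LL^{1/2} \vec{x}$ restricted to $\im{\mU_\LL}$, under which $\|\vec{x}\|_{\mU_\LL} = \|\vec{z}\|_2$ and $\mU_\LL^{\dagger/2} \LL \AA \vec{x} = \mU_\LL^{\dagger/2} \LL \AA \mU_\LL^{\dagger/2} \vec{z}$. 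The Rayleigh quotient of the right-hand side is then manifestly at most $\|\mU_\LL^{\dagger/2} \LL \AA \mU_\LL^{\dagger/2}\|_2^2$, completing the proof.

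The main obstacle is bookkeeping the kernel/image conditions to justify that pseudoinverse substitutions like $\LL = \mU_\LL \mU_\LL^\dagger \LL$ and the change of variables involving $\mU_\LL^{\dagger/2} \mU_\LL^{1/2} = \Pi_{\im{\mU_\LL}}$ are legitimate; once these identifications are in place the actual inequalities follow from a single application of Cauchy--Schwarz. Everything else is straightforward algebra.
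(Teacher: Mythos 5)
Your proof is correct, but it takes a genuinely different route from the paper's on both halves. For the PSD inequality $\mU_{\LL} \preceq \LL^\intercal \mU_{\LL}^\dagger \LL$, the paper decomposes $\LL = \mU_{\LL} + \mv$ with $\mv$ skew-symmetric, expands $\LL^\intercal \mU_{\LL}^\dagger \LL$, and observes that the cross terms cancel, leaving the exact identity $\LL^\intercal \mU_{\LL}^\dagger \LL = \mU_{\LL} + \mv^\intercal \mU_{\LL}^\dagger \mv \succeq \mU_{\LL}$; your Cauchy--Schwarz argument on the quadratic form yields only the inequality, but is equally valid and arguably more self-contained (your handling of the degenerate case $\vec{x} \in \ker(\mU_{\LL})$ is right, since then $\LL\vec{x}=0$ and both sides vanish). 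For the operator-norm bound, the paper works at the operator level: it writes $\normFull{\AA}_{\mU_{\LL}\to\mU_{\LL}} = \normFull{\mU_{\LL}^{1/2}\LL^\dagger(\LL\AA\mU_{\LL}^{\dagger/2})}_2$, using the kernel hypotheses on $\AA$ to justify $\LL^\dagger\LL\AA = \AA$, and then invokes the equivalent form $\LL^{\intercal\dagger}\mU_{\LL}\LL^\dagger \preceq \mU_{\LL}^\dagger$ of the first part; you instead apply the first part vector-wise to $\AA\vec{x}$ and finish with the change of variables $\vec{z} = \mU_{\LL}^{1/2}\vec{x}$, avoiding $\LL^\dagger$ altogether. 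One remark: applying the PSD inequality to $\AA\vec{x}$ needs no hypothesis on $\AA$ at all (a PSD inequality holds for every vector), so your parenthetical justification there is superfluous; the kernel assumption on $\AA$ is genuinely needed only where you silently drop the projection, i.e.\ to assert $\LL\AA\,\Pi_{\im{\mU_{\LL}}}\vec{x} = \LL\AA\vec{x}$ (equivalently $\mU_{\LL}^{\dagger/2}\LL\AA\mU_{\LL}^{\dagger/2}\vec{z} = \mU_{\LL}^{\dagger/2}\LL\AA\vec{x}$), which holds because $\ker(\AA)=\ker(\LL)=\ker(\mU_{\LL})$ — exactly the bookkeeping you flagged, and the same place the paper uses the assumption in the mirrored form $\LL^\dagger\LL\AA=\AA$.
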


\begin{proof}
We decompose $\mLL$ and $\mLL^{\top}$
as the sum/difference of a symmetric matrix
$\mU$ and a skew symmetric matrix $\mv$.
Specifically, write $\mLL=\mU+\mv$ for
\begin{align*}
\mU &:= \mU_{\mLL}  =(\mLL+\mLL^{\intercal})/2 \text{\ and\ }
 \mv :=(\mLL-\mLL^{\intercal})/2.
\end{align*}
This gives
\begin{align*}
\mLL^\intercal \mU^\dagger \mLL 
&=
\mU^\intercal \mU^\dagger \mU
  +\mU^\intercal \mU^\dagger \mv
  +\mv^\intercal \mU^\dagger \mU 
  +\mv^\intercal \mU^\dagger \mv.
\end{align*}
As $\mU \mU^\dagger \mv
   =\mv$ and $ \mv^{\top} \mU^\dagger \mU= \mv^{\top}$ by our kernel assumptions, and $\mv=-\mv^\intercal$, this simplifies to
 \begin{align*}
	\mLL^\intercal \mU^\dagger \mLL 
&=
	 \mU+\mv^\intercal \mU^\dagger \mv 
	 \succeq \mU,
\end{align*}
where we used the assumption that $\mU\succeq \mzero$ to guarantee that  $\mv^\intercal \mU^\dagger \mv\succeq \mzero$ for the final inequality.

The second part of the lemma follows by writing
\[
\normFull{\AA}_{\mU_\LL\rightarrow \mU_\LL} = \normFull{\mU_\LL^{1/2}\AA\mU_\LL^{\dag/2}}_2 = \normFull{\mU_\LL^{1/2} \LL^\dag \left(\LL \AA\mU_\LL^{\dag/2}\right)}_2\,{,}
\]
then applying the equivalent form of the previous inequality $\LL^{\top \dag} \mU_\LL \LL^{\dag} \preceq \mU_\LL^\dag$ in order to obtain
\[
\normFull{\mU_\LL^{1/2} \LL^\dag \left(\LL \AA\mU_\LL^{\dag/2}\right)}_2 \leq \normFull{\mU_\LL^{\dag/2} \LL \AA \mU_\LL^{\dag/2}}_2\,{.}
\]
\end{proof}

\section{Decomposition \label{sec:decomposition}}

Here we discuss the proof of Theorem~\ref{thm:decomposition_thm},
our main result on decomposing directed graphs. The theorem follows directly from standard results involving decomposing undirected graphs into expanders~\cite{SpielmanTengSolver:journal, SpielmanT11, KLOS14}. 

Before stating the decomposition result, we first define the \textit{conductance} of a graph:

\begin{definition} Given an undirected graph $G(V,E,w)$, we define the conductance of a $S \subseteq V$ by
$$\Phi(S) \defeq \frac{ \sum_{(u,v)\in E : u \in S, v \notin S} w_{uv} }{\min\{\textnormal{vol}(S), \textnormal{vol}(V\setminus S)\}}\,{,} $$
where $\textnormal{vol}(S) \defeq \sum_{u \in S} \sum_{v : (u,v) \in E} w_{uv} $.

The conductance of $G$ is then defined as 
$$\Phi(G) = \min_{S \subset V, S\neq\emptyset} \Phi(S)$$
\end{definition}

Relating the conductance of an undirected graph to its smallest nontrivial eigenvalue is done via Cheeger's inequality:
\begin{thm}[Cheeger's inequality, rephrased]\label{thm:cheeger}
Given an undirected graph $G(V,E,w)$, with a symmetric Laplacian $\mU = \md - \ma$, one has that $\mU$'s spectral gap satisfies:
$$\lambda_2( \md^{-1/2} \mU \md^{-1/2} ) \geq \frac{\Phi(G)^2}{4}\,{.}$$
\end{thm}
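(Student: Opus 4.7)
The plan is to prove the hard direction of Cheeger's inequality by combining the Courant--Fischer variational characterization of $\lambda_2$ with a sweep-cut rounding argument. First, let $\mN \defeq \md^{-1/2}\mU\md^{-1/2}$ be the normalized Laplacian. Since $\md^{1/2}\vones$ lies in the kernel of $\mN$, the variational characterization gives
\[
\lambda_2(\mN) = \min_{x \perp \md^{1/2}\vones,\ x \neq 0} \frac{x^\top \mN x}{x^\top x} = \min_{y : \vec{d}^\top y = 0,\ y \neq 0} \frac{y^\top \mU y}{y^\top \md y},
\]
under the substitution $y = \md^{-1/2} x$, where $\vec{d}$ is the degree vector. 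Denote this ratio by $R(y)$ and let $f$ be a minimizer. Note that $R(y)$ has the combinatorial expression $\sum_{(u,v)\in E} w_{uv} (y_u - y_v)^2 / \sum_v d_v y_v^2$.

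Next, I will reduce to the non-negative case. Split $f$ at a suitable ``median'' value and subtract it off, so that after a shift $\tilde f \defeq f - c\vones$ the positive part $g \defeq \tilde f_+$ (and the negative part, symmetrically) satisfies $\mathrm{vol}(\mathrm{supp}(g)) \leq \mathrm{vol}(V)/2$. A short computation, using $\vec{d}^\top f = 0$, shows that $R(g) \leq R(f)$ for at least one of the two parts (or at worst $2 R(f)$, which only costs a constant in the final bound).

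The heart of the argument is then a sweep-cut / threshold-rounding step. Renormalize so that $\max_v g_v = 1$, and for $t \in [0,1]$ define $S_t = \{v : g_v^2 > t\}$. By the co-area style identity,
\[
\int_0^1 \sum_{(u,v)\in E,\ u \in S_t, v \notin S_t} w_{uv} \, dt \;=\; \sum_{(u,v)\in E} w_{uv} |g_u^2 - g_v^2|,
\]
and $\int_0^1 \mathrm{vol}(S_t)\, dt = \sum_v d_v g_v^2$. Applying Cauchy--Schwarz to $|g_u^2 - g_v^2| = |g_u - g_v|(g_u+g_v)$ bounds the numerator by $\sqrt{2 R(g) \cdot \sum_v d_v g_v^2} \cdot \sum_v d_v g_v^2$ (after using $(g_u+g_v)^2 \le 2(g_u^2+g_v^2)$). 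Since the volume-support of $g$ is at most $\mathrm{vol}(V)/2$, every $S_t$ satisfies $\min(\mathrm{vol}(S_t),\mathrm{vol}(V\setminus S_t)) = \mathrm{vol}(S_t)$, so averaging gives a threshold $t^\ast$ with $\Phi(S_{t^\ast}) \leq \sqrt{2 R(g)}$, hence $\Phi(G) \leq \sqrt{2 R(g)}$. Squaring and tracking constants yields $\lambda_2(\mN) = R(f) \geq \Phi(G)^2/4$.

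The main obstacle is bookkeeping of constants in the reduction from $f$ to the non-negative $g$ and in the Cauchy--Schwarz step: the ``correct'' constant (and the exact form of the median shift) is what determines whether the bound comes out as $\Phi(G)^2/2$ or the slightly weaker $\Phi(G)^2/4$ stated here. Everything else is standard manipulation of the Rayleigh quotient and a telescoping sum over threshold cuts.
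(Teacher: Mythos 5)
The paper does not prove this theorem; it is cited in Appendix~C as a standard rephrasing of Cheeger's inequality and used as a black box to convert the conductance guarantee of Lemma~\ref{lem:decomp} into a spectral-gap guarantee. Your proposal is the textbook sweep-cut proof of the ``hard'' direction, and it is essentially correct. Two small remarks.

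First, the displayed Cauchy--Schwarz bound $\sqrt{2R(g)\cdot\sum_v d_v g_v^2}\cdot\sum_v d_v g_v^2$ has one extra factor of $\sqrt{\sum_v d_v g_v^2}$: the correct bound is $\sqrt{2R(g)}\cdot\sum_v d_v g_v^2$, obtained from $\sqrt{\sum_{(u,v)}w_{uv}(g_u-g_v)^2}\cdot\sqrt{2\sum_v d_v g_v^2}$, using $\sum_{(u,v)}w_{uv}(g_u+g_v)^2\leq 2\sum_{(u,v)}w_{uv}(g_u^2+g_v^2)=2\sum_v d_v g_v^2$. With that corrected, averaging yields $\Phi(G)\leq\sqrt{2R(g)}\leq\sqrt{2\lambda_2}$, i.e.\ $\lambda_2\geq\Phi(G)^2/2$, which is in fact stronger than the $\Phi(G)^2/4$ the paper states (so the constant 4 is only a convenient under-claim, not a sign of lost factors).

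Second, the reduction to a nonnegative function loses no constant at all, so the hedge ``or at worst $2R(f)$'' is unnecessary. Since $\vec{d}^\top f=0$, shifting by any constant $c$ leaves the numerator of the Rayleigh quotient unchanged and only increases the denominator, as $\sum_v d_v(f_v-c)^2=\sum_v d_v f_v^2 + c^2\vol(V)$; hence $R(f-c\vones)\leq R(f)$. Choosing $c$ so that both $\{v:f_v>c\}$ and $\{v:f_v<c\}$ have volume at most $\vol(V)/2$, and writing $g_\pm$ for the positive and negative parts of $f-c\vones$, the denominator splits exactly as $\sum_v d_v(g_+)_v^2+\sum_v d_v(g_-)_v^2$ and the numerator satisfies $\sum_{(u,v)}w_{uv}((f-c)_u-(f-c)_v)^2\geq\sum_{(u,v)}w_{uv}((g_+)_u-(g_+)_v)^2+\sum_{(u,v)}w_{uv}((g_-)_u-(g_-)_v)^2$, so at least one of $g_\pm$ has Rayleigh quotient at most $R(f-c\vones)\leq R(f)=\lambda_2$.
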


We refer to the following lemma, which is implicit in~\cite{SpielmanTengSolver:journal, SpielmanT11}. 

\begin{lemma}[Lemma 31 from~\cite{KLOS14}]\label{lem:decomp}
For an unweighted graph $G = (V,E)$, in $\tilde{O}(m)$ time, we can produce a partition $V_1, \dots,V_k$ of $V$, and a collection of sets $S_1,\dots,S_k \subseteq V$ with the following properties:
\begin{enumerate}
\item For all $i$, $S_i \subseteq V_i$.
\item For all $i$, there exists a set $T_i$ with $S_i \subseteq T_i \subseteq V_i$, such that $\Phi(G(T_i)) \geq \Omega(1/\log^2 n)$.
\item At least half of the edges are found within the sets $\{S_i\}$, i.e.
$$\sum_{i=1}^k \abs{E(S_i)} = \sum_{i=1}^k \abs{\{ (a,b)\in E : a, b \in S_i \}} \geq \frac{1}{2} \abs{E}\,{.}$$ 
\end{enumerate}
\end{lemma}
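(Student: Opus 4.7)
The plan is to derive this as a direct consequence of the Spielman--Teng expander decomposition framework~\cite{SpielmanTengSolver:journal,SpielmanT11}. I would invoke a standard procedure that, given an unweighted $G$, in $\tilde O(m)$ time produces a partition $V = V_1 \cup \cdots \cup V_k$ such that every induced subgraph $G[V_i]$ has conductance $\Omega(1/\log^2 n)$ and such that the total number of edges with endpoints in distinct parts is $O(m/\log n)$.

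Given such a decomposition, I would simply take $S_i = T_i = V_i$ for every $i$. Conditions~(1) and~(2) are then immediate, since $G(T_i) = G[V_i]$ is an expander with the required conductance by construction. Condition~(3) also follows easily: $\sum_i |E(S_i)|$ counts precisely the edges whose endpoints lie in the same part, which is $|E| - O(|E|/\log n) \ge |E|/2$ once $n$ is larger than some fixed constant (the remaining finitely many cases can be handled trivially in polynomial time).

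The work therefore collapses onto producing the expander decomposition. The plan for that step is the standard iterative extraction: maintain a pool $U \subseteq V$ of unprocessed vertices (initially $U = V$) and repeatedly invoke a local clustering primitive (for example \texttt{Nibble} or the Andersen--Chung--Lang personalized-PageRank cluster) with a seed in $U$ and target conductance $\phi = \Theta(1/\log^2 n)$. This primitive either returns a cluster of conductance at most $\phi$ around the seed, which we peel off $U$ as a ``bad'' chunk, or certifies a cluster around the seed of conductance $\Omega(\phi)$ which we extract as one of the $V_i$ and remove from $U$. Finally, the stray vertices that were peeled off as bad chunks are reassigned to neighboring $V_i$ (this only further reduces the inter-cluster edge count at most by a constant factor, which can be absorbed).

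The hard part is ensuring simultaneously that (i) the aggregate running time is $\tilde O(m)$ and (ii) the total cut weight across all peelings is $O(m/\log n)$. Item~(i) is handled by exploiting that the running time of the local primitive is proportional to the volume of its output, and by Cheeger-type charging each edge is only paid for a constant number of times over the whole execution. Item~(ii) is a clean amortization: each peeled low-conductance cluster contributes boundary weight at most $\phi$ times its volume, and the sum of peeled volumes is $O(m)$, giving total cut weight $O(\phi m) = O(m/\log^2 n) \le O(m/\log n)$. Both ingredients are established in the cited works, so ultimately the proof reduces to invoking their machinery.
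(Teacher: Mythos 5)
There is a genuine gap. Note first that the paper does not prove this statement at all: it is imported verbatim as Lemma 31 of~\cite{KLOS14}, itself distilled from the Spielman--Teng partitioning machinery~\cite{SpielmanTengSolver:journal,SpielmanT11}. Your proposal instead tries to derive it from a strictly stronger primitive: a nearly-linear-time decomposition in which \emph{every} induced subgraph $G[V_i]$ has conductance $\Omega(1/\log^2 n)$ and only $O(m/\log n)$ edges cross between parts, after which you set $S_i=T_i=V_i$. No such ``standard procedure'' follows from the cited machinery. The Spielman--Teng/local-clustering toolkit (Nibble, ACL-style PageRank clustering) can, when it fails to find a sparse cut near a seed, certify only that the relevant set is \emph{contained in} some set of high conductance in the original graph; it cannot certify that the extracted piece, or the piece left behind after peeling, induces an expander. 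That is precisely why the lemma is stated with the awkward sandwich $S_i \subseteq T_i \subseteq V_i$, where the expander $T_i$ is merely guaranteed to exist and is never computed: the weaker containment guarantee is what the nearly-linear-time algorithms actually deliver. A true induced-expander decomposition with these parameters in $\tilde O(m)$ time requires the later trimming/pruning technology (cut-matching plus expander trimming), which is not part of the works you invoke, so the step ``invoke a standard procedure'' is where the proof breaks.

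The sketch of the peeling argument has further concrete problems even granting generous primitives. First, the dichotomy ``either return a cluster of conductance at most $\phi$, or certify a cluster of conductance $\Omega(\phi)$ around the seed'' is not a guarantee any of the cited local routines provide, for the reason above. Second, your final step of reassigning the peeled ``bad'' chunks to neighboring clusters $V_i$ is not harmless: attaching low-conductance stray vertices to a cluster can destroy whatever induced expansion it had, so conditions (1)--(2) with $S_i=T_i=V_i$ would no longer hold after the repair (in the actual lemma this is exactly what the distinction between $V_i$ and the core sets $S_i, T_i$ absorbs). Third, the amortization claims --- each edge paid for $O(1)$ times over the whole execution, and total cut weight $O(\phi m)$ --- are asserted rather than argued; in the Spielman--Teng analysis the recursion/peeling has logarithmically many rounds and the volume removed per round is only a constant fraction, which is where the polylogarithmic losses (and the $1/\log^2 n$ conductance rather than constant) come from. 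The correct course here is either to cite the lemma as the paper does, or to prove the statement as stated, i.e.\ with the existential $T_i$ and the ``half the edges inside the $S_i$'' guarantee, following the Spielman--Teng partition analysis, rather than to route through a full induced-expander decomposition that the cited tools do not supply.
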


We use this decomposition lemma in order to first prove the result for unweighted graphs. The more general weighted version will then follow from a bucketing argument.

The decomposition can be produced by iteratively applying Lemma~\ref{lem:decomp} on the symmetrized input graph $\mU$, choosing $\mlap^{(i)}$ to be the directed Laplacian induced on vertices in $S_i$, and $\mU^{(i)}$ be the undirected Laplacian induced on vertices in $T_i$. Since $E[S_i] \subseteq E[T_i]$, clearly $\mdiag(\ms_{\mlap^{(i)}}) \preceq \mdiag(\mU^{(i)})$.~\footnote{Note that the decomposition lemma gives us a much stronger property than what we are using, since our sparsification routine only requires degree dominance.} The lower bound on the spectral gap for each $\mU_i$ is given by the second property of Lemma~\ref{lem:decomp}, combined with Cheeger's inequality (Theorem~\ref{thm:cheeger}): the spectral gap of $\mU^{(i)}$ is at least $1/\tilde{O}(1)$. Also, note that the sum of supports of these $\mU^{(i)}$ is $O(n)$, since the graphs $T_i$ are vertex-disjoint subgraphs of $G$, we have $\sum_{i=1}^k \mU^{(i)} \preceq \mU$.

Removing the graphs induced by $S_i$ (or, equivalently, $\mlap^{(i)}$), for all $i$,  reduces the number of edges in $G$ by half. Therefore, after $\lceil \log n \rceil$ iterations of applying Lemma~\ref{lem:decomp}, the edges on $G$ will have been exhausted, and we are done. Since each iteration produces undirected Laplacians whose sum is bounded by $\mU$, the sum of all the undirected Laplacians produced during the $\lceil \log n \rceil$ iterations is at most $\lceil \log n \rceil \cdot \mU$. Also, the sum of support sizes $O(n\log n)$. Hence we have a $(\tilde{O}(n), 1/\tilde{O}(1), \tilde{O}(1))$-decomposition of $G$.

In order to obtain a weighted version of the theorem, we initially decompose the graph $G$ into $b = \lceil \log (w_{\max} / w_{\min}) \rceil$ graphs $G_1, \dots, G_b$ , where $w_{\max}$ and $w_{\min}$ represent the maximum, and minimum arc weight in $G$, respectively, and $G_j = (V, E_j)$ with $E_j = \{e \in E :  w_{\min} \cdot 2^{t-1} \leq w_e < w_{\min} \cdot 2^t \}$. For each of these graphs, the cover corresponding to the unweighted $G_j$, scaled by $w_{\min} \cdot 2^t$, becomes a $(\tilde{O}(n), 1/\tilde{O}(1), \tilde{O}(1))$-decomposition of the weighted $G_j$. Therefore, taking the union of all the decompositions from the $b$ graphs, we obtain a $(\tilde{O}(bn), 1/\tilde{O}(1), \tilde{O}(b))$-decomposition of $G$. Since all weights are polynomially bounded, $b = \tilde{O}(1)$, and we obtain the desired result.

In addition, it can be shown that the result from Lemma~\ref{lem:decomp} can be made parallel using~\cite{OrecchiaV11}. Indeed, using their SDP-based balanced partitioning routine, one can in $\tilde{O}(1)$ parallel time and $\tilde{O}(m)$ work find a balanced cut with with polylogarithmic (i.e.  $1/\tilde{O}(1)$) conductance, or certify that none exists. Such a partitioning routine is then called recursively on the pieces of the input graph that are not yet certified to be expanders, yielding a nearly-linear work, polylogarithmic time algorithm which produces the partition from Lemma~\ref{lem:decomp}. We also refer the reader to Section 6 of~\cite{PengS14}, for a discussion concerning the parallelization of the decomposition routine.

\section{The Complete Solver}
\label{sec:complete}

In this section we provide some details of the full algorithm for computing stationary distributions
and solving directed Laplacians.
Various applications of these routines are
presented in~\cite{cohen2016faster} and briefly
enumerated in Section~\ref{sec:intro:results}.
We provide details on these two routines of computing
stationary and solving linear system
because they are most important for completing the picture.

We start by stating the stationary computation algorithm
given as Algorithm 1 in Section 3.2 of~\cite{cohen2016faster}.
One difference in presentation is that the routine as shown
in~\cite{cohen2016faster} relies on solving matrices
whose diagonal entries are strictly bigger than the
total magnitude of off-diagonal entries in the corresponding
row/column.
On the other hand, we have only presented a solver for Eulerian Laplacians. The
 diagonal entries of these matrices are equal to the total magnitude of the off-diagonal entries.
As a result, we also need to incorporate the reduction from such
matrices to Eulerian Laplacians from Section 5 of~\cite{cohen2016faster}.
Pseudocode of this routine is in Figure~\ref{fig:findStationary}.

\begin{figure}[ht]

\begin{algbox}

$\vec{s}=\textsc{ComputeStationary}(\mlap, \alpha)$

\textbf{Input:} $n \times n$ directed Laplacian $\mlap$,
	with diagonal $\DD$.\\
	Restart parameter $\alpha \in [0, \frac{1}{2}]$.

\textbf{Output:} approximate stationary distribution $\vec{s}$.

\begin{enumerate}
\item Set $\vec{x}^{(0)} \leftarrow \DD^{-1} \allones$, $\epsilon \leftarrow \poly\left(\frac{\alpha}{n}\right)$.
\item For $t = 0, \ldots,k= 3 \ln \left(\alpha^{-1}\right)$
\begin{enumerate}
\item Set $\vec{e}^{(t)} \leftarrow
			\max\left\{\allzeros,~\mlap \vec{x}^{(t)},~
				\diag(\vec{x}^{(t)}) \mlap \allones \right\},$
	and let $\mE^{(t)} = \mdiag(\vec{e}^{(t)}),$ $\XX^{(t)} = \mdiag(\vec{x}^{(t)})$ be the
	corresponding diagonal matrices.
\item Create $\mlap^{(t + 1)} \in \R^{(n + 1) \times (n + 1)}$
	from $(\mE^{(t)} + \mlap) \XX^{(t)}$
	by adding a row/column to make all row/column sums $0$.
\item Create a length $n + 1$ vector $\vec{b}^{(t)}$
	with sum $0$ whose first $n$ entries are given by the vector $\frac{1}{\norm{\DD^{-1} \vec{e}^{(t)}}_1}
		\DD^{-1} \vec{e}^{(t)}$.
\item Let $\vec{z}^{(t + 1)}$
	be the first $n$ entries of the vector returned by
	$\textsc{SolveEulerian}(\mlap^{(t)}, \vec{b}^{(t)}, \epsilon )$,
	and $\vec{x}^{(t + 1)} \leftarrow \XX^{(t)} \vec{z}^{(t + 1)}$
\end{enumerate}
\item Return $\vec{s} = \frac{\DD \vec{x}^{(k + 1)}}{\norm{\DD \vec{x}^{(k + 1)}}_1}$.
\end{enumerate}
\end{algbox}

\caption{Stationary Computational Algorithm.
This routine combines the reduction from solving
strictly row/column dominant matrices to Eulerian
Laplacians from Section 5 of~\cite{cohen2016faster}
with the stationary finding algorithm from Section 3.}

\label{fig:findStationary}

\end{figure}

This can then be turned into a solver for a strongly
connected Laplacian by rescaling it by the
stationary, and solve the (approximately) Eulerian
Laplacian that result from this.
The pseudocode in Figure~\ref{fig:fullSolve}
is based on Sections 7.1. and
7.3. of~\cite{cohen2016faster}.

\begin{figure}
\begin{algbox}

$\vec{s}=\textsc{SolveFull}(\mlap, \alpha)$

\textbf{Input:} $n \times n$ directed Laplacian $\mlap$,
	with diagonal $\DD$, desired error $\epsilon$.
	vector $\vec{b}$

\textbf{Output:} approximate solution to $\mlap \vec{x} = \vec{b}$.

\begin{enumerate}
\item Estimate the mixing time of $\mlap$, $T_{mix}$
	by binary searching on the mixing times of $\mlap + \alpha \mI$.
\item Add $\frac{\epsilon}{T_{\min} \poly(n)}$
to $\mlap$ to get a strictly diagonally dominant
matrix $\mlaphat$.
\item Compute approximate stationary distribution
of $\mlaphat$, $\vec{s}$.
\item Let $\vec{x} \leftarrow \textsc{SolveEulerian}(\mlaphat \DD^{-1} \ms, \vec{b}, \epsilon / \poly(n))$.
\item Return $\md^{-1} \ms \vec{x}$.
\end{enumerate}

\end{algbox}

\caption{Full solver algorithm for strongly
connected Laplacians.
It first perturbs $\mlap$ to form a matrix where
a good stationary distribution is easy to compute,
and uses the normalization given by it to reduce
the problem solving on an Eulerian Laplacian.}

\label{fig:fullSolve}

\end{figure}

\section{Approximating the Harmonic Symmetrization}
\label{sec:harmonic_approx}
\newcommand{\maw}{\widetilde{\ma}}
To solve an Eulerian Laplacian system $\mlap \vec{x} = \vec{b},$ \cite{cohen2016faster} instead solved the system $\mlap^\top \mU_\mlap^\dagger \mlap = \mlap^\top \mU_\mlap \vec{b}$. Since $\mU_\mlap$ is a symmetric Laplacian, its pseudoinverse can be applied in nearly linear time via a variety of methods~\cite{SpielmanTengSolver:journal,KoutisMP10,KoutisMP11,KelnerOSZ13,lee2013efficient,PengS14,KyngLPSS16}, and therefore given a linear system solver for $\mlap^\top \mU_\mlap^\dagger \mlap$ one is achieved for $\mlap$ with only a polylogarithmic running time overhead at worst. %

The matrix $\mlap^\top \mU_\mlap^\dagger \mlap$ was referred to in \cite{cohen2016faster} as the \emph{harmonic symmetrization} of $\mlap$ and it was shown that linear systems in this harmonic symmetrization can be solved in $O((nm^{3/4} + n^{2/3} m) \log^{O(1)} (n \kappa))$ time.
Here we show that if $\widetilde{\ma}$ is a strong approximation for $\ma,$ then $\maw^{\top}\mU^{\dagger}_{\maw} \maw$ is a spectral approximation for $\ma^{\top}\mU^{\dagger}_\ma\ma$. Consequently, by simply producing a sparsifier for $\mlap$ in nearly linear time using the results of Section~\ref{sec:sparsification}, solving the harmonic symmetrization using \cite{cohen2016faster}, and then using this solver as a preconditioner in the standard symmetric sense~\cite{Saad03:book}, yields an $O(m + n^{7/3} 
\log^{O(1)} (n \kappa))$ time algorithm for solving directed Laplacian systems.

\begin{lem}
If $\maw$ is an $\epsilon$-strong-approximation of $\ma$ for $\epsilon \in (0, 1/2)$ and $\ker(\mU_\ma) \subseteq \ker(\ma)$, then
\[
\left(1-2\epsilon\right)^{2} \ma^{\top} \mU_{\ma}^\dagger \ma
\preceq \maw^{\top} \mU_{\maw}^\dagger \maw
\preceq\left(1+\epsilon\right)^{3} \ma^{\top} \mU_{\ma}^\dagger \ma \,.
\]
\end{lem}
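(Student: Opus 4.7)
The plan is to decompose each matrix into symmetric and skew-symmetric parts and exploit the resulting structural identity $\ma^\top \mU_\ma^\dagger \ma = \mU_\ma + \mv^\top \mU_\ma^\dagger \mv$, where $\mv \defeq \tfrac{1}{2}(\ma - \ma^\top)$. To derive this identity one writes $\ma = \mU_\ma + \mv$ and $\ma^\top = \mU_\ma - \mv$, expands the product, and observes that $\ker(\mU_\ma) \subseteq \ker(\ma)$ combined with $(\ma + \ma^\top) x = 0$ on that kernel forces $\ker(\mU_\ma) \subseteq \ker(\ma^\top) \subseteq \ker(\mv)$, so that $\im{\mv} \subseteq \im{\mU_\ma}$ and both $\mU_\ma \mU_\ma^\dagger \mv = \mv$ and $\mv \mU_\ma^\dagger \mU_\ma = \mv$ hold. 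Since Lemma~\ref{lem:asym_strong_implies_undir} gives $\ker(\mU_\maw) = \ker(\mU_\ma)$ for $\epsilon < 1$, the analogous identity $\maw^\top \mU_\maw^\dagger \maw = \mU_\maw + \widetilde{\mv}^\top \mU_\maw^\dagger \widetilde{\mv}$ holds for $\widetilde{\mv} \defeq \tfrac{1}{2}(\maw - \maw^\top)$.

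Next I will harvest the numerical consequences of the hypothesis. Lemma~\ref{lem:asym_strong_implies_undir} gives $(1-\epsilon)\mU_\ma \preceq \mU_\maw \preceq (1+\epsilon)\mU_\ma$, and pseudo-inverting on the common image yields $(1+\epsilon)^{-1}\mU_\ma^\dagger \preceq \mU_\maw^\dagger \preceq (1-\epsilon)^{-1}\mU_\ma^\dagger$. Setting $\Delta \defeq \maw - \ma$, the skew difference $\Delta_\mv \defeq \widetilde{\mv} - \mv = \tfrac{1}{2}(\Delta - \Delta^\top)$ inherits the bound $\|\mU_\ma^{\dagger/2}\Delta_\mv\mU_\ma^{\dagger/2}\|_2 \leq \epsilon$ from the triangle inequality, which is equivalent to $\Delta_\mv^\top \mU_\ma^\dagger \Delta_\mv \preceq \epsilon^2 \mU_\ma$.

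For the upper bound I start from $\maw^\top \mU_\maw^\dagger \maw = \mU_\maw + \widetilde{\mv}^\top \mU_\maw^\dagger \widetilde{\mv}$, replace $\mU_\maw$ with $(1+\epsilon)\mU_\ma$ and $\mU_\maw^\dagger$ with $(1-\epsilon)^{-1}\mU_\ma^\dagger$, and apply the matrix AM-GM inequality $(\mv + \Delta_\mv)^\top \mU_\ma^\dagger (\mv + \Delta_\mv) \preceq (1+\eta)\mv^\top \mU_\ma^\dagger \mv + (1+\eta^{-1})\Delta_\mv^\top \mU_\ma^\dagger \Delta_\mv$. Substituting $\mv^\top \mU_\ma^\dagger \mv = \ma^\top \mU_\ma^\dagger \ma - \mU_\ma$ and $\Delta_\mv^\top \mU_\ma^\dagger \Delta_\mv \preceq \epsilon^2 \mU_\ma$, the $\mU_\ma$-contributions gather with coefficient $(-\eta + \epsilon^2/\eta)/(1-\epsilon)$, which vanishes at $\eta = \epsilon$ and leaves $\maw^\top \mU_\maw^\dagger \maw \preceq \frac{1+\epsilon}{1-\epsilon}\ma^\top \mU_\ma^\dagger \ma$. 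A short check that $(1+\epsilon)^2(1-\epsilon) \geq 1$ for $\epsilon \in (0,1/2]$ promotes this to the required $(1+\epsilon)^3$ bound. The lower bound runs symmetrically: use $\mU_\maw \succeq (1-\epsilon)\mU_\ma$ and $\mU_\maw^\dagger \succeq (1+\epsilon)^{-1}\mU_\ma^\dagger$, and the reverse matrix AM-GM $(\mv + \Delta_\mv)^\top \mU_\ma^\dagger (\mv + \Delta_\mv) \succeq (1-\eta)\mv^\top \mU_\ma^\dagger \mv + (1-\eta^{-1})\Delta_\mv^\top \mU_\ma^\dagger \Delta_\mv$, taking care that $1-\eta^{-1} < 0$ for $\eta = \epsilon \in (0,1)$ flips the direction in which the bound $\Delta_\mv^\top \mU_\ma^\dagger \Delta_\mv \preceq \epsilon^2 \mU_\ma$ gets applied. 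The same cancellation at $\eta = \epsilon$ gives $\maw^\top \mU_\maw^\dagger \maw \succeq \frac{1-\epsilon}{1+\epsilon}\ma^\top \mU_\ma^\dagger \ma$, and the inequality $(1-2\epsilon)^2(1+\epsilon) \leq 1-\epsilon$, which holds for $\epsilon \leq 1/\sqrt{2}$ and in particular for $\epsilon \leq 1/2$, produces the stated $(1-2\epsilon)^2$ constant.

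The main obstacle is engineering the cancellation cleanly. Applying the matrix AM-GM directly to $(\ma + \Delta)^\top \mU_\ma^\dagger (\ma + \Delta)$, without first extracting the symmetric part, leaves an extra $\mU_\ma$-term that only admits the crude bound $\mU_\ma \preceq \ma^\top \mU_\ma^\dagger \ma$ from Lemma~\ref{lem:sym-hsm}; combined with the $(1-\epsilon)^{-1}$ factor coming from $\mU_\maw^\dagger \preceq (1-\epsilon)^{-1} \mU_\ma^\dagger$, this yields only $(1+\epsilon)^2/(1-\epsilon)$, which is strictly larger than $(1+\epsilon)^3$. It is the decomposition into symmetric and skew parts, together with the identity $\ma^\top \mU_\ma^\dagger \ma = \mU_\ma + \mv^\top \mU_\ma^\dagger \mv$, that isolates the $\mU_\ma$-summand inside the harmonic symmetrization and lets the AM-GM slack on the skew side cancel it exactly at $\eta = \epsilon$, producing the tighter constants.
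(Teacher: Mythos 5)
Your proof is correct, and it takes a genuinely different route from the paper's. The paper conjugates by $\mU_{\ma}^{\dagger/2}$, sets $\mm=\mU_{\ma}^{\dagger/2}\widetilde{\ma}\mU_{\ma}^{\dagger/2}$, $\mn=\mU_{\ma}^{\dagger/2}\ma\mU_{\ma}^{\dagger/2}$, and applies a generic perturbation bound (Lemma~\ref{lem:relative-diff}) to the whole matrices; the additive $\epsilon^{2}\mI$ terms are then absorbed multiplicatively using $\mn^{\top}\mn\succeq\mI$, i.e.\ $\ma^{\top}\mU_{\ma}^{\dagger}\ma\succeq\mU_{\ma}$ from Lemma~\ref{lem:sym-hsm}, which first compares $\widetilde{\ma}^{\top}\mU_{\ma}^{\dagger}\widetilde{\ma}$ to $\ma^{\top}\mU_{\ma}^{\dagger}\ma$ with the \emph{same} middle matrix, and only afterwards swaps $\mU_{\ma}^{\dagger}$ for $\mU_{\widetilde{\ma}}^{\dagger}$ via Lemma~\ref{lem:asym_strong_implies_undir} at the cost of one more $(1\pm\epsilon)$ factor. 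You instead exploit the exact identity $\ma^{\top}\mU_{\ma}^{\dagger}\ma=\mU_{\ma}+\mv^{\top}\mU_{\ma}^{\dagger}\mv$ (the same identity that underlies the proof of Lemma~\ref{lem:sym-hsm}), do the norm change separately on the symmetric and skew pieces, and engineer an exact cancellation of the $\mU_{\ma}$-terms at $\eta=\epsilon$: your coefficient computation $(-\eta+\epsilon^{2}/\eta)/(1-\epsilon)$ and its mirror are correct, the matrix AM--GM steps are valid as quadratic-form inequalities, the triangle-inequality bound $\|\mU_{\ma}^{\dagger/2}(\widetilde{\mv}-\mv)\mU_{\ma}^{\dagger/2}\|_{2}\leq\epsilon$ is right, and the final scalar checks $(1+\epsilon)^{2}(1-\epsilon)\geq 1$ and $(1-2\epsilon)^{2}(1+\epsilon)\leq 1-\epsilon$ hold on the stated range. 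What this buys you is slightly tighter intermediate constants, $\tfrac{1-\epsilon}{1+\epsilon}\,\ma^{\top}\mU_{\ma}^{\dagger}\ma\preceq\widetilde{\ma}^{\top}\mU_{\widetilde{\ma}}^{\dagger}\widetilde{\ma}\preceq\tfrac{1+\epsilon}{1-\epsilon}\,\ma^{\top}\mU_{\ma}^{\dagger}\ma$, which you then relax to the stated bounds; the paper's argument is shorter because it reuses black-box lemmas but is correspondingly looser in how it combines factors. One presentational nit: when you assert the analogous identity for $\widetilde{\ma}$ and the equivalence $(\widetilde{\mv}-\mv)^{\top}\mU_{\ma}^{\dagger}(\widetilde{\mv}-\mv)\preceq\epsilon^{2}\mU_{\ma}$, you should say explicitly that you are also using the kernel clause of Definition~\ref{def:epsaprox}, namely $\ker(\mU_{\ma})\subseteq\ker(\widetilde{\ma}-\ma)\cap\ker((\widetilde{\ma}-\ma)^{\top})$, so that $\ker(\mU_{\widetilde{\ma}})=\ker(\mU_{\ma})\subseteq\ker(\widetilde{\ma})\cap\ker(\widetilde{\ma}^{\top})$ and the difference of skew parts vanishes on $\ker(\mU_{\ma})$; equal kernels of the symmetrizations alone do not give this, but the definition supplies it immediately, so this is easily patched and not a real gap.
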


\begin{proof}
Let $\mm=\mU^{\dagger/2}_{\ma}\maw\mU^{\dagger/2}_\ma$ and $\mn=\mU^{\dagger/2}_{\ma}\ma\mU^{\dagger/2}_{\ma}$. The definition of strong approximation implies that $\norm{\mm-\mn}_{2} \leq \epsilon$. Applying Lemma~\ref{lem:relative-diff}, a general lemma regarding the difference of matrices, yields that
\[
(1 - \epsilon) \mn^{\top}\mn - \epsilon^{-1}\epsilon^{2}\mI 
\preceq \mm^{\top}\mm\preceq(1 + \epsilon) \mn^{\top} \mn+(1 + \epsilon^{-1})\epsilon^{2}\mI\,.
\]
Now let $\vec{x} \in \R^n$ be an arbitrary vector perpendicular to the kernel of $\mU_\ma$. For such a vector $\vx,$ we have
\[
\vec{x}^{\top}\mn^{\top}\mn \vec{x}
= \vec{x}^{\top}\mU^{\dagger/2}_\ma \ma^{\top}\mU^{\dagger}_\ma \ma \mU^{\dagger/2}_\ma\vec{x}
\geq \vec{x}^{\top}\mU^{\dagger/2}_{\ma}\mU_{\ma}\mU^{\dagger/2}_{\ma}\vec{x}
\geq \vec{x}^{\top}\mI \vec{x}\,{,}
\]
where we used that by Lemma~\ref{lem:sym-hsm} the harmonic symmetrization spectrally dominates the symmetric Laplacian, i.e., $\ma^\top \mU_\ma^\dagger \ma \succeq \mU_{\ma}$. Consequently, for $\mc=\ma^{\top}\mU^{\dagger}_\ma\ma$ and $\widetilde{\mc}=\maw^{\top}\mU^{\dagger}_\ma \maw$ we have 
\[
(1-2\epsilon) \vec{x}^{\top}\mU^{\dagger/2}_\ma \mc \mU^{\dagger/2}_\ma\vec{x}
\preceq \vec{x}^{\top}\mU^{\dagger/2}_\ma \widetilde{\mc} \mU^{\dagger/2}_\ma\vec{x}
\preceq(1 + 2\epsilon + \epsilon^{2}) \vec{x}^{\top}\mU^{\dagger/2}_\ma\mc\mU^{\dagger/2}_{\ma}\vec{x}\,{.}
\]
One can easily see that when $x\in\ker(\mU_\ma)$ all of the terms are $0$. Hence, we have that the above holds for all $x$. Since $\ker(\mU_\ma) \subseteq \ker(\ma)$ this in turn implies
$
(1 - 2\epsilon) \mc \preceq \widetilde{\mc} \preceq (1 + \epsilon)^2 \mc
$.  Furthermore, since
$
(1-\epsilon)\mU_{\ma}
\preceq\mU_{\maw}\preceq(1+\epsilon)\mU_{\ma}
$ by Lemma ~\ref{lem:asym_strong_implies_undir}, we also have $(1-\epsilon)\mU_{\maw}^\dagger \preceq \mU_{\ma}^\dagger \preceq (1 + \epsilon) \mU_{\maw}^\dagger$. The result follows from combining and using $(1 - 2 \epsilon) \leq (1 - \epsilon)$.
\end{proof}

\section{Reducing the Condition Number}
\label{sec:reduction}
In this section, we present a reduction from the problem of (approximately) solving Eulerian Laplacians to solving
Eulerian Laplacians that are at most polynomially ill-conditioned, with a logarithmic dependence on condition number.
This reduces the overall dependency on $\kappa$ to logarithmic
instead of sub-polynomial.
The main result of this section is:

\begin{theorem}
\label{thm:reduction}
There exists a procedure $\textsc{CrudeSolveIllConditioned}$ which, when given an $n \times n$ Eulerian Laplacian $\mlap$ with $m$ non-zeros such that
the condition number of $\mlap + \mlap^{\top}$ is bounded by $\kappa$, returns a crude approximate solution $x$ to $\mlap x = b$
in the sense that
\[
\norm{\vec{x}-\mlap^\dagger \vec{b}}_{\mU_{\mlap}} \leq \frac{1}{2} \norm{\mlap^\dagger \vec{b}}_{\mU_{\mlap}}.
\]
This procedure performs only $O(\log(n \kappa))$ calls to an approximate Eulerian Laplacian solver, each
on $O(n)$ vertices with $O(m)$ nonzero entries, with error parameter $\frac{1}{n^{O(1)}}$ and each with condition number $n^{O(1)}$,
plus $O(m \log(n \kappa))$ additional work.

Furthermore, if the Eulerian solver is an implicit polynomial,
the overall procedure is an implicit polynomial as well.
\end{theorem}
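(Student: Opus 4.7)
The plan is to reduce the ill-conditioned solve to a sequence of polynomially-conditioned ones via preconditioned iterative refinement along a geometric schedule of additively regularized Eulerian Laplacians. Concretely, take $\mathcal{K}$ to be the Laplacian of an explicit $O(1)$-regular expander on $n$ vertices (equivalently, any sparse sparsifier of the complete graph); $\mathcal{K}$ is symmetric, hence Eulerian, has $\ker(\mathcal{K}) = \sspan(\vones) = \ker(\mlap)$, satisfies $\lambdanonzero(\mathcal{K}), \|\mathcal{K}\|_2 = \Theta(1)$, and has only $O(n)$ nonzero entries. For any $\alpha \geq 0$, the matrix $\mlap_\alpha \defeq \mlap + \alpha\mathcal{K}$ is then a Eulerian Laplacian with the same kernel and $O(m+n)$ nonzero entries, and its symmetrization satisfies
\[
\kappa\bigl(\mU_{\mlap_\alpha}\bigr) \;\leq\; 1 + \frac{\lambda_{\max}(\mU_\mlap)}{\alpha\,\lambdanonzero(\mathcal{K})},
\]
which is $n^{O(1)}$ whenever $\alpha \geq \lambda_{\max}(\mU_\mlap)/n^{O(1)}$, so every such $\mlap_\alpha$ satisfies the hypothesis of the assumed black-box solver.

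I would then refine along a geometric sequence of shrinking regularizations. Fix $T = \Theta(\log(n\kappa))$, pick $\alpha_0 = \Theta(\lambda_{\max}(\mU_\mlap))$ (so $\mlap_{\alpha_0}$ is $O(1)$-conditioned), set $\alpha_t = 2^{-t}\alpha_0$ (so $\alpha_T$ is well below $\lambda_{\min}(\mU_\mlap)$), and iterate from $x_0 = \vzero$:
\[
x_{t+1} \;=\; x_t \;+\; \widetilde{\mz}_{\alpha_{t+1}}\bigl(b - \mlap\, x_t\bigr),
\]
where $\widetilde{\mz}_{\alpha_{t+1}}$ is the output of a single call to the polynomially-conditioned Eulerian solver applied to $\mlap_{\alpha_{t+1}}$ at accuracy $1/n^{O(1)}$. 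This uses exactly $T = O(\log(n\kappa))$ solver calls plus $T$ matrix--vector multiplications against $\mlap$, matching the claimed $O(m\log(n\kappa))$ additional work; since each step is linear in $b$, the overall procedure is an implicit polynomial whenever the Eulerian solver is.

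For correctness, the goal is to show that the $\mU_\mlap$-seminorm error $e_t \defeq x_t - \mlap^\dagger b$ reaches $\tfrac{1}{2}\|\mlap^\dagger b\|_{\mU_\mlap}$ by step $T$. In the exact-solve regime, $e_{t+1} = (\II_{\im{\mlap}} - \mlap_{\alpha_{t+1}}^{-1}\mlap)\,e_t = \alpha_{t+1}\,\mlap_{\alpha_{t+1}}^{-1}\mathcal{K}\,e_t$, so the key estimate is a suitable bound on $\|\alpha\,\mlap_\alpha^{-1}\mathcal{K}\|_{\mU_\mlap\to\mU_\mlap}$. Because $\mlap$ is asymmetric, I would pass through Lemma~\ref{lem:sym-hsm} to obtain
\[
\big\|\alpha\,\mlap_\alpha^{-1}\mathcal{K}\big\|_{\mU_\mlap\to\mU_\mlap}
\;\leq\; \big\|\mU_\mlap^{\dag/2}\,\mlap\cdot\alpha\,\mlap_\alpha^{-1}\mathcal{K}\cdot\mU_\mlap^{\dag/2}\big\|_2,
\]
and then use the identity $\alpha\,\mlap\,\mlap_\alpha^{-1}\mathcal{K} = \alpha\mathcal{K} - \alpha^2\,\mathcal{K}\mlap_\alpha^{-1}\mathcal{K}$ to reduce the right-hand side to PSD estimates relating $\alpha\mathcal{K}$ to $\mU_{\mlap_\alpha} = \mU_\mlap + \alpha\mathcal{K}$. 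The errors from the inexact solves (of size $1/n^{O(1)}$ each) can be absorbed into the contraction factor using the composition and change-of-norm machinery from Section~\ref{sec:approxInverse}, in particular Lemma~\ref{lem:approxInvBasic} and Lemma~\ref{lem:approxInv-composition}.

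The main obstacle will be proving a telescoped contraction bound that works in the asymmetric setting without blowing up over the $T$ halvings: in the symmetric case a direct PSD-ordering argument suffices, but here we must be more careful since $\alpha\mlap_\alpha^{-1}\mathcal{K}$ can be close to $\II$ when $\alpha$ dominates $\mU_\mlap$ and we need geometric progress even in that regime. I expect the analysis to split into two regimes according to whether $\alpha_{t+1}$ is larger or smaller than $\lambda_{\min}(\mU_\mlap)$: in the former, the iterate effectively tracks $\mlap_{\alpha_{t+1}}^\dagger b$, whose distance to $\mlap^\dagger b$ shrinks with the regularization itself; in the latter, a genuine Richardson-style contraction takes over. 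Choosing $T$ with a sufficiently large hidden constant then drives the accumulated error below $\tfrac{1}{2}\|\mlap^\dagger b\|_{\mU_\mlap}$, completing the reduction.
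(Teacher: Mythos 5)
There is a concrete step in your plan that fails, and it is the very feature the paper's construction is built to handle. Regularizing only from below does not make the subproblems well conditioned: for $\mlap_\alpha = \mlap + \alpha\mathcal{K}$ you correctly note that $\kappa(\mU_{\mlap_\alpha}) \lesssim 1 + \lambda_{\max}(\mU_\mlap)/(\alpha\,\lambdanonzero(\mathcal{K}))$, which is $n^{O(1)}$ only while $\alpha \geq \lambda_{\max}(\mU_\mlap)/n^{O(1)}$. But your schedule must drive $\alpha_t$ all the way down to (well below) $\lambdanonzero(\mU_\mlap) \leq \lambda_{\max}(\mU_\mlap)/\kappa$, since otherwise $\mlap_{\alpha_T}^{\dagger}b$ is nowhere near $\mlap^\dagger b$ in the $\mU_\mlap$-norm; indeed your own ``first regime'' claim that the iterate tracks $\mlap_{\alpha_{t+1}}^\dagger b$, ``whose distance to $\mlap^\dagger b$ shrinks with the regularization,'' is only true once $\alpha \lesssim \lambdanonzero(\mU_\mlap)$. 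Consequently, when $\kappa$ is superpolynomial (the only case where the reduction is needed), most of your solver calls are on matrices $\mlap_{\alpha_t}$ whose symmetrizations have condition number roughly $\lambda_{\max}(\mU_\mlap)/\alpha_t$, growing up to $\Theta(\kappa)$ at the end of the schedule. This violates the theorem's guarantee that every call has condition number $n^{O(1)}$, so the reduction as designed does not reduce anything. The missing idea is that one must also collapse the \emph{top} of the spectrum at each scale: the paper does this by contracting the components joined by edges of weight at least $w^{(i)}$, so the contracted Laplacian has all edge weights below $w^{(i)}$ (max eigenvalue $O(n w^{(i)})$) while the added $\tfrac{w^{(i)}}{r^2}\mI$ bounds the bottom, giving condition number $n^{O(1)}$ at every scale. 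All of the paper's remaining machinery --- the projection $\operatorname{Proj}^{(i)}$ of residuals onto the kernel of the contraction, Lemma~\ref{lem:asproj}, and the electrical-flow arguments of Lemmas~\ref{lem:onepiece}, \ref{lem:addclique} and \ref{lem:errorf} requiring the demands to have zero sum on well-connected pieces --- exists precisely to control the error introduced by this contraction step, and has no counterpart in your plan.

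Separately, even ignoring the conditioning issue, the contraction analysis you defer is genuinely the hard part and is not supplied. The per-step error operator $\II - \mlap_{\alpha}^{\dagger}\mlap = \alpha\,\mlap_{\alpha}^{\dagger}\mathcal{K}$ is not a non-expansion in $\|\cdot\|_{\mU_\mlap\to\mU_\mlap}$ in the asymmetric setting; the best bound obtainable from Lemma~\ref{lem:sym-hsm} together with $\alpha\mathcal{K} \preceq \mU_{\mlap_\alpha}$ is of the form $O(\sqrt{\alpha/\lambdanonzero(\mU_\mlap)})$, which exceeds $1$ for most of your schedule, and your proposed identity $\alpha\mlap\mlap_\alpha^{-1}\mathcal{K} = \alpha\mathcal{K} - \alpha^2\mathcal{K}\mlap_\alpha^{-1}\mathcal{K}$ has an asymmetric middle term, so ``PSD estimates'' do not directly apply. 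One can imagine salvaging the telescoped product by a careful multiplicative bookkeeping of how errors injected at intermediate scales are amplified and then damped, but none of that is in the proposal, and in any case it cannot repair the conditioning problem above without importing something equivalent to the paper's heavy-edge contraction.
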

Combining this routine with iterative refinement / preconditioned Richardson iteration
as stated in Lemma~\ref{lem:precond_richardson}
implies that one can obtain an $\epsilon$-approximate solution
with $O(\log(n \kappa) \log(1/\epsilon))$ such solver calls and
$O(m \log(n \kappa) \log(1/\epsilon))$ additional work.
We note that this construction can also be used to reduce the $\log{\kappa}$
dependencies in~\cite{PengS14} to $\log{n}$.

We include this construction here because it removes
the $\exp(\sqrt{\log{\kappa}})$ dependencies in our algorithms and
replaces them with similar terms in $n$.
Since problems on directed graphs can be highly ill-conditioned even when the edge weights are all small integers, this can potentially be a significant improvement.

\begin{remark*}
	We believe that developing combinatorial techniques to mitigate the effects of ill-conditioning is an important endeavor in both theory and practice, and that the scheme we include reflects only partial progress. 
	While the technique we describe is sufficient to improve our running time bounds,
	we conjecture that far better reduction schemes are
	possible.  Some ways in which an ideal such result could improve on the one presented here include:
	\begin{enumerate}
		\item The overhead in work and depth could be as low as $O(1)$:
		the different scales would only have minimal overlap, and processing could occur in parallel.
		\item The scheme could only need to manipulate numbers
		with double or quadruple the precision of $n/\epsilon$,
		instead of involving words whose sizes are $O(1)$ bigger.
		Treating these increases
		with similar emphasis as constants in approximation
		algorithms may be a more realistic model of the
		costs of floating point arithmetic.
		\item As highly ill conditioned systems often
		arise under floating point arithmetic (due to the exponent) such reductions should ideally be robust to
		floating instead of fixed point arithmetic.
	\end{enumerate}
	Systematically developing strong versions of such ``condition number reducing reductions'' is potentially a challenging but important direction important direction for future work.
\end{remark*}

The main idea of our reduction is to collar the Laplacian to a fixed ``scale'' of edge weights, contracting edges above the scale while adding a smaller multiple of a clique to bound the lower eigenvalue.  The algorithm then uses the Laplacian at a given scale to route demand between vertices connected at about that scale.
Our algorithm is defined around the following
contraction and projection operators:
\begin{defn}
	\label{def:contract}
	Given a partition of $[n]$ into $k$ sets $S_1$, $S_2$, ..., $S_k$,
	\begin{enumerate}
		\item the contraction operator $\mc$ is the linear map
		$\R^n \to \R^k$ mapping $\indic_i$ to $\indic_j$ if $i \in S_j$.
		\item let $\operatorname{Proj}$ be the orthogonal projection onto the kernel of $\mc$.
	\end{enumerate}
\end{defn}
Pseudocode based on these operators is in
Figure~\ref{fig:reduction}.

\begin{figure}[ht]
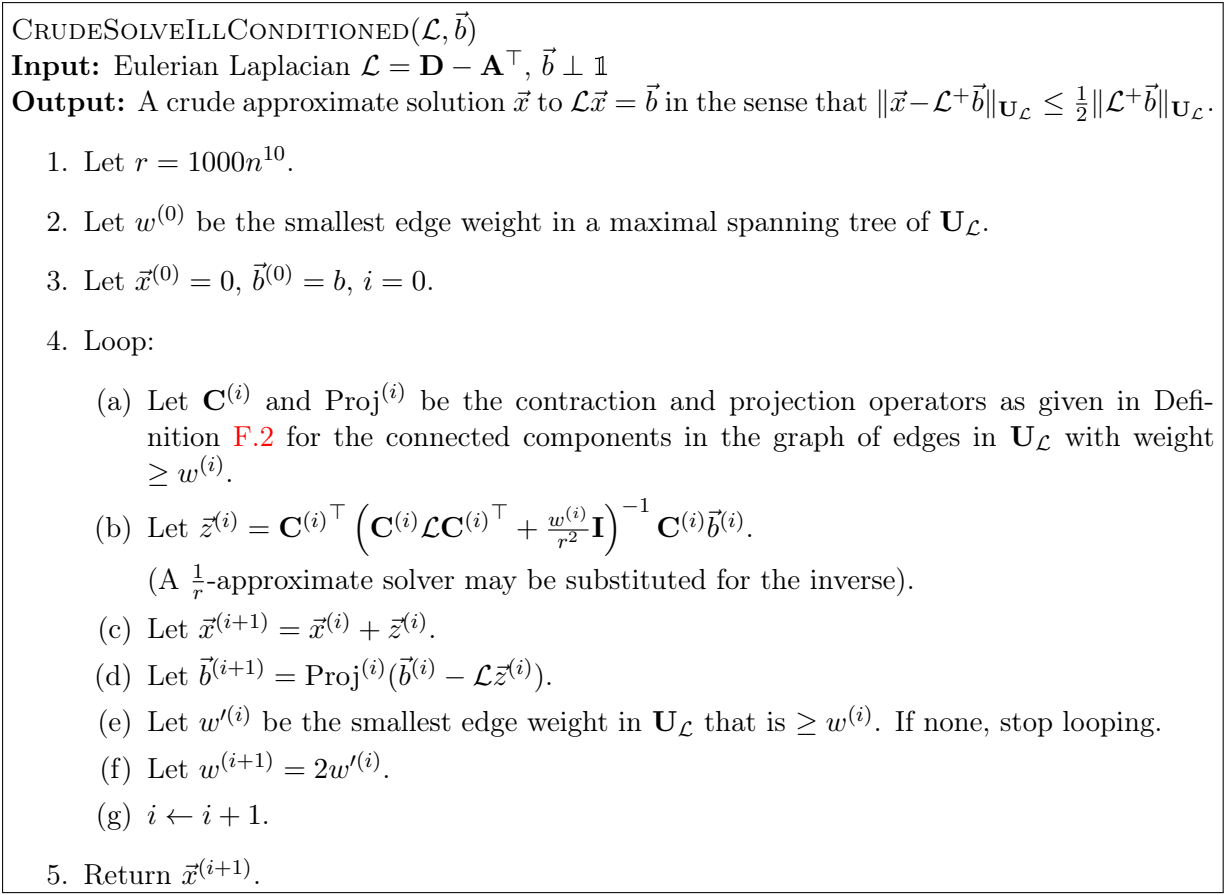

	\begin{algbox}
		$\textsc{CrudeSolveIllConditioned}(\mlap, \vec{b})$ \\
	\textbf{Input:} Eulerian Laplacian $\mlap = \DD - \AA^{\top}$,
		$\vec{b} \perp \vones$
	
	\textbf{Output:} A crude approximate solution $\vec{x}$ to $\mlap \vec{x} = \vec{b}$ in the sense that $\norm{\vec{x}-\mlap^\dagger \vec{b}}_{\mU_{\mlap}} \leq \frac{1}{2} \norm{\mlap^\dagger \vec{b}}_{\mU_{\mlap}}$.
	\begin{enumerate}
		\item
		Let $r = 1000 n^{10}$.
		\item
		Let $w^{(0)}$ be the smallest edge weight in a maximal spanning tree of $\mU_{\mlap}$.
		\item
		Let $\vec{x}^{(0)} = 0$, $\vec{b}^{(0)} = b$, $i = 0$.
		\item
		Loop:
		\begin{enumerate}
			\item
			Let $\mc^{(i)}$ and $\operatorname{Proj}^{(i)}$ be the contraction and projection operators as given in Definition~\ref{def:contract} for the connected components in the graph of edges in $\mU_{\mlap}$ with weight $\geq w^{(i)}$.
			\item
			Let $\vec{z}^{(i)} = {\mc^{(i)}}^\top \left ( \mc^{(i)} \mlap {\mc^{(i)}}^\top + \frac{w^{(i)}}{r^2} \mI \right )^{-1} \mc^{(i)} \vec{b}^{(i)}$.
			
			(A $\frac{1}{r}$-approximate solver may be substituted for the inverse).
\label{step:approxSolveProjected}
			\item
			Let $\vec{x}^{(i+1)} = \vec{x}^{(i)} + \vec{z}^{(i)}$.
			\item \label{step:bNext}
			Let $\vec{b}^{(i+1)} = \operatorname{Proj}^{(i)} (\vec{b}^{(i)} - \mlap \vec{z}^{(i)})$.
			\item
			Let $w'^{(i)}$ be the smallest edge weight in $\mU_{\mlap}$ that is $\geq w^{(i)}$.  If none, stop looping.
			\item
			Let $w^{(i+1)} = 2 w'^{(i)}$.
			\item
			$i \leftarrow i+1$.
		\end{enumerate}
	\item
	Return $\vec{x}^{(i+1)}$.
	\end{enumerate}
	\end{algbox}
	\caption{Reduction to solving well-conditioned Eulerian systems}
	\label{fig:reduction}
\end{figure}

Our proofs crucially rely on the following fact that relates
the ordinary (arithmetic) and harmonic symmetrizations of
Eulerian Laplacians.
It is immediate from Lemma 13 of~\cite{cohen2016faster},
and one side of it is also present as Lemma~\ref{lem:sym-hsm}.
\begin{lemma}
\label{lem:symBound}
Let $\mlap$ be an Eulerian Laplacian and $\mU_\mlap$ its symmetrization $\frac{\mlap+\mlap^\top}{2}$.  Then
\begin{equation*}
\mU_\mlap \preceq \mlap^\top {\mU_\mlap}^\dagger \mlap \preceq 2(n-1)^2 \mU_\mlap.
\end{equation*}
\end{lemma}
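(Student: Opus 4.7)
The plan is to dispatch the two inequalities separately, using the decomposition $\mlap = \mU_\mlap + \mv$ where $\mv \defeq (\mlap - \mlap^\top)/2$ is skew-symmetric. Since $\mlap$ is an Eulerian Laplacian, we have $\ker(\mlap) = \ker(\mlap^\top) = \ker(\mU_\mlap) = \sspan(\vones)$, so the various pseudoinverses behave nicely and $\mU_\mlap \mU_\mlap^\dagger$ and $\mU_\mlap^\dagger \mU_\mlap$ both act as the identity on the relevant subspaces.

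For the lower bound $\mU_\mlap \preceq \mlap^\top \mU_\mlap^\dagger \mlap$, the first step is to expand, using $\mv^\top = -\mv$ and the kernel conditions, to get
\[
\mlap^\top \mU_\mlap^\dagger \mlap \;=\; \mU_\mlap + \mU_\mlap \mU_\mlap^\dagger \mv - \mv \mU_\mlap^\dagger \mU_\mlap + \mv^\top \mU_\mlap^\dagger \mv \;=\; \mU_\mlap + \mv^\top \mU_\mlap^\dagger \mv,
\]
where the cross terms cancel. Since $\mU_\mlap^\dagger$ is positive semidefinite, the extra term $\mv^\top \mU_\mlap^\dagger \mv$ is PSD, proving the lower bound. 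This is precisely the argument of Lemma~\ref{lem:sym-hsm}, so I would just cite that lemma here.

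For the upper bound, the same expansion reduces the claim to showing
\[
\mv^\top \mU_\mlap^\dagger \mv \;\preceq\; (2(n-1)^2 - 1)\, \mU_\mlap,
\]
or equivalently, to bounding $\|\mU_\mlap^{\dagger/2} \mv \mU_\mlap^{\dagger/2}\|_2^2 \leq 2(n-1)^2 - 1$. The main obstacle is converting the simple entrywise domination of $\mv$ by $\mU_\mlap$ into an operator-norm bound, and this is exactly where the $(n-1)^2$ factor appears. The intuitive reason is that $\mv$ records net circulations, and by the Eulerian property these can be decomposed into directed cycles supported on the underlying undirected graph; routing such a circulation against $\mU_\mlap^\dagger$ incurs a cost bounded by the effective resistance, which is at most $(n-1)$ times an edge weight, and the quadratic form squares this. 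The plan is to invoke Lemma~13 of~\cite{cohen2016faster}, which establishes precisely this bound via such an argument.

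Finally, I would note that the $(n-1)^2$ dependence is tight up to constants: for the directed $n$-cycle, an explicit diagonalization gives $\cot^2(\pi/n) = \Theta(n^2)$ as the extremal ratio, so no simpler symmetrization argument can remove the quadratic loss. Thus the proof of the lemma is short: two sentences to set up the decomposition, a citation to Lemma~\ref{lem:sym-hsm} for the lower bound, and a citation to Lemma~13 of~\cite{cohen2016faster} for the upper bound.
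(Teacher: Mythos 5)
Your proposal matches the paper's approach: the lower bound is obtained via the decomposition $\mlap = \mU_\mlap + \mv$ with $\mv$ skew-symmetric (i.e.\ Lemma~\ref{lem:sym-hsm}), and the upper bound is cited from Lemma~13 of~\cite{cohen2016faster}, which is precisely what the paper does (it remarks that the lemma ``is immediate from Lemma 13 of~\cite{cohen2016faster}, and one side of it is also present as Lemma~\ref{lem:sym-hsm}''). Your additional exposition of the reduction to $\|\mU_\mlap^{\dagger/2}\mv\mU_\mlap^{\dagger/2}\|_2$ and the tightness remark for the directed cycle is a nice supplement but is not part of the paper's proof.
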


We also need a simple technical lemma about the extreme values in
solutions to Eulerian Laplacian systems in terms of the support
of the demand vector.

\begin{lem}
\label{lem:voltages}
Let $\mlap$ be a connected Eulerian Laplacian and $\vec{x}$ and $\vec{b}$ be nonzero vectors such that $\mlap \vec{x} = \vec{b}$.  Then the maximum value of the entries of $\vec{x}$, as well as the minimum value, must be attained on $\sup{\vec{b}}$.
\end{lem}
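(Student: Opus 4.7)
The plan is to prove this via a discrete maximum principle for Eulerian Laplacians. Writing $\mlap = \md - \ma^\top$ as in the preliminaries, the off-diagonals of $\mlap$ satisfy $\mlap_{ji} = -\ma_{ij}$ (the weight of edge $i \to j$), and the Eulerian condition $\mlap \vones = \vzero$ gives that out-degree equals in-degree, $\md_{ii} = \sum_j \ma_{ji}$, at every vertex. For any $i \notin \supp(\vb)$, the equation $[\mlap \vx]_i = 0$ then rearranges to
\[
x_i \;=\; \frac{1}{\md_{ii}} \sum_{j} \ma_{ji}\, x_j\,,
\]
expressing $x_i$ as a convex combination of the values of $\vx$ at the in-neighbors of $i$. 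This is the key averaging identity.

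Next, I would let $M \defeq \max_i \vx_i$ and $S \defeq \{i : \vx_i = M\}$. Since $\vb \neq \vzero$, $\vx$ is not in $\ker(\mlap) = \mathrm{span}(\vones)$, so $\vx$ is non-constant and $S$ is a proper nonempty subset of $V$. From the averaging identity, for any $i \in S \setminus \supp(\vb)$, all $j$ with $\ma_{ji} > 0$ (i.e., all in-neighbors of $i$) must satisfy $\vx_j = M$, so they lie in $S$. Equivalently, $S \setminus \supp(\vb)$ has no incoming edges from $V \setminus S$.

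The main step (and the only nontrivial ingredient beyond bookkeeping) is to invoke the fact that a weakly connected Eulerian digraph is strongly connected. Assuming for contradiction that $S \cap \supp(\vb) = \emptyset$, the preceding paragraph shows that $S$ itself has no incoming edges from $V \setminus S$. But strong connectivity, applied to any $v \in V \setminus S$ and any $i^* \in S$, yields a directed path $v \to \cdots \to i^*$, which must cross from $V \setminus S$ into $S$ at some edge — a contradiction. Hence $S \cap \supp(\vb) \neq \emptyset$, proving the claim for the maximum.

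Finally, the minimum claim follows by applying the same argument to $-\vx$ and $-\vb$, which satisfy $\mlap(-\vx) = -\vb$ and $\supp(-\vb) = \supp(\vb)$. The only potential subtlety is the reliance on strong connectivity of connected Eulerian digraphs, which is a classical fact and not something I expect to need to reprove here.
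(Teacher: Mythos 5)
Your proof is correct, but it takes a somewhat different route than the paper. You establish a local, vertex-wise maximum principle: at any maximizer $i\notin\supp(\vec{b})$, the Eulerian identity $\md_{ii}=\sum_j \ma_{ji}$ makes $x_i$ a convex combination of in-neighbor values, so if the maximizing set $S$ misses $\supp(\vec{b})$ it has no incoming edges from $V\setminus S$, and you then contradict the classical fact that a connected Eulerian digraph is strongly connected. The paper instead makes one global computation: it sums $\vec{b}_i=(\mlap\vec{x})_i$ over the maximizing set $S$, lets the internal edges cancel, and uses the cut-level Eulerian balance (total weight entering $S$ equals total weight leaving $S$) to conclude that this sum is strictly nonzero when $S$ is a proper subset, so $\supp(\vec{b})$ must meet $S$. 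The two arguments exploit Eulerianness at different granularities—per-vertex degree balance plus a reachability argument for you, aggregate balance across the cut in a single inequality for the paper—and the standard proof of the strong-connectivity fact you invoke is precisely the paper's cut-balance observation, so your argument can be viewed as factoring the paper's through that classical lemma. The paper's version is shorter, self-contained, and yields slightly more (the sum of $\vec{b}$ over the maximizing set has a definite sign); yours is more elementary and local, at the cost of importing the external fact. Two points you elide are harmless: $\md_{ii}>0$ at every vertex (an isolated vertex would contradict connectedness, and $n=1$ is impossible since $\vec{b}\neq\allzeros$), and the case $S=V$ needs no kernel argument since then $S$ trivially meets $\supp(\vec{b})$.
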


\begin{proof}
Consider the set $S$ of all entries of $\vec{b}$ attaining the maximum value $v$.  If this set contains every vertex, it automatically overlaps $\sup{\vec{b}}$.  Otherwise, we have:
\[
\sum_{i \in S} \vec{b}_i = \left ( \sum_{i \in S, j \not \in S} w_{ji} \vec{x}_j \right ) - \left ( \sum_{i \in S, j \not \in S} w_{ij} \vec{x}_i \right )
< v \left ( \sum_{i \in S, j \not \in S} w_{ji} \right ) - v \left ( \sum_{i \in S, j \not \in S} w_{ij} \right )
= 0.
\]
Here the last equality is due to $\mlap$ being Eulerian:
the total weight entering and leaving $S$ is equal.
The sum of the entries from $S$ in $\vec{b}$ is strictly negative and thus nonzero, so $S$ must overlap with the support of $\vec{b}$.

The case of the minimum value is analogous.
\end{proof}

Next, we show that if the demands for an Eulerian Laplacian system are supported on a well-connected subset of the graph, perturbing the system by a small multiple of the identity matrix cannot induce too much error.
\begin{lem}
\label{lem:onepiece}
Let $\mlap$ be a connected Eulerian Laplacian with corresponding undirected Laplacian $\mU_\mlap$, and let $S$ be a subset of the vertices such that any two vertices in $S$ are connected by a path in $\mU_\mlap$ containing only edges of weight at least $\beta$.  Let $\vec{b}$ be a vector in the image of $\mlap$ supported on $S$, and let $\alpha > 0$.  Then
\begin{equation*}
\normFull{(\mlap + \alpha \mI)^{-1} \vec{b} - \mlap^\dagger \vec{b}}_{\mU_\mlap} \leq n \sqrt{\frac{\alpha}{\beta}} \normFull{\mlap^\dagger \vec{b}}_{\mU_\mlap}.
\end{equation*}
\end{lem}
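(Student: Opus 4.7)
The plan is to set $\vec{x} = \mlap^\dagger \vec{b}$ and $\vec{y} = (\mlap + \alpha \mI)^{-1} \vec{b}$ and to reduce the problem to a simple algebraic identity. First I will check that $\mlap + \alpha \mI$ is actually invertible: for any nonzero $\vec{v}$, $\vec{v}^\top (\mlap + \alpha \mI) \vec{v} = \vec{v}^\top \mU_\mlap \vec{v} + \alpha \norm{\vec{v}}_2^2 > 0$ since $\mU_\mlap \succeq 0$, so $\mlap + \alpha \mI$ has no kernel. Using the resolvent identity $(\mlap + \alpha \mI)^{-1} \mlap = \mI - \alpha (\mlap + \alpha \mI)^{-1}$ applied to $\vec{x}$, and the fact that $\vec{b} = \mlap \vec{x}$, this gives $\vec{y} - \vec{x} = -\alpha \vec{z}$, where $\vec{z} \defeq (\mlap + \alpha \mI)^{-1} \vec{x}$. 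It therefore suffices to bound $\alpha \norm{\vec{z}}_{\mU_\mlap}$.

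For the analytic step, I will take the inner product of the defining equation $(\mlap + \alpha \mI) \vec{z} = \vec{x}$ with $\vec{z}$. Since $\vec{z}^\top \mlap \vec{z} = \vec{z}^\top \mU_\mlap \vec{z} = \norm{\vec{z}}_{\mU_\mlap}^2$ (as $\vec{v}^\top \mlap \vec{v} = \vec{v}^\top \mlap^\top \vec{v}$ for every $\vec{v}$), this yields the energy identity
\[
\norm{\vec{z}}_{\mU_\mlap}^2 + \alpha \norm{\vec{z}}_2^2 = \vec{z}^\top \vec{x}.
\]
Dropping the first term on the left and applying Cauchy--Schwarz to the right gives $\alpha \norm{\vec{z}}_2 \leq \norm{\vec{x}}_2$; substituting back into the identity then yields $\norm{\vec{z}}_{\mU_\mlap}^2 \leq \vec{z}^\top \vec{x} \leq \norm{\vec{z}}_2 \norm{\vec{x}}_2 \leq \norm{\vec{x}}_2^2/\alpha$. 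Hence $\norm{\vec{y} - \vec{x}}_{\mU_\mlap} = \alpha \norm{\vec{z}}_{\mU_\mlap} \leq \sqrt{\alpha} \norm{\vec{x}}_2$, and the problem reduces to bounding $\norm{\vec{x}}_2$ in terms of $\norm{\vec{x}}_{\mU_\mlap}$.

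The combinatorial step is where the hypotheses on $S$ enter, and this is the step I expect to be the main obstacle, since it is the one part where we have to translate between the graph-theoretic assumption and a clean analytic inequality. By Lemma~\ref{lem:voltages}, the maximum and minimum entries of $\vec{x}$ are attained at some vertices $u, v \in \sup(\vec{b}) \subseteq S$. By hypothesis there is a path $u = u_0, u_1, \dots, u_k = v$ of length $k \leq n - 1$ in $\mU_\mlap$ whose edges all have weight at least $\beta$. Expanding the Laplacian quadratic form along this path and applying Cauchy--Schwarz,
\[
\norm{\vec{x}}_{\mU_\mlap}^2 \geq \sum_{s=0}^{k-1} w_{u_s u_{s+1}} (\vec{x}_{u_s} - \vec{x}_{u_{s+1}})^2 \geq \frac{\beta}{k} (\vec{x}_u - \vec{x}_v)^2 \geq \frac{\beta}{n} (\vec{x}_u - \vec{x}_v)^2.
\]
Since $\vec{x} = \mlap^\dagger \vec{b} \perp \vones$, we have $\min_i \vec{x}_i \leq 0 \leq \max_i \vec{x}_i$, which forces $\norm{\vec{x}}_\infty \leq \vec{x}_u - \vec{x}_v$. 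Combining, $\norm{\vec{x}}_2 \leq \sqrt{n}\norm{\vec{x}}_\infty \leq (n/\sqrt{\beta})\norm{\vec{x}}_{\mU_\mlap}$.

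Putting the two steps together, $\norm{\vec{y} - \vec{x}}_{\mU_\mlap} \leq \sqrt{\alpha} \cdot (n/\sqrt{\beta}) \norm{\vec{x}}_{\mU_\mlap} = n \sqrt{\alpha/\beta} \norm{\mlap^\dagger \vec{b}}_{\mU_\mlap}$, which is exactly the claimed bound.
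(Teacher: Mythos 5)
Your proof is correct and follows essentially the same route as the paper's: reduce to $\vec{y}-\vec{x}=-\alpha(\mlap+\alpha\mI)^{-1}\vec{x}$, bound its $\mU_\mlap$-norm by $\sqrt{\alpha}\,\norm{\vec{x}}_2\leq\sqrt{n\alpha}\,\norm{\vec{x}}_\infty$, and then use Lemma~\ref{lem:voltages} together with the heavy path between the extreme entries to get $\norm{\vec{x}}_{\mU_\mlap}^2\geq\tfrac{\beta}{n}\norm{\vec{x}}_\infty^2$. The only (harmless) difference is that where the paper invokes Lemma~\ref{lem:symBound} applied to $\mlap+\alpha\mI$, you re-derive the bound $\norm{\vec{y}-\vec{x}}_{\mU_\mlap}^2\leq\alpha\norm{\vec{x}}_2^2$ directly from the energy identity and Cauchy--Schwarz, and you make explicit the step $\norm{\vec{x}}_\infty\leq\vec{x}_u-\vec{x}_v$ via $\vec{x}\perp\vones$ that the paper leaves implicit.
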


\begin{proof}
We define $\vec{x} = \mlap^\dagger \vec{b}$ and $\vec{y} = (\mlap + \alpha \mI)^{-1} \vec{b}$.
Writing $\vec{y}$
as $\vec{x} + (\mlap + \alpha \mI)^{-1} (\vec{b} - (\mlap + \alpha \mI) \vec{x})$
gives:
\[
\vec{y} - \vec{x}
= \left(\mlap + \alpha \mI \right)^{-1}
\left(\mlap \vec{x} - \left(\mlap + \alpha \mI\right) \vec{x}\right)
= -\alpha \left(\mlap + \alpha \mI \right)^{-1} \vec{x}.
\]
which when substituted into the $\mu_{\mlap}$ norm gives:
\[
\normFull{\vec{y} - \vec{x}}_{\mU_\mlap}^2
= \alpha^2 \vec{x}^\top
\left(\mlap^\top + \alpha \mI\right)^{-1}
\mU_\mlap \left(\mlap + \alpha \mI\right)^{-1} \vec{x}.
\]
Since $\mU_\mlap  \preceq \mU_{\mlap} + \alpha \mI$,
we can invoke Lemma~\ref{lem:symBound} with the matrix
$(\mlap + \alpha \mI)$ to get:
\[
\normFull{\vec{y} - \vec{x}}_{\mU_\mlap}^2
\leq \frac{\alpha^2}{\alpha} \vec{x}^\top \vec{x}
= n \alpha \norm{\vec{x}}_\infty^2.
\]

Now, by Lemma~\ref{lem:voltages}, the full range of the
entries of $\vec{x}$ occurs within $S$.
The existence of a path with at most $n$ vertices connecting
the minimum and maximum of these entries implies that
\[
\normFull{\vec{x}}_{\mU_\mlap}^2 \geq \frac{\beta}{n} \normFull{\vec{x}}_\infty^2.
\]
Putting this together we get
\[
\normFull{\vec{y} - \vec{x}}_{\mU_\mlap}^2
\leq n^2 \frac{\alpha}{\beta} \normFull{\vec{x}}_{\mU_\mlap}^2.
\]
\end{proof}

Next, we handle the case of simultaneous demands within multiple well-connected components.
We also switch the error bound to be in
$\norm{\vec{b}}_{{\mU_\mlap}^\dagger}$ to facilitate
our later steps.
\begin{lem}
\label{lem:addclique}
Let $\mlap$ be a connected Eulerian Laplacian with corresponding undirected Laplacian $\mU_\mlap$, and let $S_1, S_2, ... S_k$ be the connected components of the graph consisting of those edges in $\mU_\mlap$ with edges of weight at least $\beta$.  Let $\vec{b}$ be a vector such that $\sum_{i \in S_j} \vec{b}_i = 0$ for all $j$, and let $\alpha > 0$.  Then
\begin{equation*}
\normFull{(\mlap + \alpha \mI)^{-1} \vec{b} - \mlap^\dagger \vec{b}}_{\mU_\mlap}
\leq 2 n^{7/2} \sqrt{\frac{\alpha}{\beta}} \normFull{\vec{b}}_{{\mU_\mlap}^\dagger}.
\end{equation*}
\end{lem}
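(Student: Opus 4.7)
The plan is to reduce to Lemma~\ref{lem:onepiece} by decomposing both the right-hand side and the resulting error along the partition $\{S_j\}_{j=1}^k$. Concretely, I would define $\vec{b}_j$ to be the restriction of $\vec{b}$ to $S_j$; by hypothesis each $\vec{b}_j$ has sum zero on $S_j$ (and is supported there), so $\vec{b}_j\in\im(\mlap)$, and I can set $\vec{x}_j \defeq \mlap^\dagger \vec{b}_j$ and $\vec{y}_j \defeq (\mlap+\alpha\mI)^{-1}\vec{b}_j$. By linearity $\mlap^\dagger \vec{b} = \sum_j \vec{x}_j$ and $(\mlap+\alpha\mI)^{-1}\vec{b} = \sum_j \vec{y}_j$, so the total error decomposes as $\sum_j(\vec{y}_j-\vec{x}_j)$, and each summand is now an instance of the single-component setting treated in the previous lemma.

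Next I would apply Lemma~\ref{lem:onepiece} to each $\vec{b}_j$ (which is supported on the single well-connected set $S_j$ of edges of weight $\geq\beta$) to obtain $\normFull{\vec{y}_j-\vec{x}_j}_{\mU_\mlap}\leq n\sqrt{\alpha/\beta}\,\normFull{\vec{x}_j}_{\mU_\mlap}$. To rewrite the right-hand side in terms of $\vec{b}_j$ (which is what appears in the target bound), I would invoke Lemma~\ref{lem:symBound} on the Eulerian Laplacian $\mlap$: the lower bound $\mU_\mlap\preceq\mlap^\top\mU_\mlap^\dagger\mlap$ conjugates, on the image of $\mlap$, to $\mlap^{\dagger\top}\mU_\mlap\mlap^\dagger\preceq \mU_\mlap^\dagger$, giving $\normFull{\vec{x}_j}_{\mU_\mlap}\leq\normFull{\vec{b}_j}_{\mU_\mlap^\dagger}$. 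Summing by the triangle inequality yields
\[
\normFull{(\mlap+\alpha\mI)^{-1}\vec{b}-\mlap^\dagger\vec{b}}_{\mU_\mlap}\;\leq\; n\sqrt{\alpha/\beta}\,\sum_{j=1}^{k}\normFull{\vec{b}_j}_{\mU_\mlap^\dagger}.
\]

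The remaining step, and the main obstacle, is to bound $\sum_j\normFull{\vec{b}_j}_{\mU_\mlap^\dagger}\leq 2n^{5/2}\normFull{\vec{b}}_{\mU_\mlap^\dagger}$. The natural tools are Rayleigh monotonicity and the fact that each $\vec{b}_j$ is supported on a component $S_j$ of edges with weight $\geq\beta$: for a fixed root $r_j\in S_j$, one writes $\vec{b}_j=\sum_{i\in S_j}(\vec{b}_j)_i(\indic_i-\indic_{r_j})$ and uses that the within-$S_j$ path of heavy edges certifies $\normFull{\indic_i-\indic_{r_j}}_{\mU_\mlap^\dagger}^2=R^{\mU_\mlap}_{i,r_j}\leq n/\beta$. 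What makes this step delicate is that $\mU_\mlap\succeq\mU_\mlap^{\geq\beta}$ does not directly give $\sum_j\normFull{\vec{b}_j}^2_{\mU_\mlap^\dagger}\leq\normFull{\vec{b}}^2_{\mU_\mlap^\dagger}$, because inverting that PSD inequality flips direction (and $\mU_\mlap^{\geq\beta}$ has a strictly larger kernel than $\mU_\mlap$). The fix I would attempt is to produce per-component dual witnesses directly: pick within each $S_j$ a flow (on its spanning structure of $\geq\beta$ edges) routing $\vec{b}_j$, and compare its $\ell_2$ energy against any flow routing the full $\vec{b}$ in $\mU_\mlap$ --- via a crude $\ell_1$-to-$\ell_2$ passage through $\norm{\vec{b}_j}_1\leq\sqrt{n}\|\vec{b}_j\|_2$ combined with Lemma~\ref{lem:voltages}-style bounds that transfer $\|\vec{b}\|$ information between the perturbed and unperturbed system. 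The slack of $2n^{5/2}$ in the target is generous enough to absorb all of these conversions, which is what makes the quantitative accounting go through.
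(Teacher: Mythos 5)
Your reduction to Lemma~\ref{lem:onepiece} via the decomposition $\vec{b}=\sum_j \vec{b}_j$, the application of Lemma~\ref{lem:symBound} (equivalently Lemma~\ref{lem:sym-hsm}) to get $\normFull{\mlap^\dagger\vec{b}_j}_{\mU_\mlap}\leq\normFull{\vec{b}_j}_{\mU_\mlap^\dagger}$, and the triangle inequality over the $\leq n$ components --- all of that matches the paper's argument essentially step for step. You have also correctly identified the crux: you need a bound like $\normFull{\vec{b}_j}_{\mU_\mlap^\dagger}\leq 2n^{3/2}\normFull{\vec{b}}_{\mU_\mlap^\dagger}$ for each $j$, and you rightly flag that you cannot get it by simply inverting $\mU_\mlap\succeq\mU_\mlap^{\geq\beta}$ because the direction flips and the kernels differ.

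However, your proposed fix for that step does not close the loop. You suggest constructing a flow within $S_j$ routing $\vec{b}_j$ on the heavy ($\geq\beta$) edges and passing from $\ell_1$ to $\ell_2$; this will certify a bound of the shape $\normFull{\vec{b}_j}_{\mU_\mlap^\dagger}\lesssim\sqrt{n/\beta}\cdot\normFull{\vec{b}_j}_1$ (since the within-$S_j$ effective resistances are $\leq n/\beta$). But you then need to control $\normFull{\vec{b}_j}_1$ in terms of $\normFull{\vec{b}}_{\mU_\mlap^\dagger}$, and you have no such handle: $\vec{b}_j$ is just the restriction of $\vec{b}$ to $S_j$, and nothing constrains its $\ell_1$ mass. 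Lemma~\ref{lem:voltages} is about extrema of potentials, not about the size of demands, and does not help here. This is a genuine gap, not a matter of bookkeeping.

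The missing idea --- which is the actual content of this step in the paper --- is to introduce an intermediate demand vector $\vec{b}'_j$ coming from a \emph{restricted flow}. Concretely: let $\vec{y}$ be the optimal electrical flow on $\mU_\mlap$ routing $\vec{b}$, with energy $\normFull{\vec{b}}_{\mU_\mlap^\dagger}^2$. Restrict $\vec{y}$ to the edges internal to $S_j$ to get a flow $\vec{y}'_j$ with residual demand $\vec{b}'_j$. Restricting to a subset of edges only decreases energy, so $\normFull{\vec{b}'_j}_{\mU_\mlap^\dagger}\leq\normFull{\vec{b}}_{\mU_\mlap^\dagger}$ automatically --- this is how the $\mU_\mlap^\dagger$-norm of the demand gets controlled without ever touching $\normFull{\vec{b}_j}_1$. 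Then $\vec{b}_j-\vec{b}'_j$ is exactly the contribution of the boundary edges of $S_j$, each of which has weight $<\beta$; the total energy bound on $\vec{y}$ forces the flow on those light edges to be small in $\ell_1$, which gives $\normFull{\vec{b}_j-\vec{b}'_j}_1\lesssim (n/\sqrt{\beta})\normFull{\vec{b}}_{\mU_\mlap^\dagger}$, and since $\vec{b}_j-\vec{b}'_j$ is supported on the $\beta$-connected set $S_j$ you can now legitimately use your $\ell_1$-to-$\mU_\mlap^\dagger$ conversion on it. The triangle inequality $\normFull{\vec{b}_j}_{\mU_\mlap^\dagger}\leq\normFull{\vec{b}'_j}_{\mU_\mlap^\dagger}+\normFull{\vec{b}_j-\vec{b}'_j}_{\mU_\mlap^\dagger}$ then yields the $2n^{3/2}$ factor. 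Without the intermediate $\vec{b}'_j$, your attempt to go through $\normFull{\vec{b}_j}_1$ has no way to reach $\normFull{\vec{b}}_{\mU_\mlap^\dagger}$.
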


\begin{proof}
We decompose $\vec{b}$ as
\begin{equation*}
\vec{b} = \sum_j \vec{b_j}
\end{equation*}
where $\vec{b_j}$ is supported on $S_j$, and $0$ everywhere else.

Now we aim to bound $\norm{\vec{b_j}}_{{\mU_\mlap}^\dagger}$.
Note that there is an electrical flow $\vec{y}$ on $\mU_\mlap$
with energy $\norm{\vec{b}}_{{\mU_\mlap}^\dagger}^2$ routing
overall demands $\vec{b}$.
We define $\vec{y}'_j$ as the restriction of the flow to the
internal edges of $S_j$, and let is residuals be $\vec{b'_j}$.
Since this flow is on a subset of the edges, its total
energy is almost most $\norm{\vec{b}}_{{\mU_\mlap}^\dagger}^2$
so this certifies that $\norm{\vec{b'_j}}_{{\mU_\mlap}^\dagger} \leq \norm{\vec{b}}_{{\mU_\mlap}^\dagger}$.
Then
\[
\normFull{\vec{b'_j}-\vec{b_j}}_1 \leq \frac{n}{\sqrt{\beta}} \normFull{\vec{b}}_{{\mU_\mlap}^\dagger}
\]
since it is at most the $\ell_1$ norm of the flows on the edges incident to but not contained in $S_j$ in $y$,
and each of these $\leq n^2$ edges has weight at most $\beta$
by the assumption of $S_j$ being the connected components
on edges with weights $\geq \beta$.
$\vec{b'_j}-\vec{b_j}$ is also supported on $S_j$;
since $S_j$ is connected by edges of weight $\geq \beta$, we have
\[
\normFull{\vec{b'_j}-\vec{b_j}}_{{\mU_\mlap}^\dagger}
\leq \sqrt{n \beta} \normFull{\vec{b'_j}-\vec{b_j}}_1
\leq n^{3/2} \normFull{\vec{b}}_{{\mU_\mlap}^\dagger}.
\]
Then by the triangle inequality
\[
\normFull{\vec{b_j}}_{{\mU_\mlap}^\dagger}
\leq \normFull{\vec{b'_j}}_{{\mU_\mlap}^\dagger} + \normFull{\vec{b'_j}-\vec{b_j}}_{{\mU_\mlap}^\dagger}
\leq 2 n^{3/2} \normFull{\vec{b}}_{{\mU_\mlap}^\dagger}.
\]
Lemma~\ref{lem:sym-hsm} then implies that $\norm{\mlap^\dagger \vec{b_j}}_{\mU} \leq \norm{\vec{b_j}}_{{\mU_\mlap}^\dagger}$.  Finally, we apply Lemma~\ref{lem:onepiece} to each $\vec{b_j}$, yielding that
\begin{equation*}
\normFull{(\mlap + \alpha \mI)^{-1} \vec{b_j} - \mlap^\dagger \vec{b_j}}_{\mU_\mlap} \leq 2 n^{5/2} \sqrt{\frac{\alpha}{\beta}} \normFull{\vec{b}}_{{\mU_\mlap}^\dagger}.
\end{equation*}
Summing over the up to $n$ different $\vec{b_j}$ and applying the triangle inequality over this sum gives
\begin{equation*}
\normFull{(\mlap + \alpha \mI)^{-1} \vec{b} - \mlap^\dagger \vec{b}}_{\mU_\mlap}
\leq 2 n^{7/2} \sqrt{\frac{\alpha}{\beta}} \normFull{\vec{b}}_{{\mU_\mlap}^\dagger}
\end{equation*}
as desired.
\end{proof}

We now have the tools to analyze $\textsc{CrudeSolveIllConditioned}$. 
Our analyses rely on the following key intermediate variables:
\begin{enumerate}
\item
\[
\vec{q}^{(i)} \defeq {\mc^{(i)}}^\top \left ( \mc^{(i)} \mlap {\mc^{(i)}}^\top \right )^\dagger \mc^{(i)} \vec{b}^{(i)}.
\]
\item
\[
\vec{e}^{(i)} \defeq \vec{z}^{(i)}-\vec{q}^{(i)},
\]
where $\vec{z}^{(i)}$ is the `shifted' solution obtained
on Step~\ref{step:approxSolveProjected}.
\item
\[
\vec{f}^{(i)} \defeq  \operatorname{Proj}^{(i)}\left(\mlap \vec{e}^{(i)}\right).
\]
\item
\[
\vec{b}^{*(i)} \defeq \vec{b} - \sum_{j<i} \vec{f}^{(j)}.
\]
\end{enumerate}

We first show that the right hand side in the iterations
can be expressed as a close form involving the errors.
\begin{lemma}
\label{lem:asproj}
For all $i$,
\begin{equation*}
\vec{b}^{(i)} = \left ( \mI - \mlap {\mc^{(i-1)}}^\top \left ( \mc^{(i-1)} \mlap {\mc^{(i-1)}}^\top \right )^\dagger \mc^{(i-1)} \right ) \vec{b}^{*(i)}.
\end{equation*}
\end{lemma}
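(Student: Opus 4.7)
The plan is to prove the identity by induction on $i$, with the base case $i=0$ reducing to $\vec{b}^{(0)}=\vec{b}=\vec{b}^{*(0)}$ (one can take $\mc^{(-1)}=0$ so that the bracketed operator is $\mI$). Before attempting the inductive step, I would establish two preliminary facts by their own small induction. First, $\vec{b}^{(i)}\perp\vones_n$ for all $i$; this follows because $\vones_n={\mc^{(i)}}^\top\vones_k\in\im({\mc^{(i)}}^\top)=\ker(\mc^{(i)})^\perp$, so $\operatorname{Proj}^{(i)}$ always produces a vector orthogonal to $\vones_n$. Second, writing $\mp^{(i)}\defeq{\mc^{(i)}}^\top(\mc^{(i)}\mlap{\mc^{(i)}}^\top)^\dagger\mc^{(i)}$, the vector $\vec{b}^{(i)}-\mlap\mp^{(i)}\vec{b}^{(i)}$ lies in $\ker(\mc^{(i)})$; indeed, the contracted matrix $\tilde{\mlap}_i\defeq\mc^{(i)}\mlap{\mc^{(i)}}^\top$ is a strongly connected Eulerian Laplacian on $k$ supervertices, so its image is $\vones_k^\perp$, and $\vones_k^\top\mc^{(i)}\vec{b}^{(i)}=\vones_n^\top\vec{b}^{(i)}=0$ places $\mc^{(i)}\vec{b}^{(i)}$ in this image, giving $\mc^{(i)}\mlap\mp^{(i)}\vec{b}^{(i)}=\mc^{(i)}\vec{b}^{(i)}$.

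With those preliminaries in hand, the algorithm's update unfolds cleanly into a simple recursion. Substituting $\vec{z}^{(i)}=\vec{q}^{(i)}+\vec{e}^{(i)}=\mp^{(i)}\vec{b}^{(i)}+\vec{e}^{(i)}$ into Step~\ref{step:bNext},
\[
\vec{b}^{(i+1)}=\operatorname{Proj}^{(i)}\bigl(\vec{b}^{(i)}-\mlap\mp^{(i)}\vec{b}^{(i)}\bigr)-\operatorname{Proj}^{(i)}(\mlap\vec{e}^{(i)})=(\mI-\mlap\mp^{(i)})\vec{b}^{(i)}-\vec{f}^{(i)},
\]
where the second preliminary fact lets us drop the projection from the first term. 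Applying the outer inductive hypothesis $\vec{b}^{(i)}=(\mI-\mlap\mp^{(i-1)})\vec{b}^{*(i)}$ and using $\vec{f}^{(i)}\in\ker(\mc^{(i)})$ (so $\mp^{(i)}\vec{f}^{(i)}=0$, hence $(\mI-\mlap\mp^{(i)})\vec{f}^{(i)}=\vec{f}^{(i)}$), the target equation $\vec{b}^{(i+1)}=(\mI-\mlap\mp^{(i)})(\vec{b}^{*(i)}-\vec{f}^{(i)})$ collapses to the single identity
\[
(\mI-\mlap\mp^{(i)})\mlap\mp^{(i-1)}\vec{b}^{*(i)}=0.
\]

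The main obstacle is this last identity, which is where the structure of the algorithm really enters. I would prove the stronger statement $\mlap\mp^{(i)}\mlap\vec{u}=\mlap\vec{u}$ for every $\vec{u}\in\im({\mc^{(i)}}^\top)$, and then verify $\mp^{(i-1)}\vec{b}^{*(i)}\in\im({\mc^{(i)}}^\top)$. For the first part, write $\vec{u}={\mc^{(i)}}^\top\vec{v}$; then $\mp^{(i)}\mlap\vec{u}={\mc^{(i)}}^\top\tilde{\mlap}_i^\dagger\tilde{\mlap}_i\vec{v}=\vec{u}-\frac{\vones_k^\top\vec{v}}{k}\vones_n$ since $\tilde{\mlap}_i^\dagger\tilde{\mlap}_i=\mI-\tfrac{1}{k}\vones_k\vones_k^\top$ on a strongly connected Eulerian; applying $\mlap$ kills the $\vones_n$ term because the outer $\mlap$ is Eulerian. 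For the second part, observe that as $i$ increases the weight threshold $w^{(i)}$ grows, so the partition into components at level $i$ refines the one at level $i-1$, which gives the inclusion $\im({\mc^{(i-1)}}^\top)\subseteq\im({\mc^{(i)}}^\top)$ of the spaces of vectors constant on the respective components. This yields $\mp^{(i-1)}\vec{b}^{*(i)}\in\im({\mc^{(i-1)}}^\top)\subseteq\im({\mc^{(i)}}^\top)$ and completes the induction.
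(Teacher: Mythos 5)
Your proof is correct and follows essentially the same inductive route as the paper: both use the base case $\vec{b}^{(0)}=\vec{b}^{*(0)}$, substitute the decomposition $\vec{z}^{(i)}=\vec{q}^{(i)}+\vec{e}^{(i)}$ into Step~\ref{step:bNext}, cancel a term using that $(\mI-\mlap\mp^{(i)})\vec{b}^{(i)}\in\ker(\mc^{(i)})$, and exploit the nesting $\im({\mc^{(i-1)}}^\top)\subseteq\im({\mc^{(i)}}^\top)$ so that $\mp^{(i-1)}\vec{b}^{*(i)}$ can be fed into the annihilation property of $\mI-\mlap\mp^{(i)}$. The one substantive difference is one of explicitness rather than method: the paper asserts the pseudoinverse identities $\tilde{\mlap}_i\tilde{\mlap}_i^{\dagger}\mc^{(i)}\vec{b}^{(i)}=\mc^{(i)}\vec{b}^{(i)}$ and $\mlap{\mc^{(i)}}^\top\tilde{\mlap}_i^{\dagger}\tilde{\mlap}_i=\mlap{\mc^{(i)}}^\top$ without justification, whereas you supply the two facts they depend on ($\vec{b}^{(i)}\perp\vones$, and $\ker(\tilde{\mlap}_i)=\sspan(\vones_k)$ for the contracted strongly connected Eulerian Laplacian, giving $\tilde{\mlap}_i^\dagger\tilde{\mlap}_i=\mI-\frac{1}{k}\vones_k\vones_k^\top$). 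You also phrase the final cancellation as a vector-level identity rather than the paper's operator identity $(\mI-\mlap\mp^{(i)})(\mI-\mlap\mp^{(i-1)})=\mI-\mlap\mp^{(i)}$, which is logically equivalent.
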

\begin{proof}
First we note two equations about the projection operator $\left ( \mI - \mlap {\mc^{(i-1)}}^\top \left ( \mc^{(i-1)} \mlap {\mc^{(i-1)}}^\top \right )^\dagger \mc^{(i-1)} \right )$.

Now we prove this by induction.
The base case of $i = 0$ follows from the two sides being identical.
For the inductive case,
substituting in the construction of $\vec{b}^{(i + 1)}$
on Step~\ref{step:bNext} gives:
\begin{align*}
\vec{b}^{(i+1)} &= \operatorname{Proj}^{(i)} (\vec{b}^{(i)} - \mlap \vec{z}^{(i)}) \\
&= \operatorname{Proj}^{(i)} \left ( \vec{b}^{(i)} - \mlap {\mc^{(i)}}^\top \left ( \mc^{(i)} \mlap {\mc^{(i)}}^\top \right )^\dagger \mc^{(i)} \vec{b}^{(i)} - \mlap \vec{e}^{(i)} \right ) \\
&= \operatorname{Proj}^{(i)} \left ( \left ( \mI - \mlap {\mc^{(i)}}^\top \left ( \mc^{(i)} \mlap {\mc^{(i)}}^\top \right )^\dagger \mc^{(i)} \right ) \vec{b}^{(i)} \right ) - \operatorname{Proj}^{(i)} ( \mlap \vec{e}^{(i)} ) \\
&= \left ( \mI - \mlap {\mc^{(i)}}^\top \left ( \mc^{(i)} \mlap {\mc^{(i)}}^\top \right )^\dagger \mc^{(i)} \right ) \vec{b}^{(i)} - \vec{f}^{(i)}.
\end{align*}
Here, the last line follows from the fact that $\left ( \mI - \mlap {\mc^{(i)}}^\top \left ( \mc^{(i)} \mlap {\mc^{(i)}}^\top \right )^\dagger \mc^{(i)} \right ) \vec{b}^{(i)}$ is already in the kernel of $\mc^{(i)}$, as
\begin{align*}
\mc^{(i)} \left ( \mI - \mlap {\mc^{(i)}}^\top \left ( \mc^{(i)} \mlap {\mc^{(i)}}^\top \right )^\dagger \mc^{(i)} \right ) \vec{b}^{(i)} &= \mc^{(i)} \vec{b}^{(i)} - \left ( \mc^{(i)} \mlap {\mc^{(i)}}^\top \right ) \left ( \mc^{(i)} \mlap {\mc^{(i)}}^\top \right )^\dagger \mc^{(i)} \vec{b}^{(i)} \\
&= \mc^{(i)} \vec{b}^{(i)} - \mc^{(i)} \vec{b}^{(i)} \\
&= 0
\end{align*}
We similarly have
\begin{align*}
\left ( \mI - \mlap {\mc^{(i)}}^\top \left ( \mc^{(i)} \mlap {\mc^{(i)}}^\top \right )^\dagger \mc^{(i)} \right ) \mlap {\mc^{(i)}}^\top &= \mlap {\mc^{(i)}}^\top - \mlap {\mc^{(i)}}^\top \left ( \mc^{(i)} \mlap {\mc^{(i)}}^\top \right )^\dagger \left ( \mc^{(i)} \mlap {\mc^{(i)}}^\top \right ) \\
&= \mlap {\mc^{(i)}}^\top - \mlap {\mc^{(i)}}^\top \\
&= 0
\end{align*}
Since the image of ${\mc^{(i-1)}}^\top$ is contained in the image of ${\mc^{(i)}}^\top$, this also implies that
\begin{equation*}
\left ( \mI - \mlap {\mc^{(i)}}^\top \left ( \mc^{(i)} \mlap {\mc^{(i)}}^\top \right )^\dagger \mc^{(i)} \right ) \mlap {\mc^{(i-1)}}^\top = 0
\end{equation*}
and hence that
\begin{align*}
\hspace{8em}&\hspace{-8em} \left ( \mI - \mlap {\mc^{(i)}}^\top \left ( \mc^{(i)} \mlap {\mc^{(i)}}^\top \right )^\dagger \mc^{(i)} \right ) \left ( \mI - \mlap {\mc^{(i-1)}}^\top \left ( \mc^{(i-1)} \mlap {\mc^{(i-1)}}^\top \right )^\dagger \mc^{(i-1)} \right ) \\
&= \left ( \mI - \mlap {\mc^{(i)}}^\top \left ( \mc^{(i)} \mlap {\mc^{(i)}}^\top \right )^\dagger \mc^{(i)} \right ).
\end{align*}

We also have $\left ( \mI - \mlap {\mc^{(i)}}^\top \left ( \mc^{(i)} \mlap {\mc^{(i)}}^\top \right )^\dagger \mc^{(i)} \right ) \vec{f}^{(i)} = \vec{f}^{(i)}$ as $\vec{f}^{(i)}$, output by $\operatorname{Proj}^{(i)}$, is in the kernel of $\mc^{(i)}$.  Putting these together and substituting in
the induction hypothesis on $i$ gives
\begin{align*}
\vec{b}^{(i+1)} &= \left ( \mI - \mlap {\mc^{(i)}}^\top \left ( \mc^{(i)} \mlap {\mc^{(i)}}^\top \right )^\dagger \mc^{(i)} \right ) \left ( \vec{b}^{*(i)} - \vec{f}^{(i)} \right ) \\
&= \left ( \mI - \mlap {\mc^{(i)}}^\top \left ( \mc^{(i)} \mlap {\mc^{(i)}}^\top \right )^\dagger \mc^{(i)} \right ) \vec{b}^{*(i+1)}
\end{align*}
which shows that the identity holds for $i + 1$ as well.
\end{proof}

\begin{lemma}
\label{lem:answerbound}
For all $i$,
\begin{equation*}
\vec{x}^{(i)} + \mlap^\dagger \vec{b}^{(i)} =
\sum_{j<i} \vec{e}^{(j)} +
\mlap^\dagger \vec{b}^{*(i)}.
\end{equation*}
\end{lemma}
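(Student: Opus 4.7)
The plan is to prove this by induction on $i$. The base case $i=0$ is immediate: $\vec{x}^{(0)} = 0$ and $\vec{b}^{*(0)} = \vec{b} = \vec{b}^{(0)}$, so both sides equal $\mlap^\dagger \vec{b}$.

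For the inductive step, assume the identity holds at $i$ and expand the left side at $i+1$ using the algorithm's update rules. Writing $\vec{z}^{(i)} = \vec{q}^{(i)} + \vec{e}^{(i)}$ and using the definition $\vec{f}^{(i)} = \operatorname{Proj}^{(i)}(\mlap \vec{e}^{(i)})$, I get
\[
\vec{x}^{(i+1)} + \mlap^\dagger \vec{b}^{(i+1)} = \vec{x}^{(i)} + \vec{q}^{(i)} + \vec{e}^{(i)} + \mlap^\dagger \operatorname{Proj}^{(i)}\bigl(\vec{b}^{(i)} - \mlap \vec{q}^{(i)}\bigr) - \mlap^\dagger \vec{f}^{(i)}.
\]
The quantity $\vec{b}^{(i)} - \mlap \vec{q}^{(i)}$ is exactly $\mp^{(i)} \vec{b}^{(i)}$, where $\mp^{(i)} = \mI - \mlap {\mc^{(i)}}^\top (\mc^{(i)} \mlap {\mc^{(i)}}^\top)^\dagger \mc^{(i)}$. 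The intermediate calculation in the proof of Lemma~\ref{lem:asproj} showed that $\mc^{(i)} \mp^{(i)} \vec{b}^{(i)} = 0$, so $\mp^{(i)} \vec{b}^{(i)} \in \ker(\mc^{(i)})$, which means $\operatorname{Proj}^{(i)}$ acts as the identity on it. Thus the middle term simplifies to $\mlap^\dagger \vec{b}^{(i)} - \mlap^\dagger \mlap \vec{q}^{(i)}$.

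Next I will verify that $\mlap^\dagger \mlap \vec{q}^{(i)} = \vec{q}^{(i)}$, which requires checking $\vec{q}^{(i)} \perp \ker(\mlap) = \operatorname{span}(\vones)$. Since the contracted graph is still Eulerian, $\mc^{(i)} \mlap {\mc^{(i)}}^\top$ is an Eulerian Laplacian on the reduced vertex set, whose cokernel is also the all-ones vector in the reduced dimension; hence its pseudoinverse produces vectors orthogonal to $\vones_k$. Lifting by ${\mc^{(i)}}^\top$ (which sends $\vones_k$ to $\vones_n$) preserves this orthogonality, giving $\vones_n^\top \vec{q}^{(i)} = 0$. With this in hand, the expression collapses to $\vec{x}^{(i)} + \vec{e}^{(i)} + \mlap^\dagger \vec{b}^{(i)} - \mlap^\dagger \vec{f}^{(i)}$, and applying the inductive hypothesis together with the telescoping identity $\vec{b}^{*(i+1)} = \vec{b}^{*(i)} - \vec{f}^{(i)}$ closes the induction.

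The main obstacle I expect is the orthogonality check $\vec{q}^{(i)} \perp \vones$, since the asymmetry of $\mlap$ means I cannot directly rely on the usual symmetric pseudoinverse identities, and I have to use the fact that contractions of Eulerian graphs remain Eulerian. Everything else is bookkeeping: identifying $\vec{b}^{(i)} - \mlap \vec{q}^{(i)}$ with the projection operator from Lemma~\ref{lem:asproj}, exploiting its kernel property, and applying the definition of $\vec{b}^{*(i+1)}$.
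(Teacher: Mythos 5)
Your inductive setup exactly mirrors the paper's proof, and you correctly isolate the one nontrivial linear-algebra step it leaves implicit: $\mlap^\dagger \mlap \vec{q}^{(i)} = \vec{q}^{(i)}$, i.e.\ $\vec{q}^{(i)} \perp \vones$. Unfortunately, your argument for this claim fails. You are right that $v \defeq (\mc^{(i)} \mlap {\mc^{(i)}}^\top)^\dagger \mc^{(i)} \vec{b}^{(i)}$ is orthogonal to $\vones_k$, because the contracted Laplacian is an Eulerian Laplacian of a strongly connected graph and so its pseudoinverse has range $\vones_k^\perp$. But the statement that ``lifting by ${\mc^{(i)}}^\top$ preserves this orthogonality'' is false. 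The relevant computation is
\[
\vones_n^\top \vec{q}^{(i)} = \vones_n^\top {\mc^{(i)}}^\top v = \bigl(\mc^{(i)} \vones_n\bigr)^\top v ,
\]
and $\mc^{(i)} \vones_n$ is \emph{not} a multiple of $\vones_k$: it is the vector of part sizes $(|S_1|, \ldots, |S_k|)^\top$. What you would need is $v \perp (|S_1|, \ldots, |S_k|)^\top$, and this is not implied by $v \perp \vones_k$ unless all parts have equal cardinality. When the parts are uneven, $\vec{q}^{(i)}$ genuinely has a nonzero $\vones$-component, so the orthogonality claim, and hence $\mlap^\dagger\mlap\vec{q}^{(i)}=\vec{q}^{(i)}$, is false.

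The resolution is that the lemma's identity is true only modulo $\mathrm{span}(\vones)$, and that is all that is needed downstream: in the proof of Theorem~\ref{thm:reduction} the identity is immediately evaluated in the $\mU_{\mlap}$-seminorm, whose kernel contains $\vones$, so the extra $\vones$-component vanishes. A fully rigorous version should prove and use the identity modulo $\mathrm{span}(\vones)$ (equivalently, measure everything in the $\mU_{\mlap}$-seminorm throughout), rather than try to establish $\vec{q}^{(i)} \perp \vones$, which does not hold.
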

\begin{proof}
Again, we proceed by induction.
\begin{align*}
\vec{x}^{(i+1)} + \mlap^\dagger \vec{b}^{(i+1)} &= \vec{x}^{(i)} + \vec{z}^{(i)} + \mlap^\dagger \operatorname{Proj}^{(i)} \left ( \vec{b}^{(i)} - \mlap \vec{z}^{(i)} \right ) \\
&= \vec{x}^{(i)} + \vec{q}^{(i)} + \vec{e}^{(i)} + \mlap^\dagger \operatorname{Proj}^{(i)} \left ( \vec{b}^{(i)} - \mlap \vec{q}^{(i)} - \mlap \vec{e}^{(i)} \right ) \\
&= \vec{x}^{(i)} + \vec{q}^{(i)} + \vec{e}^{(i)} + \mlap^\dagger \operatorname{Proj}^{(i)} \left ( \vec{b}^{(i)} - \mlap \vec{q}^{(i)} \right ) - \mlap^\dagger \operatorname{Proj}^{(i)}\left(\mlap \vec{e}^{(i)}\right) \\
&= \vec{x}^{(i)} + \vec{q}^{(i)} + \vec{e}^{(i)} + \mlap^\dagger \vec{b}^{(i)} - \vec{q}^{(i)} - \mlap^\dagger \vec{f}^{(i)}.
\end{align*}
The last line here follows from the fact that $\vec{b}^{(i)} - \mlap \vec{q}^{(i)}$ is already inside the kernel of $\mc^{(i)}$.  Cancelling the $\vec{q}^{(i)}$ terms and applying the induction hypothesis then gives
\begin{align*}
\vec{x}^{(i+1)} + \mlap^\dagger \vec{b}^{(i+1)} &= \vec{x}^{(i)} + \mlap^\dagger \vec{b}^{(i)} + \vec{e}^{(i)} - \mlap^\dagger \vec{f}^{(i)} \\
&= \mlap^\dagger \vec{b}^{*(i)} + \sum_{j<i} \vec{e}^{(j)} + \vec{e}^{(i)} - \mlap^\dagger \vec{f}^{(i)} \\
&= \mlap^\dagger \vec{b}^{*(i+1)} + \sum_{j<i+1} \vec{e}^{(j)}.
\end{align*}
\end{proof}

These relations then allows us to bound the global
error via the guarantees of the separate approximate solves.
As these solves produce solutions with relative error,
we first need to bound the norm of $\vec{b}^{(i)}$:
\begin{lemma}
\label{lem:bstable}
For all $i$,
\[
\normFull{\vec{b}^{(i)}}_{{\mU_\mlap}^\dagger} \leq
3 n \normFull{\vec{b}^{*(i)}}_{{\mU_\mlap}^\dagger}.
\]
\end{lemma}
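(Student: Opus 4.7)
My plan is to combine Lemma~\ref{lem:asproj}, which gives an explicit formula for $\vec{b}^{(i)}$ in terms of $\vec{b}^{*(i)}$, with the harmonic-vs.-arithmetic symmetrization bound in Lemma~\ref{lem:symBound} and a ``contraction is monotone'' fact for pseudoinverse norms. Writing $\mc$ for $\mc^{(i-1)}$ and $\mP \defeq \mI - \mlap \mc^\top (\mc \mlap \mc^\top)^\dagger \mc$, Lemma~\ref{lem:asproj} gives $\vec{b}^{(i)} = \mP\, \vec{b}^{*(i)}$, so the triangle inequality in the $\mU_\mlap^\dagger$ seminorm reduces the claim to showing
\[
\normFull{\mlap \mc^\top (\mc \mlap \mc^\top)^\dagger \mc\, \vec{b}^{*(i)}}_{\mU_\mlap^\dagger} \leq (3n-1)\normFull{\vec{b}^{*(i)}}_{\mU_\mlap^\dagger}.
\]

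To estimate the left-hand side, I would set $\vec{z} \defeq (\mc \mlap \mc^\top)^\dagger \mc\, \vec{b}^{*(i)}$ and $\vec{w} \defeq \mc^\top \vec{z}$, so the quantity to bound is $\normFull{\mlap \vec{w}}_{\mU_\mlap^\dagger}$. The key input is Lemma~\ref{lem:symBound} applied to the original Eulerian Laplacian $\mlap$, which yields $\mlap^\top \mU_\mlap^\dagger \mlap \preceq 2(n-1)^2 \mU_\mlap$ and hence $\normFull{\mlap \vec{w}}_{\mU_\mlap^\dagger}^2 \leq 2(n-1)^2 \vec{w}^\top \mU_\mlap \vec{w}$. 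Because $\mc$ is a contraction operator, $\mc \mU_\mlap \mc^\top = \mU_{\mc \mlap \mc^\top}$ holds exactly, so $\vec{w}^\top \mU_\mlap \vec{w} = \vec{z}^\top \mU_{\mc \mlap \mc^\top} \vec{z}$; then Lemma~\ref{lem:symBound} applied to the contracted Eulerian Laplacian $\mc \mlap \mc^\top$ gives $\vec{z}^\top \mU_{\mc \mlap \mc^\top} \vec{z} \leq \vec{z}^\top (\mc \mlap \mc^\top)^\top \mU_{\mc \mlap \mc^\top}^\dagger (\mc \mlap \mc^\top) \vec{z} = \normFull{\mc \vec{b}^{*(i)}}_{\mU_{\mc \mlap \mc^\top}^\dagger}^2$.

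The last step is the ``monotonicity under contraction'' inequality $\mc^\top \mU_{\mc \mlap \mc^\top}^\dagger \mc \preceq \mU_\mlap^\dagger$, i.e., $\normFull{\mc \vec{b}^{*(i)}}_{\mU_{\mc \mlap \mc^\top}^\dagger}^2 \leq \normFull{\vec{b}^{*(i)}}_{\mU_\mlap^\dagger}^2$. I would derive it from the Dirichlet/variational characterization of the pseudoinverse seminorm: with $\mU \defeq \mU_\mlap$ and using $\mc \mU \mc^\top = \mU_{\mc \mlap \mc^\top}$,
\[
\normFull{\mc \vec{b}^{*(i)}}_{\mU_{\mc \mlap \mc^\top}^\dagger}^2 = \max_{\vec{s}}\!\bigl(2\vec{s}^\top \mc \vec{b}^{*(i)} - \vec{s}^\top \mc \mU \mc^\top \vec{s}\bigr) = \max_{\vec{y} \in \im(\mc^\top)}\!\bigl(2\vec{y}^\top \vec{b}^{*(i)} - \vec{y}^\top \mU \vec{y}\bigr),
\]
which is at most the same maximum taken over all $\vec{y}$, namely $\normFull{\vec{b}^{*(i)}}_{\mU_\mlap^\dagger}^2$. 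Chaining the three inequalities gives $\normFull{\mlap \vec{w}}_{\mU_\mlap^\dagger} \leq (n-1)\sqrt{2}\,\normFull{\vec{b}^{*(i)}}_{\mU_\mlap^\dagger}$, and together with the trivial $\normFull{\vec{b}^{*(i)}}_{\mU_\mlap^\dagger}$ term from $\mI$ the total constant is $1 + (n-1)\sqrt{2} \leq 3n$.

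The main obstacle is the variational/contraction-monotonicity step: while it is intuitive from the electrical network viewpoint (identifying vertices can only decrease effective resistance), I need to handle kernels carefully since $\vec{b}^{*(i)}$ may not lie perfectly in $\im(\mU_\mlap)$ at finite precision, and since $\mc \mlap \mc^\top$ must be verified to remain a connected Eulerian Laplacian so that Lemma~\ref{lem:symBound} applies to it (connectivity follows from connectivity of $\mlap$, and the Eulerian property is preserved because contraction merges in- and out-degrees symmetrically). A minor bookkeeping issue is that Lemma~\ref{lem:symBound} applied to the contracted Laplacian only uses a factor of $(k-1)^2 \leq (n-1)^2$, so no loss occurs there.
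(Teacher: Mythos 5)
Your proof is correct and follows essentially the same chain of inequalities as the paper's: starting from the formula for $\vec{b}^{(i)}$ in Lemma~\ref{lem:asproj}, it uses the triangle inequality, then links the contraction/projection step, the two applications of Lemma~\ref{lem:symBound} (once for the original $\mlap$ and once for the contracted $\mc\mlap\mc^\top$), and the contraction-monotonicity of the $\mU^\dagger$ seminorm. The only cosmetic difference is that you re-derive the contraction-monotonicity inequality from the Dirichlet variational characterization of $\normFull{\cdot}_{\mU^\dagger}$, whereas the paper isolates it as a stated fact (Fact~\ref{fact:schur}, proved there by dual-norm duality rather than the Dirichlet principle); both derivations are standard and equivalent, and your constant $1+\sqrt{2}(n-1) \leq 3n$ is slightly tighter than the paper's intermediate $1+2n \leq 3n$.
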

\begin{proof}
First, note that
\begin{equation*}
\normFull{\mc^{(i-1)} \vec{b}^{*(i)}}
_{\left ( \mc^{(i-1)} \mU_\mlap {\mc^{(i-1)}}^\top \right )^\dagger} 
\leq \normFull{\vec{b}^{*(i)}}_{{\mU_\mlap}^\dagger}.
\end{equation*}
Lemma~\ref{lem:symBound} then gives
\begin{equation*}
\normFull{\left ( \mc^{(i-1)} \mlap {\mc^{(i-1)}}^\top \right )^\dagger \mc^{(i-1)} \vec{b}^{*(i)}}_{\left ( \mc^{(i-1)} \mU_\mlap {\mc^{(i-1)}}^\top \right )} \leq \normFull{\vec{b}^{*(i)}}_{{\mU_\mlap}^\dagger}
\end{equation*}
or equivalently
\begin{equation*}
\normFull{{\mc^{(i-1)}}^\top \left ( \mc^{(i-1)} \mlap {\mc^{(i-1)}}^\top \right )^\dagger \mc^{(i-1)} \vec{b}^{*(i)}}_{\mU_\mlap}
\leq \normFull{\vec{b}^{*(i)}}_{{\mU_\mlap}^\dagger}.
\end{equation*}
Now applying Lemma~\ref{lem:symBound}, we get
\begin{equation*}
\normFull{\mlap {\mc^{(i-1)}}^\top \left ( \mc^{(i-1)} \mlap {\mc^{(i-1)}}^\top \right )^\dagger \mc^{(i-1)} \vec{b}^{*(i)}}_{{\mU_\mlap}^\dagger} \leq 2 n \normFull{\vec{b}^{*(i)}}_{{\mU_\mlap}^\dagger}.
\end{equation*}
The result then follows from the triangle inequality with $\vec{b}^{*(i)}$.
\end{proof}

For the next step we will use the following fact about matrices:
\begin{fact}
\label{fact:schur}
For any symmetric positive semidefinite matrix $\mm$ and arbitrary matrix $\mc$, and for any vector $\vec{v}$ in the image of $\mm$,
\[
\normFull{\mc \vec{v}}_{\left ( \mc \mm \mc^\top \right )^\dagger} \leq \normFull{\vec{v}}_{\mm^\dagger}.
\]
\end{fact}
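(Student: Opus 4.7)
The plan is to reduce the inequality to a simple statement about orthogonal projections by factoring $\mm$ through its PSD square root.

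First I would let $\mb = \mm^{1/2}$, so that $\mm = \mb\mb^\top$ and $\mm^\dagger = \mb^{\dagger\top}\mb^\dagger$. Then I set $\mn \defeq \mc\mb$, so that $\mc\mm\mc^\top = \mn\mn^\top$ and therefore $(\mc\mm\mc^\top)^\dagger = (\mn\mn^\top)^\dagger = \mn^{\dagger\top}\mn^\dagger$, using the standard identity $\mn^\dagger = \mn^\top(\mn\mn^\top)^\dagger$. With these rewrites, both sides of the inequality become squared $\ell_2$ norms:
\[
\normFull{\mc\vec{v}}_{(\mc\mm\mc^\top)^\dagger}^2 = \normFull{\mn^\dagger \mc\vec{v}}_2^2
\quad\text{and}\quad
\normFull{\vec{v}}_{\mm^\dagger}^2 = \normFull{\mb^\dagger\vec{v}}_2^2.
\]

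Next I would use the hypothesis $\vec{v}\in\im{\mm} = \im{\mb}$ to write $\vec{v} = \mb\mb^\dagger\vec{v}$; setting $\vec{u}\defeq\mb^\dagger\vec{v}$, the inequality we want becomes
\[
\normFull{\mn^\dagger\mc\mb\vec{u}}_2^2 = \normFull{\mn^\dagger\mn\vec{u}}_2^2 \leq \normFull{\vec{u}}_2^2.
\]
This is immediate from the standard fact that $\mn^\dagger\mn$ is the orthogonal projection onto $\im{\mn^\top}$, and in particular has operator norm at most $1$.

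There is no real obstacle here — the only small care needed is in justifying the identity $(\mn\mn^\top)^\dagger = \mn^{\dagger\top}\mn^\dagger$ and the use of $\vec{v}\in\im{\mb}$ (which follows from $\im{\mb\mb^\top}=\im{\mb}$ for any matrix $\mb$), both of which are standard. The proof is essentially a two-line reduction once the factorization through $\mm^{1/2}$ is set up.
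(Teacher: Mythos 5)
Your proof is correct, but it takes a different route from the one in the paper. The paper argues by duality of norms: it writes $\normFull{\mc \vec{v}}_{(\mc \mm \mc^\top)^\dagger}$ as $\max_{\normFull{\vec{u}}_{\mc \mm \mc^\top} \leq 1} \langle \vec{u}, \mc \vec{v}\rangle = \max_{\normFull{\mc^\top \vec{u}}_{\mm} \leq 1} \langle \mc^\top \vec{u}, \vec{v}\rangle$, then relaxes the maximization from vectors of the form $\mc^\top \vec{u}$ to arbitrary vectors $\vec{u}'$ with $\normFull{\vec{u}'}_{\mm} \leq 1$, obtaining $\normFull{\vec{v}}_{\mm^\dagger}$; no square roots or pseudoinverse identities are needed, only the variational characterization of the dual norm. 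Your argument instead factors through $\mb = \mm^{1/2}$, sets $\mn = \mc\mb$, and reduces the claim to $\normFull{\mn^\dagger \mn \vec{u}}_2 \leq \normFull{\vec{u}}_2$, which holds because $\mn^\dagger\mn$ is an orthogonal projection. This is a clean and fully rigorous reduction; the price is that you must invoke the reverse-order law $(\mn\mn^\top)^\dagger = \mn^{\dagger\top}\mn^\dagger$ (your derivation from $\mn^\dagger = \mn^\top(\mn\mn^\top)^\dagger$ is fine), the identity $(\mb^2)^\dagger = (\mb^\dagger)^2$ for symmetric $\mb$, and $\im{\mm} = \im{\mb}$ to write $\vec{v} = \mb\vec{u}$ — all standard, but more machinery than the paper's two-line duality argument. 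Conversely, your proof makes the geometric content (the projection onto the row space of $\mc\mm^{1/2}$ is a contraction) completely explicit, whereas the duality proof hides it in the relaxation step; either is acceptable.
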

Notably, this fact is equivalent to the standard result that the Schur complement of a positive semidefinite matrix is spectrally dominated by the matrix.
\begin{proof}
We can prove this using duality of norms:
\begin{align*}
\normFull{\mc \vec{v}}_{\left ( \mc \mm \mc^\top \right )^\dagger} &= \max_{\normFull{\vec{u}}_{\left ( \mc \mm \mc^\top \right )} \leq 1} \left \langle u, \mc v \right \rangle \\
&= \max_{\normFull{\mc^\top \vec{u}}_{\mm} \leq 1} \left \langle \mc^\top u, v \right \rangle \\
&\leq \max_{\norm{\vec{u}'}_{\mm} \leq 1} \left \langle u', v \right \rangle \\
&= \normFull{\vec{v}}_{\mm^\dagger}.
\end{align*}
\end{proof}

We begin to bound the error terms:
\begin{lemma}
\label{lem:errore}
\[
\normFull{\vec{e}^{(i)}}_{\mU_\mlap} \leq \frac{12 n^{9/2}}{r} \normFull{\vec{b}^{*(i)}}_{{\mU_\mlap}^\dagger}.
\]
\end{lemma}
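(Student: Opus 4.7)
\textbf{Proof plan for Lemma~\ref{lem:errore}.} My plan is to decompose $\vec{e}^{(i)}$ into two pieces corresponding to the two sources of error in the construction of $\vec{z}^{(i)}$, namely the additive regularization by $\tfrac{w^{(i)}}{r^2}\mI$ and the $\tfrac{1}{r}$ relative-error solve of the resulting shifted system, bound each piece in the $\mU_\mlap$-norm, and then combine. Writing $\hat{\mlap}^{(i)}\defeq\mc^{(i)}\mlap{\mc^{(i)}}^\top$, $\alpha\defeq w^{(i)}/r^{2}$, and $\hat{\vec{b}}\defeq\mc^{(i)}\vec{b}^{(i)}$, I would define $\vec{e}_1^{(i)}\defeq{\mc^{(i)}}^\top((\hat{\mlap}^{(i)}+\alpha\mI)^{-1}-(\hat{\mlap}^{(i)})^\dagger)\hat{\vec{b}}$ and $\vec{e}_2^{(i)}\defeq{\mc^{(i)}}^\top(\tilde{z}-(\hat{\mlap}^{(i)}+\alpha\mI)^{-1}\hat{\vec{b}})$, where $\tilde{z}$ is the approximate solver output. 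The identity $\|{\mc^{(i)}}^\top \vec{v}\|_{\mU_\mlap}=\|\vec{v}\|_{\mU_{\hat{\mlap}^{(i)}}}$ lets me pull everything back to the contracted graph. The trivial case $i=0$ handles itself: by definition of $w^{(0)}$ as the smallest edge weight in a maximal spanning tree of $\mU_\mlap$, the edges of weight $\geq w^{(0)}$ contain a spanning tree, so $\mc^{(0)}$ collapses everything to a single vertex and $\vec{e}^{(0)}=\vec{0}$.

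For $i\geq 1$, the main step is to apply Lemma~\ref{lem:addclique} to $\hat{\mlap}^{(i)}$ with demand $\hat{\vec{b}}$, regularization $\alpha=w^{(i)}/r^2$, and an appropriately chosen scale $\beta$. The key structural observation is that $\vec{b}^{(i)}$ was produced by $\operatorname{Proj}^{(i-1)}$, so $\vec{b}^{(i)}\in\ker(\mc^{(i-1)})$, which means it sums to zero on every connected component of the edges in $\mU_\mlap$ of weight $\geq w^{(i-1)}$. Since no edge has weight strictly between $w^{(i-1)}$ and $w'^{(i-1)}$, the level-$w^{(i-1)}$ partition equals the level-$w'^{(i-1)}$ partition, and contracting by $\mc^{(i)}$ identifies these components with the connected components of $\mU_{\hat{\mlap}^{(i)}}$ at threshold $w'^{(i-1)}$. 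Thus $\hat{\vec{b}}$ sums to zero on each of these components, so Lemma~\ref{lem:addclique} applies with $\beta=w'^{(i-1)}$, and since $w^{(i)}=2w'^{(i-1)}$ by construction, $\sqrt{\alpha/\beta}=\sqrt{2}/r$. Combined with Fact~\ref{fact:schur} to transfer the right-hand-side norm from the contracted to the original graph, this yields $\|\vec{e}_1^{(i)}\|_{\mU_\mlap}\leq 2\sqrt{2}\,n^{7/2}r^{-1}\|\vec{b}^{(i)}\|_{\mU_\mlap^\dagger}$.

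For the approximate solver contribution, $\mU_{\hat{\mlap}^{(i)}+\alpha\mI}=\mU_{\hat{\mlap}^{(i)}}+\alpha\mI\succeq\mU_{\hat{\mlap}^{(i)}}$, so $\|\vec{e}_2^{(i)}\|_{\mU_\mlap}$ is bounded by the $\mU_{\hat{\mlap}^{(i)}+\alpha\mI}$-norm of the solver's absolute error, which by the $\tfrac{1}{r}$-approximate-pseudoinverse guarantee is at most $\tfrac{1}{r}\|(\hat{\mlap}^{(i)}+\alpha\mI)^{-1}\hat{\vec{b}}\|_{\mU_{\hat{\mlap}^{(i)}+\alpha\mI}}$. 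Using Lemma~\ref{lem:symBound} applied to $\hat{\mlap}^{(i)}+\alpha\mI$ (which is Eulerian after the shift up to the added diagonal) in the form $\|\mm^{-1}\vec{b}\|_{\mU_\mm}\leq\|\vec{b}\|_{\mU_\mm^{-1}}$, diagonalizing $\mU_{\hat{\mlap}^{(i)}}$ to observe that $\mU_\mm^{-1}\preceq\mU_{\hat{\mlap}^{(i)}}^\dagger$ on the image of $\hat{\vec{b}}$, and again invoking Fact~\ref{fact:schur}, this chain reduces to $\|\vec{e}_2^{(i)}\|_{\mU_\mlap}\leq\tfrac{1}{r}\|\vec{b}^{(i)}\|_{\mU_\mlap^\dagger}$.

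Finally, Lemma~\ref{lem:bstable} converts $\|\vec{b}^{(i)}\|_{\mU_\mlap^\dagger}\leq 3n\|\vec{b}^{*(i)}\|_{\mU_\mlap^\dagger}$, and summing the two contributions yields $\|\vec{e}^{(i)}\|_{\mU_\mlap}\leq(6\sqrt{2}\,n^{9/2}+3n)r^{-1}\|\vec{b}^{*(i)}\|_{\mU_\mlap^\dagger}\leq 12\,n^{9/2}r^{-1}\|\vec{b}^{*(i)}\|_{\mU_\mlap^\dagger}$. The main obstacle, and the heart of the proof, is identifying the correct scale $\beta$ at which to invoke Lemma~\ref{lem:addclique}: one has to match the residual demand's zero-sum structure (inherited from $\operatorname{Proj}^{(i-1)}$) against the partition of the contracted graph at the current step and verify that the threshold doubling $w^{(i)}=2w'^{(i-1)}$ is exactly what makes the ratio $\alpha/\beta$ equal to $2/r^2$ rather than growing with the ill-conditioning of the original graph.
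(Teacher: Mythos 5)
Your proposal is correct and follows essentially the same route as the paper's proof: the same split of $\vec{e}^{(i)}$ into the regularization error (handled by Lemma~\ref{lem:addclique} with $\alpha=w^{(i)}/r^2$, $\beta=w^{(i)}/2=w'^{(i-1)}$, justified by the zero-sum structure inherited from $\operatorname{Proj}^{(i-1)}$) and the $\tfrac1r$-approximate-solver error (handled via Lemma~\ref{lem:symBound} and the monotonicity of the shifted norm), with Fact~\ref{fact:schur} and Lemma~\ref{lem:bstable} transferring everything back to $\normFull{\vec{b}^{*(i)}}_{\mU_\mlap^\dagger}$, and matching constants. Your separate treatment of the degenerate $i=0$ case is a minor extra care the paper leaves implicit, but the argument is otherwise identical.
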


\begin{proof}
First, we note that by Fact~\ref{fact:schur}
\begin{equation*}
\normFull{\mc^{(i)} \vec{b}^{(i)}}_{\left ( \mc^{(i)} \mU_\mlap {\mc^{(i)}}^\top \right )^\dagger} \leq \normFull{\vec{b}^{(i)}}_{{\mU_\mlap}^\dagger}
\leq 3 n \normFull{\vec{b}^{*(i)}}_{{\mU_\mlap}^\dagger},
\end{equation*}
where the last inequality is by Lemma~\ref{lem:bstable}.
Now, define intermediate variables:
\begin{enumerate}
\item
\[
\vec{q'}^{(i)}  \defeq \left ( \mc^{(i)} \mlap {\mc^{(i)}}^\top \right )^\dagger \mc^{(i)} \vec{b}^{(i)},
\]
\item
\[
\vec{z''}^{(i)}  \defeq \left ( \mc^{(i)} \mlap {\mc^{(i)}}^\top + \frac{w^{(i)}}{r^2} \mI \right )^{-1} \mc^{(i)} \vec{b}^{(i)},
\]
and $\vec{z'}^{(i)}$ analogously, but as the output of a $\frac{1}{r}$-approximate solver
for the system
$(\mc^{(i)} \mlap {\mc^{(i)}}^\top + \frac{w^{(i)}}{r^2} \mI ) \vec{x}
= \mc^{(i)} \vec{b}^{(i)}$.
\end{enumerate}

By rearranging and applying the triangle inequality, we have
\begin{equation*}
\normFull{\vec{e}^{(i)}}_{\mU_\mlap} \leq \normFull{\vec{z'}^{(i)}-\vec{z''}^{(i)}}_{\left ( \mc^{(i)} \mU_\mlap {\mc^{(i)}}^\top \right )}+\normFull{\vec{z''}^{(i)}-\vec{q'}^{(i)}}_{\left ( \mc^{(i)} \mU_\mlap {\mc^{(i)}}^\top \right )}
\end{equation*}
The first term captures the error induced by using the approximate rather than exact solver, while the second captures the error induced from adding the multiple of the identity (the more serious issue).

For the first term, we have
\begin{align*}
\normFull{\vec{z'}^{(i)}-\vec{z''}^{(i)}}_{\left ( \mc^{(i)} \mU_\mlap {\mc^{(i)}}^\top \right )} &\leq \normFull{\vec{z'}^{(i)}-\vec{z''}^{(i)}}_{\left ( \mc^{(i)} \mU_\mlap {\mc^{(i)}}^\top + \frac{w^{(i)}}{r^2} \mI \right )} \\
&\leq \frac{1}{r} \normFull{\vec{z''}^{(i)}}_{\left ( \mc^{(i)} \mU_\mlap {\mc^{(i)}}^\top + \frac{w^{(i)}}{r^2} \mI \right )} \textrm{ (by definition of approximate solver)} \\
&\leq \frac{1}{r} \normFull{\mc^{(i)} \vec{b}^{(i)}}_{\left ( \mc^{(i)} \mU_\mlap {\mc^{(i)}}^\top + \frac{w^{(i)}}{r^2} \mI \right )^{-1}} \textrm{ (by Lemma~\ref{lem:symBound})} \\
&\leq \frac{1}{r} \normFull{\mc^{(i)} \vec{b}^{(i)}}_{\left ( \mc^{(i)} \mU_\mlap {\mc^{(i)}}^\top \right )^{\dagger}} \\
&\leq \frac{3 n}{r} \normFull{\vec{b}^{*(i)}}_{{\mU_\mlap}^\dagger}.
\end{align*}

For the second term, we will apply Lemma~\ref{lem:addclique}.  We will use the fact that $\vec{b}^{(i)}$ is in the image of $\mc^{(i-1)}$--or equivalently, that its entries on any connected component of the edges in $\mU_\mlap$ with weight $\geq w^{(i-1)}$ sum to 0.  By the definition of $w^{(i)}$, these are the same as the edges with weight $\geq \frac{w^{(i)}}{2}$.  Furthermore, contracting can only increase the connectivity of a component, so $\mc^{(i)} \vec{b}^{(i)}$ satisfies the same property relative to $\left ( \mc^{(i)} \mU_\mlap {\mc^{(i)}}^\top \right )$.  Then we can apply Lemma~\ref{lem:addclique} with $\alpha = \frac{w^{(i)}}{r^2}$ and $\beta = \frac{w^{(i)}}{2}$:
\[
\normFull{\vec{z''}^{(i)}-\vec{q'}^{(i)}}_{\left ( \mc^{(i)} \mU_\mlap {\mc^{(i)}}^\top \right )}
\leq \frac{3 n^{7/2}}{r} \normFull{\mc^{(i)} \vec{b}^{(i)}}_{\left ( \mc^{(i)} \mU_\mlap {\mc^{(i)}}^\top \right )^\dagger}
\leq \frac{9 n^{9/2}}{r} \normFull{\vec{b}^{*(i)}}_{{\mU_\mlap}^\dagger}.
\]

Summing these two bounds gives the desired result.
\end{proof}

It remains to bound the norms of the other error
vectors $\vec{f}^{(i)}$.

\begin{lemma}
\label{lem:errorf}
For all $i$,
\[
\normFull{\vec{f}^{(i)}}_{{\mU_\mlap}^\dagger}
\leq 6 n^{5/2} \normFull{\vec{e}^{(i)}}_{\mU_\mlap}.
\]
\end{lemma}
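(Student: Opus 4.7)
My approach hinges on the observation that $\vec{e}^{(i)}$ is \emph{piecewise constant} on the level-$i$ pieces $S_1, \ldots, S_k$: inspecting their definitions, both $\vec{z}^{(i)}$ and $\vec{q}^{(i)}$ lie in the image of ${\mc^{(i)}}^\top$, so $\vec{e}^{(i)} = {\mc^{(i)}}^\top \vec{y}$ for some $\vec{y} \in \mathbb{R}^k$. Writing $k(u)$ for the index of the piece containing $u$, the Dirichlet energy reduces to a sum over cross-piece directed edges:
$$\|\vec{e}^{(i)}\|_{\mU_\mlap}^2 = \tfrac{1}{2} \sum_{(u,v):\,k(u) \neq k(v)} \ma_{uv}\,(\vec{y}_{k(u)} - \vec{y}_{k(v)})^2,$$
where each such $\ma_{uv}$ satisfies $\ma_{uv} + \ma_{vu} < 2 w^{(i)}$ since the partition is defined by connected components of $\mU_\mlap$-edges of weight at least $w^{(i)}$.

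The plan is to first bound $\|\mlap\vec{e}^{(i)}\|_2$ by $\|\vec{e}^{(i)}\|_{\mU_\mlap}$ up to a factor that scales as $\sqrt{n w^{(i)}}$, then route $\vec{f}^{(i)}$ internally within the pieces where the routing energy scales as $1/w^{(i)}$, so that the $w^{(i)}$ factors cancel. Using the Eulerian identity $\mlap_{vv} = \sum_u \ma_{uv}$ to cancel intra-piece terms, I obtain
$$(\mlap \vec{e}^{(i)})_v = \sum_{u:\,k(u)\neq k(v)} \ma_{uv}\,(\vec{y}_{k(v)} - \vec{y}_{k(u)}).$$
Cauchy--Schwarz with $\sum_{u:\,k(u)\neq k(v)} \ma_{uv} \leq 2 n w^{(i)}$ and summation over $v$ collapses a second copy of the cross-piece sum into $2\|\vec{e}^{(i)}\|_{\mU_\mlap}^2$, yielding $\|\mlap\vec{e}^{(i)}\|_2^2 \leq 4 n w^{(i)} \|\vec{e}^{(i)}\|_{\mU_\mlap}^2$. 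Since $\operatorname{Proj}^{(i)}$ is an $\ell_2$-orthogonal projection, $\|\vec{f}^{(i)}\|_2 \leq \|\mlap\vec{e}^{(i)}\|_2$.

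Next, $\vec{f}^{(i)} \in \ker(\mc^{(i)})$, so its restriction to each piece $S_j$ sums to zero. Since $S_j$ is connected through edges of weight at least $w^{(i)}$, I route $\vec{f}^{(i)}|_{S_j}$ on a spanning tree of such edges and use the standard tree-flow estimate $\|\vec{f}^{(i)}|_{S_j}\|_{\mU_\mlap^\dagger}^2 \leq \tfrac{n}{4 w^{(i)}} \|\vec{f}^{(i)}|_{S_j}\|_1^2$; combining $\|\vec{f}^{(i)}|_{S_j}\|_1 \leq \sqrt{n}\,\|\vec{f}^{(i)}|_{S_j}\|_2$ and summing over $j$ gives $\|\vec{f}^{(i)}\|_{\mU_\mlap^\dagger}^2 \leq \tfrac{n^2}{4 w^{(i)}} \|\vec{f}^{(i)}\|_2^2$. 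Chaining with the previous inequality, the $w^{(i)}$ factors cancel exactly and the combined estimate is polynomial in $n$ alone, comfortably inside the claimed $6 n^{5/2}$.

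The main subtle point is precisely this cancellation of $w^{(i)}$: the operator $\mlap$ amplifies piecewise constant vectors by a factor scaling as $\sqrt{w^{(i)}}$ (because only weak cross-piece edges contribute), while intra-piece routing costs scale as $1/\sqrt{w^{(i)}}$ (because all internal edges are strong). Identifying both bounds in the right form -- in particular using the individual bound $\ma_{uv} \leq 2 w^{(i)}$ for each cross-piece directed edge in the Cauchy--Schwarz step, rather than the weaker degree bound -- is the only nontrivial observation; the rest is assembly.
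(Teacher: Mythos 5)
Your proof is correct, and it takes a genuinely different route from the paper's. The paper's argument treats $\vec{e}^{(i)}$ as an arbitrary vector: it first converts $\normFull{\mlap \vec{e}^{(i)}}_{\mU_\mlap^\dagger} \leq 2n \normFull{\vec{e}^{(i)}}_{\mU_\mlap}$ via the harmonic-symmetrization bound $\mU_\mlap \preceq \mlap^\top \mU_\mlap^\dagger \mlap \preceq 2(n-1)^2 \mU_\mlap$ (Lemma~\ref{lem:symBound}), and then shows that $\operatorname{Proj}^{(i)}$ blows up the $\mU_\mlap^\dagger$ norm by at most $O(n^{3/2})$ by splitting the optimal electrical flow routing $\mlap \vec{e}^{(i)}$ into its restriction to heavy (weight $\geq w^{(i)}$) edges, whose residual is fixed by the projection, and a light-edge remainder controlled through $\ell_1$. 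You instead exploit a structural fact the paper's proof never uses: $\vec{z}^{(i)}$ and $\vec{q}^{(i)}$ both lie in the image of ${\mc^{(i)}}^\top$ (this remains true when the approximate solver is substituted, since the substitution happens inside the outer ${\mc^{(i)}}^\top$), so $\vec{e}^{(i)}$ is piecewise constant on the level-$i$ pieces. This lets you pass through the $\ell_2$ norm --- where $\operatorname{Proj}^{(i)}$ is simply a contraction --- with the two $w^{(i)}$-dependent conversions (weak cross-piece edges for $\normFull{\mlap\vec{e}^{(i)}}_2 \lesssim \sqrt{n w^{(i)}}\normFull{\vec{e}^{(i)}}_{\mU_\mlap}$, strong intra-piece tree routing for $\normFull{\vec{f}^{(i)}}_{\mU_\mlap^\dagger} \lesssim \sqrt{n^2/w^{(i)}}\normFull{\vec{f}^{(i)}}_2$) cancelling exactly. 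Your approach avoids Lemma~\ref{lem:symBound} and the flow-splitting argument entirely, is more elementary, and even yields a better polynomial ($n^{3/2}$ versus the stated $6n^{5/2}$); its only cost is that it proves the bound specifically for the $\vec{e}^{(i)}$ arising in the algorithm rather than for arbitrary error vectors, which is all the lemma is ever used for.

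One small step deserves a line of justification: ``summing over $j$'' is not literal additivity of $\normFull{\cdot}_{\mU_\mlap^\dagger}^2$ across the piecewise decomposition of $\vec{f}^{(i)}$. The clean fix is to superpose the per-piece tree flows: the pieces are vertex-disjoint, so these flows use disjoint edge sets and their union is a single feasible flow for $\vec{f}^{(i)}$ with energy $\sum_j \frac{n}{4w^{(i)}} \normFull{\vec{f}^{(i)}|_{S_j}}_1^2 \leq \frac{n}{4w^{(i)}}\normFull{\vec{f}^{(i)}}_1^2 \leq \frac{n^2}{4w^{(i)}}\normFull{\vec{f}^{(i)}}_2^2$, which is exactly your claimed inequality. (Alternatively, the triangle inequality plus Cauchy--Schwarz over the at most $n$ pieces loses an extra $\sqrt{n}$, still comfortably inside $6n^{5/2}$.) With that noted, the argument is complete.
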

\begin{proof}
First, we apply Lemma~\ref{lem:symBound}, showing that $\norm{\mlap \vec{e}^{(i)}}_{{\mU_\mlap}^\dagger} \leq 2 n \norm{\vec{e}^{(i)}}_{\mU_\mlap}$.

Now we will show that the $\operatorname{Proj}^{(i)}$ operator cannot increase the ${\mU_\mlap}^\dagger$ norm by more than a factor of $2 n^{3/2}$.

The proof is similar to that of Lemma~\ref{lem:addclique}:
we consider the electrical flow $\mU_\mlap$ that meets
the demands $\mlap \vec{e}^{(i)}$.
We denote this flow with $\vec{y}^{(i)}$, and define $\vec{y}^{(i)'}$
as the restriction of $\vec{y}^{(i)}$ to the edges of weight
$\geq w^{(i)}$.
We then let the residue of this flow be $\vec{b}^{(i)'}$, and write:
\begin{equation*}
\operatorname{Proj}^{\left(i\right)}\left(\mlap \vec{e}^{(i)}\right) = \operatorname{Proj}^{\left(i\right)}\left(\vec{b}^{(i)'}\right) +
	\operatorname{Proj}^{\left(i\right)}
		\left(\mlap \vec{e}^{(i)}-\vec{b}^{(i)'}\right).
\end{equation*}
Since $\vec{b}^{(i)'}$ is induced by a flow $\vec{y}^{(i)'}$
wholly within the components with weights $\geq w^{(i)}$,
\[
\operatorname{Proj}^{(i)}(\vec{b}^{(i)'}) = \vec{b}^{(i)'}.
\]
Furthermore, since $\vec{y}^{(i)'} $ rounds the demand
$\vec{b}^{(i)'}$, and is a restriction of $\vec{y}$, we have
\[
\normFull{\vec{b}^{(i)'}}_{{\mU_\mlap}^\dagger}^2
\leq \mathcal{E}_{\mU_{\mlap}}\left(\vec{y}^{(i)'}\right)
\leq \mathcal{E}_{\mU_{\mlap}}\left(\vec{y}^{(i)}\right)
\leq \normFull{\mlap \vec{e}^{(i)}}_{{\mU_\mlap}^\dagger}^2,
\]
where $\mathcal{E}_{\mU_{\mlap}}(\vec{y})$ denotes the electrical
energy of the flow $\vec{y}$ on $\mu_{\mlap}$.

On the other hand, $\mlap \vec{e}^{(i)}-\vec{b}^{(i)'}$ is the residual
of the flow $\vec{y}^{(i)}-\vec{y}^{(i)'}$,
which is supported on edges with weight $< w^{(i)}$
and also has energy at most
$\norm{\mlap \vec{e}^{(i)}}_{{\mU_\mlap}^\dagger}^2$.
Since each edge can contribute to the residual of at
most two vertices, we have
\[
\normFull{\mlap \vec{e}^{(i)}-\vec{b}^{(i)'}}_1
\leq 2 \normFull{\vec{y}^{(i)}-\vec{y}^{(i)'}}_1
\leq 2 n \normFull{\vec{y}^{(i)}-\vec{y}^{(i)'}}_2
\leq \frac{2n}{\sqrt{w^{(i)}}} \normFull{\mlap \vec{e}^{(i)}}_{{\mU_\mlap}^\dagger}.
\]
Finally, using the fact that $\operatorname{Proj}^{(i)}$ can at most double the $\ell_1$ norm of its input and that the ${\mU_\mlap}^\dagger$ norm of demands connected by edges of weight at least $w^{(i)}$ is at most $\sqrt{n w^{(i)}}$ times the $\ell_1$ norm of those demands, we have
\begin{equation*}
\normFull{\operatorname{Proj}^{\left(i\right)}
	\left(\mlap \vec{e}^{\left(i\right)}-\vec{b}^{(i)'}\right)}_{\mU_\mlap}^\dagger
\leq 4n^{3/2} \normFull{\mlap \vec{e}^{\left(i\right)}}_{{\mU_\mlap}^\dagger}.
\end{equation*}
Applying the triangle inequality to $\vec{b}^{(i)'}$ and
$\mlap \vec{e}^{(i)}-\vec{b}^{(i)'}$ then gives the desired result.
\end{proof}

Note that combining the previous two lemmas shows that 
\[
\normFull{\vec{f}^{(i)}}_{{\mU_\mlap}^\dagger}
\leq \frac{72 n^7}{r} \normFull{\vec{b}^{*(i)}}_{{\mU_\mlap}^\dagger}.
\]
Putting these together with the breakdown of errors
then gives the overall guarantees.

\begin{proof}[Proof of Theorem~\ref{thm:reduction}]
First, note that the number of rounds is bounded by
$\min\{n^2, O(\log(n \kappa))\}$.
The former is from the number of edges, while the latter follows
from the fact that the largest and smallest eigenvalues of
$\mU_\mlap$ are within $\poly(n)$ factors of the smallest
and largest weighted vertex degrees.

This then implies by induction that $\norm{\vec{b}^{*(i)}}_{{\mU_\mlap}^\dagger} \leq 2 \norm{\vec{b}}_{{\mU_\mlap}^\dagger}$ for all $i$ -- since, assuming it held for all previous $i$, each of the at most $n^2$ error terms $\vec{f}^{(j)}$ had norm at most $\frac{1}{5 n^3} \norm{\vec{b}^{*(i)}}_{{\mU_\mlap}^\dagger}$ (by Lemmas~\ref{lem:errore} and \ref{lem:errorf}).  By Lemma~\ref{lem:errore} each $\vec{e}^{(i)}$ had norm at most $\frac{1}{40 n^{11/2}}$.

Then applying Lemma~\ref{lem:answerbound} on the final configuration (where $\vec{b}^{(i+1)} = 0$) with these bounds (and again using the fact that there are most $n^2$ iterations) implies that
\begin{align*}
\normFull{\vec{x}-\mlap^\dagger \vec{b}}_{\mU_{\mlap}}
&\leq \sum_i \left(\normFull{\vec{e}^{\left(i\right)}}_{\mU_\mlap} +
	 \normFull{\mlap^\dagger \vec{f}^{\left(i\right)}}_{\mU_\mlap}\right) \\
&\leq \sum_i \left(\normFull{\vec{e}^{\left(i\right)}}_{\mU_\mlap} + \normFull{\vec{f}^{\left(i\right)}}_{\mU^\dagger}\right) \\
&\leq \frac{1}{4n} \normFull{\vec{b}}_{{\mU_\mlap}^\dagger} \\
&\leq \frac{1}{2} \normFull{\mlap^\dagger \vec{b}}_{\mU_{\mlap}}.
\end{align*}
Here the second and last inequalities follow
from Lemma~\ref{lem:symBound}.
This is the desired bound on the final error of the solver.

Finally, we need to show that the procedure can be implemented
in the desired runtime.
We note that the contracted matrices
$( \mc^{(i)} \mlap {\mc^{(i)}}^\top )$
are still Eulerian Laplacians,
and the contractions cannot increase the number of vertices or edges.
 The actual systems solved are in
$ ( \mc^{(i)} \mlap {\mc^{(i)}}^\top + \frac{w^{(i)}}{r^2} \mI )$,
or an Eulerian Laplacian plus positive diagonal.
This matrix can be reduced, with the reduction in Section 5 of~\cite{cohen2016faster},
to solving an Eulerian Laplacian with asymptotically the same sparsity and condition number.
The symmetrized matrix for each system has min eigenvalue at
least $\frac{w^{(i)}}{r^2}$ and max eigenvalue at most
$O(n w^{(i)})$, so all condition numbers of their symmetrizations
are polynomially bounded, as desired.
\end{proof}

\end{document}